\newcommand{\be}{\begin{eqnarray}}
\newcommand{\ee}{\end{eqnarray}}
\newcommand{\ben}{\begin{eqnarray*}}
\newcommand{\een}{\end{eqnarray*}}
\newtheorem{theorem}{Theorem}
\newtheorem{lemma}{Lemma}
\begin{document}

\title{\bf Multilevel Matrix Factor Model 
}

\author{Yuteng Zhang \\ {\small School of Mathematics and Statistics, Xian Jiaotong University, Xian, China
} \\
Yongchang Hui \footnote{  Corresponding author
	 } \\
{\small School of Economics and Finance, Xian Jiaotong University, Xian, China} \\
Junrong Song \\
{\small School of Mathematics and Statistics, Xian Jiaotong University, Xian, China
}\\
Shurong Zheng \\
{\small School of Mathematics and Statistics, Northeast Normal University, Changchun, China}
}
\date{October, 2023}
\maketitle

\begin{abstract}
    Large-scale matrix data has been widely discovered and continuously studied in various fields recently. 
Considering  the multi-level factor structure  and utilizing the matrix structure, we propose a multilevel matrix factor model with both global and local factors. The global factors can affect all
matrix times series, whereas the local factors are only allow to affect within each specific matrix time series. The estimation procedures can consistently estimate the factor loadings and determine
the number of factors. We
 establish  the asymptotic properties of the estimators. The simulation is presented to illustrate the performance of the proposed estimation method. We utilize the model to analyze eight indicators across 200 stocks from ten distinct industries, demonstrating the empirical utility of our proposed approach. 
\end{abstract}

\vspace{3em}

\noindent
{\it Keywords:}  Matrix-variate time series; Multilevel factor model; Eigenanalysis; Dimension reduction
\newpage
\section{Introduction}

With the rapid improvement of information collection and processing capabilities, matrix-variate time series have become increasingly prevalent in various fields, including business, ecology, finance, and social networks.
To model this kind of data, \citet{wang2019factor} proposed a matrix factor model designed to preserve the inherent matrix structure to achieve dimensionality reduction in two directions. The estimation procedures in their work is to make use of 
auto-cross-covariances of the time series to construct statistics. \citet{chen2021statistical} introduced $\alpha$-PCA, which relies on eigenanalysis performed on a combination of the sample mean matrix and the sample covariance matrices. This estimation method can be applied to 
iid matrix-variate data. \citet{yu2022projected} proposed a projection 
method that achieves faster convergence than existing estimators. The factor loading matrices were estimated by minimizing the Huber loss function in \citet{he2023matrix}, which exhibited better performance with heavy-tailed data. 
The matrix factor model was further extended to the constrained version by \citet{doi:10.1080/01621459.2019.1584899}. Additional extensions include the threshold matrix factor model studied by \citet{liu2022identification}. 
Furthermore, \citet{Chen2023TimeVaryingMF} investigated a time-varying matrix factor model allowing for smooth changes in the factor loading matrices over time. 

A common assumption in the present matrix factor models is there are only common factor processes which impact all variables in the system. However in factor analysis variables are usually  collected from different groups. For example, 
the case presented in \citet{wang2019factor} showed that the 200 companies are from different industries, and companies from the same industry are more likely to be grouped into the same categories, suggesting the existence of industry-specific factors. \citet{ando2016panel}
found that there are several groups in the Chinese mainland stock markets and provided interpretations for group-specific factors. 
 Ignoring the potential group-specific factors may result in unsufficiency and low efficiency. Therefore, we need to take into account a multi-level structure in the matrix factor model.

In this paper, we introduce a multilevel matrix-variate factor model with two types of factors: the global factors and the local factors. Both factors are assumed to be unobserved matrix time series.  Global factors are pervasive, affecting all matrix time series, while local factors impact only specific matrix time series. The proposed procedures for estimating the global and local factor loadings and latent dimensions can be achieved in three steps. In the initial step, we utilize the correlations between various matrix time series to construct statistics for estimating the global factor loading matrices. The latent dimensions of global factors can be estimated by eigenvalue ratio-based method proposed by \citet{10.1214/12-AOS970}. In the second step,  
we transform our model into a matrix factor model using the estimated global factor loading matrices. Subsequently, we apply \citet{wang2019factor}'s method to obtain estimates for the local factor loading matrices. The least square estimation is adopted to estimate the local factor process in the final step.

Early attempts of considering the multilevel structure in traditional factor model have studied by lots of scholars (e.g. \citet{wang2008large}, \citet{moench2013dynamic}, \citet{doi:10.1080/01621459.2016.1195743}, \citet{choi2018multilevel},
\citet{andreou2019inference}, \citet{han2021shrinkage}, \citet{CHOI202322}). Our work differs from the aforementioned studies in several ways. 
Firstly, these models can handle various matrix observations by vectorizing them into vectors and subsequently applying factor analysis. However, matrix observations may reveal strong correlations between rows and columns, making this approach inadequate for capturing the underlying data structure. 
Our model maintains the matrix structure. Secondly, all of these studies assume that the idiosyncratic noise component has the greatest impact on the dynamics of a few original time series.  The weak serial dependence present in idiosyncratic noise makes it challenging to identify the global and local factors separately. 
In our context, the matrix-variate time series comprise three elements: the global signal part, the local signal part, and a matrix white noise. The dynamic component is driven by lower-dimensional global and local factors time series, while the static component is represented by a matrix of white noise. The idiosyncratic noise process does not exhibit correlation over time.
    This setting is similar to those in \citet{pan2008modelling}, \citet{10.1093/biomet/asr048}, \citet{10.1214/12-AOS970}, \citet{chang2015high}, \citet{wang2019factor} and  \citet{doi:10.1080/01621459.2019.1584899}.

The rest of this article is organized as follows. In Section 2, we introduce the multilevel matrix-variate factor model. Section 3 shows the estimation procedures.
 The theoretical properties of the proposed model are included in Section 4. Section 5 reports simulation results. Real data analysis is in Section 6. Section 7 concludes.
  All proofs are presented in the Appendix. Throughout this paper, we use $\|\boldsymbol{A} \|_F$, $\| \boldsymbol{A} \|_2$ and $\| \boldsymbol{A} \|_{\text{min}}$ to denote the Frobenius norm, spectral norm and the smallest nonzero singular value of a matrix $\boldsymbol{A}$.
	Denote $\lambda_j(\boldsymbol{A})$ as the $j$-th largest eigenvalue of a nonnegative definite matrix $\boldsymbol{A}$,
 and let $\sigma_j(\boldsymbol{A})$ be the $j$th largest singular value of matrix $\boldsymbol{A}$. Let $\text{vec}(\cdot)$ be the operator that 
transforms matrix $\boldsymbol{A}$ into an vector by stacking the columns. We write $a \asymp b$, if $a= O(b)$ and $b = O(a)$.  

\section{The multilevel matrix factor model}

Let $\boldsymbol{X}_{mt}\ (t=1,...,T;m=1,...,M)$ be an observable $N_m \times p $ matrix-variate time series, each row in $\boldsymbol{X}_{mt}$ is the individual of index $m$,
\begin{equation}
\boldsymbol{X}_{mt}=\left(\begin{array}{c}
    \boldsymbol{x}_{mt,1}^{'}\\
    \vdots    \\
    \boldsymbol{x}_{mt,N_m}^{'}
     
\end{array} \right)=
\left(\begin{array}{ccc}
x_{mt, 11} & \cdots & x_{mt, 1 p} \\
\vdots & \ddots & \vdots \\
x_{mt, N_m 1} & \cdots & x_{mt, N_m p}
\end{array}\right),
\end{equation} 
where $m$ is the index for a group, $N_m$ is the number of individuals of group $m$.

We propose the following multilevel factor model for matrix-valued time series,
\begin{equation} \label{2}
\boldsymbol{X}_{mt}=\boldsymbol{R}_m \boldsymbol{G}_{t} \boldsymbol{C}_{m}^{'} +\boldsymbol{\Gamma}_{m} \boldsymbol{F}_{mt} \boldsymbol{\Lambda}_{m}^{'}+\boldsymbol{E}_{mt},\ (m=1,...,M;\ t=1,...,T)
\end{equation}
where $\boldsymbol{G}_t$ is a $k_1 \times k_2$ unobserved matrix-valued time series of global factors that affect all groups, $\boldsymbol{R}_m$ is a $N_m \times k_1$ front global loading matrix, $\boldsymbol{C}_m$ is a $p \times k_2$ 
back global loading matrix, $\boldsymbol{F}_{mt}$ is a $r_{m1} \times r_{m2}$ unobserved matrix-valued time series of local factors that affect individuals only in group $m$, $\boldsymbol{\Gamma}_m$ is a $N_m \times r_{m1}$ front local loading matrix, $\boldsymbol{\Lambda}_m$ is a $p \times r_{m2}$ back local loading matrix, and $\boldsymbol{E}_{mt}$ is a $N_m \times p$ error matrix.

\vspace{0.2in}

\noindent
{\bf Interpretation:} The model \eqref{2} can be viewed as two-step hierarchical model. This can be specified as the following three steps.

\noindent{\bf Step 1:} For each column $j = 1,...,p$, using data $\{ \boldsymbol{x}_{1t, \cdot j },\ \boldsymbol{x}_{2t, \cdot j },...,\ \boldsymbol{x}_{Mt, \cdot j },\ t=1,...,T\}$ to fit a typical multilevel factor model.
We can find $k_1$ dimensional global factors $\{ \boldsymbol{u}_{t,\cdot j} = (U_{t,1j},...,U_{t,k_1 j})^{\prime} , t=1,...,T\} $ and $r_{m_1}$ dimensional
local factors $ \{ \boldsymbol{v}_{mt,\cdot j} = (V_{mt,1j},...,V_{mt,r_{m_1} j})^{\prime} ,\ m=1,...,M,\ t=1,...,T\}. $ We have 
\begin{equation}
   \begin{pmatrix}
       X_{mt,1j} \\
       \vdots \\
       X_{mt, N_m j}
   \end{pmatrix} = \boldsymbol{R}_{m}^{(j)}  \begin{pmatrix}
       U_{t,1j} \\
       \vdots \\
       U_{t, k_1 j}
   \end{pmatrix} + \boldsymbol{\Gamma}_{m}^{(j)}
   \begin{pmatrix}
       V_{mt,1j} \\
       \vdots \\
       V_{mt, r_{m_1} j}
   \end{pmatrix}
   + \begin{pmatrix}
       H_{mt,1j} \\
       \vdots \\
       H_{mt, N_m j}
   \end{pmatrix} , t=1,2,...,T.
\end{equation}
Let 
\begin{equation*}
   \boldsymbol{U}_t =
   \begin{pmatrix}
       \boldsymbol{u}_{t,\cdot 1} &
       \cdots & 
       \boldsymbol{u}_{t,\cdot p}
   \end{pmatrix} ,\ 
   \boldsymbol{V}_{mt} =
   \begin{pmatrix}
       \boldsymbol{v}_{mt,\cdot 1} &
       \cdots & 
       \boldsymbol{v}_{mt,\cdot p}
   \end{pmatrix} ,\ \boldsymbol{H}_{mt} =
   \begin{pmatrix}
       \boldsymbol{h}_{mt,\cdot 1} &
       \cdots & 
       \boldsymbol{h}_{mt,\cdot p}
   \end{pmatrix} .
\end{equation*}
\noindent{\bf Step 2:}  Suppose each row $ i= 1,2,...,k_1$ of the matrix $\boldsymbol{U}_t$ assumes the factor structure with a $k_2$ dimensional factor $\boldsymbol{g}_{t,i \cdot}$. Also assume 
that each row $ i =1, 2,...,r_{m_1}$ of $\boldsymbol{V}_{mt}$ admits the following factor model with $r_{m_2}$ latent factors $\boldsymbol{f}_{mt, i\cdot}$. That is 
\begin{equation}
	\begin{aligned}
			(U_{t,i 1},...,U_{t, i p}) & = (G_{t,i 1},...,G_{t, i k_2}) \boldsymbol{C}^{(i)^{\prime}} + (H^*_{t, i1},...,H^*_{t, i p}), \\
			(V_{mt,i 1},...,V_{mt, i p}) & = (F_{mt,i 1},...,F_{mt, i r_{m_2}}) \boldsymbol{\Lambda}_m^{(i)^{\prime}} + (H^{**}_{mt, i1},...,H^{**}_{mt, i p}) . \\	
	\end{aligned}
\end{equation}
This step reveals the common factors that drive the co-moments in global factors $\boldsymbol{G}_t$ and local factors $\boldsymbol{F}_{mt}$.

\noindent{\bf Step 3:} Assume $\boldsymbol{R}_{m}^{(1)} = \cdots = \boldsymbol{R}_{m}^{(p)} = \boldsymbol{R}_m$, $\boldsymbol{\Gamma}_{m}^{(1)} = \cdots = \boldsymbol{\Gamma}_{m}^{(p)} = \boldsymbol{\Gamma}_m$, $\boldsymbol{C}_{m}^{(1)} = \cdots = \boldsymbol{C}_{m}^{(p)} = \boldsymbol{C}_m$,
$\boldsymbol{\Lambda}_{m}^{(1)} = \cdots = \boldsymbol{\Lambda}_{m}^{(p)} = \boldsymbol{\Lambda}_m$
we have
\begin{equation}
   \boldsymbol{X}_{mt} = \boldsymbol{R}_{m} \boldsymbol{U}_t + \boldsymbol{\Gamma}_m \boldsymbol{V}_{mt} + \boldsymbol{H}_{mt} ,\ \boldsymbol{U}_t = \boldsymbol{G}_t \boldsymbol{C}^{\prime} 
   + \boldsymbol{H}_t^{*}\ \text{and}\ \boldsymbol{V}_{mt} = \boldsymbol{F}_{mt} \boldsymbol{\Lambda}_m^{\prime} + \boldsymbol{H}_{mt}^{**}.
\end{equation}	
Hence 
\begin{equation}
    \boldsymbol{X}_{mt} = \boldsymbol{R}_{m } \boldsymbol{G}_t \boldsymbol{C}^{\prime} + \boldsymbol{\Gamma}_m \boldsymbol{F}_{mt} \boldsymbol{\Lambda}_m^{\prime} + \boldsymbol{R}_{m } \boldsymbol{H}^{*}_t + \boldsymbol{\Gamma}_m \boldsymbol{H}^{**}_{mt} +\boldsymbol{H}_{mt} = \boldsymbol{R}_{m } \boldsymbol{G}_t \boldsymbol{C}^{\prime} +
    \boldsymbol{\Gamma}_m \boldsymbol{F}_{mt} \boldsymbol{\Lambda}_m^{\prime} + \boldsymbol{E}_{mt}, 
\end{equation}
where $\boldsymbol{E}_{mt} =  \boldsymbol{R}_{m } \boldsymbol{H}^{*}_t + \boldsymbol{\Gamma}_m \boldsymbol{H}^{**}_{mt} +\boldsymbol{H}_{mt}$. This corresponding to one special case in model \eqref{2} with $\boldsymbol{C}_1 = \boldsymbol{C}_2 = \cdots = \boldsymbol{C}_M = \boldsymbol{C}$.

\noindent
{\bf Remark 1:\ }For each group $m$, the matrix-variate time series $\boldsymbol{X}_{mt}$ can be written as 
\begin{equation}
      \boldsymbol{X}_{mt}= \begin{pmatrix}\boldsymbol{R}_m & \boldsymbol{\Gamma}_{m}  \end{pmatrix}
     \begin{pmatrix}
         \boldsymbol{G}_{t} & 0 \\
         0 & \boldsymbol{F}_{mt}
         \end{pmatrix}
         \begin{pmatrix}\boldsymbol{C}_m^{'} \\ \boldsymbol{\Lambda}_{m}^{'}  \end{pmatrix} +\boldsymbol{E}_{mt}.
    \end{equation} 
It can be regarded as a matrix factor model  with constrains and ($2k_1 k_2+ 2 r_{m_1}r_{m_2}$) factors. Then if apply the matrix factor 
model to each $\boldsymbol{X}_{mt}$, $(2M k_1 k_2+ 2\sum_{m=1}^{M} r_{m_1} r_{m_2})$ factors are needed to determine whereas there are $k_1 k_2 +\sum_{m=1}^{M} r_{m_1} r_{m_2}$ 
factors using model \eqref{2}. Therefore the multilevel factor model uses fewer factors and reduces the dimension of the loading matrix by making use of
the correlation between matrix time series.

\medskip

\noindent
{\bf Remark 2:\ }
Let vec$(\cdot)$ be the vectorization operator. Model \eqref{2} can also be expressed in vector form as 
\begin{equation} \label{8}
    \text{vec}(\boldsymbol{X}_{mt})=(\boldsymbol{C}_m \otimes \boldsymbol{R}_m) \text{vec}(\boldsymbol{G}_t)+(\boldsymbol{\Lambda}_m \otimes \boldsymbol{\Gamma}_m) \text{vec}(\boldsymbol{F}_{mt}) + \text{vec}(\boldsymbol{E}_{mt}) ,\ (m=1,...,M;\ t=1,...,T)
\end{equation}
where $\otimes$ represents the Kronecker product. Model \eqref{8}  can be viewed as a specific instance of the multilevel factor model introduced in  \citet{choi2018multilevel} for vector time series, with its loading matrix given by a Kronecker
product. Comparing with model \eqref{2} and model \eqref{8}, the loading matrix contains a total of $N_m(k_1 + r_{m_1})+ p(k_2 +r_{m_2})$ parameters in model \eqref{2}, while model \eqref{8} needs $N_m p(k_1 k_2 +r_{m_1} r_{m_2})$ parameters for the loading matrices. When the matrix time series
have large dimensions, the model \eqref{2} reduces the dimension significantly.

\medskip

\noindent
{\bf Remark 3:\ }
Model \eqref{2} is applicable to the data $\boldsymbol{P}_{1t},...,\boldsymbol{P}_{Mt}$ of dimensions
$ (N, p_1), (N, p_2),...,(N, p_M)$, respectively. This type of data can be viewed the same set of individuals with different categories of characteristics. If one matrix-variate 
 time series $\boldsymbol{X}_t$ has dimensions $N*p$. It can be divided by columns or rows based on some prior knowledge or criteria. By doing so, we can adapt the resulting data to fit the model described in equation \eqref{2}.
\section{Estimation Procedures}
\label{Section 3}

Since the model \eqref{2} is unchanged if we replace triplets $(\boldsymbol{R}_{m},\ \boldsymbol{G}_t,\ \boldsymbol{C}_{m})$ by $(\boldsymbol{R}_{m} \boldsymbol{U}_1,\ \boldsymbol{U}_1^{-1}\boldsymbol{G}_t \boldsymbol{U}_2^{-1},\ \boldsymbol{C}_{m} \boldsymbol{U}_{2}^{'})$ or replace triplets $(\boldsymbol{\Gamma}_m,\ \boldsymbol{F}_{mt},\ \boldsymbol{\Lambda}_m)$ by $(\boldsymbol{\Gamma}_m \boldsymbol{U}_3,\ \boldsymbol{U}_3^{-1}\boldsymbol{F}_{mt} \boldsymbol{U}_4^{-1},\ \boldsymbol{\Lambda}_m \boldsymbol{U}_{4}^{'})$ for any invertible matrices $\boldsymbol{U}_1$, $\boldsymbol{U}_2$, $\boldsymbol{U}_3$, $\boldsymbol{U}_4$ of size $k_1 \times k_1$, $k_2 \times k_2$, $r_{m1} \times r_{m1}$ and $r_{m2} \times r_{m2}$, model \eqref{2} is not identifiable. 
However, the factor loading space spanned by the columns of loading matrix $\boldsymbol{R}_m,\ \boldsymbol{C}_m,\ \boldsymbol{\Gamma}_m$ and $\boldsymbol{\Lambda}_m$ are uniquely determined. So our estimation targets are the column space of $\boldsymbol{R}_m,\ \boldsymbol{C}_m,\ \boldsymbol{\Gamma}_m$ and $\boldsymbol{\Lambda}_m$, denoted by $\mathcal{M}(\boldsymbol{R}_m),\ \mathcal{M}(\boldsymbol{C}_m),\ \mathcal{M}(\boldsymbol{\Gamma
}_m)\ \text{and}\ \mathcal{M}(\boldsymbol{\Lambda}_m)$.

We further decompose $\boldsymbol{R}_m,\ \boldsymbol{C}_m,\ \boldsymbol{\Gamma}_m$ and $\boldsymbol{\Lambda}_m$ as follows:
$$ \boldsymbol{R}_m=\boldsymbol{Q}_{1m} \boldsymbol{K}_{1m},\quad \boldsymbol{C}_m=\boldsymbol{Q}_{2m} \boldsymbol{K}_{2m},\quad \boldsymbol{\Gamma}_m=\boldsymbol{Q}_{3m} \boldsymbol{K}_{3m},\ \text{and}\  \boldsymbol{\Lambda}_m=\boldsymbol{Q}_{4m} \boldsymbol{K}_{4m},\ m=1,...,M,$$
where $\boldsymbol{Q}_{im}$\ $(i=1,2,3,4;\ m=1,...,M)$ are semi-orthogonal matrices, $\boldsymbol{K}_{im}$\ $(i=1,2,3,4;\ m=1,...,M)$ are non-singular matrices. This can be achieved by QR decomposition. Then model \eqref{2} can be rewritten as
\begin{equation}
\boldsymbol{X}_{mt}=\boldsymbol{Q}_{1m} \boldsymbol{S}_{mt} \boldsymbol{Q}_{2m}^{'} +\boldsymbol{Q}_{3m} \boldsymbol{Z}_{mt} \boldsymbol{Q}_{4m}^{'}+\boldsymbol{E}_{mt},\ (m=1,...,M;\ t=1,...,T)   
\end{equation}
where $\boldsymbol{S}_{mt}=\boldsymbol{K}_{1m} \boldsymbol{G}_t \boldsymbol{K}_{2m}^{\prime},\ \boldsymbol{Z}_{mt}=\boldsymbol{K}_{3m} \boldsymbol{F}_{mt} \boldsymbol{K}_{4m}^{\prime}$. Note that $\mathcal{M}(\boldsymbol{R}_m)=\mathcal{M}(\boldsymbol{Q}_{1m})$, $\mathcal{M}(\boldsymbol{C}_m)=\mathcal{M}(\boldsymbol{Q}_{2m})$, $\mathcal{M}(\boldsymbol{\Gamma}_m)=\mathcal{M}(\boldsymbol{Q}_{3m})$ and $\mathcal{M}(\boldsymbol{\Lambda}_m)=\mathcal{M}(\boldsymbol{Q}_{4m})$. So the estimation of column spaces of $\boldsymbol{R}_m,\ \boldsymbol{C}_m,\ \boldsymbol{\Gamma}_m$ and $\boldsymbol{\Lambda}_m$ are equivalent to the estimation of column spaces of $\boldsymbol{Q}_{1m},\ \boldsymbol{Q}_{2m},\ \boldsymbol{Q}_{3m}$ and $\boldsymbol{Q}_{4m}$. 

Our estimation procedures are present in Section \ref{Section 3}.  In Section \ref{Section 3.1}, based on similar ideas presented in \citet{10.1093/biomet/asr048} and \citet{wang2019factor}, we rely on the covariance matrix to construct a non-negative definite matrix for estimating the global factor loadings.
 To determine the underlying dimensions of global factors, we use the ratio-based estimator. The local factor loadings and latent dimensions are estimated using the estimation procedures outlined in \citet{wang2019factor} in Section \ref{Section 3.2}.
  The normalized local factors are obtained through least squares estimation, the global and local signal parts estimation are presented in Section \ref{Section 3.3}.
   \subsection{Estimation of global factor loading matrix} \label{Section 3.1}
   Let $\boldsymbol{x}_{\cdot,j,mt}$, $\boldsymbol{e}_{\cdot,j,mt}$, $\boldsymbol{c}_{\cdot,j,m}$ and $\boldsymbol{\lambda}_{\cdot,j,m}$ denote the $j$th column of $\boldsymbol{X}_{mt}$, $\boldsymbol{E}_{mt}$, $\boldsymbol{C}_m$ and $\boldsymbol{\Lambda}_m$, respectively. Then 
   \begin{equation}
       \boldsymbol{x}_{\cdot,j,mt}=\boldsymbol{Q}_{1m} \boldsymbol{K}_{1m} \boldsymbol{G}_{t} \boldsymbol{c}_{j,\cdot,m}^{\prime}+ \boldsymbol{Q}_{3m} \boldsymbol{K}_{3m} \boldsymbol{F}_{mt} \boldsymbol{\lambda}_{j,\cdot,m}^{\prime} + \boldsymbol{e}_{\cdot,j,mt},\quad j=1,...,p.
   \end{equation}
   Define
   \begin{equation} \label{11}
       \begin{aligned}
           \boldsymbol{\Omega}_{x, m n, j_1 j_2} & =\frac{1}{T} \sum_{t=1}^{T} \text{Cov}(\boldsymbol{x}_{\cdot,j_1,mt}, \boldsymbol{x}_{\cdot,j_2,nt})\\
           & 
           \begin{split}
               = \frac{1}{T} \sum_{t=1}^T \text{Cov} ( & \boldsymbol{Q}_{1m} \boldsymbol{K}_{1m} \boldsymbol{G}_{t} \boldsymbol{c}_{j_1,\cdot,m}^{\prime}+ \boldsymbol{Q}_{3m} \boldsymbol{K}_{3m} \boldsymbol{F}_{mt} \boldsymbol{\lambda}_{j_1,\cdot,m}^{\prime} + \boldsymbol{e}_{\cdot,j_1,mt},\\ 
               & \boldsymbol{Q}_{1n} \boldsymbol{K}_{1n} \boldsymbol{G}_{t} \boldsymbol{c}_{j_2,\cdot,n}^{\prime}+ \boldsymbol{Q}_{3n} \boldsymbol{K}_{3n} \boldsymbol{F}_{nt} \boldsymbol{\lambda}_{j_2,\cdot,n}^{\prime} + \boldsymbol{e}_{\cdot,j_2,nt})
           \end{split}\\
           & = \frac{1}{T} \sum_{t=1}^T \boldsymbol{Q}_{1m} \text{Cov} (\boldsymbol{K}_{1m} \boldsymbol{G}_t \boldsymbol{c}_{j_1,\cdot,m}, \boldsymbol{K}_{1n} \boldsymbol{G}_t \boldsymbol{c}_{j_2,\cdot,n}) \boldsymbol{Q}_{1n}^{\prime},
       \end{aligned}
   \end{equation}
   the third equality is achieved by the Condition 6. Then follow the idea of \citet{10.1093/biomet/asr048} to construct the following statistics:
   \begin{equation}  \label{W1mdef}
    \boldsymbol{W}_{1,m}  =\sum_{i=1,i\neq m}^{M} \sum_{j_1=1}^{p} \sum_{j_2=1}^{p} \boldsymbol{\Omega}_{x, m i , j_1 j_2} \boldsymbol{\Omega}_{x, m i, j_1 j_2}^{\prime},
   \end{equation}
   $m=1,...,M.$ This is a $N_m \times N_m$ nonnegative definite matrix. From Equation \eqref{11} we can get $\boldsymbol{W}_{1,m}$ has a $\boldsymbol{Q}_{1m}$ on the left and a $\boldsymbol{Q}_{1m}^{\prime}$ on the right.
   
   Suppose the matrix $\boldsymbol{W}_{1,m}$ has rank $k_1$. Since each column of $\boldsymbol{W}_{1,m}$ can be expressed as a linear combination of columns from $\boldsymbol{R}_m$, the matrices $\boldsymbol{W}_{1,m}$ and $\boldsymbol{R}_m$ share the same column space. Let $\boldsymbol{q}_{1m,i}$ denote the unit eigenvectors of 
   $ \boldsymbol{W}_{1,m}$ corresponding to its $i$th largest eigenvalue. Define
   \begin{equation*}
       \boldsymbol{Q}_{1m} = (\boldsymbol{q}_{1m,1},...,\boldsymbol{q}_{1m,k_1}).
   \end{equation*}
   If we
    further assume that $\boldsymbol{W}_{1,m}$ has $k_1$ distinct nonzero eigenvalues and ignore the trivial sign changes of $\boldsymbol{q}_{1m,i}$, then $\boldsymbol{Q}_{1m}$ can be uniquely identified from the population version of $\boldsymbol{W}_{1,m}$. The factor loading space $\mathcal{M}(\boldsymbol{Q}_{1m})$ is spanned by the eigenvectors of
   $\boldsymbol{W}_{1,m}$ corresponding to its nonzero eigenvalues. 
   
   For the estimation of $\mathcal{M}(\boldsymbol{Q}_{1m})$, we construct the sample version of $\boldsymbol{W}_{1,m}$ as follows:
   \begin{equation} \label{hatW1mdef}
   \begin{aligned} 
         \widehat{\boldsymbol{\Omega}}_{x,mn,j_1 j_2} & = \frac{1}{T} \sum_{t=1}^{T} \boldsymbol{x}_{\cdot,j_1,mt} \boldsymbol{x}_{\cdot,j_2,nt}^{\prime} ,\\
         \widehat {\boldsymbol{W}}_{1,m} & =\sum_{i=1,i\neq m}^{M} \sum_{j_1=1}^{p} \sum_{j_2=1}^{p} \widehat{\boldsymbol{\Omega}}_{x,mi,j_1 j_2} \widehat{\boldsymbol{\Omega}}_{x,mi,j_1 j_2}^{\prime}, 
   \end{aligned}
   \end{equation}
   where $\boldsymbol{x}_{\cdot,j_1,mt}$ represents the $j_1$th column of matrix $\boldsymbol{X}_{mt}$. Then, $\mathcal{M}(\boldsymbol{Q}_{1m})$ can be estimated by $\mathcal{M}(\widehat{\boldsymbol{Q}}_{1m})$, where $\widehat{\boldsymbol{Q}}_{1m}=(\widehat{\boldsymbol{q}}_{1m,1},...,\widehat{\boldsymbol{q}}_{1m,k_1})$ 
   with $\widehat{\boldsymbol{q}}_{1m,1},...,\widehat{\boldsymbol{q}}_{1m,k_1}$ being the eigenvectors corresponding to the $k_1$ largest eigenvalues of $\widehat{\boldsymbol{W}}_{1,m}$.  Let $\widehat{\boldsymbol{B}}_{1m}=(\widehat{\boldsymbol{q}}_{1m,k_1+1},...,\widehat{\boldsymbol{q}}_{1m,N_m})$ represent the estimated orthogonal complement matrix of $\widehat{\boldsymbol{Q}}_{1m}$.
   
   The estimation procedure describes above assumes that the number of  global row factors $k_1$ is known. In practice, to estimate $k_1$ we employ the ratio-based estimator proposed by \cite{10.1214/12-AOS970}. Let $\widehat{\lambda}_{m,1} \geq \widehat{\lambda}_{m,2} \geq \cdots \geq \widehat{\lambda}_{m,N_m} \geq 0$ be the ordered eigenvalues of $\widehat{\boldsymbol{W}}_{1,m}$. Then 
   \begin{equation}
       \widehat{k}_1 = \text{argmin}_{1\leq i\leq N_m/3} \frac{\widehat{\lambda}_{m,i+1}}{\widehat{\lambda}_{m,i}}.
   \end{equation}
   
   The estimation procedure for $\mathcal{M}(\mathbf{C}_m)$ and $k_2$ can be applied to the transpose of $\boldsymbol{X}_{mt}$ in the same manner.
   \subsection{Estimation of local factor loading matrix} \label{Section 3.2}
Rewrite model \eqref{2} as 
\begin{equation} \label{15}
    \begin{aligned}
    \boldsymbol{X}_{mt} & =\boldsymbol{Q}_{1m}\boldsymbol{S}_{mt} \boldsymbol{Q}_{2m}^{'}  +\boldsymbol{\Gamma}_{m} \boldsymbol{F}_{mt} \boldsymbol{\Lambda}_{m}^{'}+\boldsymbol{E}_{mt},\\
    & = (\boldsymbol{Q}_{1m}-\widehat{\boldsymbol{Q}}_{1m})\boldsymbol{S}_{mt}  \boldsymbol{Q}_{2m}^{'} +\widehat{\boldsymbol{Q}}_{1m}\boldsymbol{S}_{mt}  \boldsymbol{Q}_{2m}^{'}
    +\boldsymbol{\Gamma}_{m} \boldsymbol{F}_{mt} \boldsymbol{\Lambda}_{m}^{'}+\boldsymbol{E}_{mt}\\
    & = \widehat{\boldsymbol{Q}}_{1m}\boldsymbol{S}_{mt}  \boldsymbol{Q}_{2m}^{'}
    +\boldsymbol{\Gamma}_{m} \boldsymbol{F}_{mt} \boldsymbol{\Lambda}_{m}^{'}+\boldsymbol{E}_{mt}^{(1)},
    \end{aligned}
\end{equation}
where $\boldsymbol{E}_{mt}^{(1)}=\boldsymbol{E}_{mt}+(\boldsymbol{Q}_{1m}-\widehat{\boldsymbol{Q}}_{1m})\boldsymbol{S}_{mt}  \boldsymbol{Q}_{2m}^{'}.$ 

Then left multiplying Equation \eqref{15} by $\widehat{\boldsymbol{B}}^{\prime}_{1m}$, we have
\begin{equation}
    \boldsymbol{Y}_{mt}^{(1)}:=\widehat{\boldsymbol{B}}^{\prime}_{1m} \boldsymbol{X}_{mt} = \widehat{\boldsymbol{B}}^{\prime}_{1m}\boldsymbol{\Gamma}_{m} \boldsymbol{F}_{mt} \boldsymbol{\Lambda}_{m}^{'}+\widehat{\boldsymbol{B}}^{\prime}_{1m}\boldsymbol{E}_{mt}^{(1)} = \widehat{\boldsymbol{B}}^{\prime}_{1m}\boldsymbol{Q}_{3m} \boldsymbol{Z}_{mt} \boldsymbol{Q}_{4m}^{'}+\widehat{\boldsymbol{B}}^{\prime}_{1m}\boldsymbol{E}_{mt}^{(1)} ,
\end{equation}
where $ \boldsymbol{Z}_{mt} = \boldsymbol{K}_{3m} \boldsymbol{F}_{mt} \boldsymbol{K}_{4m}^{\prime}$. Because matrix $\widehat{\boldsymbol{B}}^{\prime}_{1m}\boldsymbol{\Gamma}_{m}$ is full column-rank, the estimation  procedures outline in \cite{wang2019factor} can 
be applied to estimate $\boldsymbol{Q}_{4m}$ and the latent dimension $r_{m2}$. Similarly, let $\widehat{\boldsymbol{B}}_{2m} = (\widehat{\boldsymbol{c}}_{m,k_2+1},...,\widehat{\boldsymbol{c}}_{m,p})$ denote the estimated orthogonal complement
matrix of $\widehat{\boldsymbol{Q}}_{2m}$. We transpose $\boldsymbol{X}_{mt}$ then we have
\begin{equation} \label{Equation 17}
    \boldsymbol{X}^{\prime}_{mt}  =\widehat{\boldsymbol{Q}}_{2m}\boldsymbol{S}^{\prime}_{mt} \boldsymbol{Q}^{\prime}_{1m}  +\boldsymbol{Q}_{4m} \boldsymbol{Z}^{\prime}_{mt} \boldsymbol{Q}_{3m}^{'}+\boldsymbol{E}_{mt}^{(2)},
\end{equation}
where  $\boldsymbol{E}_{mt}^{(2)}=\boldsymbol{E}^{\prime}_{mt}+(\boldsymbol{Q}_{2m}-\widehat{\boldsymbol{Q}}_{2m})\boldsymbol{S}^{\prime}_{mt}  \boldsymbol{Q}_{1m}^{'}.$ Now it follows from \eqref{Equation 17} that
\begin{equation}
\boldsymbol{Y}_{mt}^{(2)}:=\widehat{\boldsymbol{B}}^{\prime}_{2m}\boldsymbol{X}^{\prime}_{mt} = \widehat{\boldsymbol{B}}^{\prime}_{2m} \boldsymbol{Q}_{4m} \boldsymbol{Z}^{\prime}_{mt} \boldsymbol{Q}_{3m}^{'}+\widehat{\boldsymbol{B}}^{\prime}_{2m} \boldsymbol{E}_{mt}^{(2)}.
\end{equation}
Similarly, we can estimate $\boldsymbol{Q}_{3m}$ and the latent dimension $\boldsymbol{r}_{m1}$ using the approaches outlined in \cite{wang2019factor}.
\subsection{Estimation of dynamic global and local signal part} \label{Section 3.3}
From Section \ref{Section 3.2} we have obtained the estimation of matrices $\boldsymbol{Q}_{3m}$ and $\boldsymbol{Q}_{4m}$, for $m=1,...,M$. Consequently, the normalized local factors $\boldsymbol{Z}_{mt} $ can be estimated by the least square estimation method:
\begin{equation}
    \widehat{\boldsymbol{Z}}_{mt} = (\boldsymbol{A}_m \boldsymbol{A}_m ^{\prime})^{-1} \boldsymbol{A}_m ^{\prime} \boldsymbol{Y}_{mt}^{(1)} \widehat{\boldsymbol{Q}}_{4m} ,
\end{equation}
where $\boldsymbol{A}_m = \widehat{\boldsymbol{B}}_{1m}^{\prime}  \widehat{\boldsymbol{Q}}_{3m}$.

Let $\boldsymbol{\Psi}_{mt}$ and $\boldsymbol{\Phi}_{mt}$ represent the signal part of global and local, respectively. That is, $ \boldsymbol{\Psi}_{mt} = \boldsymbol{R}_{m} \boldsymbol{G}_t \boldsymbol{C}_{m}^{\prime} = \boldsymbol{Q}_{1m} \boldsymbol{S}_{mt} \boldsymbol{Q}_{2m}^{\prime}
, \boldsymbol{\Phi}_{mt} = \boldsymbol{\Gamma}_{m} \boldsymbol{F}_{mt} \boldsymbol{\Lambda}_{m}^{\prime} = \boldsymbol{Q}_{3m} \boldsymbol{Z}_{mt} \boldsymbol{Q}_{4m}^{\prime}$. So $\boldsymbol{\Phi}_{mt}$ can be estimated by
\begin{equation}
    \widehat{\boldsymbol{\Phi}}_{mt} = \widehat{\boldsymbol{Q}}_{3m} \widehat{\boldsymbol{Z}}_{mt} \widehat{\boldsymbol{Q}}_{4m}^{\prime}.
\end{equation}
Consequently, we estimate the normalized global factors $\boldsymbol{S}_{mt}$ and $\boldsymbol{\Psi}_{mt}$, respectively, by
\begin{equation}
	\begin{aligned}
		\widehat{\boldsymbol{S}}_{mt} & = \widehat{\boldsymbol{Q}}_{1m}^{\prime} (\boldsymbol{X}_{mt} - \widehat{\boldsymbol{\Phi}}_{mt} )\widehat{\boldsymbol{Q}}_{2m} , \\
		\widehat{\boldsymbol{\Psi}}_{mt} & = \widehat{\boldsymbol{Q}}_{1m} \widehat{\boldsymbol{Q}}_{1m}^{\prime}(\boldsymbol{X}_{mt} - \widehat{\boldsymbol{\Phi}}_{mt} )\widehat{\boldsymbol{Q}}_{2m} \widehat{\boldsymbol{Q}}_{2m}^{\prime} .	
	\end{aligned}   
\end{equation}

\section{Theoretical Properties}
\label{Section 4}

In this section, we present the convergence rates of the estimators under the setting that all $T$, $N$ and $p$ tend towards infinity, while keeping the latent dimensions $(k_1, k_2, r_{m_1}, r_{m2})$ fixed for $m=1,...,M$.

We define the following notations first: $\boldsymbol{\Sigma}_G =\text{Var} (\text{vec}(\boldsymbol{G}_t)),$
\begin{equation}
    \boldsymbol{\Sigma}_{fm}(h ) = \frac{1}{T} \sum_{t=1}^{T-h } \text{Cov}(\text{vec}(\boldsymbol{F}_{mt}),\ \text{vec}(\boldsymbol{F}_{m,t+h})), \  \text{and}\ \boldsymbol{\Sigma}_{em} = \text{Var} (\text{vec}(\boldsymbol{E}_{mt})),\ m=1,...,M.
\end{equation}

Then we introduce some regularity conditions.
\vspace{.1in}

\noindent {\bf Condition 1.} The process $\{(\text{vec}(\boldsymbol{G}_t),\ \text{vec}(\boldsymbol{F}_{1t}),...,\ \text{vec}(\boldsymbol{F}_{Mt}))\}$ is 
$\alpha$-mixing with the mixing coefficients satisfying the condition $\sum_{k=1}^{\infty} \alpha(k)^{1-2/\gamma} < \infty$ for
some $\gamma >2$, where
\begin{equation}
   \alpha(k) =  \mathop{\sup}_{i} \mathop{\sup}_{A \in \mathcal{F}_{-\infty }^{i},\ B \in \mathcal{F}_{i+k}^{\infty}} | P(A) P(B) - P(AB) |, 
\end{equation}
and $\mathcal{F}_{i }^{j}$ is the $\sigma$-field generated by $\{(\text{vec}(\boldsymbol{G}_t),\ \text{vec}(\boldsymbol{F}_{1t}),...,\ \text{vec}(\boldsymbol{F}_{Mt})) :\ i \leq t\leq j\}$.
\vspace{.1in}

\noindent {\bf Condition 2.} For any $i= 1,...,k_1, j=1,...,k_2$ and $t=1,...,T$, we have $E(| \boldsymbol{G}_{t,ij}|^{2\gamma} |) \leq C$; for any $ i = 1,...,r_{m1}, j=1,...,r_{m2}, m=1,...,M$ and $t=1,...,T$, we have $E(| \boldsymbol{F}_{mt,ij}|^{2\gamma} |) \leq C$
, where $C$ is a positive constant, and $\gamma$ is given in Condition 1. $ \boldsymbol{\Sigma}_G$ is rank $ k = \text{max} (k_1, k_2)$ and $ \| \boldsymbol{\Sigma_G} \|_2 \asymp O(1) \asymp \sigma_k (\boldsymbol{\Sigma}_G)$.  For $i=1,...,r_{m1}$ 
and $j=1,...,r_{m2}$, $\frac{1}{T-h} \sum_{t=1}^{T-h} \text{Cov}(\boldsymbol{f}_{mt,i}, \boldsymbol{f}_{m(t+h),i}) \neq \boldsymbol{0}$, $\frac{1}{T-h} \text{Cov}(\boldsymbol{f}_{mt,j\cdot}, \boldsymbol{f}_{m(t+h),j\cdot}) \neq \boldsymbol{0}.$  In additional, 
there exists a $1\leq h\leq h_0$ such that $\text{rank}(\boldsymbol{\Sigma}_{fm}(h)) \geq r_m$, and $\| \boldsymbol{\Sigma}_{fm}(h)\|_2 \asymp O(1) \asymp \sigma_{r_m}(\boldsymbol{\Sigma}_{fm}(h))$ for $m=1,...,M$, where $r_m = \text{max} \{ r_{m_1}, r_{m_2}\}$.

\vspace{.1in}
\noindent {\bf Condition 3.} Each elements of $\boldsymbol{\Sigma}_{em}$ and $\|\boldsymbol{\Sigma}_{em}\|_2\ (m=1,...,M)$ remain bounded as $N$ and $p$ increase to infinity.

\vspace{.1in}
\noindent {\bf Condition 4.}  There exist constant $\delta_1,\ \delta_2,\ \delta_3$ and $\delta_4$ in $[0,1]$ such that $\|\boldsymbol{R}_m\|_2^2 \asymp N^{1-\delta_1} \asymp \|\boldsymbol{R}_m\|_{\text{min}}^2$,
$\|\boldsymbol{C}_m\|_2^2 \asymp p^{1-\delta_1} \asymp \|\boldsymbol{C}_m\|_{\text{min}}^2$, $\|\boldsymbol{\Gamma}_m\|_2^2 \asymp \|\boldsymbol{B}_{1m}^{\prime} \boldsymbol{\Gamma}_m\|_2^2
\asymp N^{1- \delta_3} \asymp \|\boldsymbol{\Gamma}_m\|_{\text{min}}^2 \asymp \|\boldsymbol{B}_{1m}^{\prime} \boldsymbol{\Gamma}_m\|_{\text{min}}^2$ and  $\|\boldsymbol{\Lambda}_m\|_2^2 \asymp \|\boldsymbol{B}_{2m}^{\prime} \boldsymbol{\Lambda}_m\|_2^2
\asymp p^{1- \delta_4} \asymp \|\boldsymbol{\Lambda}_m\|_{\text{min}}^2 \asymp \|\boldsymbol{B}_{2m}^{\prime} \boldsymbol{\Lambda}_m\|_{\text{min}}^2(m=1,...,M)$.

\vspace{.1in}

$\|\boldsymbol{B}_{1m}^{\prime} \boldsymbol{\Gamma}_m\|_2^2
\asymp N^{1- \delta_3}  \asymp \|\boldsymbol{B}_{1m}^{\prime} \boldsymbol{\Gamma}_m\|_{\text{min}}^2$ and $\|\boldsymbol{B}_{2m}^{\prime} \boldsymbol{\Lambda}_m\|_2^2
\asymp p^{1- \delta_4} \asymp \|\boldsymbol{B}_{2m}^{\prime} \boldsymbol{\Lambda}_m\|_{\text{min}}^2(m=1,...,M)$ are necessary for the estimation. As an example, consider the case where $\mathcal{M}(\boldsymbol{R}_m) \supseteq \mathcal{M}(\boldsymbol{\Gamma}_m)$, implying that
$\boldsymbol{B}_{1m}^{\prime} \boldsymbol{\Gamma}_m = \mathbf{0}$. Then the estimation procedures in \citet{wang2019factor} cannot be used to estimate the column space of $\mathcal{M}(\boldsymbol{\Lambda}_m)$. The rates $\delta_1$ and $\delta_2$ are called the strength for the global row factors and the global column factors, respectively. Similarly, $\delta_3$ and $\delta_4$ are the strengths of the local row factors and the local column factors, respectively. If $\delta_i = 0$, the corresponding factors are strong. If $\delta_i > 0$, the corresponding factors are weak. Smaller values of $\delta$'s indicate stronger factors. 

For the sake of simplicity in notation, we assume the factor strength tetrad $(\delta_1, \delta_2, \delta_3, \delta_4)$ is identical for all groups. However, it's important to note that the theoretical properties can still be achieved if different factor strengths are allowed for different groups.  

\vspace{.1in}
\noindent {\bf Condition 5.} $\boldsymbol{W}_{im}$ has $k_i$ distinct positive eigenvalues for $i=1,2$ and $m=1,2,...,M$.
\vspace{.1in}

Condition 5 implies that the $\boldsymbol{Q}_{im}(i=1,2;\ m=1,...,M)$ defined in Section 3 are unique. Conditions 1-5 are similarly to those in \citet{wang2019factor}. For a more detailed explanation, please refer to \citet{wang2019factor}.

\vspace{.1in}
\noindent {\bf Condition 6.} $\{\text{vec}(\boldsymbol{G}_t) \},\ \{\text{vec}(\boldsymbol{F}_{1t}) \},\ ...\ ,\{\text{vec}(\boldsymbol{F}_{Mt}) \}$ are uncorrelated. That is , ${\rm E}(\text{vec}(\boldsymbol{F}_{mt})\text{vec}(\boldsymbol{F}_{nt})^{\prime}) 
= \mathbf{0}$ for all $t$ and $m \neq n$; and 
${\rm E}(\text{vec}(\boldsymbol{G}_{t})\text{vec}(\boldsymbol{F}_{mt})^{\prime}) = \mathbf{0}$ for all $m$ and $t$. And ${\rm E}(\text{vec}(\boldsymbol{E}_{mt})\text{vec}(\boldsymbol{E}_{nt})^{\prime}) = \mathbf{0}$ for all $t$ and $m \neq n$.

\vspace{.1in}
\noindent {\bf Condition 7.} $N_1,...,N_M$ and $N = N_1+ \cdots +N_M $ are of the same order of magnitude.

\vspace{.1in}

Condition 6 can be viewed as an extension of Assumption 1 in \citet{choi2018multilevel} to the matrix regime. The uncorrelation between two kinds of factors is to make sure the global factors and the local
factors can be separately identified. Since we use the covariance structure to estimate the global factor loading matrices, the uncorrelation between local factors is needed. 

The following theorems establish the convergence rates of the estimators.
\begin{theorem} \label{T1}
    Under Conditions 1-7 and $ N^{\delta_1} p^{\delta_2} T^{-1/2} = o(1) $, it holds that
    \begin{equation*}
        \| \widehat{\boldsymbol{Q}}_{im} - {\boldsymbol{Q}}_{im} \|_2 = O_p(N^{\delta_1} p^{\delta_2} T^{-1/2}),\ for\ i=1, 2.
    \end{equation*}
\end{theorem}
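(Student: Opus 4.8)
The plan is to establish the rate for $\widehat{\boldsymbol{Q}}_{1m}$ (the case $i=2$ being identical after transposing $\boldsymbol{X}_{mt}$) via a standard eigenvector perturbation argument built on top of a careful bound for $\|\widehat{\boldsymbol{W}}_{1,m} - \boldsymbol{W}_{1,m}\|_2$. Since Condition 5 guarantees that $\boldsymbol{W}_{1,m}$ has $k_1$ distinct positive eigenvalues (and, by Condition 4, the nonzero eigenvalues of $\boldsymbol{W}_{1,m}$ are of order $N^{2(1-\delta_1)} p^{2(1-\delta_2)}$ up to constants, with a spectral gap of the same order), the Davis--Kahan $\sin\Theta$ theorem gives
\begin{equation*}
    \|\widehat{\boldsymbol{Q}}_{1m} - \boldsymbol{Q}_{1m}\|_2 \leq \frac{C\,\|\widehat{\boldsymbol{W}}_{1,m} - \boldsymbol{W}_{1,m}\|_2}{N^{2(1-\delta_1)} p^{2(1-\delta_2)}} .
\end{equation*}
So the whole theorem reduces to showing $\|\widehat{\boldsymbol{W}}_{1,m} - \boldsymbol{W}_{1,m}\|_2 = O_p\big(N^{2-\delta_1}p^{2-\delta_2}T^{-1/2}\big)$, which combined with the displayed bound yields exactly $O_p(N^{\delta_1}p^{\delta_2}T^{-1/2})$.

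To bound $\|\widehat{\boldsymbol{W}}_{1,m} - \boldsymbol{W}_{1,m}\|_2$, first I would reduce it to bounding the individual blocks $\|\widehat{\boldsymbol{\Omega}}_{x,mi,j_1j_2} - \boldsymbol{\Omega}_{x,mi,j_1j_2}\|_2$. Writing $\widehat{\boldsymbol{W}}_{1,m} - \boldsymbol{W}_{1,m} = \sum_{i\neq m}\sum_{j_1,j_2}\big(\widehat{\boldsymbol{\Omega}}\widehat{\boldsymbol{\Omega}}^{\prime} - \boldsymbol{\Omega}\boldsymbol{\Omega}^{\prime}\big)$ and using the identity $\widehat{A}\widehat{A}^{\prime} - AA^{\prime} = (\widehat{A}-A)\widehat{A}^{\prime} + A(\widehat{A}-A)^{\prime}$, each term is controlled by $\|\widehat{\boldsymbol{\Omega}} - \boldsymbol{\Omega}\|_2(\|\widehat{\boldsymbol{\Omega}}\|_2 + \|\boldsymbol{\Omega}\|_2)$. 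Here $\|\boldsymbol{\Omega}_{x,mi,j_1j_2}\|_2 = O(N^{1-\delta_1})$ by the factor structure in \eqref{11} together with Condition 4 (the $\boldsymbol{c}$-vectors contributing an $O(p^{-\delta_2})$-type factor per column that sums to $O(p^{1-\delta_2})$ over the $p^2$ pairs, but per block it is $O(1)$ in $p$ from a single pair — the $p$-dependence accumulates only after the double sum over $j_1,j_2$). Then the key probabilistic input is a rate for $\|\widehat{\boldsymbol{\Omega}}_{x,mi,j_1j_2} - \boldsymbol{\Omega}_{x,mi,j_1j_2}\|_2$; under the $\alpha$-mixing of Condition 1, the moment bound of Condition 2, the noise conditions 3 and 6, and the cross-group uncorrelatedness of Condition 6 (which kills the expectation so that the sample cross-covariance is centered at $\boldsymbol{\Omega}_{x,mi,j_1j_2}$ exactly), a Davydov-type covariance inequality / CLT for mixing sequences gives $\|\widehat{\boldsymbol{\Omega}}_{x,mi,j_1j_2} - \boldsymbol{\Omega}_{x,mi,j_1j_2}\|_2 = O_p(N^{1-\delta_1/2}p^{-\delta_2/2}T^{-1/2})$ or similar per-entry-aggregated bound. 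Summing over the $O(M)=O(1)$ groups and the $p^2$ index pairs, and tracking how the $\boldsymbol{C}_m,\boldsymbol{\Lambda}_m$ strength factors in Condition 4 turn the raw $p^2$ count into the effective $p^{2-\delta_2}$, produces the claimed $O_p(N^{2-\delta_1}p^{2-\delta_2}T^{-1/2})$.

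The main obstacle, and the step requiring the most care, is the probabilistic bound on $\|\widehat{\boldsymbol{\Omega}}_{x,mi,j_1j_2} - \boldsymbol{\Omega}_{x,mi,j_1j_2}\|_2$ and its correct dimension dependence. One must expand $\boldsymbol{x}_{\cdot,j_1,mt}\boldsymbol{x}_{\cdot,j_2,it}^{\prime}$ into the signal$\times$signal, signal$\times$noise, and noise$\times$noise pieces, observe that only the global-signal$\times$global-signal cross term survives in expectation (by Condition 6, across groups $m\neq i$ the local parts and the errors vanish), and then bound the fluctuation of each piece. The signal-type terms involve averages $\frac1T\sum_t \boldsymbol{G}_t\boldsymbol{G}_t^{\prime}$-like quantities multiplied by deterministic loading matrices, so their fluctuation is $O_p(T^{-1/2})$ scaled by operator norms of loadings ($\asymp N^{(1-\delta_1)/2}$, $p^{(1-\delta_2)/2}$); the noise-involving terms need the mixing/moment conditions and the boundedness of $\|\boldsymbol{\Sigma}_{em}\|_2$ from Condition 3. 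Assembling these with the right powers of $N$ and $p$ — in particular making sure the $N$-dimension in the operator norm of an $N_m\times N_m$ random block does not introduce an extra $\sqrt{N}$ beyond what Condition 4 absorbs — is the delicate bookkeeping that carries the proof, after which Davis--Kahan finishes it mechanically. I would also note in passing that the condition $N^{\delta_1}p^{\delta_2}T^{-1/2}=o(1)$ is used precisely to ensure $\|\widehat{\boldsymbol{Q}}_{1m}-\boldsymbol{Q}_{1m}\|_2=o_p(1)$ so that the eigenvector perturbation expansion and the eigenvalue-gap estimates remain valid.
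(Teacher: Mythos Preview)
Your plan is correct and matches the paper's proof: eigenvector perturbation (the paper cites Lemma~3 of \citet{10.1093/biomet/asr048}, which plays the role of your Davis--Kahan step) applied to $\widehat{\boldsymbol{W}}_{1,m}-\boldsymbol{W}_{1,m}$, together with the eigenvalue scale $\lambda_{k_1}(\boldsymbol{W}_{1,m})\asymp N^{2-2\delta_1}p^{2-2\delta_2}$ and the target bound $\|\widehat{\boldsymbol{W}}_{1,m}-\boldsymbol{W}_{1,m}\|_2=O_p(N^{2-\delta_1}p^{2-\delta_2}T^{-1/2})$.

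The only place your sketch is looser than the paper is the bookkeeping for that last bound. Rather than per-block rates for $\|\widehat{\boldsymbol{\Omega}}_{x,mi,j_1j_2}-\boldsymbol{\Omega}_{x,mi,j_1j_2}\|_2$, the paper controls the \emph{aggregate} quantities $\sum_{j_1,j_2}\|\boldsymbol{\Omega}_{x,mi,j_1j_2}\|_2^2=O(N^{2-2\delta_1}p^{2-2\delta_2})$ and $\sum_{j_1,j_2}\|\widehat{\boldsymbol{\Omega}}_{x,mi,j_1j_2}-\boldsymbol{\Omega}_{x,mi,j_1j_2}\|_2^2=O_p(N^2p^2T^{-1})$ (the latter dominated by the noise$\times$noise piece $\widehat{\boldsymbol{\Omega}}_{e,mi,j_1j_2}$, which carries no $\delta$-dependence), and then combines them via Cauchy--Schwarz; your tentative per-block rate $O_p(N^{1-\delta_1/2}p^{-\delta_2/2}T^{-1/2})$ is not quite the right object, but the aggregate route gives exactly the rate you are aiming for.
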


The convergence rates presented in Theorem 1 are consistent with those in \citet{wang2019factor}. When all the global factors are strong, i.e., $\delta_1 = \delta_2 = 0$
, the convergence rates  become $O_p(T^{-1/2})$. It's the optimal rate. The convergence rate of global factor loadings does not depend on the strength of the local factors. This is due to the uncorrelation between global factors and local factors allows us to treat the local component as noise  when estimating the global factor loadings.
 Even when $\delta_3 = \delta_4 = 0$, indicating that the local signal part is as strong as the noise, the noise plays the dominate role.

\begin{theorem} \label{T2}
    Under Conditions 1-7 and $ N^{\delta_1} p^{\delta_2} T^{-1/2} = o(1) $, the eigenvalues $\{ \widehat{\lambda}_{1m,1},..., \widehat{\lambda}_{1m,k_1}\}$ of $\widehat{\boldsymbol{W}}_{1,m}(m=1,...,M)$ which
    are sorted in descending order, satisfy
    \begin{equation*}
        \begin{aligned}
             | \widehat{\lambda}_{1m,j} - {\lambda}_{1m,j} | & = O_p(N^{2-\delta_1} p^{2-\delta_2} T^{-1/2}),\ for\ j=1, 2,...,k_1, \\
             | \widehat{\lambda}_{1m,j} | & = O_p(N^{2} p^{2} T^{-1}),\ for\ j=k_1+1, 2,...,N_m,
        \end{aligned}
    \end{equation*}
    where ${\lambda}_{1m,1}  > \lambda_{1m,2} > \cdots > {\lambda}_{1m,N_m}$ are the eigenvalues of $\boldsymbol{W}_{1,m}$.
\end{theorem}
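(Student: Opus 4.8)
The plan is to derive both assertions from Weyl's perturbation inequality, supplemented by a sharper min--max argument for the null eigenvalues. The common ingredient is the operator-norm bound
\[
\|\widehat{\boldsymbol{W}}_{1,m} - \boldsymbol{W}_{1,m}\|_2 = O_p\big(N^{2-\delta_1}p^{2-\delta_2}T^{-1/2}\big),
\]
which I would establish as a lemma (it also underlies Theorem~\ref{T1}: dividing by the eigengap $\lambda_{1m,k_1}\asymp N^{2-2\delta_1}p^{2-2\delta_2}$ reproduces the rate $N^{\delta_1}p^{\delta_2}T^{-1/2}$ there). Writing $\widehat{\boldsymbol{\Omega}}_{x,mi,j_1j_2}=\boldsymbol{\Omega}_{x,mi,j_1j_2}+\boldsymbol{\Delta}_{mi,j_1j_2}$ and expanding the products in \eqref{hatW1mdef} gives
\[
\widehat{\boldsymbol{W}}_{1,m}-\boldsymbol{W}_{1,m}=\sum_{i\neq m}\sum_{j_1,j_2}\big(\boldsymbol{\Omega}_{x,mi,j_1j_2}\boldsymbol{\Delta}_{mi,j_1j_2}^{\prime}+\boldsymbol{\Delta}_{mi,j_1j_2}\boldsymbol{\Omega}_{x,mi,j_1j_2}^{\prime}+\boldsymbol{\Delta}_{mi,j_1j_2}\boldsymbol{\Delta}_{mi,j_1j_2}^{\prime}\big).
\]

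Two aggregate bounds drive everything. First, from \eqref{11}, Condition~4 and Condition~2 (using $\|\boldsymbol{C}_m\|_F^2\asymp p^{1-\delta_2}$ since $k_2$ is fixed), $\sum_{i\neq m}\sum_{j_1,j_2}\|\boldsymbol{\Omega}_{x,mi,j_1j_2}\|_2^2 = O(N^{2-2\delta_1}p^{2-2\delta_2})$. Second, $\sum_{i\neq m}\sum_{j_1,j_2}\|\boldsymbol{\Delta}_{mi,j_1j_2}\|_2^2 = O_p(N^2p^2T^{-1})$. Granting these, Cauchy--Schwarz gives $\big\|\sum\boldsymbol{\Omega}\boldsymbol{\Delta}^{\prime}\big\|_2\le(\sum\|\boldsymbol{\Omega}\|_2^2)^{1/2}(\sum\|\boldsymbol{\Delta}\|_2^2)^{1/2}=O_p(N^{2-\delta_1}p^{2-\delta_2}T^{-1/2})$, while $\big\|\sum\boldsymbol{\Delta}\boldsymbol{\Delta}^{\prime}\big\|_2\le\sum\|\boldsymbol{\Delta}\|_2^2=O_p(N^2p^2T^{-1})$, which is also $O_p(N^{2-\delta_1}p^{2-\delta_2}T^{-1/2})$ because $N^{\delta_1}p^{\delta_2}T^{-1/2}=o(1)$; summing proves the lemma.

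The bound $\sum\|\boldsymbol{\Delta}_{mi,j_1j_2}\|_2^2=O_p(N^2p^2T^{-1})$ is the crux. I would substitute $\boldsymbol{x}_{\cdot,j,mt}=\boldsymbol{R}_m\boldsymbol{G}_t\boldsymbol{c}_{j,\cdot,m}^{\prime}+\boldsymbol{\Gamma}_m\boldsymbol{F}_{mt}\boldsymbol{\lambda}_{j,\cdot,m}^{\prime}+\boldsymbol{e}_{\cdot,j,mt}$ so that $\boldsymbol{\Delta}_{mi,j_1j_2}$ decomposes into nine sample-average-minus-its-mean terms (for $i\neq m$ every cross-expectation vanishes by Condition~6). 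Each term's squared Frobenius norm has expectation $O(T^{-1})$ times the square of the relevant deterministic scale, via the variance bound $\mathrm{E}\big|\tfrac{1}{T}\sum_t(\xi_t-\mathrm{E}\xi_t)\big|^2=O(T^{-1})$ afforded by Conditions~1--3 together with the serial uncorrelatedness of $\boldsymbol{E}_{mt}$; one then sums over $i,j_1,j_2$ and applies Markov's inequality. The noise$\times$noise piece $\tfrac{1}{T}\sum_t\boldsymbol{e}_{\cdot,j_1,mt}\boldsymbol{e}_{\cdot,j_2,it}^{\prime}$ is the bottleneck: it has Frobenius norm $O_p(NT^{-1/2})$ with no decay in $(j_1,j_2)$, so summing $p^2$ copies yields exactly $N^2p^2T^{-1}$; every other term carries factors $\|\boldsymbol{c}_{j_1,\cdot,m}\|\,\|\boldsymbol{c}_{j_2,\cdot,i}\|$, $\|\boldsymbol{\lambda}_{j_1,\cdot,m}\|\,\|\boldsymbol{\lambda}_{j_2,\cdot,i}\|$, or a mix, together with weakened loading norms, so after summation it contributes at most $N^{2-\delta_1}p^{2-\delta_2}T^{-1}$ or $N^{2-2\delta_1}p^{2-2\delta_2}T^{-1}$, all dominated by $N^2p^2T^{-1}$.

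Finally, for $j=1,\dots,k_1$, Weyl's inequality and the lemma give $|\widehat{\lambda}_{1m,j}-\lambda_{1m,j}|\le\|\widehat{\boldsymbol{W}}_{1,m}-\boldsymbol{W}_{1,m}\|_2=O_p(N^{2-\delta_1}p^{2-\delta_2}T^{-1/2})$. For $j\ge k_1+1$ the naive Weyl bound is not sharp enough, so I would use that $\boldsymbol{W}_{1,m}$ has rank $k_1$ with column space $\mathcal{M}(\boldsymbol{Q}_{1m})$, i.e. $\boldsymbol{B}_{1m}^{\prime}\boldsymbol{\Omega}_{x,mi,j_1j_2}=\boldsymbol{0}$ for all $i,j_1,j_2$, where $\boldsymbol{B}_{1m}$ is the population orthogonal complement of $\boldsymbol{Q}_{1m}$. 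By Courant--Fischer, $\widehat{\lambda}_{1m,k_1+1}\le\max_{\|\boldsymbol{v}\|=1,\,\boldsymbol{v}\in\mathcal{M}(\boldsymbol{B}_{1m})}\boldsymbol{v}^{\prime}\widehat{\boldsymbol{W}}_{1,m}\boldsymbol{v}$; for such $\boldsymbol{v}$, $\boldsymbol{v}^{\prime}\widehat{\boldsymbol{\Omega}}_{x,mi,j_1j_2}=\boldsymbol{v}^{\prime}\boldsymbol{\Delta}_{mi,j_1j_2}$, hence $\boldsymbol{v}^{\prime}\widehat{\boldsymbol{W}}_{1,m}\boldsymbol{v}=\sum_{i\neq m}\sum_{j_1,j_2}\|\boldsymbol{v}^{\prime}\boldsymbol{\Delta}_{mi,j_1j_2}\|^2\le\sum_{i\neq m}\sum_{j_1,j_2}\|\boldsymbol{\Delta}_{mi,j_1j_2}\|_2^2=O_p(N^2p^2T^{-1})$, and $\widehat{\lambda}_{1m,j}\le\widehat{\lambda}_{1m,k_1+1}$ for all $j\ge k_1+1$ finishes the argument. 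The main obstacle is the $\alpha$-mixing bookkeeping behind $\sum\|\boldsymbol{\Delta}_{mi,j_1j_2}\|_2^2=O_p(N^2p^2T^{-1})$ — in particular making the control uniform over all $p^2$ index pairs and correctly isolating the noise$\times$noise term as the one that pins down the rate.
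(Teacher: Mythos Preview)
Your proposal is correct and rests on the same two core estimates as the paper (your Cauchy--Schwarz derivation of $\|\widehat{\boldsymbol{W}}_{1,m}-\boldsymbol{W}_{1,m}\|_2=O_p(N^{2-\delta_1}p^{2-\delta_2}T^{-1/2})$ is the paper's Lemma~4, and $\sum_{i\neq m}\sum_{j_1,j_2}\|\boldsymbol{\Delta}_{mi,j_1j_2}\|_2^2=O_p(N^2p^2T^{-1})$ is its Lemma~3, with the nine-term decomposition matching Lemma~2). Where you differ is in the final step. For $j\le k_1$ you invoke Weyl's inequality directly, whereas the paper writes $\widehat{\lambda}_{1m,j}-\lambda_{1m,j}=\widehat{\boldsymbol{q}}_{1m,j}^{\prime}\widehat{\boldsymbol{W}}_{1,m}\widehat{\boldsymbol{q}}_{1m,j}-\boldsymbol{q}_{1m,j}^{\prime}\boldsymbol{W}_{1,m}\boldsymbol{q}_{1m,j}$ and expands into five pieces $I_1,\dots,I_5$, using the eigenvector rate from Theorem~\ref{T1}. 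For $j\ge k_1+1$ you use Courant--Fischer over the \emph{population} complement $\mathcal{M}(\boldsymbol{B}_{1m})$, exploiting $\boldsymbol{B}_{1m}^{\prime}\boldsymbol{\Omega}_{x,mi,j_1j_2}=\boldsymbol{0}$; the paper instead works with the \emph{estimated} eigenvector $\widehat{\boldsymbol{q}}_{1m,j}$ and introduces the intermediate matrix $\widetilde{\boldsymbol{W}}_{1,m}=\sum\widehat{\boldsymbol{\Omega}}_{x,mi,j_1j_2}\boldsymbol{\Omega}_{x,mi,j_1j_2}^{\prime}$, splitting $\widehat{\lambda}_{1m,j}$ into $K_1+K_2+K_3$ and bounding $K_2,K_3$ via $\|\widehat{\boldsymbol{B}}_{1m}-\boldsymbol{B}_{1m}\|_2$. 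Your route is cleaner and self-contained---it needs neither Theorem~\ref{T1} nor convergence of $\widehat{\boldsymbol{B}}_{1m}$---while the paper's decomposition makes the role of the estimated eigenvectors explicit; both yield the same rates. One small remark: the ``$\alpha$-mixing bookkeeping'' you flag as the obstacle is actually only needed for the factor--factor and factor--noise pieces; the dominant noise$\times$noise term uses just the white-noise assumption on $\boldsymbol{E}_{mt}$ (Condition~3).
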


Theorem \ref{T2} shows that the convergence of estimators for the zero eigenvalues of $\boldsymbol{W}_{1,m}$ is faster compared to that of the nonzero eigenvalues. This observation forms the theoretical basis for the ratio-based estimator discussed in Section 3.

\begin{theorem} \label{T3}
    Under Conditions 1-7 , $ N^{\delta_1} p^{\delta_2} T^{-1/2} = o(1) $ and $ N^{\delta_3} p^{\delta_4} T^{-1/2} = o(1) $, we have
    \begin{equation*}
        \| \widehat{\boldsymbol{Q}}_{im} - {\boldsymbol{Q}}_{im} \|_2 = 
		O_p \left({\rm max} \{ N^{\delta_1} p^{\delta_2} T^{-1/2} , N^{\delta_3} p^{\delta_4} T^{-1/2}\}\right),\ for\ i=3, 4.
    \end{equation*}
\end{theorem}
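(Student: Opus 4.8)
The plan is to analyze the perturbation propagated through the two-stage construction. Recall that $\boldsymbol{Q}_{3m}$ (and analogously $\boldsymbol{Q}_{4m}$) is obtained by applying Wang--Liu--Chen's procedure not to $\boldsymbol{X}_{mt}$ directly, but to the filtered series $\boldsymbol{Y}^{(1)}_{mt} = \widehat{\boldsymbol{B}}^{\prime}_{1m}\boldsymbol{X}_{mt}$, whose population counterpart is $\boldsymbol{B}^{\prime}_{1m}\boldsymbol{X}_{mt} = \boldsymbol{B}^{\prime}_{1m}\boldsymbol{\Gamma}_m\boldsymbol{F}_{mt}\boldsymbol{\Lambda}^{\prime}_m + \boldsymbol{B}^{\prime}_{1m}\boldsymbol{E}_{mt}$ — a genuine matrix factor model in dimension $(N_m-k_1)\times p$ with loading $\boldsymbol{B}^{\prime}_{1m}\boldsymbol{\Gamma}_m$, which by Condition 4 has the same strength $\delta_3$ as $\boldsymbol{\Gamma}_m$. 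So the first step is to invoke the convergence guarantee of \citet{wang2019factor} for this idealized problem: had we used the true $\boldsymbol{B}_{1m}$, the resulting estimator $\widetilde{\boldsymbol{Q}}_{3m}$ would satisfy $\|\widetilde{\boldsymbol{Q}}_{3m}-\boldsymbol{Q}_{3m}\|_2 = O_p(N^{\delta_3}p^{\delta_4}T^{-1/2})$, using Conditions 1--3, 5--7 restricted to the filtered model.

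The second step is to quantify the error from replacing $\boldsymbol{B}_{1m}$ by $\widehat{\boldsymbol{B}}_{1m}$. Write $\boldsymbol{Y}^{(1)}_{mt} = \widehat{\boldsymbol{B}}^{\prime}_{1m}\boldsymbol{X}_{mt}$ and compare against $\boldsymbol{B}^{\prime}_{1m}\boldsymbol{X}_{mt}$; the difference is $(\widehat{\boldsymbol{B}}_{1m}-\boldsymbol{B}_{1m})^{\prime}\boldsymbol{X}_{mt}$, controlled in spectral norm by $\|\widehat{\boldsymbol{B}}_{1m}-\boldsymbol{B}_{1m}\|_2$ times $\|\boldsymbol{X}_{mt}\|_2$. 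From Theorem \ref{T1} and the fact that $\widehat{\boldsymbol{B}}_{1m}$ is the orthogonal complement of $\widehat{\boldsymbol{Q}}_{1m}$, we have $\|\widehat{\boldsymbol{B}}_{1m}-\boldsymbol{B}_{1m}\|_2 = O_p(N^{\delta_1}p^{\delta_2}T^{-1/2})$. Propagating this through the auto-cross-covariance statistics that define the Wang-type estimator of $\boldsymbol{Q}_{3m}$ — each such statistic is a bilinear form in $\boldsymbol{Y}^{(1)}_{m\cdot}$, so its perturbation is first-order in $\|\widehat{\boldsymbol{B}}_{1m}-\boldsymbol{B}_{1m}\|_2$ after scaling by the signal magnitude — and then applying the Davis--Kahan $\sin\Theta$ bound to pass from the perturbed statistic to the perturbed eigenspace, yields an additional error term of order $N^{\delta_1}p^{\delta_2}T^{-1/2}$. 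Combining the two steps via the triangle inequality, $\|\widehat{\boldsymbol{Q}}_{3m}-\boldsymbol{Q}_{3m}\|_2 \leq \|\widehat{\boldsymbol{Q}}_{3m}-\widetilde{\boldsymbol{Q}}_{3m}\|_2 + \|\widetilde{\boldsymbol{Q}}_{3m}-\boldsymbol{Q}_{3m}\|_2 = O_p(\max\{N^{\delta_1}p^{\delta_2}T^{-1/2},\, N^{\delta_3}p^{\delta_4}T^{-1/2}\})$. The argument for $i=4$ is identical, working from Equation \eqref{Equation 17} with $\widehat{\boldsymbol{B}}_{2m}$ in place of $\widehat{\boldsymbol{B}}_{1m}$; the two plug-in errors there are of orders $N^{\delta_1}p^{\delta_2}T^{-1/2}$ (from $\widehat{\boldsymbol{Q}}_{2m}$) and $N^{\delta_3}p^{\delta_4}T^{-1/2}$ (intrinsic), giving the same max.

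The main obstacle I anticipate is the careful bookkeeping in the second step: one must verify that the eigengap of the population statistic $\boldsymbol{W}_{3,m}$ underlying the Wang estimator is of order $N^{2-\delta_3}p^{2-\delta_4}$ (the analogue of Condition 5 for the filtered model, which follows from Condition 4 applied to $\boldsymbol{B}^{\prime}_{1m}\boldsymbol{\Gamma}_m$), so that the Davis--Kahan denominator is correctly sized, and that the numerator perturbation — a sum over lags $h$ and over $(j_1,j_2)$ of products of filtered auto-cross-covariances — does not accumulate extra factors of $N$ or $p$ beyond what the signal scaling absorbs. A subtlety worth flagging is that $\boldsymbol{E}^{(1)}_{mt} = \boldsymbol{E}_{mt} + (\boldsymbol{Q}_{1m}-\widehat{\boldsymbol{Q}}_{1m})\boldsymbol{S}_{mt}\boldsymbol{Q}^{\prime}_{2m}$ carries a data-dependent term; however, since $\boldsymbol{S}_{mt}$ is serially dependent only through the \emph{global} factors, and $\widehat{\boldsymbol{B}}^{\prime}_{1m}$ approximately annihilates $\boldsymbol{Q}_{1m}\boldsymbol{S}_{mt}$ up to the $O_p(N^{\delta_1}p^{\delta_2}T^{-1/2})$ error already accounted for, this contributes no new rate. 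All remaining estimates — moment bounds on the filtered noise, $\alpha$-mixing bounds on the filtered factors — follow verbatim from Conditions 1--3 since $\widehat{\boldsymbol{B}}_{1m}$ is semi-orthogonal and hence non-expansive.
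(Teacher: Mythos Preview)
Your strategy is sound and arrives at the right rate, but the organization differs from the paper's. You introduce an intermediate estimator $\widetilde{\boldsymbol{Q}}_{3m}$ built from the \emph{true} $\boldsymbol{B}_{1m}$ and triangle-inequality through it. The paper instead defines a single ``semi-population'' target $\boldsymbol{M}_{1m}$ built from the autocovariances of the \emph{signal-only} filtered series $\boldsymbol{y}^{*}_{mt,j\cdot}=\boldsymbol{Q}_{4m}\boldsymbol{Z}^{\prime}_{mt}\boldsymbol{Q}^{\prime}_{3m}\widehat{\boldsymbol{b}}_{1m,j\cdot}$ (note: it uses the \emph{estimated} $\widehat{\boldsymbol{B}}_{1m}$, which is harmless because the column space of $\boldsymbol{M}_{1m}$ is exactly $\mathcal{M}(\boldsymbol{Q}_{4m})$ regardless), and applies one Davis--Kahan bound $\|\widehat{\boldsymbol{Q}}_{4m}-\boldsymbol{Q}_{4m}\|_2\le 8\|\widehat{\boldsymbol{M}}_{1m}-\boldsymbol{M}_{1m}\|_2/\lambda_{\min}(\boldsymbol{M}_{1m})$. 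The contamination from $(\boldsymbol{Q}_{1m}-\widehat{\boldsymbol{Q}}_{1m})\boldsymbol{S}_{mt}\boldsymbol{Q}^{\prime}_{2m}$ is handled by expanding $\widehat{\boldsymbol{\Pi}}_{ym,ij}(h)$ into four pieces $A_1+A_2+A_3+A_4$ and bounding each (their Lemma~7); this is exactly the ``careful bookkeeping'' you flag, and it is the same technical work your Step~2 would require. The paper's route avoids controlling the eigengap of a \emph{sample} statistic (which your triangle-inequality approach implicitly needs for the $\|\widehat{\boldsymbol{Q}}_{3m}-\widetilde{\boldsymbol{Q}}_{3m}\|$ leg), so it is slightly cleaner.

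Two slips to fix. First, the eigengap of the population statistic is $N^{2-2\delta_3}p^{2-2\delta_4}$, not $N^{2-\delta_3}p^{2-\delta_4}$ (cf.\ Lemma~9 in the paper, or Lemma~5 of \citet{wang2019factor}); with the correct denominator and the numerator $O_p\bigl(N^{2-\delta_3}p^{2-\delta_4}T^{-1/2}\max\{1,N^{\delta_1-\delta_3}p^{\delta_2-\delta_4}\}\bigr)$ from Lemma~8, the ratio gives precisely the claimed $\max$. Second, the paper estimates $\boldsymbol{Q}_{4m}$ (not $\boldsymbol{Q}_{3m}$) from $\boldsymbol{Y}^{(1)}_{mt}=\widehat{\boldsymbol{B}}^{\prime}_{1m}\boldsymbol{X}_{mt}$, and $\boldsymbol{Q}_{3m}$ from $\boldsymbol{Y}^{(2)}_{mt}=\widehat{\boldsymbol{B}}^{\prime}_{2m}\boldsymbol{X}^{\prime}_{mt}$; your indexing is transposed, though the argument is of course symmetric.
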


    Theorem \ref{T3} demonstrates the convergence rate of the local factor loadings. The convergence rate of local factor loadings depends on both the strength of the global factors and the strength of the local factors.
 This is not unexpected because the estimation of the local factor loadings depends on the accuracy of the estimated global factor loadings. As it mentioned in Theorem \ref{T1} , $N^{\delta_1} p^{\delta_2} T^{-1/2}$ is the convergence rate of 
 global factor loadings. And \citet{wang2019factor} established the convergence rate of  $N^{\delta_3} p^{\delta_4} T^{-1/2}$ when no global factors are present in the model \eqref{2}. Hence ${\rm max} \{ N^{\delta_1} p^{\delta_2} T^{-1/2} , N^{\delta_3} p^{\delta_4} T^{-1/2}\}$
  gives the convergence rate for the local factor loadings. The convergence rate $N^{\delta_3} p^{\delta_4} T^{-1/2}$ can be achieved when the condition ${\rm E}(\text{vec}(\boldsymbol{G}_{t_1}) \text{vec}(\boldsymbol{F}_{m t_2})^{\prime}) = \boldsymbol{0}$ is satisfied for all $t_1,\ t_2$
   and $m$.

   The following theorems show the theoretical properties of the estimated signal parts $\widehat{\boldsymbol{\Phi}}_{mt}$ and $\widehat{\boldsymbol{\Psi}}_{mt}$.

   \begin{theorem} \label{T4}
    Under Conditions 1-7 , $ N^{\delta_1} p^{\delta_2} T^{-1/2} = o(1) $ and $ N^{\delta_3} p^{\delta_4} T^{-1/2} = o(1) $, we have
    \begin{equation*}
       N^{-1/2} p^{-1/2} \| \widehat{\boldsymbol{\Phi}}_{mt} - {\boldsymbol{\Phi}}_{mt} \|_2 = O_p( N^{\delta_1/2} p^{\delta_2/2} T^{-1/2} + N^{\delta_3/2} p^{\delta_4/2} T^{-1/2} + N^{-1/2} p^{-1/2}).
    \end{equation*}
\end{theorem}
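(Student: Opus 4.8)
The plan is to put $\widehat{\boldsymbol{\Phi}}_{mt}$ in closed form, subtract $\boldsymbol{\Phi}_{mt}$, and split the error into a global‑signal piece, a local‑signal piece, and a noise piece, each controlled by Theorems~\ref{T1} and~\ref{T3}. Since $\widehat{\boldsymbol{Z}}_{mt}=\boldsymbol{A}_m^{+}\widehat{\boldsymbol{B}}_{1m}^{\prime}\boldsymbol{X}_{mt}\widehat{\boldsymbol{Q}}_{4m}$ with $\boldsymbol{A}_m=\widehat{\boldsymbol{B}}_{1m}^{\prime}\widehat{\boldsymbol{Q}}_{3m}$ and $\boldsymbol{A}_m^{+}$ its Moore--Penrose inverse, one gets $\widehat{\boldsymbol{\Phi}}_{mt}=\boldsymbol{P}_m\boldsymbol{X}_{mt}\widehat{\boldsymbol{Q}}_{4m}\widehat{\boldsymbol{Q}}_{4m}^{\prime}$, where $\boldsymbol{P}_m:=\widehat{\boldsymbol{Q}}_{3m}\boldsymbol{A}_m^{+}\widehat{\boldsymbol{B}}_{1m}^{\prime}$ is an oblique projector with $\boldsymbol{P}_m\widehat{\boldsymbol{Q}}_{3m}=\widehat{\boldsymbol{Q}}_{3m}$ and $\boldsymbol{P}_m\widehat{\boldsymbol{Q}}_{1m}=\boldsymbol{0}$. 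A preliminary step records that $\|\boldsymbol{P}_m\|_2=O_p(1)$: writing $\boldsymbol{\Gamma}_m=\boldsymbol{Q}_{3m}\boldsymbol{K}_{3m}$ and comparing $\|\boldsymbol{B}_{1m}^{\prime}\boldsymbol{\Gamma}_m\|_2$, $\|\boldsymbol{B}_{1m}^{\prime}\boldsymbol{\Gamma}_m\|_{\mathrm{min}}$ with $\|\boldsymbol{K}_{3m}\|_2$, $\|\boldsymbol{K}_{3m}\|_{\mathrm{min}}$ under Condition~4 shows every singular value of $\boldsymbol{B}_{1m}^{\prime}\boldsymbol{Q}_{3m}$ is of order one, so by Theorems~\ref{T1} and~\ref{T3} the smallest singular value of $\boldsymbol{A}_m$ is bounded away from zero with probability tending to one. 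I will also use $\|\boldsymbol{S}_{mt}\|_2=O_p(N^{(1-\delta_1)/2}p^{(1-\delta_2)/2})$ and $\|\boldsymbol{Z}_{mt}\|_2=O_p(N^{(1-\delta_3)/2}p^{(1-\delta_4)/2})$, which follow from $\boldsymbol{S}_{mt}=\boldsymbol{K}_{1m}\boldsymbol{G}_t\boldsymbol{K}_{2m}^{\prime}$, $\boldsymbol{Z}_{mt}=\boldsymbol{K}_{3m}\boldsymbol{F}_{mt}\boldsymbol{K}_{4m}^{\prime}$ and Conditions~2 and~4.

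Writing $\boldsymbol{X}_{mt}=\boldsymbol{\Psi}_{mt}+\boldsymbol{\Phi}_{mt}+\boldsymbol{E}_{mt}$ with $\boldsymbol{\Psi}_{mt}=\boldsymbol{Q}_{1m}\boldsymbol{S}_{mt}\boldsymbol{Q}_{2m}^{\prime}$ gives
\[
\widehat{\boldsymbol{\Phi}}_{mt}-\boldsymbol{\Phi}_{mt}=\boldsymbol{P}_m\boldsymbol{\Psi}_{mt}\widehat{\boldsymbol{Q}}_{4m}\widehat{\boldsymbol{Q}}_{4m}^{\prime}+\bigl(\boldsymbol{P}_m\boldsymbol{\Phi}_{mt}\widehat{\boldsymbol{Q}}_{4m}\widehat{\boldsymbol{Q}}_{4m}^{\prime}-\boldsymbol{\Phi}_{mt}\bigr)+\boldsymbol{P}_m\boldsymbol{E}_{mt}\widehat{\boldsymbol{Q}}_{4m}\widehat{\boldsymbol{Q}}_{4m}^{\prime}=:\mathrm{I}+\mathrm{II}+\mathrm{III}.
\]
For $\mathrm{I}$, $\boldsymbol{P}_m\widehat{\boldsymbol{Q}}_{1m}=\boldsymbol{0}$ yields $\boldsymbol{P}_m\boldsymbol{\Psi}_{mt}=\boldsymbol{P}_m(\boldsymbol{Q}_{1m}-\widehat{\boldsymbol{Q}}_{1m})\boldsymbol{S}_{mt}\boldsymbol{Q}_{2m}^{\prime}$, so $\|\mathrm{I}\|_2\le\|\boldsymbol{P}_m\|_2\,\|\widehat{\boldsymbol{Q}}_{1m}-\boldsymbol{Q}_{1m}\|_2\,\|\boldsymbol{S}_{mt}\|_2=O_p\bigl(N^{(1+\delta_1)/2}p^{(1+\delta_2)/2}T^{-1/2}\bigr)$ by Theorem~\ref{T1}; after multiplying by $N^{-1/2}p^{-1/2}$ this is the first term of the bound. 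For $\mathrm{II}$, substitute $\boldsymbol{P}_m\boldsymbol{Q}_{3m}=\widehat{\boldsymbol{Q}}_{3m}+\boldsymbol{P}_m(\boldsymbol{Q}_{3m}-\widehat{\boldsymbol{Q}}_{3m})$ and $\widehat{\boldsymbol{Q}}_{4m}\widehat{\boldsymbol{Q}}_{4m}^{\prime}=\boldsymbol{Q}_{4m}\boldsymbol{Q}_{4m}^{\prime}+\boldsymbol{D}_m$ with $\|\boldsymbol{D}_m\|_2=O_p(\|\widehat{\boldsymbol{Q}}_{4m}-\boldsymbol{Q}_{4m}\|_2)$; the leading piece $\widehat{\boldsymbol{Q}}_{3m}\boldsymbol{Z}_{mt}\boldsymbol{Q}_{4m}^{\prime}$ cancels against $-\boldsymbol{\Phi}_{mt}$ up to $(\widehat{\boldsymbol{Q}}_{3m}-\boldsymbol{Q}_{3m})\boldsymbol{Z}_{mt}\boldsymbol{Q}_{4m}^{\prime}$, and each remaining piece has the form (loading error of $\widehat{\boldsymbol{Q}}_{3m}$ or $\widehat{\boldsymbol{Q}}_{4m}$)$\times\|\boldsymbol{Z}_{mt}\|_2\times O_p(1)$. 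By Theorem~\ref{T3} the loading errors are $O_p(\max\{N^{\delta_1}p^{\delta_2},N^{\delta_3}p^{\delta_4}\}T^{-1/2})$, so $\|\mathrm{II}\|_2=O_p\bigl(\max\{N^{\delta_1}p^{\delta_2},N^{\delta_3}p^{\delta_4}\}T^{-1/2}\cdot N^{(1-\delta_3)/2}p^{(1-\delta_4)/2}\bigr)$; after the $N^{-1/2}p^{-1/2}$ scaling this gives the second term $N^{\delta_3/2}p^{\delta_4/2}T^{-1/2}$, together with a term $N^{\delta_1-\delta_3/2}p^{\delta_2-\delta_4/2}T^{-1/2}$ which, when the global factors are no weaker than the local factors in each mode, is dominated by the first term $N^{\delta_1/2}p^{\delta_2/2}T^{-1/2}$. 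The work in $\mathrm{I}$ and $\mathrm{II}$ is purely the algebra of keeping every power of $N$ and $p$ exact; there is no probabilistic subtlety.

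The crux is $\mathrm{III}$, where the target conclusion is $\|\mathrm{III}\|_2=O_p(1)$, i.e.\ the third term after rescaling. Factoring $\mathrm{III}=\widehat{\boldsymbol{Q}}_{3m}\bigl(\boldsymbol{A}_m^{+}\widehat{\boldsymbol{B}}_{1m}^{\prime}\bigr)\boldsymbol{E}_{mt}\widehat{\boldsymbol{Q}}_{4m}\widehat{\boldsymbol{Q}}_{4m}^{\prime}$, the matrix $\boldsymbol{A}_m^{+}\widehat{\boldsymbol{B}}_{1m}^{\prime}$ is $r_{m1}\times N_m$ with operator norm $O_p(1)$ (hence bounded‑norm rows) and $\widehat{\boldsymbol{Q}}_{4m}$ has $r_{m2}$ unit‑norm columns, so $\boldsymbol{A}_m^{+}\widehat{\boldsymbol{B}}_{1m}^{\prime}\boldsymbol{E}_{mt}\widehat{\boldsymbol{Q}}_{4m}$ is an $r_{m1}\times r_{m2}$ bilinear form in $\boldsymbol{E}_{mt}$ with bounded‑norm coefficient vectors; were these coefficients nonrandom, Condition~3 would give each entry $O_p(1)$ and $\|\mathrm{III}\|_2=O_p(1)$. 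The obstacle is exactly that $\widehat{\boldsymbol{B}}_{1m},\widehat{\boldsymbol{Q}}_{3m},\widehat{\boldsymbol{Q}}_{4m}$ are functions of $\boldsymbol{E}_{mt}$. I would break this dependence by substituting the population limits $\boldsymbol{B}_{1m},\boldsymbol{Q}_{3m},\boldsymbol{Q}_{4m}$ (after which the bilinear form is handled via the $\alpha$‑mixing of Condition~1 and the moment bounds of Conditions~2--3), bounding the substitution error through Theorems~\ref{T1} and~\ref{T3} together with the size estimates $\|\boldsymbol{E}_{mt}\boldsymbol{Q}_{4m}\|_2=O_p(N^{1/2})$ and $\|\boldsymbol{E}_{mt}\|_F=O_p(N^{1/2}p^{1/2})$ from Condition~3 and the rate restrictions $N^{\delta_1}p^{\delta_2}T^{-1/2}=o(1)$, $N^{\delta_3}p^{\delta_4}T^{-1/2}=o(1)$. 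For the subterms in which $\boldsymbol{E}_{mt}$ multiplies $\widehat{\boldsymbol{Q}}_{4m}-\boldsymbol{Q}_{4m}$ or $\widehat{\boldsymbol{Q}}_{3m}-\boldsymbol{Q}_{3m}$ it will likely be necessary to use the first‑order linear‑in‑noise expansion of these eigenvector perturbations established inside the proofs of Theorems~\ref{T1} and~\ref{T3}, rather than their black‑box norm bounds, so as not to lose powers of $N$ and $p$; this is the step I expect to require the most care. Assembling $\mathrm{I},\mathrm{II},\mathrm{III}$ and multiplying by $N^{-1/2}p^{-1/2}$ then yields the stated rate.
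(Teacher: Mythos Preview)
Your decomposition into $\mathrm{I},\mathrm{II},\mathrm{III}$ is essentially the paper's decomposition regrouped: the paper splits $\widehat{\boldsymbol{\Phi}}_{mt}-\boldsymbol{\Phi}_{mt}$ into five pieces $I_1,\dots,I_5$, where your $\mathrm{I}$ is their $I_3$, your $\mathrm{II}$ collects their $I_1,I_4,I_5$, and your $\mathrm{III}$ is their $I_2$. Your preliminary bound $\|\boldsymbol{A}_m\|_{\min}\asymp 1$ is exactly their Lemma~10, proved the same way. The bounds you obtain for $\mathrm{I}$ and $\mathrm{II}$ match theirs, and the paper makes the same silent simplification you flag: it writes $\|\boldsymbol{Z}_{mt}\|_2\cdot\max\{N^{\delta_1}p^{\delta_2},N^{\delta_3}p^{\delta_4}\}T^{-1/2}$ directly as $O_p(N^{1/2+\delta_1/2}p^{1/2+\delta_2/2}T^{-1/2})+O_p(N^{1/2+\delta_3/2}p^{1/2+\delta_4/2}T^{-1/2})$ without commenting on the cross term.

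The real divergence is in $\mathrm{III}$. The paper does \emph{not} carry out the population-substitution and eigenvector-expansion program you outline. Instead it invokes a short Lemma~11 which asserts directly that $\bigl\|(\widehat{\boldsymbol{Q}}_{4m}^{\prime}\otimes\boldsymbol{A}_m(\boldsymbol{A}_m^{\prime}\boldsymbol{A}_m)^{-1})\,\mathrm{vec}(\boldsymbol{U}_{mt})\bigr\|_2=O_p(1)$, arguing only that each row of the Kronecker coefficient matrix has bounded $\ell_2$ length (from Lemma~10) and that $\boldsymbol{U}_{mt}=\widehat{\boldsymbol{B}}_{1m}^{\prime}\boldsymbol{E}_{mt}$ has entrywise bounded covariance (their Lemma~6, which in turn defers to Lemma~1 of \citet{doi:10.1080/01621459.2019.1584899} and Condition~3). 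In effect the paper treats the random bounded-norm coefficient vectors as if they were deterministic for this variance calculation and never explicitly confronts the coupling issue you raise. Your route is more cautious on precisely this point, but it is also considerably heavier: the first-order eigenvector expansions you anticipate needing are nowhere in the paper's argument. If you are willing to accept Lemma~11 (or the external lemma it rests on), $\|\mathrm{III}\|_2=O_p(1)$ follows in one line and the substitution machinery is unnecessary.
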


\begin{theorem} \label{T5}
    Under Conditions 1-7 , $ N^{\delta_1} p^{\delta_2} T^{-1/2} = o(1) $ and $ N^{\delta_3} p^{\delta_4} T^{-1/2} = o(1) $, we have
    \begin{equation*}
       N^{-1/2} p^{-1/2} \| \widehat{\boldsymbol{\Psi}}_{mt} - {\boldsymbol{\Psi}}_{mt} \|_2 = O_p( N^{\delta_1/2} p^{\delta_2/2} T^{-1/2} + N^{\delta_3/2} p^{\delta_4/2} T^{-1/2} + N^{-1/2} p^{-1/2}).
    \end{equation*}
\end{theorem}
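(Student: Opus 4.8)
\medskip
\noindent\textit{Proof sketch for Theorem \ref{T5}.}
The plan is to follow the same route as for Theorem \ref{T4}, reducing the problem to Theorems \ref{T1} and \ref{T4} together with the boundedness of $\boldsymbol{\Sigma}_{em}$ (Condition 3). Writing $\boldsymbol{X}_{mt}-\widehat{\boldsymbol{\Phi}}_{mt}=\boldsymbol{\Psi}_{mt}+(\boldsymbol{\Phi}_{mt}-\widehat{\boldsymbol{\Phi}}_{mt})+\boldsymbol{E}_{mt}$ and using the identity $\boldsymbol{\Psi}_{mt}=\boldsymbol{Q}_{1m}\boldsymbol{Q}_{1m}^{\prime}\boldsymbol{\Psi}_{mt}\boldsymbol{Q}_{2m}\boldsymbol{Q}_{2m}^{\prime}$ (valid because $\boldsymbol{\Psi}_{mt}=\boldsymbol{Q}_{1m}\boldsymbol{S}_{mt}\boldsymbol{Q}_{2m}^{\prime}$ with $\boldsymbol{Q}_{1m},\boldsymbol{Q}_{2m}$ semi-orthogonal), the definition of $\widehat{\boldsymbol{\Psi}}_{mt}$ gives the exact decomposition
\begin{align*}
\widehat{\boldsymbol{\Psi}}_{mt}-\boldsymbol{\Psi}_{mt}
&=\bigl(\widehat{\boldsymbol{Q}}_{1m}\widehat{\boldsymbol{Q}}_{1m}^{\prime}\boldsymbol{\Psi}_{mt}\widehat{\boldsymbol{Q}}_{2m}\widehat{\boldsymbol{Q}}_{2m}^{\prime}-\boldsymbol{Q}_{1m}\boldsymbol{Q}_{1m}^{\prime}\boldsymbol{\Psi}_{mt}\boldsymbol{Q}_{2m}\boldsymbol{Q}_{2m}^{\prime}\bigr)\\
&\quad+\widehat{\boldsymbol{Q}}_{1m}\widehat{\boldsymbol{Q}}_{1m}^{\prime}\bigl(\boldsymbol{\Phi}_{mt}-\widehat{\boldsymbol{\Phi}}_{mt}\bigr)\widehat{\boldsymbol{Q}}_{2m}\widehat{\boldsymbol{Q}}_{2m}^{\prime}+\widehat{\boldsymbol{Q}}_{1m}\widehat{\boldsymbol{Q}}_{1m}^{\prime}\boldsymbol{E}_{mt}\widehat{\boldsymbol{Q}}_{2m}\widehat{\boldsymbol{Q}}_{2m}^{\prime}=:\mathrm{I}+\mathrm{II}+\mathrm{III}.
\end{align*}
I would bound $\|\mathrm{I}\|_2$, $\|\mathrm{II}\|_2$ and $\|\mathrm{III}\|_2$ in spectral norm separately, and then multiply through by $N^{-1/2}p^{-1/2}$.

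For $\mathrm{I}$, I would telescope through $\widehat{\boldsymbol{Q}}_{1m}\widehat{\boldsymbol{Q}}_{1m}^{\prime}\boldsymbol{\Psi}_{mt}\boldsymbol{Q}_{2m}\boldsymbol{Q}_{2m}^{\prime}$ and use $\|\widehat{\boldsymbol{Q}}_{im}\widehat{\boldsymbol{Q}}_{im}^{\prime}-\boldsymbol{Q}_{im}\boldsymbol{Q}_{im}^{\prime}\|_2\le 2\|\widehat{\boldsymbol{Q}}_{im}-\boldsymbol{Q}_{im}\|_2$; Theorem \ref{T1} gives $\|\widehat{\boldsymbol{Q}}_{im}-\boldsymbol{Q}_{im}\|_2=O_p(N^{\delta_1}p^{\delta_2}T^{-1/2})$ for $i=1,2$, while Condition 4 and the bounded moments of $\boldsymbol{G}_t$ (Condition 2, with $k_1,k_2$ fixed) give $\|\boldsymbol{\Psi}_{mt}\|_2\le\|\boldsymbol{R}_m\|_2\|\boldsymbol{G}_t\|_2\|\boldsymbol{C}_m\|_2=O_p\bigl(N^{(1-\delta_1)/2}p^{(1-\delta_2)/2}\bigr)$. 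Multiplying, $\|\mathrm{I}\|_2=O_p\bigl(N^{(1+\delta_1)/2}p^{(1+\delta_2)/2}T^{-1/2}\bigr)$, so $N^{-1/2}p^{-1/2}\|\mathrm{I}\|_2=O_p(N^{\delta_1/2}p^{\delta_2/2}T^{-1/2})$, the first term of the asserted rate.

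For $\mathrm{II}$, the four outer factors are orthogonal projections, hence contractions, so $\|\mathrm{II}\|_2\le\|\widehat{\boldsymbol{\Phi}}_{mt}-\boldsymbol{\Phi}_{mt}\|_2$ and $N^{-1/2}p^{-1/2}\|\mathrm{II}\|_2$ is controlled directly by Theorem \ref{T4}, producing the $N^{\delta_1/2}p^{\delta_2/2}T^{-1/2}+N^{\delta_3/2}p^{\delta_4/2}T^{-1/2}+N^{-1/2}p^{-1/2}$ contribution. For $\mathrm{III}$, since left/right multiplication by semi-orthogonal matrices is an isometry on the spectral norm, $\|\mathrm{III}\|_2=\|\widehat{\boldsymbol{Q}}_{1m}^{\prime}\boldsymbol{E}_{mt}\widehat{\boldsymbol{Q}}_{2m}\|_2$, the norm of a $k_1\times k_2$ matrix. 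Expanding $\widehat{\boldsymbol{Q}}_{im}=\boldsymbol{Q}_{im}+(\widehat{\boldsymbol{Q}}_{im}-\boldsymbol{Q}_{im})$, the leading term $\boldsymbol{Q}_{1m}^{\prime}\boldsymbol{E}_{mt}\boldsymbol{Q}_{2m}$ has mean zero and covariance $(\boldsymbol{Q}_{2m}^{\prime}\otimes\boldsymbol{Q}_{1m}^{\prime})\boldsymbol{\Sigma}_{em}(\boldsymbol{Q}_{2m}\otimes\boldsymbol{Q}_{1m})$, of spectral norm at most $\|\boldsymbol{\Sigma}_{em}\|_2=O(1)$ by Condition 3; since $k_1,k_2$ are fixed it is $O_p(1)$ and contributes $O_p(N^{-1/2}p^{-1/2})$ after normalization, i.e.\ the last term of the asserted rate. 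The remaining cross terms, each carrying a factor $\widehat{\boldsymbol{Q}}_{im}-\boldsymbol{Q}_{im}$, must be shown to be of smaller order---this is the delicate part, discussed below. Collecting the three bounds then yields the theorem.

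The step I expect to be the main obstacle is the third piece $\mathrm{III}=\widehat{\boldsymbol{Q}}_{1m}^{\prime}\boldsymbol{E}_{mt}\widehat{\boldsymbol{Q}}_{2m}$: since $\widehat{\boldsymbol{Q}}_{1m},\widehat{\boldsymbol{Q}}_{2m}$ are data-dependent and depend on $\boldsymbol{E}_{mt}$, a naive split bounding a cross term by $\|\widehat{\boldsymbol{Q}}_{im}-\boldsymbol{Q}_{im}\|_2\,\|\boldsymbol{E}_{mt}\|_2$ is too lossy, since it drags in a stray factor of order $N^{1/2}$ or $p^{1/2}$. The fix I would pursue is to use that $\widehat{\boldsymbol{Q}}_{1m},\widehat{\boldsymbol{Q}}_{2m}$ are functionals of the $T$-averaged cross-covariances built from the other groups (cf.\ \eqref{hatW1mdef}), so that a single time slice $\boldsymbol{E}_{mt}$ affects them only at order $O_p(1/T)$; up to such corrections, $\widehat{\boldsymbol{Q}}_{1m}^{\prime}\boldsymbol{E}_{mt}\widehat{\boldsymbol{Q}}_{2m}$ is a fixed-dimensional bilinear form in $\boldsymbol{E}_{mt}$ against (asymptotically) deterministic contractions, whose entries are $O_p(1)$ by Condition 3 rather than $\|\boldsymbol{E}_{mt}\|_2$-sized. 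Combined with the uncorrelatedness in Condition 6, this should keep every cross term in $\mathrm{III}$ strictly below the leading order. A secondary technical point is the projection-alignment inequality used in $\mathrm{I}$, which relies on Condition 5 (distinct eigenvalues, hence well-defined eigenvectors) to pass from the per-eigenvector rate of Theorem \ref{T1} to the subspace projectors $\widehat{\boldsymbol{Q}}_{im}\widehat{\boldsymbol{Q}}_{im}^{\prime}$.
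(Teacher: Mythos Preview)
Your proof is correct and follows essentially the same route as the paper: the paper groups your $\mathrm{I}$ and $\mathrm{III}$ into a single term $I_1$ and bounds it by invoking the proof of Theorem~3 in \citet{wang2019factor} (obtaining $\|I_1\|_2=O_p(N^{1/2+\delta_1/2}p^{1/2+\delta_2/2}T^{-1/2}+1)$), while your $\mathrm{II}$ is exactly the paper's $I_2$, controlled via Theorem~\ref{T4}. Your explicit discussion of the cross terms in $\mathrm{III}$ actually goes beyond what the paper spells out---the paper simply defers that point to \citet{wang2019factor}---so the concern you flag is legitimate, but the paper does not resolve it any more directly than you do.
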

The consistency of the global and local signal estimate requires $N, p \rightarrow \infty$. Importantly, both the global signal part and the local signal part exhibit the same convergence rate.

\section{Simulation}
\begin{table}[!htbp]
    \tabcolsep  = 0.15cm
    \begin{center}
        \caption{Means and standard deviations of $\mathcal{D}(\widehat{\boldsymbol{Q}}_{i1}, \boldsymbol{Q}_{i1}),\ i = 1,2,3,4$, over 200 simulation runs. For ease of presentation, all numbers in this table are the true numbers multipied by 10.}
    \begin{tabular}{cccccccccc}
        \hline 
        \multicolumn{4}{c}{\textbf{Panel\ A}: $i=1,2$} & \multicolumn{2}{c}{$T = n*p $} & \multicolumn{2}{c}{$ T = 1.5*n*p $} &\multicolumn{2}{c}{$ T= 2*n*p $}\\
        \hline\rule{0pt}{15pt}
        $ \delta_1 $ & $ \delta_2 $ & $ n $ & $ p $ & $ \mathcal{D}(\widehat{\boldsymbol{Q}}_{11}, \boldsymbol{Q}_{11}) $ & $ \mathcal{D}(\widehat{\boldsymbol{Q}}_{21}, \boldsymbol{Q}_{21}) $ & $ \mathcal{D}(\widehat{\boldsymbol{Q}}_{11}, \boldsymbol{Q}_{11}) $ & $ \mathcal{D}(\widehat{\boldsymbol{Q}}_{21}, \boldsymbol{Q}_{21}) $ & $ \mathcal{D}(\widehat{\boldsymbol{Q}}_{11}, \boldsymbol{Q}_{11}) $ & $ \mathcal{D}(\widehat{\boldsymbol{Q}}_{21}, \boldsymbol{Q}_{21}) $ \\
        \hline
        0.5 & 0.5 & 20 & 20 &5.48(0.04) &6.67(0.41) &5.00(0.09)&6.88(0.41)&4.87(0.07)&4.38(0.41) \\
        && 20 & 40 &5.74(0.04)&6.32(0.34)&5.67(0.03)&2.35(0.84)&5.74(0.02)&2.75(1.26)\\
        && 40 & 40 &5.77(0.01)&6.89(0.13)&5.78(0.01)&6.53(1.41)&5.57(0.03)&2.34(0.21)\\
        \hline
        0.5 & 0 & 20 & 20 & 1.56(0.47)&0.85(0.27) &0.75(0.19)&0.37(0.08)&0.73(0.17)&0.46(0.11) \\
        && 20 & 40 &0.77(0.21)&0.35(0.07)&0.36(0.07)&0.22(0.03)&0.31(0.06)&0.39(0.13)\\
        && 40 & 40 &0.25(0.04)&0.21(0.03)&0.26(0.04)&0.14(0.02)&0.17(0.03)&0.13(0.02)\\
        \hline
        0 & 0 & 20 & 20 & 0.40(0.13)&0.18(0.05) &1.05(0.56)&0.30(0.10)&0.24(0.08)&0.15(0.04) \\
        && 20 & 40 &0.21(0.08)&0.15(0.03)&0.14(0.05)&0.15(0.04)&0.18(0.06)&0.13(0.04)\\
        && 40 & 40 &0.11(0.02)&0.11(0.03)&0.07(0.02)&0.06(0.02)&0.08(0.02)&0.04(0.01)\\
        \hline \rule{0pt}{15pt}
    \end{tabular}
    \begin{tabular}{cccccccccc}
        \hline 
        \multicolumn{4}{c}{\textbf{Panel\ B}: $i=3,4$}  & \multicolumn{2}{c}{$T = n*p $} & \multicolumn{2}{c}{$ T = 1.5*n*p $} &\multicolumn{2}{c}{$ T= 2*n*p $}\\
        \hline \rule{0pt}{15pt}
        $ \delta_3 $ & $ \delta_4 $ & $ n $ & $ p $ & $ \mathcal{D}(\widehat{\boldsymbol{Q}}_{31}, \boldsymbol{Q}_{31}) $ & $ \mathcal{D}(\widehat{\boldsymbol{Q}}_{41}, \boldsymbol{Q}_{41}) $ & $ \mathcal{D}(\widehat{\boldsymbol{Q}}_{31}, \boldsymbol{Q}_{31}) $ & $ \mathcal{D}(\widehat{\boldsymbol{Q}}_{41}, \boldsymbol{Q}_{41}) $ & $ \mathcal{D}(\widehat{\boldsymbol{Q}}_{31}, \boldsymbol{Q}_{31}) $ & $ \mathcal{D}(\widehat{\boldsymbol{Q}}_{41}, \boldsymbol{Q}_{41}) $ \\
        \hline
        0.5 & 0.5 & 20 & 20 & 6.95(0.10)& 6.98(0.11)&6.14(0.26)&6.54(0.27)&7.01(0.08)&5.09(1.07) \\
        && 20 & 40 &7.02(0.01)&4.35(1.58)&6.40(0.07)&5.42(0.27)&7.06(0.01)&1.09(0.33)\\
        && 40 & 40 &7.08(0.02)&6.37(0.35)&6.87(0.10)&5.63(0.58)&6.26(0.17)&6.98(0.03)\\
        \hline
        0.5 & 0 & 20 & 20 &1.62(1.18) &1.88(0.53) &0.97(0.70)&1.09(0.42)&1.96(0.52)&0.87(0.37) \\
        && 20 & 40 &2.24(0.57)&0.51(0.11)&0.51(0.14)&0.27(0.04)&0.20(0.04)&0.28(0.05)\\
        && 40 & 40 &0.22(0.05)&0.23(0.04)&0.21(0.05)&0.18(0.03)&0.24(0.07)&0.19(0.03)\\
        \hline
        0 & 0 & 20 & 20 & 0.36(0.07)&0.28(0.05) &0.23(0.06)&0.21(0.06)&0.21(0.04)&0.16(0.03) \\
        && 20 & 40 &0.12(0.02)&0.15(0.02)&0.10(0.02)&0.13(0.02)&0.09(0.02)&0.14(0.02)\\
        && 40 & 40 &0.09(0.02)&0.10(0.01)&0.09(0.01)&0.07(0.01)&0.06(0.01)&0.05(0.00)\\
        \hline
    \end{tabular}
    \label{Table 1}
\end{center}
\end{table} 
\begin{table}[!htbp]
    \tabcolsep  = 0.15cm
    \begin{center}
        \caption{Means and standard deviations of $\mathcal{D}(\widehat{\boldsymbol{Q}}_{i2}, \boldsymbol{Q}_{i2}),\ i = 1,2,3,4$, over 200 simulation runs. For ease of presentation, all numbers in this table are the true numbers multipied by 10.}
    \begin{tabular}{cccccccccc}
        \hline 
        \multicolumn{4}{c}{\textbf{Panel\ A}: $i=1,2$} & \multicolumn{2}{c}{$T = n*p $} & \multicolumn{2}{c}{$ T = 1.5*n*p $} &\multicolumn{2}{c}{$ T= 2*n*p $}\\
        \hline\rule{0pt}{15pt}
        $ \delta_1 $ & $ \delta_2 $ & $ n $ & $ p $ & $ \mathcal{D}(\widehat{\boldsymbol{Q}}_{12}, \boldsymbol{Q}_{12}) $ & $ \mathcal{D}(\widehat{\boldsymbol{Q}}_{22}, \boldsymbol{Q}_{22}) $ & $ \mathcal{D}(\widehat{\boldsymbol{Q}}_{12}, \boldsymbol{Q}_{12}) $ & $ \mathcal{D}(\widehat{\boldsymbol{Q}}_{22}, \boldsymbol{Q}_{22}) $ & $ \mathcal{D}(\widehat{\boldsymbol{Q}}_{12}, \boldsymbol{Q}_{12}) $ & $ \mathcal{D}(\widehat{\boldsymbol{Q}}_{22}, \boldsymbol{Q}_{22}) $ \\
        \hline
        0.5 & 0.5 & 20 & 20 &5.75(0.01) &6.75(0.41) &5.82(0.02)&4.38(0.59)&5.73(0.05)&3.06(0.53) \\
        && 20 & 40 &5.75(0.05)&6.66(0.40)&5.85(0.04)&5.18(1.86)&4.77(0.09)&4.25(1.61)\\
        && 40 & 40 &5.50(0.02)&6.30(0.28)&5.63(0.02)&5.39(0.60)&5.65(0.02)&1.83(0.84)\\
        \hline
        0.5 & 0 & 20 & 20 & 3.98(0.67)&0.51(0.13) &0.67(0.12)&0.49(0.12)&0.97(0.18)&0.45(0.10) \\
        && 20 & 40 &0.45(0.11)&0.36(0.09)&0.32(0.06)&0.19(0.03)&0.21(0.04)&0.27(0.07)\\
        && 40 & 40 &0.25(0.05)&0.30(0.06)&0.27(0.04)&0.11(0.02)&0.15(0.02)&0.10(0.02)\\
        \hline
        0 & 0 & 20 & 20 & 0.62(0.28)&0.54(0.22) &1.4(0.80)&0.20(0.07)&0.34(0.14)&0.16(0.04) \\
        && 20 & 40 &0.27(0.10)&0.12(0.03)&0.19(0.08)&0.13(0.04)&0.16(0.06)&0.14(0.04)\\
        && 40 & 40 &0.15(0.05)&0.08(0.02)&0.10(0.03)&0.05(0.01)&0.08(0.02)&0.04(0.01)\\
        \hline \rule{0pt}{15pt}
    \end{tabular}
    \begin{tabular}{cccccccccc}
        \hline 
        \multicolumn{4}{c}{\textbf{Panel\ B}: $i=3,4$}  & \multicolumn{2}{c}{$T = n*p $} & \multicolumn{2}{c}{$ T = 1.5*n*p $} &\multicolumn{2}{c}{$ T= 2*n*p $}\\
        \hline \rule{0pt}{15pt}
        $ \delta_3 $ & $ \delta_4 $ & $ n $ & $ p $ & $ \mathcal{D}(\widehat{\boldsymbol{Q}}_{32}, \boldsymbol{Q}_{32}) $ & $ \mathcal{D}(\widehat{\boldsymbol{Q}}_{42}, \boldsymbol{Q}_{42}) $ & $ \mathcal{D}(\widehat{\boldsymbol{Q}}_{32}, \boldsymbol{Q}_{32}) $ & $ \mathcal{D}(\widehat{\boldsymbol{Q}}_{42}, \boldsymbol{Q}_{42}) $ & $ \mathcal{D}(\widehat{\boldsymbol{Q}}_{32}, \boldsymbol{Q}_{32}) $ & $ \mathcal{D}(\widehat{\boldsymbol{Q}}_{42}, \boldsymbol{Q}_{42}) $ \\
        \hline
        0.5 & 0.5 & 20 & 20 & 6.88(0.06)& 7.08(0.06)&6.83(0.01)&6.98(0.01)&7.05(0.02)&7.16(0.18) \\
        && 20 & 40 &7.00(0.07)&6.04(0.40)&6.36(0.20)&6.73(0.61)&6.91(0.03)&6.49(0.18)\\
        && 40 & 40 &7.12(0.03)&6.21(0.26)&6.99(0.01)&6.68(0.20)&6.93(0.02)&4.05(0.45)\\
        \hline
        0.5 & 0 & 20 & 20 &1.54(1.32) &3.09(1.35) &0.67(0.19)&4.24(0.80)&1.24(0.26)&0.51(0.14) \\
        && 20 & 40 &0.46(0.14)&0.54(0.14)&0.19(0.04)&0.26(0.04)&0.74(0.24)&0.49(0.13)\\
        && 40 & 40 &0.50(0.14)&0.30(0.06)&0.37(0.09)&0.17(0.03)&0.16(0.03)&0.17(0.03)\\
        \hline
        0 & 0 & 20 & 20 & 0.29(0.06)&0.33(0.06) &0.20(0.03)&0.25(0.07)&0.18(0.03)&0.15(0.03) \\
        && 20 & 40 &0.16(0.03)&0.20(0.03)&0.11(0.02)&0.14(0.02)&0.11(0.02)&0.13(0.02)\\
        && 40 & 40 &0.11(0.02)&0.08(0.01)&0.06(0.01)&0.05(0.01)&0.06(0.01)&0.05(0.00)\\
        \hline
    \end{tabular}
    \label{Table 2}
\end{center}
\end{table} 
\begin{table}[!htbp]
    \tabcolsep  = 0.15cm
    \begin{center}
        \caption{Means and standard deviations of $\mathcal{D}(\widehat{\boldsymbol{Q}}_{i3}, \boldsymbol{Q}_{i3}),\ i = 1,2,3,4$, over 200 simulation runs. For ease of presentation, all numbers in this table are the true numbers multipied by 10.}
    \begin{tabular}{cccccccccc}
        \hline 
        \multicolumn{4}{c}{\textbf{Panel\ A}: $i=1,2$} & \multicolumn{2}{c}{$T = n*p $} & \multicolumn{2}{c}{$ T = 1.5*n*p $} &\multicolumn{2}{c}{$ T= 2*n*p $}\\
        \hline\rule{0pt}{15pt}
        $ \delta_1 $ & $ \delta_2 $ & $ n $ & $ p $ & $ \mathcal{D}(\widehat{\boldsymbol{Q}}_{13}, \boldsymbol{Q}_{13}) $ & $ \mathcal{D}(\widehat{\boldsymbol{Q}}_{23}, \boldsymbol{Q}_{23}) $ & $ \mathcal{D}(\widehat{\boldsymbol{Q}}_{13}, \boldsymbol{Q}_{13}) $ & $ \mathcal{D}(\widehat{\boldsymbol{Q}}_{23}, \boldsymbol{Q}_{23}) $ & $ \mathcal{D}(\widehat{\boldsymbol{Q}}_{13}, \boldsymbol{Q}_{13}) $ & $ \mathcal{D}(\widehat{\boldsymbol{Q}}_{23}, \boldsymbol{Q}_{23}) $ \\
        \hline
        0.5 & 0.5 & 20 & 20 &5.79(0.01) &7.04(0.04) &5.93(0.16)&6.89(0.16)&5.24(0.11)&3.21(0.43) \\
        && 20 & 40 &5.47(0.06)&6.73(0.15)&5.70(0.09)&4.07(0.55)&5.79(0.01)&3.02(1.79)\\
        && 40 & 40 &5.50(0.02)&7.01(0.08)&5.75(0.01)&4.32(0.92)&5.28(0.04)&2.91(0.21)\\
        \hline
        0.5 & 0 & 20 & 20 & 1.81(0.67)&0.43(0.10) &0.60(0.15)&0.33(0.08)&0.57(0.12)&0.32(0.06) \\
        && 20 & 40 &0.37(0.07)&0.30(0.06)&0.39(0.05)&0.21(0.03)&0.45(0.09)&0.28(0.07)\\
        && 40 & 40 &0.34(0.06)&0.17(0.03)&0.27(0.04)&0.12(0.02)&0.14(0.02)&0.11(0.02)\\
        \hline
        0 & 0 & 20 & 20 & 0.55(0.24)&0.29(0.10) &0.30(0.10)&0.21(0.07)&0.18(0.05)&0.15(0.04) \\
        && 20 & 40 &0.15(0.03)&0.20(0.06)&0.37(0.13)&0.17(0.04)&0.17(0.08)&0.19(0.06)\\
        && 40 & 40 &0.09(0.02)&0.06(0.01)&0.08(0.02)&0.04(0.01)&0.08(0.02)&0.04(0.01)\\
        \hline \rule{0pt}{15pt}
    \end{tabular}
    \begin{tabular}{cccccccccc}
        \hline 
        \multicolumn{4}{c}{\textbf{Panel\ B}: $i=3,4$}  & \multicolumn{2}{c}{$T = n*p $} & \multicolumn{2}{c}{$ T = 1.5*n*p $} &\multicolumn{2}{c}{$ T= 2*n*p $}\\
        \hline \rule{0pt}{15pt}
        $ \delta_3 $ & $ \delta_4 $ & $ n $ & $ p $ & $ \mathcal{D}(\widehat{\boldsymbol{Q}}_{33}, \boldsymbol{Q}_{33}) $ & $ \mathcal{D}(\widehat{\boldsymbol{Q}}_{43}, \boldsymbol{Q}_{43}) $ & $ \mathcal{D}(\widehat{\boldsymbol{Q}}_{32}, \boldsymbol{Q}_{33}) $ & $ \mathcal{D}(\widehat{\boldsymbol{Q}}_{43}, \boldsymbol{Q}_{43}) $ & $ \mathcal{D}(\widehat{\boldsymbol{Q}}_{33}, \boldsymbol{Q}_{33}) $ & $ \mathcal{D}(\widehat{\boldsymbol{Q}}_{43}, \boldsymbol{Q}_{43}) $ \\
        \hline
        0.5 & 0.5 & 20 & 20 & 6.55(0.21)& 6.91(0.15)&6.61(0.27)&6.58(1.28)&6.91(0.09)&5.66(0.09) \\
        && 20 & 40 &6.73(0.03)&5.51(0.16)&6.94(0.01)&6.59(0.13)&6.58(0.12)&6.41(0.61)\\
        && 40 & 40 &6.55(0.03)&6.40(0.31)&7.00(0.01)&7.00(0.01)&7.03(0.02)&2.92(0.74)\\
        \hline
        0.5 & 0 & 20 & 20 &0.57(0.16) &0.87(0.29) &1.81(0.77)&1.63(0.65)&0.93(0.23)&0.69(0.20) \\
        && 20 & 40 &1.71(0.40)&0.36(0.06)&0.25(0.05)&0.33(0.06)&0.26(0.06)&0.22(0.03)\\
        && 40 & 40 &0.38(0.10)&0.26(0.04)&0.24(0.05)&0.32(0.08)&0.20(0.03)&0.18(0.03)\\
        \hline
        0 & 0 & 20 & 20 & 0.24(0.04)&0.23(0.04) &0.21(0.04)&0.19(0.03)&0.23(0.04)&0.18(0.03) \\
        && 20 & 40 &0.15(0.02)&0.26(0.06)&0.09(0.02)&0.13(0.02)&0.10(0.02)&0.11(0.01)\\
        && 40 & 40 &0.10(0.02)&0.08(0.01)&0.07(0.01)&0.05(0.01)&0.06(0.01)&0.05(0.00)\\
        \hline
    \end{tabular}
    \label{Table 3}
\end{center}
\end{table} 
For two orthogonal matrices $\boldsymbol{O}_1$ and $\boldsymbol{O}_2$ of size $p \times q_1$ and $ p \times q_2$, we define 
\begin{equation}
    \mathcal{D}\left(\boldsymbol{O}_1, \boldsymbol{O}_2\right)=\left(1-\frac{1}{\max \left(q_1, q_2\right)} \operatorname{Tr}\left(\boldsymbol{O}_1 \boldsymbol{O}_1^{\top} \boldsymbol{O}_2 \boldsymbol{O}_2^{\top}\right)\right)^{1 / 2}.
    \end{equation}
Here $ \mathcal{D}\left(\boldsymbol{O}_1, \boldsymbol{O}_2\right) \in [0,1]$, which measures the distance between the two linear spaces $\mathcal{M}(\boldsymbol{O}_1)$ and $\mathcal{M}(\boldsymbol{O}_2)$.
It equals to 0 if and only if $ \mathcal{M}(\boldsymbol{O}_1) = \mathcal{M}(\boldsymbol{O}_2)$ and equals to 1 if and only if $\mathcal{M}(\boldsymbol{O}_1)  \perp \mathcal{M}(\boldsymbol{O}_2)$.

Three groups' case is chosen for our simulation. The observed data are generated by the following model:
\begin{equation}
    \left\{
    \begin{aligned}
   \boldsymbol{X}_{1t} & =\boldsymbol{R}_1 \boldsymbol{G}_{t} \boldsymbol{C}_{1}^{'} +\boldsymbol{\Gamma}_{1} \boldsymbol{F}_{1t} \boldsymbol{\Lambda}_{1}^{'}+\boldsymbol{E}_{1t},   \\
         \boldsymbol{X}_{2t} & =\boldsymbol{R}_2 \boldsymbol{G}_{t} \boldsymbol{C}_{2}^{'} +\boldsymbol{\Gamma}_{2} \boldsymbol{F}_{2t} \boldsymbol{\Lambda}_{2}^{'}+\boldsymbol{E}_{2t} ,\ (t=1,2,...,T)\\
         \boldsymbol{X}_{3t} & =\boldsymbol{R}_3 \boldsymbol{G}_{t} \boldsymbol{C}_{3}^{'} +\boldsymbol{\Gamma}_{3} \boldsymbol{F}_{3t} \boldsymbol{\Lambda}_{3}^{'}+\boldsymbol{E}_{3t} . 
    \end{aligned}
    \right.
\end{equation}
The dimensions of the matrix global factors are assumed to be $k_1=3$ and $k_2=2$. The true dimension
 of all groups are set to $(r_{m1},r_{m2})=(2,2),\ m=1,2,3$. The number of individuals in each group is chosen to be the same, i.e., $N_1=N_2=N_3= n$. The elements of $\boldsymbol{G}_t$ follow $k_1 k_2$ independent AR$(1)$
  processes with Gaussian white noise $\mathcal{N}(0,1)$ innovations. These AR coefficients are given by $[-0.5\ 0.6;\ 0.8\ -0.4;\ 0.7\ 0.3]$.   
  The entries of $\boldsymbol{F}_{mt}$ follow $r_{m1} r_{m2} $ independent AR$(1)$ processes with Gaussian white noise $\mathcal{N}(0,1)$ innovations.
   These AR coefficients are $[-0.5\ 0.6;\ 0.8\ -0.4]$. The entries of $\boldsymbol{R}_m$ and $\boldsymbol{\Gamma}_m$  are randomly sampled from 
   uniform distributions $U(-n^{-\delta_i/2},n^{-\delta_i/2})$ for $i=1,3$. Similarly, the entries of $\boldsymbol{C}_m$ and $\boldsymbol{\Lambda}_m$  are independently drawn from the 
   uniform distributions $U(-p^{-\delta_i/2},p^{-\delta_i/2})$ for $i=2,4$. The error process $\boldsymbol{E}_{mt}$ is a white noise process with mean $\boldsymbol{0}$ and a Kronecker product structure, that is,
    $\text{Cov}(\text{vec}(\boldsymbol{E}_{mt})) = \boldsymbol{\Sigma}_2 \otimes \boldsymbol{\Sigma}_{m1}$, where $\boldsymbol{\Sigma}_{m1}$ and $\boldsymbol{\Sigma}_2$ are of sizes $N_m \times N_m$ and $p \times p$.
     For $\boldsymbol{\Sigma}_{m1}$ and $\boldsymbol{\Sigma}_2$, the diagonal elements are set to $1$ and off-diagonal elements are set to $0.2$. Three pairs of $(\delta_1,\ \delta_2,\ \delta_3,\ \delta_4)$ combinations are chosen in our work: $(0.5, 0.5, 0.5, 0.5),\ (0.5, 0, 0.5, 0),\ (0, 0, 0, 0)$. The sample size $T$ is selected as $n p$, $1.5  n p $, $2 n  p $.
   
	Table \ref{Table 1}-\ref{Table 3} exhibit the mean and standard deviations of the estimation errors $\mathcal{D}(\widehat{\boldsymbol{Q}}_{im}, \boldsymbol{Q}_{im})$ for group $m =1,2,3$ and global row ($i=1$), global column ($i=2$), local row ($i=3$)
	and local column $(i=4)$ loading space. The results show that increasing the strength of column or row factors can enhance the accuracy of estimation.

    Secondly, we evaluate the performance of estimating the number of global factors and local factors, denoted as $k_1$, $k_2$, $r_{m_1}$ and $r_{m_2}$. We use the same data as presented in Table \ref{Table 1}-\ref{Table 3} with $(\delta_1,\ \delta_2,\ \delta_3,\ \delta_4) = \ (0, 0, 0, 0)$.
	We use $f_m$ to represent the relative frequency of correctly estimating the true number of factors $(3, 2, 2, 2)$ for each group $m$ $(m=1,2,3)$. Table \ref{Table 4} displays the relative frequency of the estimated global and local factor numbers. The accuracy improves when the
	 sample size $T$ increases. 
	\begin{table}[!htbp]
		\tabcolsep  = 0.15cm
		\begin{center}
			\caption{Relative frequencies of correctly estimating the number of global and local factors.}
		\begin{tabular}{ccccccccccccccc}
			\hline 
            \multicolumn{6}{c}{} & \multicolumn{3}{c}{$T = n*p $} & \multicolumn{3}{c}{$ T = 1.5*n*p $} &\multicolumn{3}{c}{$ T= 2*n*p $}\\
			\hline\rule{0pt}{15pt}
			$ \delta_1 $ & $ \delta_2 $ & $\delta_3$ & $\delta_4$ & $ n $ & $ p $ & $ f_1 $ & $ f_2 $ & $ f_3 $ &   $ f_1 $ & $ f_2 $ & $ f_3 $ &  $ f_1 $ & $ f_2 $ & $ f_3 $ \\
			0 & 0 & 0 & 0 & 20 & 20 & 0.935 & 0.035 &0.71&0.295 &0.035&0.665 & 0.6 & 0.59 & 0.97 \\
			&&&& 20 & 40 &0.7&0.565&1 &0.895&0.93&0.975& 1 & 0.5& 0.69 \\
			&&&& 40 & 40 &0.97&0.965&0.995&1&0.95 &0.97& 1 & 0.995 &1\\
			\hline \rule{0pt}{15pt}
		\end{tabular}
		\label{Table 4}
		\end{center}
	\end{table}
 
	Finally, we investigate the performance of recovering the global signal part $\boldsymbol{\Psi}_{mt}$ and the local signal part $\boldsymbol{\Phi}_{mt}$ for group $m=1,2,3.$ 
	In this evaluation, we consider the case where $(\delta_1,\ \delta_2,\ \delta_3,\ \delta_4)=(0, 0, 0, 0)$. To quantify the accuracy of recovery, we calculate the distances  $\mathcal{D}(\widehat{\boldsymbol{\Psi}}_m,\boldsymbol{\Psi}_m)$
	and $\mathcal{D}(\widehat{\boldsymbol{\Phi}}_m,\boldsymbol{\Phi}_m)$. These are estimated by the following equations:
	\begin{equation}
		\begin{aligned}
			\mathcal{D}(\widehat{\boldsymbol{\Psi}}_m,\boldsymbol{\Psi}_m) & = N_m^{-1/2} p^{-1/2} \left(\sum_{t=1}^{T} \| \widehat{\boldsymbol{\Psi}}_{mt} - \boldsymbol{\Psi}_{mt} \|_2/T \right) ,\\
			\mathcal{D}(\widehat{\boldsymbol{\Phi}}_m,\boldsymbol{\Phi}_m) & = N_m^{-1/2} p^{-1/2} \left(\sum_{t=1}^{T} \| \widehat{\boldsymbol{\Phi}}_{mt} - \boldsymbol{\Phi}_{mt} \|_2/T \right) .
		\end{aligned}
	\end{equation}
 
	Table \ref{Table 5} presents the mean and standard deviations of $\mathcal{D}(\widehat{\boldsymbol{\Psi}}_m,\boldsymbol{\Psi}_m)$
	and $\mathcal{D}(\widehat{\boldsymbol{\Phi}}_m,\boldsymbol{\Phi}_m)$. The estimated errors decrease when the dimensions $(p,q)$ or the sample size $T$ increase. 
	\begin{table}[!htbp]
		\tabcolsep  = 0.15cm
		\begin{center}
			\caption{Means and standard deviations of $\mathcal{D}(\widehat{\boldsymbol{\Phi}}_m,\boldsymbol{\Phi}_m)$ and $\mathcal{D}(\widehat{\boldsymbol{\Psi}}_m,\boldsymbol{\Psi}_m)$ for group $m = 1,2,3$, over 200 simulation runs.}
		\begin{tabular}{ccccccccc}
			\hline\rule{0pt}{15pt}
		    $ n $ & $ p $ & $ T $ & $ \mathcal{D}(\widehat{\boldsymbol{\Phi}}_1,\boldsymbol{\Phi}_1) $ & $ \mathcal{D}(\widehat{\boldsymbol{\Psi}}_1,\boldsymbol{\Psi}_1) $ & $ \mathcal{D}(\widehat{\boldsymbol{\Phi}}_2,\boldsymbol{\Phi}_2) $ & $ \mathcal{D}(\widehat{\boldsymbol{\Psi}}_2,\boldsymbol{\Psi}_2) $ & $ \mathcal{D}(\widehat{\boldsymbol{\Phi}}_3,\boldsymbol{\Phi}_3) $ & $ \mathcal{D}(\widehat{\boldsymbol{\Psi}}_3,\boldsymbol{\Psi}_3) $ \\
			\hline
			10 & 10 & 50 & 0.97(0.60) &0.75(0.43) & 1.12(0.95)&0.97(0.80)& 2.06(0.98)& 1.45(0.68) \\
			   &    & 200 & 1.82(1.35) & 1.32(1.03) & 0.91(1.21)& 0.77(1.06)& 0.46(0.21)& 0.33(0.11)\\
			   &    & 800 & 0.22(0.01) &0.21(0.01) & 0.27(0.00)& 0.24(0.00)& 0.32(0.02)& 0.22(0.01)\\
			   &    & 2000 & 0.20(0.01) &0.21(0.01) & 0.29(0.01)& 0.29(0.00)& 0.20(0.00)& 0.21(0.00)\\
			\hline
			20 & 20 & 50 & 1.11(1.15) &0.97(1.00) &1.28(1.00)&1.01(0.84)&1.01(0.82)&0.80(0.65) \\
			   &    & 200 & 0.18(0.38)&0.22(0.31)&0.27(0.32)&0.25(0.16)&0.29(0.60)&0.27(0.39)\\
			   &    & 800 & 0.10(0.00)&0.10(0.00)&0.09(0.00)&0.12(0.00)&0.08(0.00)&0.10(0.00)\\
			   &    & 2000 & 0.09(0.00)&0.10(0.00)&0.08(0.00)&0.09(0.00)&0.11(0.00)&0.10(0.00)\\
			\hline
			40 & 40 & 50 &  0.86(0.97)&0.76(0.80) &0.97(0.96)&0.76(0.75)&1.32(1.08)&1.06(0.92) \\
			   &    & 200 &  0.16(0.14)&0.19(0.10)&0.07(0.08)&0.11(0.08)&0.05(0.00)&0.07(0.00)\\
			   &    & 800 &  0.04(0.00)&0.05(0.00)&0.05(0.00)&0.05(0.00)&0.04(0.00)&0.05(0.00)\\
			   &    & 2000 &  0.04(0.00)&0.06(0.00)&0.05(0.00)&0.05(0.00)&0.04(0.00)&0.06(0.00)\\
			\hline \rule{0pt}{15pt}
		\end{tabular}
		\label{Table 5}
	\end{center}
	\end{table} 
 
\section{Real Data} \label{S6}

In this section, we collect multi-industry stock indices data from Wind database. It encompasses eight daily indicators ($q=8$) within the stock market, shared among twenty stocks ($n=20$) from each of the ten industries ($M=10$) from June 2021 to June 2023 for five hundred trading days ($T=500$). The industries include Energy, Materials, Industrials, Consumer Discretionary, Consumer Staples, Health Care, Financials, Information Technology, Utilities and Real Estate. The stock selection criterion for each industry involves excluding stocks with incomplete data and selecting 
 the top 20 stocks based on their total market capitalization. The indicators are listed in the following order: Return, Trading Volume, Trading Value, Price-Earnings Ratio (P/E Ratio), Price-to-Book Ratio (P/B Ratio), Volume Ratio, Historical Rolling Volatility and Turnover Rate.  Missing values processing and stationarity tests are carried out for each original time series. Non-stationary time series are transformed by taking the first difference to make them stationary. Finally, the processed sequences are standardized. The visualization results, exemplified using Energy data, are presented in Figure \ref{original time series}.

\begin{figure}[!htbp]
	\centering
	\includegraphics[width=0.98\linewidth]{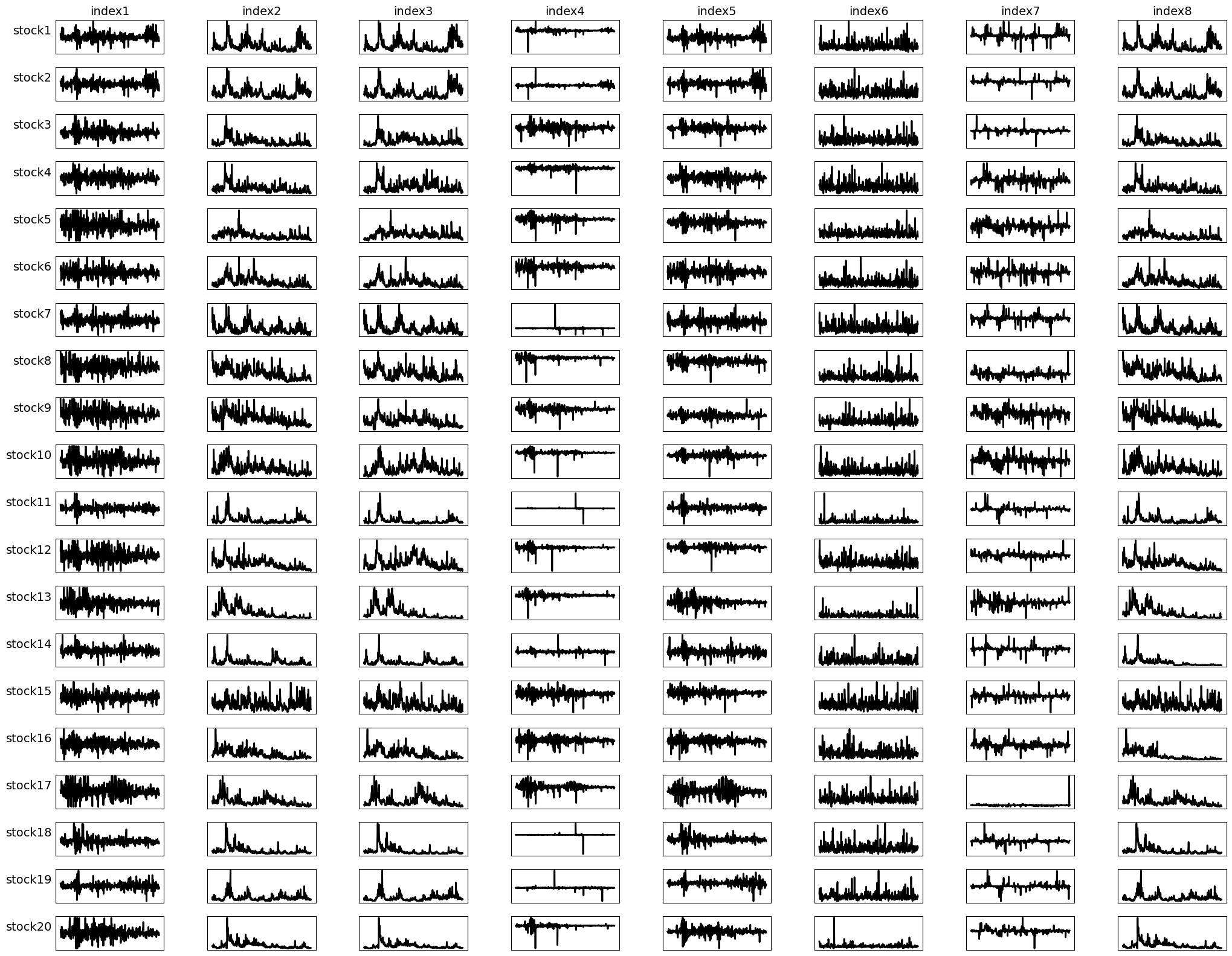}
	\caption{Time series plots of stock indicators of energy sector (after data transformation)}
	\label{original time series}
\end{figure}

We begin with an analysis of  within-correlations and between-correlations. The former is quantified as the average correlation among the eight indicators for different stocks within the same industry, whereas the latter assesses the average correlation spanning various industries. To visualize these relationships, we utilize a heat map, as shown 
in  Figure \ref{within correlations and between correlations of standardized data}. Here, the diagonal elements correspond to within-correlations, while the off-diagonal elements indicate between-correlations. It is noteworthy that both correlations exhibit positive and substantial trends across all industries. Remarkably, the within-correlations consistently surpasses the between-correlations across all industries. For instance, among the Financials, Information Technology, and Utilities, the respective within-correlations stand at 0.223, 0.116, and 0.174, while the between-correlations correspond to 0.068, 0.069, and 0.052.
 This discrepancy suggests the possible existence of local/industry-specific factors within the same industry.

\begin{figure}[!htbp]
	\centering
 \includegraphics[width=0.95\linewidth]{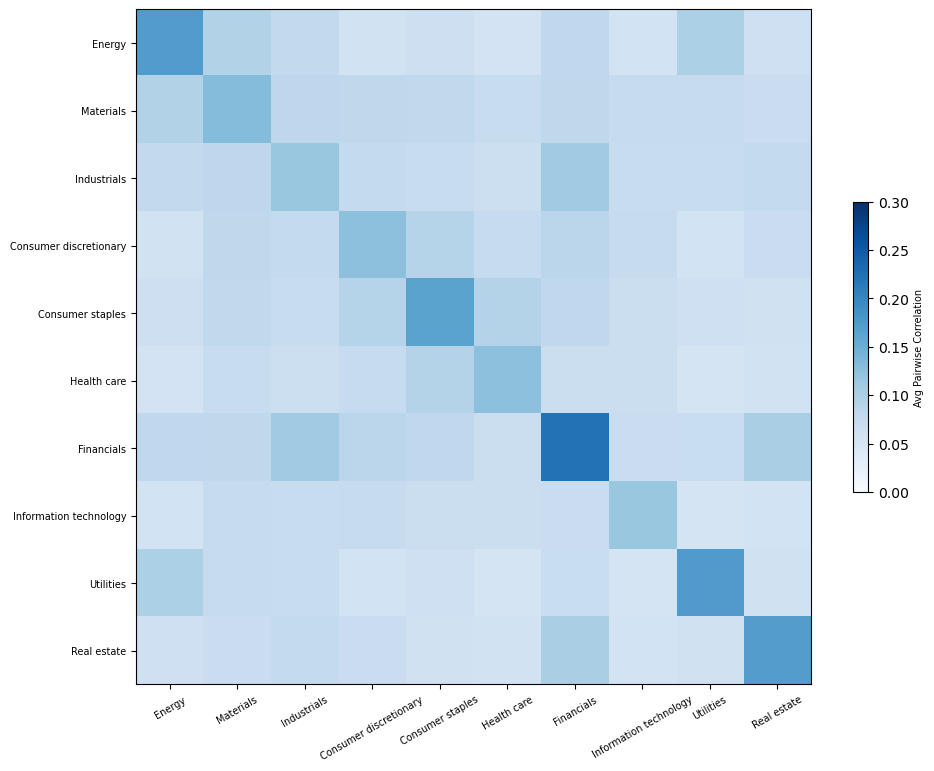}
	\caption{Average pairwise correlations of standardized data}
	\label{within correlations and between correlations of standardized data}
\end{figure}
Subsequently, we apply the multilevel matrix factor model to explore the inherent correlation structure of the data. To begin this exploration, it is imperative to utilize the eigenvalue-ratio method for estimating the latent dimensions of global and local factors within each industry. 
Detailed graphical representations of this process can be found in Figure \ref{selection of factors number of energy}, which shows the logarithms and ratios of eigenvalues for the energy data. 
For the number of global factors, many industry data show that $k_1 = 1 , k_2 = 2$, we use $k_1 = k_2 = 2$ here for illustration. 
The determination of the number of local factors in each industry is based on the outcomes of the eigenvalue-ratio method's estimation. 
A comprehensive summary of factor selection for each industry is presented in Table \ref{selection of factor number}. The plots for the remaining industries (excluding Energy) are included in the Appendix 2.

\begin{figure}[!htbp]
	\centering
        \includegraphics[width=0.95\linewidth]{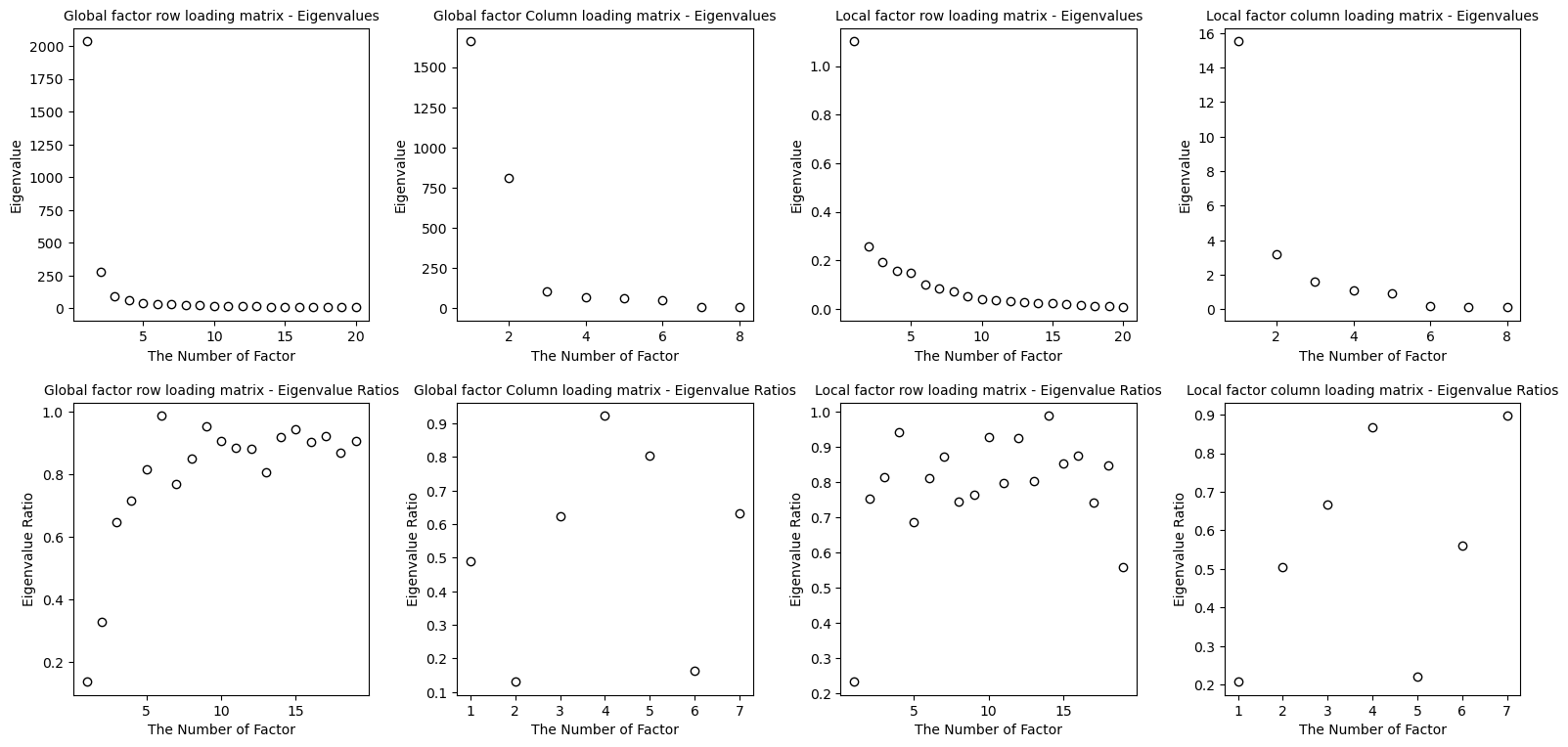}
	\caption{Energy series: eigenvalues and eigenvalues ratio for determining the number of global and local factors }
	\label{selection of factors number of energy}

\end{figure}

\begin{table}[!htbp]
    \centering
    \caption{Selection of Global and Local Factor Number}
    \begin{adjustbox}{width=0.98\textwidth}
        \begin{tabular}{llllll}
            \toprule
            Industry & Global Factor & Local Factor & Industry & Global Factor & Local Factor \\
            \midrule
            Energy & (1,2) & (1,1) & Health care & (1,2) & (1,1) \\
            Materials & (1,2) & (2,1) & Financials & (1,2) & (2,1) \\
            Industrials & (2,2) & (1,1) & Information technology & (2,2) & (1,1) \\
            Consumer discretionary & (1,2) & (4,1) & Utilities & (1,2) & (1,1) \\
            Consumer staples & (1,2) & (1,1) & Real estate & (1,2) & (1,1) \\
            \bottomrule
        \end{tabular}
    \end{adjustbox}
    \label{selection of factor number}
\end{table}

In the subsequent analysis, we investigate the within and between correlations after concentrating out global and local factors, respectively. Figure \ref{Average pairwise correlations}(a) presents the results derived from residual data obtained through a regression of the data solely concentrate out  the global factors. 
In contrast to Figure \ref{within correlations and between correlations of standardized data}, we observe a significant decrease in between correlations across all industries, indicating that the global factors effectively capture the market-wide co-movement of individual stocks.
However, it is noteworthy that the within correlations for the Materials, Industrials, Consumer Discretionary, Utilities, and Real Estate remain non-negligible, suggesting that these co-movements may be attributed to local factors. To delve deeper, we proceed to eliminate the influence of local factors, resulting in residuals that primarily represent idiosyncratic components.
Figure \ref{Average pairwise correlations}(b) illustrates that both within and between correlations become nearly negligible, underscoring the substantial role played by local-specific factors in driving higher within correlations among these industries.

\begin{figure}[!htbp]
    \centering
    \begin{subfigure}[t]{0.45\textwidth}
        \centering
        \includegraphics[width=\textwidth]{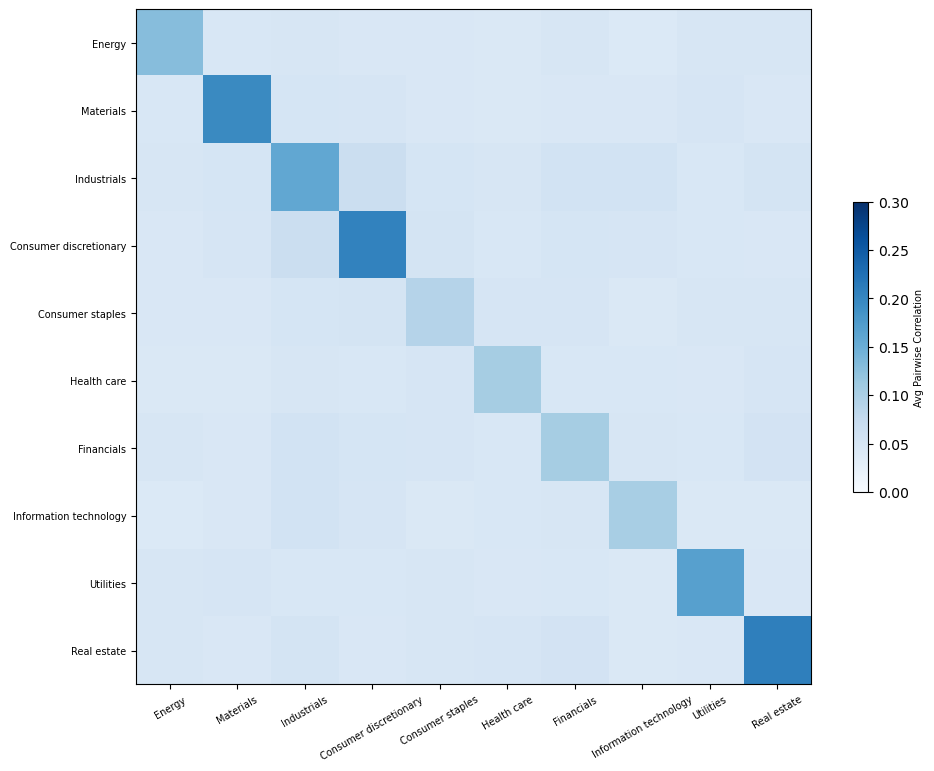}
        \caption{Average pairwise correlations of residuals after concentrating out global factors}
        \label{within correlations and between correlations concentrating out global factors}
    \end{subfigure}%
    \hspace{0.05\textwidth} 
    \begin{subfigure}[t]{0.45\textwidth}
        \centering
        \includegraphics[width=\textwidth]{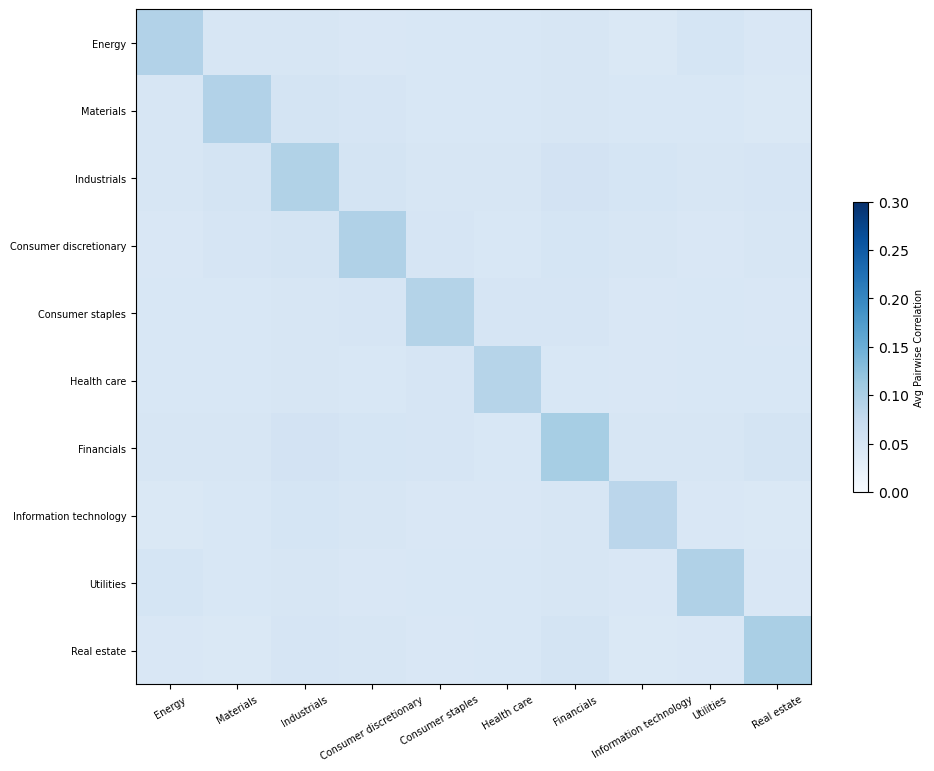}
        \caption{Average pairwise correlations of residuals after concentrating out global and local factors}
        \label{within correlations and between correlations concentrating out global and local factors}
    \end{subfigure}
    \caption{Average pairwise correlations}
    \label{Average pairwise correlations}
\end{figure}

Next, the latent structure of the China Stock Markets can be comprehended through an examination of the estimated global and local factor loading matrices. 
Within the context of our proposed model, our initial emphasis is placed on the loading matrices associated with global factors, using Energy data as an illustrative example.
 Table \ref{Row loading matrix of global factor within energy data} and Table \ref{Column loading matrix of global factor within energy data} show the estimated loading matrices undergo rotation to maximize their variance, following appropriate scaling (30-fold) to enhance visual clarity.
  It can be seen that potential groupings are discernible within the tables, as depicted by shaded regions. Regarding the stocks, a demarcation into two groups becomes evident. Group 1 encompasses stocks S5, S8-S10, S12-S13, and S16-S17, while Group 2 includes stocks S1-S3, S7, S11, S15, and S19.
   It is worth highlighting that stocks S4, S6, S14, S18, and S20 exhibit notably similar factor loading matrix weights, making them amenable to allocation within either Group 1 or Group 2. Furthermore, global column factors discern two primary groups. The first group comprises Trading Volume, Trading Value and Turnover Rate.
    The second group consists of Return, Price-Earnings Ratio (P/E Ratio) and Price-to-Book Ratio (P/B Ratio). Similarly, Volume Ratio and Historical Rolling Volatility exhibit comparable weights and can be flexibly assigned to either the first or the second group. This classification aligns harmoniously with the conventional definitions and distinctions associated with stock market indicators.

\begin{table}[!htbp]
    \centering
    \caption{Row loading matrix of global factor within energy data}
    \begin{tabular}{lllllllllll}
    \toprule
    \textbf{Global Row Factors} & S1 & S2 & S3 & S4 & S5 & S6 & S7 & S8 & S9 & S10 \\
    \midrule
    Factor 1 & 3 & 4 & -3 & \cellcolor{gray!45}-6 & \cellcolor{gray!20}-6 & \cellcolor{gray!45}-5 & -1 & \cellcolor{gray!20}-9 & \cellcolor{gray!20}-9 & \cellcolor{gray!20}-8 \\
    Factor 2 & \cellcolor{gray!20} -13 & \cellcolor{gray!20}-14 & \cellcolor{gray!20}-8 & \cellcolor{gray!45}-5 & -2 & \cellcolor{gray!45}-5 & \cellcolor{gray!20}-7 & -1 & -1 & -1 \\
    \midrule
    & S11 & S12 & S13 & S14 & S15 & S16 & S17 & S18 & S19 & S20 \\
    \midrule
    Factor 1 & -1 & \cellcolor{gray!20}-10 & \cellcolor{gray!20}-12 & \cellcolor{gray!45}-4 & -2 & \cellcolor{gray!20}-10 & \cellcolor{gray!20}-12 & \cellcolor{gray!45}-5 & 2 & \cellcolor{gray!45}-5 \\
    Factor 2 & \cellcolor{gray!20}-10 & 1 & 3 & \cellcolor{gray!45}-5 & \cellcolor{gray!20}-6 & -1 & 6 & \cellcolor{gray!45}-5 & \cellcolor{gray!20}-10 & \cellcolor{gray!45}-4 \\
    \bottomrule
    \end{tabular}
\label{Row loading matrix of global factor within energy data}
\end{table}

\begin{table}[!htbp]
    \centering
    \caption{Column loading matrix of global factor within energy data}
    \begin{tabular}{lllllllllll}
    \toprule
    \textbf{Global Column Factors} & I1 & I2 & I3 & I4 & I5 & I6 & I7 & I8 \\
    \midrule
    Factor 1 & 1 & \cellcolor{gray!20}18 & \cellcolor{gray!20}15 & 0 & 0 & \cellcolor{gray!45}4 & \cellcolor{gray!45}5 & \cellcolor{gray!20}18 \\
    Factor 2 & \cellcolor{gray!20}-20 & 0 & 0 & \cellcolor{gray!20}-13 & \cellcolor{gray!20}-18 & \cellcolor{gray!45}-1 & \cellcolor{gray!45}3 & 0 \\
    \bottomrule
    \end{tabular}
\label{Column loading matrix of global factor within energy data}
\end{table}

Subsequently, we analyze the loading matrices of local factors and take the data of Consumer Discretionary for an example. 
Table \ref{Row loading matrix of local factor within consumer discretionary data} presents the factor loading matrix for stocks. Group 1: Comprising stocks S2-S5, S10-S11, and S17-S18. Group 2: Consisting of stocks S13-S14 and S20. Group 3: Encompassing stocks S15 and S19.
 Group 4: Including stocks S1, S7-S9, S12, and S16. Notably, only stock S6 exhibits identical factor loading matrix weights within both Group 2 and Group 4, allowing it to be classified within both groups. Within the Consumer Discretionary data, a single local column factor is extracted, 
 and its loading matrix is depicted in Table \ref{Column loading matrix of local factor within consumer discretionary data}.

\begin{table}[!htbp]
    \centering
    \caption{Row loading matrix of local factor within consumer discretionary data}
    \begin{tabular}{lllllllllll}
    \toprule
    \textbf{Local Row Factors} & S1 & S2 & S3 & S4 & S5 & S6 & S7 & S8 & S9 & S10 \\
    \midrule
    Factor 1 &  1.0 & \cellcolor{gray!20}-12.0 & \cellcolor{gray!20}-14.0 & \cellcolor{gray!20}-9.0 & \cellcolor{gray!20} -9.0 & -4.0 & 0.0 & 1.0 & 3.0 & \cellcolor{gray!20}-10.0 \\
    Factor 2 &  9.0 & 3.0 & -3.0 & 6.0 & -1.0 & \cellcolor{gray!45}-10.0 & -1.0 & -4.0 & -5.0 & 3.0 \\
    Factor 3 &-1.0 & -1.0 & 1.0 & -3.0 & -3.0 & 2.0 & -1.0 & 7.0 & 9.0 & 1.0 \\
    Factor 4 & \cellcolor{gray!20}10.0 & -0.0 & -1.0 & 5.0 & 2.0 & \cellcolor{gray!45}10.0 & \cellcolor{gray!20}11.0 & \cellcolor{gray!20}9.0 & \cellcolor{gray!20}11.0 & -0.0 \\
    
    \midrule
    & S11 & S12 & S13 & S14 & S15 & S16 & S17 & S18 & S19 & S20 \\
    \midrule
    Factor 1 & \cellcolor{gray!20}-5.0 & -0.0 & -1.0 & 4.0 & -2.0 & -5.0 & \cellcolor{gray!20}-10.0 & \cellcolor{gray!20}-9.0 & 2.0 & -3.0 \\
    Factor 2 & 3.0 & 2.0 & \cellcolor{gray!20}7.0 & \cellcolor{gray!20}20.0 & 6.0 & 3.0 & 0.0 & -4.0 & -0.0 & \cellcolor{gray!20}8.0 \\
    Factor 3 & -2.0 & 1.0 & 0.0 & -2.0 & \cellcolor{gray!20}25.0 & -1.0 & 1.0 & 4.0 & \cellcolor{gray!20}9.0 & 2.0 \\
    Factor 4 & 2.0 & \cellcolor{gray!20}12.0 & -1.0 & 4.0 & -6.0 & \cellcolor{gray!20}9.0 & -6.0 & 0.0 & 4.0 & -1.0 \\
    \bottomrule
    \end{tabular}
    \label{Row loading matrix of local factor within consumer discretionary data}
\end{table}

\begin{table}[!htbp]
    \centering
    \caption{Column loading matrix of local factor within consumer discretionary data}
    \begin{tabular}{lllllllllll}
    \toprule
    \textbf{Local Column Factors} & I1 & I2 & I3 & I4 & I5 & I6 & I7 & I8 \\
    \midrule
    Factor 1 & \cellcolor{gray!20}3 & \cellcolor{gray!20}16 & \cellcolor{gray!20}15 & \cellcolor{gray!20}1 & \cellcolor{gray!20}2 & \cellcolor{gray!20}10 & \cellcolor{gray!20}7 & \cellcolor{gray!20}16 \\
    \bottomrule
    \end{tabular}
\label{Column loading matrix of local factor within consumer discretionary data}
\end{table}

Finally, we incorporate a well-established evaluation method commonly used in factor models. 
We define out of sample RSS/TSS on a testing set of size $T$ as
$$
\text{RSS/TSS} \triangleq 1-\frac{\sum_{t=1}^T\left\|\boldsymbol{X}_t-\widehat{\boldsymbol{X}}_t\right\|_F^2}{\sum_{t=1}^T\left\|\boldsymbol{X}_t-\overline{\boldsymbol{X}}\right\|_F^2} \text {, }
$$
where $\overline{\boldsymbol{X}}=\frac{1}{T} \sum_{t=1}^T \boldsymbol{X}_t$ and $\widehat{\boldsymbol{X}}_t$ represents the fitted values computed through the model.
 Based on the preceding factor selection results,
   we can obtain the fitting outcomes for each industry and the entire market, alongside the model's number of factors and parameters, which are presented in Table \ref{result}.

\begin{table}[!htbp]
    \centering
    \caption{Model Evaluation Results}
    \begin{tabular}{lcccccc}
        \toprule
        Industry & Local Factor & Global Factor & RSS/TSS & \#factors & \#parameters \\
        \midrule
        Energy & (1,1) & (2,2) & 0.5904959 & 5 & 84 \\
        Materials & (2,1) & (2,2) & 0.6795855 & 6 & 104 \\
        Industrials & (1,1) & (2,2) & 0.7033221 & 5 & 84 \\
        Consumer discretionary & (4,1) & (2,2) & 0.6389581 & 8 & 144 \\
        Consumer staples & (1,1) & (2,2) & 0.6318063 & 5 & 84 \\
        Health care & (1,1) & (2,2) & 0.7057251 & 6 & 104 \\
        Financials & (2,1) & (2,2) & 0.4950737 & 5 & 84 \\
        Information technology & (1,1) & (2,2) & 0.7102152 & 5 & 84 \\
        Utilities & (1,1) & (2,2) & 0.6519537 & 5 & 84 \\
        Real estate & (1,1) & (2,2) & 0.6558825 & 5 & 84 \\
        Total & -- & -- & 0.64630185 & 55 & 940 \\
        \bottomrule
    \end{tabular}
    \label{result}
\end{table}

Table \ref{compare with wang} presents a direct comparison between matrix factor models and multilevel matrix factor models with varying sizes and numbers of factors.
 The results clearly demonstrate that our proposed model employs significantly fewer parameters in the loading matrices while achieving better performance.
 This finding further validates the efficiency of our model.
\begin{table}[!htbp]
    \centering
    \caption{Model Evaluation Results}
    \begin{adjustbox}{width=0.98\textwidth}
    \begin{tabular}{lcccccc}
        \toprule
        Model & Global Factor & Local Factor & RSS/TSS & \#factors & \#parameters \\
        \midrule
        \multirow{11}{*}{Multilevel Matrix Factor Model} & (1,1) & (1,1) & 0.8364629 & 11 & 560 \\
         & (2,2) & (2,2) & 0.636141 & 44 & 1120 \\
         & (2,2) & (3,2) & 0.6111359 & 64 & 1320 \\
         & (2,2) & (4,2) & 0.5856743 & 84 & 1520 \\
         & (2,2) & (5,2) & 0.5614321 & 104 & 1720 \\
         & (3,2) & (3,2) & 0.5826328 & 66 & 1520 \\
         & (3,2) & (4,2) & 0.561714 & 86 & 1720 \\
         & (4,2) & (3,2) & 0.5558335 & 68 & 1720 \\
         & (4,2) & (4,2) & 0.5405282 & 88 & 1920 \\
         & (3,3) & (3,3) & 0.5285275 & 99 & 1680 \\
         & (4,3) & (4,3) & 0.4770679 & 132 & 2080 \\
        \cmidrule{1-6}
        \multirow{4}{*}{Matrix Factor Model} & \multicolumn{2}{c}{(10,2)} & 0.7005498 & 20 & 2016 \\
         & \multicolumn{2}{c}{(20,2)} & 0.6277033 & 40 & 4016 \\
         & \multicolumn{2}{c}{(10,3)} & 0.6692189 & 30 & 2024 \\
         & \multicolumn{2}{c}{(20,3)} & 0.5850474 & 60 & 4024 \\
        \bottomrule
    \end{tabular}
    \end{adjustbox}
    \label{compare with wang}
\end{table}

\section{Conclusion}

In this article, we introduce a multilevel factor model for high-dimensional matrix-variate time series data. Our model comprises two types of factors:
global factors that affect all the data and the local factors which affect the data within particular groups. The estimation procedures make use of the 
covariance structure between matrix time series. The asymptotic properties for the factor loadings and the signal parts have been established. Simulation results and real data analysis show the 
performance of the proposed procedure with finite sample sizes. 

There are several interesting topics for further research. Firstly, this paper assumes no correlation between the local factors. Investigating how to estimate the multilevel matrix-variate factor model when the nonzero correlations
between local factors are permitted is an important direction. Secondly, we can extend our exploration to incorpate a multi-level structure for tensor factor models. These areas will be the foucs of our ongoing research efforts.

\vspace{0.2in}




\bibliographystyle{apalike}
\bibliography{Multilevel_Matrix_Factor_Model}

\newpage
\noindent
{\Large \bf Appendix 1: Proofs}

The following notations are used in our proofs.
\begin{eqnarray*}
		\boldsymbol{\Omega}_{x, m n, j_1 j_2} &=&\frac{1}{T} \sum_{t=1}^{T} \operatorname{Cov}\left(\boldsymbol{x}_{\cdot,j_1,mt}, \boldsymbol{x}_{\cdot,j_2,nt}\right) ,\\
		\boldsymbol{\Omega}_{g, m n, j_1 j_2} &=&\frac{1}{T} \sum_{t=1}^{T} \operatorname{Cov}\left(\boldsymbol{R}_m \boldsymbol{G}_t \boldsymbol{c}_{\cdot,j_1,m}, \boldsymbol{R}_n \boldsymbol{G}_{t} \boldsymbol{c}_{\cdot,j_2,n}\right), \\
		\boldsymbol{\Omega}_{g c, m n, j_1 j_2} &=&\frac{1}{T} \sum_{t=1}^{T} \operatorname{Cov}\left(\boldsymbol{G}_t \boldsymbol{c}_{\cdot,j_1,m}, \boldsymbol{G}_{t} \boldsymbol{c}_{\cdot,j_2,n}\right) ,\\
		\widehat{\boldsymbol{\Omega}}_{x, m n, j_1 j_2} &=&\frac{1}{T} \sum_{t=1}^{T} \boldsymbol{x}_{\cdot,j_1,mt}\boldsymbol{x}_{\cdot,j_2,nt}^{\prime} ,\\
		\widehat{\boldsymbol{\Omega}}_{g, m n, j_1 j_2} &=& \frac{1}{T} \sum_{t=1}^{T} \boldsymbol{R}_m \boldsymbol{G}_t \boldsymbol{c}_{\cdot,j_1,m} \boldsymbol{c}_{\cdot,j_2,n}^{\prime} \boldsymbol{G}_{t}^{\prime} \boldsymbol{R}_n^{\prime}   ,\\
		\widehat{\boldsymbol{\Omega}}_{f, m n, j_1 j_2} &=&\frac{1}{T} \sum_{t=1}^{T} \boldsymbol{\Gamma}_m \boldsymbol{F}_{mt} \boldsymbol{\lambda}_{\cdot,j_1,m} \boldsymbol{\lambda}_{\cdot,j_2,n}^{\prime} \boldsymbol{F}_{nt}^{\prime} \boldsymbol{\Gamma}_n^{\prime}   ,\\
		\widehat{\boldsymbol{\Omega}}_{g c, m n, j_1 j_2} &=&\frac{1}{T} \sum_{t=1}^{T} \boldsymbol{G}_t \boldsymbol{c}_{\cdot,j_1,m} \boldsymbol{c}_{\cdot,j_2,n}^{\prime} \boldsymbol{G}_{t}^{\prime} ,\\
		\widehat{\boldsymbol{\Omega}}_{g f, m n, j_1 j_2} &=& \frac{1}{T} \sum_{t=1}^{T}
		\boldsymbol{R}_m\boldsymbol{G}_t \boldsymbol{c}_{\cdot , j_1, m} \boldsymbol{\lambda}_{\cdot,j_2,n}^{\prime}  \boldsymbol{F}_{nt}^{\prime} \boldsymbol{\Gamma}_n^{\prime} , \\
		\widehat{\boldsymbol{\Omega}}_{g c, m n, j_1 j_2} &=&\frac{1}{T} \sum_{t=1}^{T} \boldsymbol{R}_m \boldsymbol{G}_t \boldsymbol{c}_{\cdot , j_1, m}  \boldsymbol{e}_{\cdot,j_2,nt}^{\prime} ,\\
		\widehat{\boldsymbol{\Omega}}_{f e, m n, j_1 j_2} &=&\frac{1}{T} \sum_{t=1}^{T} \boldsymbol{\Gamma}_m \boldsymbol{F}_{mt} \boldsymbol{\lambda}_{\cdot,j_1,m} \boldsymbol{e}_{\cdot,j_2,nt}^{\prime} ,\\
		\widehat{\boldsymbol{\Omega}}_{f g, m n, j_1 j_2} &=&\frac{1}{T} \sum_{t=1}^{T} \boldsymbol{\Gamma}_m \boldsymbol{F}_{mt} \boldsymbol{\lambda}_{\cdot,j_1,m} \boldsymbol{c}_{\cdot,j_2,n}^{\prime} \boldsymbol{G}_{t}^{\prime} \boldsymbol{R}_{m}^{\prime},\\
		\widehat{\boldsymbol{\Omega}}_{e g, m n, j_1 j_2} &=&\frac{1}{T} \sum_{t=1}^{T} 
		\boldsymbol{e}_{\cdot,j_1,mt} \boldsymbol{c}_{\cdot , j_2, n}^{\prime} \boldsymbol{G}_t^{\prime} \boldsymbol{R}_n^{\prime}     ,\\
		\widehat{\boldsymbol{\Omega}}_{e f, m n, j_1 j_2} &=&\frac{1}{T} \sum_{t=1}^{T} 
		\boldsymbol{e}_{\cdot,j_1,mt} \boldsymbol{\lambda}_{\cdot , j_2, n}^{\prime} \boldsymbol{F}_{nt}^{\prime} \boldsymbol{\Gamma}_n^{\prime}     ,\\
		\widehat{\boldsymbol{\Omega}}_{e, m n, j_1 j_2} &=&\frac{1}{T} \sum_{t=1}^{T} \boldsymbol{e}_{\cdot,j_1,mt} \boldsymbol{e}_{\cdot,j_2,nt}^{\prime}.
\end{eqnarray*}

\begin{lemma}
	Let the ij-th entry of $\boldsymbol{G}_{t}$\ and\ $\boldsymbol{F}_{nt}$\ be\ $G_{t,ij}$\ and\ $F_{nt,ij}$,\ respectively.\ Under Conditions 1-2 and 6,\ it\ holds\ that
	\begin{eqnarray*}
			 \left|\frac{1}{T} \sum_{t=1}^{T}\left(G_{t, i j} G_{t, k l}-\operatorname{Cov}\left(G_{t, i j}, G_{t, k l}\right)\right)\right|&=&O_p\left(T^{-1 / 2}\right),\\
			 \left|\frac{1}{T} \sum_{t=1}^{T}\left(G_{t, i j} F_{mt, k l} \right) \right|&=&O_p\left(T^{-1 / 2}\right),\\
			 \left|\frac{1}{T} \sum_{t=1}^{T}\left(F_{m t, i j} F_{n t, k l} \right) \right|&=& O_p\left(T^{-1 / 2}\right).
	\end{eqnarray*}
\end{lemma}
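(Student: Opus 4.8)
The plan is to treat all three estimates as instances of one elementary bound for $\alpha$-mixing sequences: if $\{\xi_t\}_{t\ge1}$ is centered, uniformly $L^{\gamma}$-bounded, and $\alpha$-mixing with coefficients satisfying $\sum_{k\ge1}\alpha(k)^{1-2/\gamma}<\infty$, then $\mathrm{Var}\big(T^{-1}\sum_{t=1}^{T}\xi_t\big)=O(T^{-1})$, so $T^{-1}\sum_{t=1}^{T}\xi_t=O_p(T^{-1/2})$ by Chebyshev's inequality. It then suffices to pick the right $\xi_t$ in each case and verify the hypotheses. For the first bound set $\xi_t=G_{t,ij}G_{t,kl}-\mathrm{Cov}(G_{t,ij},G_{t,kl})$, which is mean zero because the factor processes are centered, so $\mathrm{Cov}(G_{t,ij},G_{t,kl})=\mathrm{E}(G_{t,ij}G_{t,kl})$; for the second set $\xi_t=G_{t,ij}F_{mt,kl}$, whose mean is zero by the orthogonality $\mathrm{E}(\mathrm{vec}(\boldsymbol{G}_t)\mathrm{vec}(\boldsymbol{F}_{mt})^{\prime})=\mathbf{0}$ in Condition 6; for the third set $\xi_t=F_{mt,ij}F_{nt,kl}$ with $m\ne n$, whose mean is zero by $\mathrm{E}(\mathrm{vec}(\boldsymbol{F}_{mt})\mathrm{vec}(\boldsymbol{F}_{nt})^{\prime})=\mathbf{0}$, again Condition 6.

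First I would check that $\{\xi_t\}$ inherits the mixing structure. In each case $\xi_t$ is a fixed measurable function of the single-time vector $\big(\mathrm{vec}(\boldsymbol{G}_t),\mathrm{vec}(\boldsymbol{F}_{1t}),\dots,\mathrm{vec}(\boldsymbol{F}_{Mt})\big)$, so $\sigma(\xi_t)\subseteq\mathcal{F}_t^t$ and the mixing coefficients of $\{\xi_t\}$ are dominated by $\alpha(\cdot)$; hence Condition 1 transfers directly. Next I would bound moments: by the Cauchy--Schwarz inequality and Condition 2, $\mathrm{E}|G_{t,ij}G_{t,kl}|^{\gamma}\le\big(\mathrm{E}|G_{t,ij}|^{2\gamma}\big)^{1/2}\big(\mathrm{E}|G_{t,kl}|^{2\gamma}\big)^{1/2}\le C$, and likewise $\mathrm{E}|G_{t,ij}F_{mt,kl}|^{\gamma}\le C$ and $\mathrm{E}|F_{mt,ij}F_{nt,kl}|^{\gamma}\le C$; since the constant subtracted in the first case is itself bounded, one gets $\sup_t\mathrm{E}|\xi_t|^{\gamma}\le C'<\infty$ with $\gamma>2$.

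With these facts in place, the variance estimate follows from Davydov's covariance inequality for $\alpha$-mixing sequences, $|\mathrm{Cov}(\xi_s,\xi_t)|\le C_0\,\alpha(|s-t|)^{1-2/\gamma}\,\|\xi_s\|_{\gamma}\|\xi_t\|_{\gamma}\le C_0C'\,\alpha(|s-t|)^{1-2/\gamma}$ for a universal constant $C_0$, giving
\begin{equation*}
\mathrm{Var}\!\left(\frac1T\sum_{t=1}^{T}\xi_t\right)=\frac1{T^2}\sum_{s=1}^{T}\sum_{t=1}^{T}\mathrm{Cov}(\xi_s,\xi_t)\le\frac{C_0C'}{T^2}\sum_{s=1}^{T}\sum_{t=1}^{T}\alpha(|s-t|)^{1-2/\gamma}\le\frac{2C_0C'}{T}\Big(1+\sum_{k\ge1}\alpha(k)^{1-2/\gamma}\Big)=O(T^{-1}),
\end{equation*}
where the last bound uses Condition 1. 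Chebyshev's inequality then yields, for any $\varepsilon>0$, $\mathrm{P}\big(|T^{-1}\sum_{t=1}^T\xi_t|>\varepsilon T^{-1/2}\big)\le\varepsilon^{-2}T\,\mathrm{Var}\big(T^{-1}\sum_{t=1}^T\xi_t\big)=O(\varepsilon^{-2})$, which is the asserted $O_p(T^{-1/2})$. Applying this to the three choices of $\xi_t$ proves the three displays.

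The argument is largely bookkeeping; the only point that deserves care is lining up the exponent in the covariance inequality with the mixing summability in Condition 1 (take $p=q=\gamma$ there so that $\alpha^{1/r}=\alpha^{1-2/\gamma}$), after which the moment checks are routine. If one wanted the bound uniformly over all index quadruples $(i,j,k,l)$ a union bound over the fixed, finite dimensions would suffice, but the lemma as stated is entrywise and needs nothing further.
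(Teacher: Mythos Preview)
Your proposal is correct and follows essentially the same approach as the paper: both bound the second moment of the centered sum via Davydov's covariance inequality for $\alpha$-mixing sequences, invoke the $2\gamma$-moment bounds in Condition~2 (via Cauchy--Schwarz) to control the product terms, sum the resulting $\alpha(|s-t|)^{1-2/\gamma}$ using Condition~1, and conclude $O_p(T^{-1/2})$ by Chebyshev. Your write-up is somewhat more careful than the paper's (which only details the first display), in particular in explicitly verifying mean-zero via Condition~6 for the cross terms and in noting that the third display requires $m\neq n$.
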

\begin{proof}
	For any $i,k = 1,...,k_1; j,l = 1,...,k_2$, by Cauchy-Schwarz inequality and Davydov inequality,
	\begin{equation}
		\begin{aligned}
			 & {\rm E} \left[ \left|\frac{1}{T} \sum_{t=1}^{T}(G_{t, i j} G_{t, k l}-\operatorname{Cov}\left(G_{t, i j}, G_{t, k l}\right)) \right|^2 \right]
			  = \frac{1}{T^2 } \sum_{t=1}^{T} {\rm E} [ \{ G_{t, i j} G_{t, k l}-{\rm E}(G_{t, i j}G_{t, k l})\}^2 ] \\ &+ \frac{1}{T^2 } \sum_{t_1 \neq t_2}
              {\rm E} [ \{ G_{t_1, i j} G_{t_1, k l}-{\rm E}(G_{t_1, i j} G_{t_1, k l})\} 
			 \{ G_{t_2, i j} G_{t_2, k l}-{\rm E}(G_{t_2, i j} G_{t_2, k l})\}] \\
			 & \leq \frac{C}{T} +\frac{C}{T^2} \sum_{t_1 \neq t_2} \alpha(|t_1 - t_2|)^{1-2/\gamma} \leq \frac{C}{T} +\frac{C}{T } \sum_{u=1}^{T} \alpha(u)^{1-2/\gamma} = O_p(T^{-1}).
		\end{aligned}
	\end{equation}
    Here $C$ denotes a constant.
\end{proof}

\begin{lemma} \label{Lemma2}
	Under Conditions 1-4 and Conditions 6-7, it holds that
	\begin{equation}
		\begin{aligned}
			\sum_{j_1=1}^p \sum_{j_2=1}^p \| \widehat{\boldsymbol{\Omega}}_{g , m n, j_1 j_2} - \boldsymbol{\Omega}_{g, m n, j_1 j_2} \|^2 \quad & =\quad O_p (N^{2-2\delta_1} p^{2-2\delta_2} T^{-1}),\\
			\sum_{j_1=1}^{p} \sum_{j_2=1}^{p}\|\widehat{\boldsymbol{\Omega}}_{g f, m n, j_1  j_2}\|_2^2 \quad & = \quad O_p (N^{1- \delta_1} N^{1 - \delta_3} p^{1- \delta_2} p^{1- \delta_4} T^{-1}),\\
			\sum_{j_1=1}^{p} \sum_{j_2=1}^{p}\|\widehat{\boldsymbol{\Omega}}_{g e, m n, j_1 j_2}\|_2^2  \quad & = \quad O_p(N^{2-\delta_1} p^{2-\delta_2} T^{-1}),\\
			\sum_{j_1=1}^{p} \sum_{j_2=1}^{p}\|\widehat{\boldsymbol{\Omega}}_{f e, m n, j_1 j_2} \|_2^2 \quad & = \quad  O_p\left(N^{2-\delta_3} p^{2-\delta_4} T^{-1}\right),\\
			\sum_{j_1=1}^{p} \sum_{j_2=1}^{p}\|\widehat{\boldsymbol{\Omega}}_{f g, m n, j_1 j_2} \|_2^2 \quad & = \quad  O_p\left(N^{1-\delta_1}N^{1-\delta_3} p^{1-\delta_2} p^{1-\delta_4} T^{-1}\right),\\
			\sum_{j_1=1}^{p} \sum_{j_2=1}^{p}\|\widehat{\boldsymbol{\Omega}}_{f, m n, j_1 j_2} \|_2^2 \quad & = \quad  O_p\left(N^{2-2\delta_3} p^{2-2\delta_4} T^{-1}\right),\\
			\sum_{j_1=1}^{p} \sum_{j_2=1}^{p}\|\widehat{\boldsymbol{\Omega}}_{e g, m n, j_1 j_2} \|_2^2 \quad & = \quad  O_p\left(N^{2-\delta_1} p^{2-\delta_2} T^{-1}\right),\\
			\sum_{j_1=1}^{p} \sum_{j_2=1}^{p}\|\widehat{\boldsymbol{\Omega}}_{ef, m n, j_1 j_2} \|_2^2 \quad & = \quad  O_p\left(N^{2-\delta_3} p^{2-\delta_4} T^{-1}\right),\\
			\sum_{j_1=1}^{p} \sum_{j_2=1}^{p}\|\widehat{\boldsymbol{\Omega}}_{e, m n, j_1 j_2} \|_2^2 \quad & = \quad  O_p\left(N^2 p^{2} T^{-1}\right).
		\end{aligned}
	\end{equation}
	
	\begin{proof}
		Firstly, we have
		\begin{equation}
			\begin{aligned}
				&\left\|\widehat{\boldsymbol{\Omega}}_{g c, m n, j_1 j_2}-\boldsymbol{\Omega}_{g c,m n, j_1 j_2}\right\|_2^2 \leq\left\|\widehat{\boldsymbol{\Omega}}_{g c,m n, j_1 j_2}-\boldsymbol{\Omega}_{g c, m n, j_1 j_2}\right\|_F^2 \\
				&=\left\|\operatorname{vec}\left(\widehat{\boldsymbol{\Omega}}_{g c,m n, j_1 j_2}-\boldsymbol{\Omega}_{g c, m n, j_1 j_2}\right)\right\|_2^2 \\
				&=\left\|\frac{1}{T} \sum_{t=1}^{T} \operatorname{vec}\left(\boldsymbol{G}_t \boldsymbol{c}_{\cdot,j_1,m} \boldsymbol{c}_{\cdot,j_2,n}^{\prime}  \boldsymbol{G}_{t}^{\prime}-\mathrm{E}\left(\boldsymbol{G}_t \boldsymbol{c}_{\cdot,j_1,m} \boldsymbol{c}_{\cdot,j_2,n}^{\prime}  \boldsymbol{G}_{t}^{\prime}\right)\right)\right\|_2^2 \\
				&=\left\|\frac{1}{T} \sum_{t=1}^{T}\left[\boldsymbol{G}_{t} \otimes \boldsymbol{G}_t-\mathrm{E}\left(\boldsymbol{G}_{t} \otimes \boldsymbol{G}_t\right)\right] \operatorname{vec}\left(\boldsymbol{c}_{\cdot,j_1,m} \boldsymbol{c}_{\cdot,j_2,n}^{\prime}\right)\right\|_2^2 \\
				&\leq \left\|\frac{1}{T} \sum_{t=1}^{T}\left(\boldsymbol{G}_{t} \otimes \boldsymbol{G}_t-\mathrm{E}\left(\boldsymbol{G}_{t} \otimes \boldsymbol{G}_t\right)\right)\right\|_2^2 \| \operatorname{vec}\left(\boldsymbol{c}_{\cdot,j_1,m} \boldsymbol{c}_{\cdot,j_2,n}^{\prime}\right) \|_2^2. \\
				&=\left\|\frac{1}{T} \sum_{t=1}^{T}\left(\boldsymbol{G}_{t} \otimes \boldsymbol{G}_t-\mathrm{E}\left(\boldsymbol{G}_{t} \otimes \boldsymbol{G}_t\right)\right)\right\|_2^2\| \boldsymbol{c}_{\cdot,j_1,m} \boldsymbol{c}_{\cdot,j_2,n}^{\prime} \|_F^2. \\
				&\leq\left\| \frac{1}{T} \sum_{t=1}^{T}\left(\boldsymbol{G}_{t} \otimes \boldsymbol{G}_t-\mathrm{E}\left(\boldsymbol{G}_{t} \otimes \boldsymbol{G}_t\right)\right)\right\|_F^2 \| \boldsymbol{c}_{\cdot,j_1,m} \|_2^2 \cdot\| \boldsymbol{c}_{\cdot,j_2,n} \|_2^2 .
			\end{aligned}
		\end{equation}
		From Condition 4 and Lemma 1, it follows that
		\begin{equation}
			\begin{aligned}
				&\sum_{j_1=1}^{p} \sum_{j_2=1}^{p}\left\|\widehat{\boldsymbol{\Omega}}_{g,m n, j_1 j_2}-\boldsymbol{\Omega}_{g, m n, j_1 j_2}\right\|_2^2=\sum_{j_1=1}^{p} \sum_{j_2=1}^{p}\left\|\boldsymbol{R}_{m}\left(\widehat{\boldsymbol{\Omega}}_{g c, m n, j_1 j_2}-\boldsymbol{\Omega}_{g c, m n, j_1 j_2}\right) \boldsymbol{R}_n^{\prime}\right\|_2^2 \\
				& \leq \quad\|\boldsymbol{R}_m\|_2^2 \| \|\boldsymbol{R}_n\|_2^2 \left\| \frac{1}{T} \sum_{t=1}^{T}\left(\boldsymbol{G}_{t} \otimes \boldsymbol{G}_t-\mathrm{E}\left(\boldsymbol{G}_{t} \otimes \boldsymbol{G}_t\right)
				\right) \right\|_F^2\left(\sum_{j_1=1}^{p}\left\|\boldsymbol{c}_{\cdot,j_1,m}\right\|_2^2\right)\left(\sum_{j_2=1}^{p}\left\|\boldsymbol{c}_{\cdot,j_2,n}\right\|_2^2\right). \\
				&=\quad\|\boldsymbol{R}_m\|_2^2 \| \|\boldsymbol{R}_n\|_2^2 \left\| \frac{1}{T} \sum_{t=1}^{T}\left(\boldsymbol{G}_{t} \otimes \boldsymbol{G}_t-\mathrm{E}\left(\boldsymbol{G}_{t} \otimes \boldsymbol{G}_t\right)\right)\right\|_F^2 \| \boldsymbol{C}_m \|_F^2 \| \boldsymbol{C}_n \|_F^2. \\
				& \leq \quad k_2^2\|\boldsymbol{R}_m\|_2^2  \|\boldsymbol{R}_n\|_2^2 \left\| \frac{1}{T} \sum_{t=1}^{T}\left(\boldsymbol{G}_{t} \otimes \boldsymbol{G}_t-\mathrm{E}\left(\boldsymbol{G}_{t} \otimes \boldsymbol{G}_t\right)\right)\right\|_F^2\| \boldsymbol{C}_m \|_2^2 \| \boldsymbol{C}_n \|_2^2 =O_p\left(N^{2- 2\delta_1}  p^{2-2 \delta_2} T^{-1}\right) .
			\end{aligned}
		\end{equation}
		For the interaction component between the global factors and the local factors, we have
		\begin{equation}
			\begin{aligned}
				& \sum_{j_1=1}^{p} \sum_{j_2=1}^{p}\left\|\widehat{\boldsymbol{\Omega}}_{g f, m n, j_1  j_2}\right\|_2^2 \leq \sum_{j_1=1}^{p} \sum_{j_2=1}^{p}\|\boldsymbol{R}_m\|_2^2\|\boldsymbol{\Gamma}_n\|_2^2\left\|\frac{1}{T} \sum_{t=1}^{T} \boldsymbol{G}_t \boldsymbol{c}_{\cdot,j_1,m} \boldsymbol{\lambda}_{\cdot,j_2,n}^{\prime}\boldsymbol{F}_{nt}^{\prime}\right\|_2^2 \\
				&  \leq \quad\|\boldsymbol{R}_m\|_2^2\|\boldsymbol{\Gamma}_n\|_2^2\left(\sum_{j_1=1}^{p}\| \boldsymbol{c}_{\cdot,j_1,m}\|_2^2 \right)\left( \left\|\frac{1}{T} \sum_{t=1}^{T} \boldsymbol{F}_{nt} \otimes \boldsymbol{G}_t\right\|_F^2\right)\left(\sum_{j_2=1}^{p}\left\|\boldsymbol{\lambda}_{ \cdot,j_2,n}\right\|_2^2\right) \\
				& \leq \quad k_2^2  \|\boldsymbol{R}_m\|_2^2\|\boldsymbol{\Gamma}_n\|_2^2  \left( \left\|\frac{1}{T} \sum_{t=1}^{T} \boldsymbol{F}_{nt} \otimes \boldsymbol{G}_t\right\|_F^2\right) \|\boldsymbol{C}_m\|_2^2\|\boldsymbol{\Lambda}_n\|_2^2
				\quad =O _p\left(N^{1- \delta_1} N^{1 - \delta_3} p^{1- \delta_2} p^{1- \delta_4} T^{-1}\right) .
			\end{aligned}
		\end{equation}
		Similarly, 
		\begin{equation}
			\begin{aligned}
				& \sum_{j_1=1}^{p} \sum_{j_2=1}^{p}\left\|\widehat{\boldsymbol{\Omega}}_{g e, m n, j_1 j_2}\right\|_2^2 \leq \sum_{j_1=1}^{p} \sum_{j_2=1}^{p}\|\boldsymbol{R}_{m}\|_2^2\left\|\frac{1}{T} \sum_{t=1}^{T} \boldsymbol{G}_t \boldsymbol{c}_{\cdot,j_1,m} \boldsymbol{e}_{\cdot, j_2,nt }^{\prime}\right\|_2^2 \\
				& \leq \quad \|\boldsymbol{R}_{m}\|_2^2\left(\sum_{j_2=1}^{p}\left\|\frac{1}{T} \sum_{t=1}^{T} \boldsymbol{e}_{\cdot,j_2,nt} \otimes \boldsymbol{G}_t\right\|_2^2\right)\left(\sum_{j_1=1}^{p}\left\|\boldsymbol{c}_{\cdot,j_1,m}\right\|_F^2\right) \\
				& = \quad O_p\left(N^{2-\delta_1} p^{2-\delta_2} T^{-1}\right),
			\end{aligned}
		\end{equation}
		and 
		\begin{eqnarray*}
				 \sum_{j_1=1}^{p} \sum_{j_2=1}^{p}\left\|\widehat{\boldsymbol{\Omega}}_{f e, m n, j_1 j_2} \right\|_2^2 \quad & = & \quad  O_p\left(N^{2-\delta_3} p^{2-\delta_4} T^{-1}\right),\\
				 \sum_{j_1=1}^{p} \sum_{j_2=1}^{p}\left\|\widehat{\boldsymbol{\Omega}}_{f g, m n, j_1 j_2} \right\|_2^2 \quad & = & \quad  O_p\left(N^{1-\delta_1}N^{1-\delta_3} p^{1-\delta_2} p^{1-\delta_4} T^{-1}\right),\\
				 \sum_{j_1=1}^{p} \sum_{j_2=1}^{p}\left\|\widehat{\boldsymbol{\Omega}}_{f, m n, j_1 j_2} \right\|_2^2 \quad & = & \quad  O_p\left(N^{2-2\delta_3} p^{2-2\delta_4} T^{-1}\right),\\
				 \sum_{j_1=1}^{p} \sum_{j_2=1}^{p}\left\|\widehat{\boldsymbol{\Omega}}_{e g, m n, j_1 j_2} \right\|_2^2 \quad & = & \quad  O_p\left(N^{2-\delta_1} p^{2-\delta_2} T^{-1}\right),\\
				 \sum_{j_1=1}^{p} \sum_{j_2=1}^{p}\left\|\widehat{\boldsymbol{\Omega}}_{ef, m n, j_1 j_2} \right\|_2^2 \quad & = & \quad  O_p\left(N^{2-\delta_3} p^{2-\delta_4} T^{-1}\right),\\
				 \sum_{j_1=1}^{p} \sum_{j_2=1}^{p}\left\|\widehat{\boldsymbol{\Omega}}_{e, m n, j_1 j_2} \right\|_2^2 \quad & = & \quad  O_p\left(N^2 p^{2} T^{-1}\right).
		\end{eqnarray*}
	\end{proof}
\end{lemma}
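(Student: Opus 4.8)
The plan is to prove all nine bounds from a single template that, in each term, separates the ``signal loadings'' (handled via Condition~4) from a dimension-free ``core'' time average of factors and/or noise columns (handled via Lemma~1 or a direct second-moment computation). Write a generic summand of one of the first eight types as $\frac1T\sum_{t=1}^T \boldsymbol{P}_m\boldsymbol{A}_{mt}\boldsymbol{a}_{j_1}\boldsymbol{b}_{j_2}^{\prime}\boldsymbol{B}_{nt}^{\prime}\boldsymbol{P}_n^{\prime}$, where the triple $(\boldsymbol{P}_m,\boldsymbol{A}_{mt},\boldsymbol{a}_{j_1})$ is $(\boldsymbol{R}_m,\boldsymbol{G}_t,\boldsymbol{c}_{\cdot,j_1,m})$, $(\boldsymbol{\Gamma}_m,\boldsymbol{F}_{mt},\boldsymbol{\lambda}_{\cdot,j_1,m})$, or $(\boldsymbol{I}_{N_m},\boldsymbol{I},\boldsymbol{e}_{\cdot,j_1,mt})$ according to the subscript, and similarly for the $n$-side. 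Using $\|\cdot\|_2\le\|\cdot\|_F=\|\mathrm{vec}(\cdot)\|_2$, the identity $\mathrm{vec}(\boldsymbol{A}\boldsymbol{x}\boldsymbol{y}^{\prime}\boldsymbol{B}^{\prime})=(\boldsymbol{B}\otimes\boldsymbol{A})\mathrm{vec}(\boldsymbol{x}\boldsymbol{y}^{\prime})$ and submultiplicativity, the squared spectral norm of one summand is at most $\|\boldsymbol{P}_m\|_2^2\|\boldsymbol{P}_n\|_2^2\cdot\|\frac1T\sum_t(\boldsymbol{B}_{nt}\otimes\boldsymbol{A}_{mt})-\mathrm{E}(\cdot)\|_F^2\cdot\|\boldsymbol{a}_{j_1}\|_2^2\|\boldsymbol{b}_{j_2}\|_2^2$; the centering appears only for $\widehat{\boldsymbol{\Omega}}_g-\boldsymbol{\Omega}_g$, whereas in every cross term Condition~6 forces the expectation to vanish, so the raw quantity is what is bounded. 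This is exactly the manipulation already displayed for $\widehat{\boldsymbol{\Omega}}_{gc}$ and $\widehat{\boldsymbol{\Omega}}_g$.

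Next I would estimate the core factor $\|\frac1T\sum_t(\boldsymbol{B}_{nt}\otimes\boldsymbol{A}_{mt})-\mathrm{E}(\cdot)\|_F^2$. When both $\boldsymbol{A}_{mt}$ and $\boldsymbol{B}_{nt}$ are factor matrices ($\boldsymbol{G}_t$ or some $\boldsymbol{F}_{\cdot t}$), this matrix has a fixed, dimension-free number of entries, each of the form $\frac1T\sum_t(G_{t,ij}G_{t,kl}-\mathrm{Cov})$, $\frac1T\sum_t G_{t,ij}F_{nt,kl}$, or $\frac1T\sum_t F_{mt,ij}F_{nt,kl}$, hence $O_p(T^{-1/2})$ by Lemma~1, so the core is $O_p(T^{-1})$ in squared Frobenius norm. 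When one side carries a noise column $\boldsymbol{e}_{\cdot,j,\cdot t}$, I instead bound the relevant squared Frobenius norm in mean: Condition~6 (no serial correlation in $\boldsymbol{E}$, cross-group uncorrelatedness, and $\boldsymbol{E}$ uncorrelated with $\boldsymbol{G},\boldsymbol{F}$) annihilates all off-diagonal-in-time contributions, and Condition~3 gives $\mathrm{E}\|\boldsymbol{e}_{\cdot,j,mt}\|_2^2=O(N_m)$, yielding for instance $\mathrm{E}\|\frac1T\sum_t\boldsymbol{e}_{\cdot,j_2,nt}\otimes\boldsymbol{G}_t\|_F^2=O(N/T)$ and, for $m\neq n$, $\mathrm{E}\|\frac1T\sum_t\boldsymbol{e}_{\cdot,j_1,mt}\boldsymbol{e}_{\cdot,j_2,nt}^{\prime}\|_F^2=O(N^2/T)$, where Condition~7 lets me pass to a common $N$.

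Then I would sum over $j_1,j_2\in\{1,\dots,p\}$. For the factor terms this replaces $\sum_{j_1}\|\boldsymbol{a}_{j_1}\|_2^2$ by $\|\boldsymbol{C}_m\|_F^2$ or $\|\boldsymbol{\Lambda}_m\|_F^2$, which by Condition~4 are $\asymp p^{1-\delta_2}$ resp. $p^{1-\delta_4}$ (these matrices have a fixed number of columns with comparable singular values), while in the noise terms it replaces $\sum_{j_2}$ of the core mean bound by $p$ times $O(N/T)$ or $O(N^2/T)$. Multiplying by $\|\boldsymbol{R}_m\|_2^2\asymp N^{1-\delta_1}$ and/or $\|\boldsymbol{\Gamma}_m\|_2^2\asymp N^{1-\delta_3}$ then reproduces each right-hand side: e.g.\ $\widehat{\boldsymbol{\Omega}}_{gf}$ gives $N^{1-\delta_1}N^{1-\delta_3}\cdot T^{-1}\cdot p^{1-\delta_2}p^{1-\delta_4}$; $\widehat{\boldsymbol{\Omega}}_{ge}$ gives $N^{1-\delta_1}\cdot(pN/T)\cdot p^{1-\delta_2}=N^{2-\delta_1}p^{2-\delta_2}T^{-1}$; $\widehat{\boldsymbol{\Omega}}_{e}$ gives $p^2\cdot N^2/T=N^2p^2T^{-1}$; the remaining six cases ($\widehat{\boldsymbol{\Omega}}_g$, $\widehat{\boldsymbol{\Omega}}_f$, $\widehat{\boldsymbol{\Omega}}_{fg}$, $\widehat{\boldsymbol{\Omega}}_{fe}$, $\widehat{\boldsymbol{\Omega}}_{eg}$, $\widehat{\boldsymbol{\Omega}}_{ef}$) are these same three templates with the loadings relabelled, producing the $N^{2-2\delta_1}p^{2-2\delta_2}$, $N^{2-2\delta_3}p^{2-2\delta_4}$, and $N^{2-\delta_3}p^{2-\delta_4}$ exponents stated.

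The main obstacle is the bookkeeping for the noise-bearing terms, where Lemma~1 is unavailable and second moments must be controlled directly, using Condition~6's white-noise/cross-group structure with Condition~3's spectral bound; here one must be careful that a noise column contributes an \emph{undiscounted} factor $N$ (through $\mathrm{E}\|\boldsymbol{e}_{\cdot,j,mt}\|_2^2\asymp N$) rather than the discounted $N^{1-\delta}$ of a genuine loading matrix — this is exactly why the noise terms carry the larger exponents $N^{2-\delta}p^{2-\delta}$ and $N^2p^2$ instead of $N^{2-2\delta}p^{2-2\delta}$. A minor secondary point is to check that each Lemma-1 core entry, such as $\frac1T\sum_t G_{t,ij}F_{nt,kl}$, is genuinely mean zero under Condition~6 so that the uncentered bounds of Lemma~1 apply verbatim.
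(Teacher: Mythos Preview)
Your proposal is correct and follows essentially the same approach as the paper: factor out the loading matrices via $\|\cdot\|_2\le\|\cdot\|_F$ and the $\mathrm{vec}$/Kronecker identity, control the finite-dimensional ``core'' $\frac1T\sum_t(\boldsymbol{B}_{nt}\otimes\boldsymbol{A}_{mt})$ by Lemma~1 (factor--factor case) or a direct second-moment computation (noise case), then sum over $j_1,j_2$ using $\sum_j\|\boldsymbol{c}_{\cdot,j,\cdot}\|_2^2=\|\boldsymbol{C}\|_F^2\asymp k_2\|\boldsymbol{C}\|_2^2$ and Condition~4. Your write-up is in fact more explicit than the paper's on the noise-bearing terms (which the paper dispatches with ``Similarly''), and your diagnosis of why those terms carry undiscounted $N$ and $p$ factors is exactly the point.
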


With the nine rates established in Lemma 2, we can show the rate of convergence for the observed covariance matrix $\widehat{\boldsymbol{\Omega}}_{x,m n,j_1 j_2}$.

\begin{lemma} \label{Lemma3}
	Under Conditions 1-4 and Conditions 6-7, it\ holds\ that
	\begin{equation}
		\sum_{j_1=1}^p \sum_{j_2=1}^p \| \widehat{\boldsymbol{\Omega}}_{x,m n,j_1 j_2} - \boldsymbol{\Omega}_{x,m n,j_1 j_2} \|_2^2 = O_p(N^2 p^{2} T^{-1}).
	\end{equation}
	\begin{proof}
		\begin{equation}
			\begin{aligned}
				\widehat{\boldsymbol{\Omega}}_{x,m n,j_1 j_2} & = \frac{1}{T} \sum_{t=1}^{T} \boldsymbol{x}_{\cdot,j_1,mt} \boldsymbol{x}_{\cdot,j_2,nt}^{\prime}\\
				& = \frac{1}{T} \sum_{t=1}^T (\boldsymbol{R}_m \boldsymbol{G}_t \boldsymbol{c}_{\cdot,j_1,m}+ \boldsymbol{\Gamma}_m \boldsymbol{F}_{mt} \boldsymbol{\lambda}_{\cdot,j_1,m} + \boldsymbol{e}_{\cdot,j_1,mt})(\boldsymbol{R}_n \boldsymbol{G}_t \boldsymbol{c}_{\cdot,j_1,n}+ \boldsymbol{\Gamma}_n \boldsymbol{F}_{nt} \boldsymbol{\lambda}_{\cdot,j_2,n} + \boldsymbol{e}_{\cdot,j_2,nt})^{\prime} \\
				& = \widehat{\boldsymbol{\Omega}}_{g, m n, j_1 j_2} + \widehat{\boldsymbol{\Omega}}_{g f, m n, j_1 j_2} + \widehat{\boldsymbol{\Omega}}_{g e, m n, j_1 j_2} + \widehat{\boldsymbol{\Omega}}_{f g, m n, j_1 j_2} +\widehat{\boldsymbol{\Omega}}_{f, m n, j_1 j_2} \\
				& + \widehat{\boldsymbol{\Omega}}_{f e, m n, j_1 j_2} + \widehat{\boldsymbol{\Omega}}_{e g, m n, j_1 j_2}
				+\widehat{\boldsymbol{\Omega}}_{e f, m n, j_1 j_2} + \widehat{\boldsymbol{\Omega}}_{e, m n, j_1 j_2}.
			\end{aligned}
		\end{equation}
		Then by Lemma 2, it follows that
		\begin{equation}
			\begin{aligned}
				& \sum_{j_1=1}^p \sum_{j_2=1}^p \| \widehat{\boldsymbol{\Omega}}_{x,m n,j_1 j_2} - {\boldsymbol{\Omega}}_{x,m n,j_1 j_2} \|_2^2\\
				& \leq 9 \sum_{j_1=1}^{p } \sum_{j_2=1}^p (\| \widehat{\boldsymbol{\Omega}}_{g, m n, j_1 j_2}- \boldsymbol{\Omega}_{g, m n, j_1 j_2} \|_2^2 + \|\widehat{\boldsymbol{\Omega}}_{g f, m n, j_1 j_2} \|_2^2 + \|\widehat{\boldsymbol{\Omega}}_{g e , m n, j_1 j_2} \|_2^2 +\|\widehat{\boldsymbol{\Omega}}_{f g, m n, j_1 j_2} \|_2^2 \\
				& \|\widehat{\boldsymbol{\Omega}}_{f, m n, j_1 j_2} \|_2^2 + \|\widehat{\boldsymbol{\Omega}}_{f e , m n, j_1 j_2} \|_2^2 + \|\widehat{\boldsymbol{\Omega}}_{e g, m n, j_1 j_2} \|_2^2 +\|\widehat{\boldsymbol{\Omega}}_{e f , m n, j_1 j_2} \|_2^2+\|\widehat{\boldsymbol{\Omega}}_{e, m n, j_1 j_2} \|_2^2 )\\
				& = O_p(N^2 p^{2} T^{-1}).
			\end{aligned}
		\end{equation}
	\end{proof}
\end{lemma}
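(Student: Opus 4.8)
The plan is to expand the sample cross-moment $\widehat{\boldsymbol{\Omega}}_{x,mn,j_1j_2}$ by substituting the model decomposition $\boldsymbol{x}_{\cdot,j,mt} = \boldsymbol{R}_m\boldsymbol{G}_t\boldsymbol{c}_{\cdot,j,m} + \boldsymbol{\Gamma}_m\boldsymbol{F}_{mt}\boldsymbol{\lambda}_{\cdot,j,m} + \boldsymbol{e}_{\cdot,j,mt}$ into both factors of the product $\boldsymbol{x}_{\cdot,j_1,mt}\boldsymbol{x}_{\cdot,j_2,nt}^{\prime}$ and averaging over $t$. Multiplying out yields exactly the nine sample cross-terms $\widehat{\boldsymbol{\Omega}}_{g,mn,j_1j_2}$, $\widehat{\boldsymbol{\Omega}}_{gf,mn,j_1j_2}$, $\widehat{\boldsymbol{\Omega}}_{ge,mn,j_1j_2}$, $\widehat{\boldsymbol{\Omega}}_{fg,mn,j_1j_2}$, $\widehat{\boldsymbol{\Omega}}_{f,mn,j_1j_2}$, $\widehat{\boldsymbol{\Omega}}_{fe,mn,j_1j_2}$, $\widehat{\boldsymbol{\Omega}}_{eg,mn,j_1j_2}$, $\widehat{\boldsymbol{\Omega}}_{ef,mn,j_1j_2}$ and $\widehat{\boldsymbol{\Omega}}_{e,mn,j_1j_2}$ introduced in the notation list preceding Lemma 1.

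Next I would pin down the population limit. By Condition 6 the processes $\text{vec}(\boldsymbol{G}_t)$, $\text{vec}(\boldsymbol{F}_{mt})$, $\text{vec}(\boldsymbol{F}_{nt})$ and, for $m\neq n$, $\text{vec}(\boldsymbol{E}_{mt})$, $\text{vec}(\boldsymbol{E}_{nt})$ are mutually uncorrelated, so every mixed expectation vanishes and $\boldsymbol{\Omega}_{x,mn,j_1j_2} = \boldsymbol{\Omega}_{g,mn,j_1j_2}$ — this is precisely the third equality in \eqref{11}. Consequently $\widehat{\boldsymbol{\Omega}}_{x,mn,j_1j_2} - \boldsymbol{\Omega}_{x,mn,j_1j_2}$ equals $(\widehat{\boldsymbol{\Omega}}_{g,mn,j_1j_2} - \boldsymbol{\Omega}_{g,mn,j_1j_2})$ plus the other eight sample cross-terms, none of which need be centered since each already converges to zero.

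Then I would apply the elementary bound $\|\sum_{i=1}^{9}\boldsymbol{A}_i\|_2^2 \leq 9\sum_{i=1}^{9}\|\boldsymbol{A}_i\|_2^2$, sum over $j_1,j_2 = 1,\dots,p$, and invoke the nine rates from Lemma \ref{Lemma2} term by term. Since Condition 4 forces $\delta_1,\delta_2,\delta_3,\delta_4 \in [0,1]$, every exponent of $N$ and $p$ appearing in those rates is at most $2$, so each of the nine sums is $O_p(N^2p^2T^{-1})$; moreover the pure-noise term $\sum_{j_1,j_2}\|\widehat{\boldsymbol{\Omega}}_{e,mn,j_1j_2}\|_2^2 = O_p(N^2p^2T^{-1})$ attains this rate, giving the claimed bound.

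The lemma is essentially a corollary of Lemma \ref{Lemma2}, so no genuine difficulty arises beyond bookkeeping. The one point deserving care is the cancellation $\boldsymbol{\Omega}_{x,mn,j_1j_2} = \boldsymbol{\Omega}_{g,mn,j_1j_2}$: it must be justified from Condition 6 and uses $m\neq n$, which is the relevant regime since the sums in \eqref{W1mdef} and \eqref{hatW1mdef} that consume this lemma run over $i\neq m$. The other thing to verify is that no cross-term rate exceeds $N^2p^2T^{-1}$, which is immediate once one observes that every $N$- and $p$-exponent in Lemma \ref{Lemma2} is bounded by $2$ because the $\delta$'s are nonnegative.
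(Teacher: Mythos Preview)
Your proposal is correct and follows essentially the same approach as the paper: expand $\widehat{\boldsymbol{\Omega}}_{x,mn,j_1j_2}$ into the nine cross-terms, use Condition~6 (and $m\neq n$) to reduce the population target to $\boldsymbol{\Omega}_{g,mn,j_1j_2}$, apply the $9$-term squared-norm inequality, and read off the rates from Lemma~\ref{Lemma2}. If anything, your write-up is slightly more explicit than the paper's in justifying why $\boldsymbol{\Omega}_{x,mn,j_1j_2}=\boldsymbol{\Omega}_{g,mn,j_1j_2}$ and why the pure-noise term dominates.
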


\begin{lemma} \label{Lemma4}
Under Conditions 1-4, Conditions 6-7, and $N^{\delta_1} p^{\delta_2} T^{-1/2} = o(1)$, it\ holds\ that
	\begin{equation}
		\| \widehat{\boldsymbol{W}}_{1,m} - \boldsymbol{W}_{1,m} \|_2 = O_p (N^{2-\delta_1} p^{2-\delta_2} T^{-1/2}),\quad m=1,...,M.
	\end{equation}
	\begin{proof}
 From the definitions of $\widehat{\boldsymbol{W}}_{1,m}$ and $\boldsymbol{W}_{1,m}$ in \eqref{W1mdef} and \eqref{hatW1mdef}, it follows that
		\begin{equation}
			\begin{aligned}
				&  \| \widehat{\boldsymbol{W}}_{1,m} - \boldsymbol{W}_{1,m} \|_2 =
				\| \sum_{i=1, i \neq m}^{M} \sum_{j_1=1}^p \sum_{j_2=1}^p \left( \widehat{\boldsymbol{\Omega}}_{x,m i,j_1 j_2} \widehat{\boldsymbol{\Omega}}_{x,m i,j_1 j_2}^{\prime} - {\boldsymbol{\Omega}}_{x,m i,j_1 j_2} {\boldsymbol{\Omega}}_{x,m i,j_1 j_2}^{\prime}\right) \|_2 \\
				& \leq \sum_{i=1, i \neq m}^{M} \sum_{j_1=1}^p \sum_{j_2=1}^p \left(\|(\widehat{\boldsymbol{\Omega}}_{x,m i,j_1 j_2}-\boldsymbol{\Omega}_{x,m i,j_1 j_2})(\widehat{\boldsymbol{\Omega}}_{x,m i,j_1 j_2}-\boldsymbol{\Omega}_{x,m i,j_1 j_2})^{\prime}\|_2+2\|\boldsymbol{\Omega}_{x,m i,j_1 j_2}\|_2\|\widehat{\boldsymbol{\Omega}}_{x,m i,j_1 j_2}-\boldsymbol{\Omega}_{x,m i,j_1 j_2}\|_2\right) \\
				& \leq \sum_{i=1, i \neq m}^{M} \sum_{j_1=1}^p \sum_{j_2=1}^p \|(\widehat{\boldsymbol{\Omega}}_{x,m i,j_1 j_2}-\boldsymbol{\Omega}_{x,m i,j_1 j_2}) \|_2^2 +2 \sum_{i=1, i \neq m}^{M} \sum_{j_1=1}^p \sum_{j_2=1}^p \|\boldsymbol{\Omega}_{x,m i,j_1 j_2}\|_2\|\widehat{\boldsymbol{\Omega}}_{x,m i,j_1 j_2}-\boldsymbol{\Omega}_{x,m i,j_1 j_2}\|_2.
			\end{aligned}
		\end{equation}
		
		We have 
		\begin{equation}  \label{eq1}
			\begin{aligned}
				& \sum_{j_1=1}^p \sum_{j_2=1}^p \| \boldsymbol{\Omega}_{x,m i,j_1 j_2} \|_2^2 = \sum_{j_1=1}^p \sum_{j_2=1}^p \| \boldsymbol{R}_m \boldsymbol{\Omega}_{gc ,m i,j_1 j_2} \boldsymbol{R}_i^{\prime} \|_2^2 \leq \sum_{j_1=1}^p \sum_{j_2=1}^p \|\boldsymbol{R}_m\|_2^2
				\|\boldsymbol{R}_i\|_2^2 \| \boldsymbol{\Omega}_{gc ,m i,j_1 j_2} \|_2^2\\
				& \quad \leq \quad \|\boldsymbol{R}_m\|_2^2
				\|\boldsymbol{R}_i\|_2^2 \cdot \left\| \frac{1}{T} \sum_{t=1}^T E(\boldsymbol{G}_t \otimes
				\boldsymbol{G}_t) \right\|_2^2 \cdot \left(\sum_{j_1=1}^p \| \boldsymbol{c}_{\cdot,j_1,m} \|_2^2 \right)
				\left(\sum_{j_2=1}^p \| \boldsymbol{c}_{\cdot,j_2,n} \|_2^2 \right)\\
				& \quad = \quad O_p(N^{2-2\delta_1}  p^{2-2\delta_2}).
			\end{aligned}
		\end{equation}
		By \eqref{eq1} and Lemma 3, we have
		\begin{equation}
			\begin{aligned}
				& \left( \sum_{j_1=1}^p \sum_{j_2=1}^p \|\boldsymbol{\Omega}_{x,m i,j_1 j_2}\|_2\|\widehat{\boldsymbol{\Omega}}_{x,m i,j_1 j_2}-\boldsymbol{\Omega}_{x,m i,j_1 j_2}\|_2\right)^2 \\
				& \quad \leq \quad \left( \sum_{j_1=1}^p \sum_{j_2=1}^p \|\boldsymbol{\Omega}_{x,m i,j_1 j_2}\|_2^2 \right) \cdot \left( \sum_{j_1=1}^p \sum_{j_2=1}^p \|\widehat{\boldsymbol{\Omega}}_{x,m i,j_1 j_2}-\boldsymbol{\Omega}_{x,m i,j_1 j_2}\|_2^2\right)\\
				& \quad \leq \quad O_p (N^{2-2\delta_1}  p^{2-2\delta_2} N^2 p^2 T^{-1}) = O_p (N^{4-2\delta_1}  p^{4-2\delta_2}T^{-1}).
			\end{aligned}
		\end{equation}
		It follows that
		\begin{equation}
			\| \widehat{\boldsymbol{W}}_{1,m} - \boldsymbol{W}_{1,m} \|_2 = O_p (N^{2-\delta_1} p^{2-\delta_2} T^{-1/2}).
		\end{equation}
	\end{proof}
\end{lemma}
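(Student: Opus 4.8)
The plan is to regard $\widehat{\boldsymbol W}_{1,m}-\boldsymbol W_{1,m}$ as a matrix perturbation driven by the blockwise errors $\widehat{\boldsymbol\Omega}_{x,mi,j_1j_2}-\boldsymbol\Omega_{x,mi,j_1j_2}$, whose aggregated size is already controlled by Lemma~\ref{Lemma3}. Writing $\boldsymbol\Omega=\boldsymbol\Omega_{x,mi,j_1j_2}$ and $\widehat{\boldsymbol\Omega}=\widehat{\boldsymbol\Omega}_{x,mi,j_1j_2}$ for brevity, the first step is to expand, from the definitions \eqref{W1mdef} and \eqref{hatW1mdef}, the identity
\[
\widehat{\boldsymbol\Omega}\widehat{\boldsymbol\Omega}^{\prime}-\boldsymbol\Omega\boldsymbol\Omega^{\prime}
=(\widehat{\boldsymbol\Omega}-\boldsymbol\Omega)(\widehat{\boldsymbol\Omega}-\boldsymbol\Omega)^{\prime}
+(\widehat{\boldsymbol\Omega}-\boldsymbol\Omega)\boldsymbol\Omega^{\prime}
+\boldsymbol\Omega(\widehat{\boldsymbol\Omega}-\boldsymbol\Omega)^{\prime},
\]
take spectral norms (using submultiplicativity and $\|\boldsymbol A^{\prime}\|_2=\|\boldsymbol A\|_2$), and sum by the triangle inequality over the finitely many $i\neq m$ and over $j_1,j_2=1,\dots,p$. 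This bounds $\|\widehat{\boldsymbol W}_{1,m}-\boldsymbol W_{1,m}\|_2$ by a quadratic piece $\sum_{i\neq m}\sum_{j_1,j_2}\|\widehat{\boldsymbol\Omega}-\boldsymbol\Omega\|_2^2$ plus a cross piece $2\sum_{i\neq m}\sum_{j_1,j_2}\|\boldsymbol\Omega\|_2\,\|\widehat{\boldsymbol\Omega}-\boldsymbol\Omega\|_2$.

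The quadratic piece is dealt with at once: by Lemma~\ref{Lemma3}, $\sum_{j_1,j_2}\|\widehat{\boldsymbol\Omega}-\boldsymbol\Omega\|_2^2=O_p(N^2p^2T^{-1})$ for each $i$, and the outer sum over $i\neq m$ only multiplies by the constant $M-1$. The rate assumption gives $N^2p^2T^{-1}=\big(N^{\delta_1}p^{\delta_2}T^{-1/2}\big)\big(N^{2-\delta_1}p^{2-\delta_2}T^{-1/2}\big)=o\big(N^{2-\delta_1}p^{2-\delta_2}T^{-1/2}\big)$, so this piece is of strictly smaller order than the asserted rate; the whole bound therefore comes from the cross piece.

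For the cross piece I would first establish the population bound $\sum_{j_1,j_2}\|\boldsymbol\Omega_{x,mi,j_1j_2}\|_2^2=O(N^{2-2\delta_1}p^{2-2\delta_2})$. Condition~6 collapses the signal part so that $\boldsymbol\Omega_{x,mi,j_1j_2}=\boldsymbol R_m\,\boldsymbol\Omega_{gc,mi,j_1j_2}\,\boldsymbol R_i^{\prime}$; vectorizing, $\mathrm{vec}(\boldsymbol\Omega_{gc,mi,j_1j_2})=\big(\frac1T\sum_t\mathrm{E}(\boldsymbol G_t\otimes\boldsymbol G_t)\big)\mathrm{vec}(\boldsymbol c_{\cdot,j_1,m}\boldsymbol c_{\cdot,j_2,i}^{\prime})$, hence $\|\boldsymbol\Omega_{gc,mi,j_1j_2}\|_2\le\|\frac1T\sum_t\mathrm{E}(\boldsymbol G_t\otimes\boldsymbol G_t)\|_2\,\|\boldsymbol c_{\cdot,j_1,m}\|_2\|\boldsymbol c_{\cdot,j_2,i}\|_2$. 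Combining $\|\frac1T\sum_t\mathrm{E}(\boldsymbol G_t\otimes\boldsymbol G_t)\|_2=O(1)$ (Condition~2), $\|\boldsymbol R_m\|_2^2\asymp\|\boldsymbol R_i\|_2^2\asymp N^{1-\delta_1}$, and $\sum_j\|\boldsymbol c_{\cdot,j,m}\|_2^2=\|\boldsymbol C_m\|_F^2\le k_2\|\boldsymbol C_m\|_2^2\asymp p^{1-\delta_2}$ (Condition~4) gives the bound. Then Cauchy--Schwarz yields, for each $i$,
\[
\Big(\sum_{j_1,j_2}\|\boldsymbol\Omega_{x,mi,j_1j_2}\|_2\|\widehat{\boldsymbol\Omega}_{x,mi,j_1j_2}-\boldsymbol\Omega_{x,mi,j_1j_2}\|_2\Big)^2
\le\Big(\sum_{j_1,j_2}\|\boldsymbol\Omega_{x,mi,j_1j_2}\|_2^2\Big)\Big(\sum_{j_1,j_2}\|\widehat{\boldsymbol\Omega}_{x,mi,j_1j_2}-\boldsymbol\Omega_{x,mi,j_1j_2}\|_2^2\Big),
\]
which by Lemma~\ref{Lemma3} is $O_p(N^{2-2\delta_1}p^{2-2\delta_2})\cdot O_p(N^2p^2T^{-1})=O_p(N^{4-2\delta_1}p^{4-2\delta_2}T^{-1})$. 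Taking square roots and summing the $M-1$ values of $i$ shows the cross piece is $O_p(N^{2-\delta_1}p^{2-\delta_2}T^{-1/2})$, which, together with the negligible quadratic piece, gives the claim.

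I do not anticipate a genuinely hard step: the analytic work is already in Lemmas~\ref{Lemma2}--\ref{Lemma3}, and what remains is perturbation bookkeeping. The one place requiring care is the cross piece: bounding it crudely by $\max_{j_1,j_2}\|\boldsymbol\Omega_{x,mi,j_1j_2}\|_2$ times the $\ell^1$-norm of the errors would introduce a spurious factor of $p^2$, so one must keep it as an inner product and pair it against Lemma~\ref{Lemma3} through Cauchy--Schwarz. A minor check is confirming that $N^{\delta_1}p^{\delta_2}T^{-1/2}=o(1)$ is precisely the condition that demotes the quadratic piece to lower order, which is why it appears in the hypotheses.
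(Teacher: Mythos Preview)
Your proposal is correct and follows essentially the same approach as the paper: the same decomposition $\widehat{\boldsymbol\Omega}\widehat{\boldsymbol\Omega}^{\prime}-\boldsymbol\Omega\boldsymbol\Omega^{\prime}=(\widehat{\boldsymbol\Omega}-\boldsymbol\Omega)(\widehat{\boldsymbol\Omega}-\boldsymbol\Omega)^{\prime}+(\widehat{\boldsymbol\Omega}-\boldsymbol\Omega)\boldsymbol\Omega^{\prime}+\boldsymbol\Omega(\widehat{\boldsymbol\Omega}-\boldsymbol\Omega)^{\prime}$, the same bound $\sum_{j_1,j_2}\|\boldsymbol\Omega_{x,mi,j_1j_2}\|_2^2=O(N^{2-2\delta_1}p^{2-2\delta_2})$ via the vectorization/Kronecker argument, and the same Cauchy--Schwarz pairing with Lemma~\ref{Lemma3} for the cross piece. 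Your explicit remark that $N^{\delta_1}p^{\delta_2}T^{-1/2}=o(1)$ is exactly what demotes the quadratic piece to lower order is a helpful clarification that the paper leaves implicit.
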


\begin{lemma} \label{Lemma5}
Under Conditions 2-3 and 7, we have
	\begin{equation}
		\lambda_i(\boldsymbol{W}_{1,m}) \asymp N^{2-2\delta_1} p^{2-2\delta_2},\quad i=1,2,...,k_1,\ m=1,2,...,M,
	\end{equation}
 where $\lambda_i(\boldsymbol{W}_{1,m})$ denotes the $i$-th largest  eigenvalue of $\boldsymbol{W}_{1,m}$.
	\begin{proof}
		By definition, we have
		\begin{equation*}
			\begin{aligned}
				\boldsymbol{\Omega}_{gc,mi,j_1 j_2} \quad & = \quad \frac{1}{T} \sum_{t=1}^T \text{E} [(\boldsymbol{c}_{\cdot,j_1,m} \otimes \boldsymbol{I}_{k_1}) \text{vec}(\boldsymbol{G}_t) \text{vec}(\boldsymbol{G}_t)^{\prime} (\boldsymbol{c}_{\cdot,j_2,i}^{\prime} \otimes \boldsymbol{I}_{k_1})] \\
				& = \quad (\boldsymbol{c}_{\cdot,j_1,m} \otimes \boldsymbol{I}_{k_1})
				\boldsymbol{\Sigma}_G (\boldsymbol{c}_{\cdot,j_2,i}^{\prime} \otimes \boldsymbol{I}_{k_1}).
			\end{aligned}   
		\end{equation*}
		We have
		\begin{equation}
			\begin{aligned}
				\lambda_{k_1}\left(\boldsymbol{W}_{1,m}\right) & =\lambda_{k_1}\left(\sum_{i=1,i\neq m}^{M} \sum_{j_1=1}^{p} \sum_{j_2=1}^{p} \boldsymbol{R}_m \boldsymbol{\Omega}_{g c,m i,j_1 j_2} \boldsymbol{R}_{i}^{\prime} \boldsymbol{R}_i \boldsymbol{\Omega}_{g c,m i,j_1 j_2}^{\prime} \boldsymbol{R}_{m}^{\prime}\right) \\
				&\geq \lambda_{k_1}\left( \sum_{j_1=1}^{p} \sum_{j_2=1}^{p} \boldsymbol{R}_m \boldsymbol{\Omega}_{g c,m i,j_1 j_2} \boldsymbol{R}_{i}^{\prime} \boldsymbol{R}_i \boldsymbol{\Omega}_{g c,m i,j_1 j_2}^{\prime} \boldsymbol{R}_{m}^{\prime}\right) (i \neq m)\\
				& \geq \|\boldsymbol{R}_i\|_{\min }^2 \cdot \|\boldsymbol{R}_m\|_{\min }^2 \cdot \lambda_{k_1}\left(\sum_{i=1,i \neq m}^{M} \sum_{j_1=1}^{p} \sum_{j_2=1}^{p} \boldsymbol{\Omega}_{g c,m i,j_1 j_2} \boldsymbol{\Omega}_{g c,m i,j_1 j_2}^{\prime}\right) \\
				& =\|\boldsymbol{R}_i\|_{\min }^2 \cdot \|\boldsymbol{R}_m\|_{\min }^2 \cdot \lambda_{k_1}\left( \sum_{j_1=1}^{p} \sum_{j_2=1}^{p}\left(\boldsymbol{c}_{\cdot,j_1,m} \otimes \boldsymbol{I}_{k_1}\right) \boldsymbol{\Sigma}_G\left(\boldsymbol{c}_{\cdot,j_2,i}^{\prime} \otimes \boldsymbol{I}_{k_1}\right)\left(\boldsymbol{c}_{\cdot,j_2,i} \otimes \boldsymbol{I}_{k_1}\right) \boldsymbol{\Sigma}_G^{\prime}\left(\boldsymbol{c}_{\cdot,j_1,m}^{\prime} \otimes \boldsymbol{I}_{k_1}\right)\right) \\
				& \geq \|\boldsymbol{R}_i\|_{\min }^2 \cdot \|\boldsymbol{R}_m\|_{\min }^2 \cdot \lambda_{k_1}\left( \sum_{j_1=1}^{p} \sum_{j_2=1}^{p}\left(\boldsymbol{c}_{\cdot,j_1,m} \otimes \boldsymbol{I}_{k_1}\right) \boldsymbol{\Sigma}_G \left(\boldsymbol{c}_{\cdot,j_2,i}^{\prime} \cdot \boldsymbol{c}_{\cdot,j_2,i} \otimes \boldsymbol{I}_{k_1}\right) \boldsymbol{\Sigma}_G^{\prime}\left(\boldsymbol{c}_{\cdot,j_1,m}^{\prime} \otimes \boldsymbol{I}_{k_1}\right)\right) \\
				& = \|\boldsymbol{R}_i\|_{\min }^2  \cdot \|\boldsymbol{R}_m\|_{\min }^2 \cdot \lambda_{k_1}\left( \sum_{j_1=1}^{p}\left(\boldsymbol{c}_{\cdot,j_1,m} \otimes \boldsymbol{I}_{k_1}\right) \boldsymbol{\Sigma}_G \left(\boldsymbol{C}_{i}^{\prime} \boldsymbol{C}_{i} \otimes \boldsymbol{I}_{k_1}\right) \boldsymbol{\Sigma}_G^{\prime}\left(\boldsymbol{c}_{\cdot,j_1,m}^{\prime} \otimes \boldsymbol{I}_{k_1}\right)\right) \\
				& =\|\boldsymbol{R}_i\|_{\min }^2 \cdot \|\boldsymbol{R}_m\|_{\min }^2 \cdot \lambda_{k_1}\left( \sum_{j_1=1}^{p}\left(\boldsymbol{c}_{\cdot,j_1,m} \otimes \boldsymbol{I}_{k_1}\right) \boldsymbol{\Sigma}_G\left(\boldsymbol{C}_{i}^{\prime} \otimes \boldsymbol{I}_{k_1}\right)\left(\boldsymbol{C}_{i} \otimes \boldsymbol{I}_{k_1}\right) \boldsymbol{\Sigma}_G^{\prime}\left(\boldsymbol{c}_{\cdot,j_1,m}^{\prime} \cdot \otimes \boldsymbol{I}_{k_1}\right)\right) \\
				& =\|\boldsymbol{R}_i\|_{\min }^2 \cdot \|\boldsymbol{R}_m\|_{\min }^2 \cdot \lambda_{k_1}\left( \sum_{j_1=1}^{p} \left(\boldsymbol{C}_{i} \otimes \boldsymbol{I}_{k_1}\right) \boldsymbol{\Sigma}_G^{\prime}\left(\boldsymbol{c}_{\cdot,j_1,m} \otimes \boldsymbol{I}_{k_1}\right)
				\left(\boldsymbol{c}_{\cdot,j_1,m} \otimes \boldsymbol{I}_{k_1}\right)\boldsymbol{\Sigma}_G\left(\boldsymbol{C}_{i}^{\prime} \otimes \boldsymbol{I}_{k_1}\right)\right) \\
				& =\|\boldsymbol{R}_i\|_{\min }^2 \cdot \|\boldsymbol{R}_m\|_{\min }^2 \cdot \lambda_{k_1}\left(\left(\boldsymbol{C}_i \otimes \boldsymbol{I}_{k_1}\right) \boldsymbol{\Sigma}_G^{\prime}\left(\boldsymbol{C}_m^{\prime} \boldsymbol{C}_m \otimes \boldsymbol{I}_{k_1}\right) \boldsymbol{\Sigma}_G \left(\boldsymbol{C}_i^{\prime} \otimes \boldsymbol{I}_{k_1}\right)\right).
			\end{aligned}
		\end{equation}
		Since $\boldsymbol{C}_i^{\prime}\boldsymbol{C}_i$ is a $k_2\times k_2$ symmetric positive matrix, we can find $k_2 \times k_2$ positive matrix $\boldsymbol{U}_i$, such that $\boldsymbol{C}_i^{\prime}\boldsymbol{C}_i=\boldsymbol{U}_i\boldsymbol{U}_i^{\prime}$ and $\|\boldsymbol{U}_i\|_2^2 \asymp O(p^{1-\delta_2}) \asymp \|\boldsymbol{U}_i\|_{\text{min}}^2.$ By the properties of Kronecker product, we have $\sigma_1(\boldsymbol{U}_i \otimes \boldsymbol{I}_{k_1}) \asymp O(p^{1/2-\delta_2/2}) \asymp \sigma_{k_1 k_2}(\boldsymbol{U}_i \otimes \boldsymbol{I}_{k_1}).$ Using Condition 2 and Theorem 9 in \citet{merikoski2004inequalities}, it follows that $\sigma_{k_1}(\boldsymbol{\Sigma}_G^{\prime} (\boldsymbol{U_i } \otimes \boldsymbol{I}_{k_1})) \asymp O(p^{1/2-\delta_2/2}),\ i=1,...,M$. We have
		\begin{equation}
			\begin{aligned}
				\lambda_{k_1}\left(\boldsymbol{W}_{1,m}\right) & \geq \|\boldsymbol{R}_i\|_{\min }^2 \cdot\|\boldsymbol{R}_m\|_{\min }^2 \cdot \lambda_{k_1}\left(\left(\boldsymbol{C}_i \otimes \boldsymbol{I}_{k_1}\right) \boldsymbol{\Sigma}_G^{\prime}\left(\boldsymbol{U}_m \otimes \boldsymbol{I}_{k_1}\right)\left(\boldsymbol{U}_m^{\prime} \otimes \boldsymbol{I}_{k_1}\right) \boldsymbol{\Sigma}_G\left(\boldsymbol{C}_i^{\prime} \otimes \boldsymbol{I}_{k_1}\right)\right) \\
				& =\|\boldsymbol{R}_i\|_{\min }^2 \cdot \|\boldsymbol{R}_m\|_{\min }^2 \cdot \lambda_{k_1}\left(\left(\boldsymbol{U}_m^{\prime} \otimes \boldsymbol{I}_{k_1}\right) \boldsymbol{\Sigma}_G\left(\boldsymbol{C}_i^{\prime} \boldsymbol{C}_i \otimes \boldsymbol{I}_{k_1}\right) \boldsymbol{\Sigma}_G^{\prime} \left(\boldsymbol{U}_m \otimes \boldsymbol{I}_{k_1}\right)\right) \\
				& =\|\boldsymbol{R}_i\|_{\min }^2 \cdot \|\boldsymbol{R}_m\|_{\min }^2 \cdot \lambda_{k_1}\left(\left(\boldsymbol{U}_m^{\prime} \otimes \boldsymbol{I}_{k_1}\right) \boldsymbol{\Sigma}_G\left(\boldsymbol{U}_i \otimes \boldsymbol{I}_{k_1}\right)\left(\boldsymbol{U}_i^{\prime} \otimes \boldsymbol{I}_{k_1}\right) \boldsymbol{\Sigma}_G^{\prime} \left(\boldsymbol{U}_m \otimes \boldsymbol{I}_{k_1}\right)\right) \\
				&  \geq \|\boldsymbol{R}_i\|_{\min }^2 \cdot \|\boldsymbol{R}_m\|_{\min }^2 \cdot\left[\sigma_{k_1}\left(\left(\boldsymbol{U}_i^{\prime} \otimes \boldsymbol{I}_{k_1}\right) \boldsymbol{\Sigma}_G^{\prime} \left(\boldsymbol{U}_m \otimes \boldsymbol{I}_{k_1}\right)\right)\right]^2=O\left(N^{2-2 \delta_1} p^{2-2 \delta_2}\right) .
			\end{aligned}
		\end{equation}
	\end{proof}
\end{lemma}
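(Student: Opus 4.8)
To obtain $\lambda_i(\boldsymbol{W}_{1,m})\asymp N^{2-2\delta_1}p^{2-2\delta_2}$ for every $i\le k_1$ it suffices to prove a matching upper bound for the largest eigenvalue $\lambda_1(\boldsymbol{W}_{1,m})$ together with a matching lower bound for $\lambda_{k_1}(\boldsymbol{W}_{1,m})$; since $\boldsymbol{W}_{1,m}\succeq\boldsymbol{0}$ and $\lambda_1\ge\cdots\ge\lambda_{k_1}$, the two estimates trap all $k_1$ leading eigenvalues at the same order. The structural fact behind everything is that by Condition 6 every cross-group and every global-versus-local cross term vanishes at the population level, so for $i\ne m$ we have $\boldsymbol{\Omega}_{x,mi,j_1j_2}=\boldsymbol{R}_m\boldsymbol{\Omega}_{gc,mi,j_1j_2}\boldsymbol{R}_i^{\prime}$ with $\boldsymbol{\Omega}_{gc,mi,j_1j_2}=(\boldsymbol{c}_{\cdot,j_1,m}^{\prime}\otimes\boldsymbol{I}_{k_1})\boldsymbol{\Sigma}_G(\boldsymbol{c}_{\cdot,j_2,i}\otimes\boldsymbol{I}_{k_1})$, hence $\boldsymbol{W}_{1,m}=\boldsymbol{R}_m\bigl(\sum_{i\ne m}\sum_{j_1,j_2}\boldsymbol{\Omega}_{gc,mi,j_1j_2}\boldsymbol{R}_i^{\prime}\boldsymbol{R}_i\boldsymbol{\Omega}_{gc,mi,j_1j_2}^{\prime}\bigr)\boldsymbol{R}_m^{\prime}$.

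\textbf{Upper bound.} As $\boldsymbol{W}_{1,m}$ is symmetric positive semidefinite, $\lambda_1(\boldsymbol{W}_{1,m})=\|\boldsymbol{W}_{1,m}\|_2\le\sum_{i\ne m}\sum_{j_1,j_2}\|\boldsymbol{\Omega}_{x,mi,j_1j_2}\|_2^2$ by the triangle inequality and submultiplicativity. The inner double sum is exactly the deterministic quantity bounded in \eqref{eq1} in the proof of Lemma \ref{Lemma4}, namely $O(N^{2-2\delta_1}p^{2-2\delta_2})$; as $M$ is fixed, summing the $M-1$ outer terms keeps this order. Hence $\lambda_i(\boldsymbol{W}_{1,m})\le\lambda_1(\boldsymbol{W}_{1,m})=O(N^{2-2\delta_1}p^{2-2\delta_2})$ for all $i$.

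\textbf{Lower bound.} It suffices to bound $\lambda_{k_1}(\boldsymbol{W}_{1,m})$ from below. Since $\boldsymbol{W}_{1,m}$ is a sum of positive semidefinite blocks, Weyl's inequality gives $\lambda_{k_1}(\boldsymbol{W}_{1,m})\ge\lambda_{k_1}\bigl(\sum_{j_1,j_2}\boldsymbol{R}_m\boldsymbol{\Omega}_{gc,mi,j_1j_2}\boldsymbol{R}_i^{\prime}\boldsymbol{R}_i\boldsymbol{\Omega}_{gc,mi,j_1j_2}^{\prime}\boldsymbol{R}_m^{\prime}\bigr)$ for any single fixed $i\ne m$. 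Repeatedly using $\boldsymbol{A}^{\prime}\boldsymbol{A}\succeq\|\boldsymbol{A}\|_{\min}^2\boldsymbol{I}$ for a full-column-rank $\boldsymbol{A}$ (valid for $\boldsymbol{R}_m$, $\boldsymbol{R}_i$, $\boldsymbol{C}_i$ by Condition 4) together with $\lambda_{k_1}(\boldsymbol{R}_m\boldsymbol{M}\boldsymbol{R}_m^{\prime})\ge\|\boldsymbol{R}_m\|_{\min}^2\lambda_{k_1}(\boldsymbol{M})$ for $\boldsymbol{M}\succeq\boldsymbol{0}$, one peels off $\|\boldsymbol{R}_m\|_{\min}^2\|\boldsymbol{R}_i\|_{\min}^2\asymp N^{2-2\delta_1}$ and, after the sums $\sum_{j_1}\boldsymbol{c}_{\cdot,j_1,m}\boldsymbol{c}_{\cdot,j_1,m}^{\prime}=\boldsymbol{C}_m^{\prime}\boldsymbol{C}_m$ and $\sum_{j_2}\boldsymbol{c}_{\cdot,j_2,i}\boldsymbol{c}_{\cdot,j_2,i}^{\prime}=\boldsymbol{C}_i^{\prime}\boldsymbol{C}_i$, reduces the task to lower-bounding $\lambda_{k_1}\bigl((\boldsymbol{C}_i\otimes\boldsymbol{I}_{k_1})\boldsymbol{\Sigma}_G^{\prime}(\boldsymbol{C}_m^{\prime}\boldsymbol{C}_m\otimes\boldsymbol{I}_{k_1})\boldsymbol{\Sigma}_G(\boldsymbol{C}_i^{\prime}\otimes\boldsymbol{I}_{k_1})\bigr)$. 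Writing $\boldsymbol{C}_m^{\prime}\boldsymbol{C}_m=\boldsymbol{U}_m\boldsymbol{U}_m^{\prime}$ (and similarly for $\boldsymbol{C}_i$) with $\|\boldsymbol{U}_m\|_2^2\asymp\|\boldsymbol{U}_m\|_{\min}^2\asymp p^{1-\delta_2}$ by Condition 4, Kronecker-product singular-value identities give $\sigma_{k_1}(\boldsymbol{U}_i\otimes\boldsymbol{I}_{k_1})\asymp p^{1/2-\delta_2/2}$, and combining with Condition 2 on $\boldsymbol{\Sigma}_G$ via the singular-value product inequality (Theorem 9 of \citet{merikoski2004inequalities}) gives $\sigma_{k_1}(\boldsymbol{\Sigma}_G^{\prime}(\boldsymbol{U}_i\otimes\boldsymbol{I}_{k_1}))\asymp p^{1/2-\delta_2/2}$, so the remaining $\lambda_{k_1}$ is $\asymp p^{2-2\delta_2}$. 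Multiplying back $N^{2-2\delta_1}$ yields $\lambda_{k_1}(\boldsymbol{W}_{1,m})\gtrsim N^{2-2\delta_1}p^{2-2\delta_2}$, which together with the upper bound finishes the proof.

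\textbf{Main obstacle.} The delicate point is the last step: $\boldsymbol{\Sigma}_G$ is rank-deficient (rank $k=\max(k_1,k_2)$ by Condition 2), so one must show that tensoring the full-rank factors $\boldsymbol{U}_i,\boldsymbol{U}_m$ with $\boldsymbol{I}_{k_1}$ and sandwiching against $\boldsymbol{\Sigma}_G$ still leaves the $k_1$-th singular value of order $p^{1/2-\delta_2/2}$ rather than collapsing to $0$; this succeeds precisely because $k_1\le k$ and Condition 2 keeps the nonzero part of $\boldsymbol{\Sigma}_G$ well-conditioned, but it requires a product singular-value inequality sharp enough not to lose the exponent, and the index bookkeeping across the Kronecker factors is where the real work lies. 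The remaining ingredients — Weyl's inequality, submultiplicativity of $\|\cdot\|_2$, and the full-column-rank sandwich bounds — are routine.
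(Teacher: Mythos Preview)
Your proposal is correct and follows essentially the same route as the paper: reduce to a single $i\ne m$ by Weyl, peel off $\|\boldsymbol{R}_m\|_{\min}^2\|\boldsymbol{R}_i\|_{\min}^2$, collapse the $j_1,j_2$ sums into $\boldsymbol{C}_m^{\prime}\boldsymbol{C}_m$ and $\boldsymbol{C}_i^{\prime}\boldsymbol{C}_i$ via the Kronecker representation of $\boldsymbol{\Omega}_{gc}$, take square roots $\boldsymbol{U}_m,\boldsymbol{U}_i$, and invoke Theorem~9 of \citet{merikoski2004inequalities} together with Condition~2 to get $\sigma_{k_1}\asymp p^{1/2-\delta_2/2}$. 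The only difference is that you make the upper bound explicit by citing \eqref{eq1}, whereas the paper's written proof of Lemma~\ref{Lemma5} derives only the lower bound and leaves the matching upper direction implicit.
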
 
\noindent{\bf Proof of Theorem 1}
\begin{proof}
	By Lemmas 1-5, and Lemma 3 in \citet{10.1093/biomet/asr048}, we have
	\begin{equation}
		\| \widehat{\boldsymbol{Q}}_{1m} - \boldsymbol{Q}_{1m} \|_2 \leq \frac{8}{\lambda_{min}(\boldsymbol{W}_{1,m})} \| \widehat{\boldsymbol{W}}_{1,m} -\boldsymbol{W}_{1,m} \|_2 = 
		O_p \left( N^{\delta_1} p^{\delta_2} T^{-1/2}\right).
	\end{equation}
\end{proof}
Proof for $\| \widehat{\boldsymbol{Q}}_{2m} - \boldsymbol{Q}_{2m}\|_2$ is similar. 

\noindent{\bf Proof of Theorem 2}
\begin{proof}
	Let $\widehat{\lambda}_{1m,j}$ and $\widehat{\boldsymbol{q}}_{1m,j}$ denote the $j$-th largest eigenvalue of $\widehat{\boldsymbol{W}}_{1,m}$ and its corresponding eigenvector. $\lambda_{1m,j}$ and $\boldsymbol{q}_{1m,j}$ denote the eigenvalue and eigenvector of matrix $\boldsymbol{W}_{1,m}$. Let $\widehat{\boldsymbol{Q}}_{1,m}=(\widehat{\boldsymbol{q}}_{1m,1},...,\widehat{\boldsymbol{q}}_{1m,k_1})$, $\boldsymbol{B}_{1,m}=(\boldsymbol{q}_{1m,k_1+1},...,\boldsymbol{q}_{1m,N_m})$, $\boldsymbol{Q}_{1,m} = (\boldsymbol{q}_{1m,1},...,\boldsymbol{q}_{1m,k_1})$, $\widehat{\boldsymbol{Q}}_{1,m}=(\widehat{\boldsymbol{q}}_{1m,1},...,\widehat{\boldsymbol{q}}_{1m,k_1})$ and $\widehat{\boldsymbol{B}}_{1,m} = (\widehat{\boldsymbol{q}}_{1m,k_1+1},...,\widehat{\boldsymbol{q}}_{1m,N_m})$. We have
	\begin{equation}
		\lambda_{1m,j}=\boldsymbol{q}_{1m,j}^{\prime} \boldsymbol{W}_{1,m} \boldsymbol{q}_{1m,j},\ \text{and} \quad \widehat{\lambda}_{1m,j}= \widehat{\boldsymbol{q}}_{1m,j}^{\prime} \widehat{\boldsymbol{W}}_{1,m} \widehat{\boldsymbol{q}}_{1m,j},\quad j=1,...,k_1.
	\end{equation}
	We can decompose $\widehat{\lambda}_{1m,j}-\lambda_{1m,j}$ by 
	\begin{equation}
		\widehat{\lambda}_{1m, j}-\lambda_{1m, j}=\widehat{\boldsymbol{q}}_{1m, j}^{\prime} \widehat{\boldsymbol{W}}_{1,m} \widehat{\boldsymbol{q}}_{1m, j}-\boldsymbol{q}_{1m, j}^{\prime} \boldsymbol{W}_{1,m} \boldsymbol{q}_{1m, j}=I_1+I_2+I_3+I_4+I_5,
	\end{equation}
	where
	\begin{equation*}
		\begin{gathered}
			I_1=\left(\widehat{\boldsymbol{q}}_{1m, j}-\boldsymbol{q}_{1m, j}\right)^{\prime}\left(\widehat{\boldsymbol{W}}_{1,m}-\boldsymbol{W}_{1,m}\right) \widehat{\boldsymbol{q}}_{1m, j}, \quad I_2=\left(\widehat{\boldsymbol{q}}_{1m, j}-\boldsymbol{q}_{1m, j}\right)^{\prime} \boldsymbol{W}_{1,m}\left(\widehat{\boldsymbol{q}}_{1m, j}-\boldsymbol{q}_{1m, j}\right), \\
			I_3=\left(\widehat{\boldsymbol{q}}_{1 m , j}-\boldsymbol{q}_{1 m , j}\right)^{\prime} \boldsymbol{W}_{1,m} \boldsymbol{q}_{1 m, j}, \quad I_4=\boldsymbol{q}_{1 m, j}^{\prime}\left(\widehat{\boldsymbol{W}}_{1,m}-\boldsymbol{W}_{1,m}\right) \widehat{\boldsymbol{q}}_{1 m , j}, \quad I_5=\boldsymbol{q}_{1 m, j}^{\prime} \boldsymbol{W}_{1,m}\left(\widehat{\boldsymbol{q}}_{1 m, j}-\boldsymbol{q}_{1 m, j}\right) .
		\end{gathered}
	\end{equation*}
	For $j=1,...,k_1$, $\| \widehat{\boldsymbol{q}}_{1m,j} - \boldsymbol{q}_{1m,j} \|_2 \leq \| \widehat{\boldsymbol{Q}}_{1,m} - \boldsymbol{Q}_{1,m} \|_2 = O_p(N^{\delta_1} p^{\delta_2} T^{-1/2}),$ and $\| \boldsymbol{W}_{1,m} \|_{2} = O_p( N^{2 - 2\delta_1} p^{2-2\delta_2})$. By Lemma 4, we have $\left\|I_1\right\|_2$ and $\left\|I_2\right\|_2$ are of order $O_p\left(N^{2} p^{2} T^{-1}\right)$ and $\left\|I_3\right\|_2,\left\|I_4\right\|_2$ and $\left\|I_5\right\|_2$ are of order $O_p\left(N^{2- \delta_1} p^{2- \delta_2} T^{-1/2}\right)$. So $\|\widehat{\lambda}_{1m, j}-\lambda_{1m, j}\|=O_p \left(N^{2-\delta_1} p^{2-\delta_2} T^{-1 / 2}\right)$.
	
	For $j=k_1+1, \ldots, N_m$, define,
	$$
	\widetilde{\boldsymbol{W}}_{1,m}=\sum_{i=1, i \neq m}^{M} \sum_{j_1=1}^p \sum_{j_2=1}^p  \widehat{\boldsymbol{\Omega}}_{x,m i,j_1 j_2} \boldsymbol{\Omega}_{x,m i,j_1 j_2}^{\prime} ,
	$$
	It can be shown that $\|\widehat{\boldsymbol{B}}_{1m}-\boldsymbol{B}_{1m}\|_2=O_p\left(N^{\delta_1} p^{\delta_2} T^{-1/2}\right)$, similar to proof of Theorem 1 with Lemma 3 in \citet{10.1093/biomet/asr048}. Hence, $\left\|\widehat{\boldsymbol{q}}_{1m, j}-\boldsymbol{q}_{1m, j}\right\|_2 \leq\|\widehat{\boldsymbol{B}}_{1m}-\boldsymbol{B}_{1m}\|_2=O_p\left(N^{\delta_1} p^{\delta_2} T^{-1/2}\right)$.
	Since $\lambda_{1m, j}=0$, for $j=k_1+1, \ldots, N_m$, consider the decomposition
	$$
	\widehat{\lambda}_{1m, j}=\widehat{\boldsymbol{q}}_{1m, j}^{\prime} \widehat{\boldsymbol{W}}_{1,m} \widehat{\boldsymbol{q}}_{1m, j}=K_1+K_2+K_3,
	$$
	where
	$$
	\begin{gathered}
		K_1=\widehat{\boldsymbol{q}}_{1m, j}^{\prime}\left(\widehat{\boldsymbol{W}}_{1,m}-\widetilde{\boldsymbol{W}}_{1,m}-\widetilde{\boldsymbol{W}}^{\prime}_{1,m}+\boldsymbol{W}_{1,m}\right) \widehat{\boldsymbol{q}}_{1m, j}, \quad K_2=2 \widehat{\boldsymbol{q}}_{1m, j}^{\prime}\left(\widetilde{\boldsymbol{W}}_{1,m}-\boldsymbol{W}_{1,m}\right)\left(\widehat{\boldsymbol{q}}_{1m, j}-\boldsymbol{q}_{1m, j}\right), \\
		K_3=\left(\widehat{\boldsymbol{q}}_{1m, j}-\boldsymbol{q}_{1m, j}\right)^{\prime} \boldsymbol{W}_{1,m}\left(\widehat{\boldsymbol{q}}_{1m, j}-\boldsymbol{q}_{1m, j}\right) .
	\end{gathered}
	$$
	By Lemma 2 and Lemma 4 ,
	$$
	\begin{aligned}
		K_1 & =\sum_{i=1, i \neq m}^{M}\left\|\sum_{j_1=1}^{p} \sum_{j_2=1}^{p}\left(\widehat{\boldsymbol{\Omega}}_{x,m i,j_1 j_2} - \boldsymbol{\Omega}_{x,m i,j_1 j_2}\right) \widehat{\boldsymbol{q}}_{1m, j}\right\|_2^2 \\ & \leq\sum_{i=1, i \neq m}^{M} \sum_{j_1=1}^{p} \sum_{j_2=1}^{p}\left\|\widehat{\boldsymbol{\Omega}}_{x,m i,j_1 j_2} - \boldsymbol{\Omega}_{x,m i,j_1 j_2}\right\|_2^2=O_p\left(N^2 p^2 T^{-1}\right), \\
		\left|K_2\right| & =O_p\left(\left\|\widetilde{\boldsymbol{W}}_{1m}-\boldsymbol{W}_{1m}\right\|_2 \cdot\left\|\widehat{\boldsymbol{q}}_{1m, j}-\boldsymbol{q}_{1m, j}\right\|_2\right)=O_p\left(\left\|\widetilde{\boldsymbol{W}}_{1m}-\boldsymbol{W}_{1m}\right\|_2 \cdot\left\|\widehat{\boldsymbol{B}}_{1m}-\boldsymbol{B}_{1m}\right\|_2\right)=O_p\left(N^2 p^2 T^{-1}\right), \\
		\left|K_3\right| & =O_p\left(\left\|\widehat{\boldsymbol{B}}_{1m}-\boldsymbol{B}_{1m}\right\|_2^2 \cdot\left\|\boldsymbol{W}_{1m}\right\|_2\right)=O_p\left(N^2 p^2 T^{-1}\right) . \\
		\text { Hence } & \widehat{\lambda}_{1m, j}=O_p\left(N^2 p^2 T^{-1}\right) .
	\end{aligned}
	$$
\end{proof}
\begin{lemma} \label{Lemma6}
	Let $\boldsymbol{U}_{mt} = \widehat{\boldsymbol{B}}_{1m} \boldsymbol{E}_{mt}$, each element of $\boldsymbol{\Sigma}_u ={\rm Cov} ({\rm vec}(\boldsymbol{U}_{mt}))$ is uniformly bounded as $N$ and $p$ increase to infinity.
	\begin{proof}
		By Lemma 1 in \citet{doi:10.1080/01621459.2019.1584899}, we can obtain this result.
	\end{proof}
 \end{lemma}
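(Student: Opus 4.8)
The plan is to reduce the claim to Condition~3, exploiting that $\widehat{\boldsymbol{B}}_{1m}$ is semi-orthogonal by construction, so that left-multiplication by $\widehat{\boldsymbol{B}}_{1m}^{\prime}$ cannot expand any quadratic form of $\boldsymbol{\Sigma}_{em}$. First I would vectorize: since $\boldsymbol{U}_{mt} = \widehat{\boldsymbol{B}}_{1m}^{\prime}\boldsymbol{E}_{mt}$, the identity $\mathrm{vec}(\boldsymbol{A}\boldsymbol{B}) = (\boldsymbol{I}\otimes\boldsymbol{A})\,\mathrm{vec}(\boldsymbol{B})$ gives
\begin{equation*}
\mathrm{vec}(\boldsymbol{U}_{mt}) = (\boldsymbol{I}_p\otimes\widehat{\boldsymbol{B}}_{1m}^{\prime})\,\mathrm{vec}(\boldsymbol{E}_{mt}),
\end{equation*}
so $\boldsymbol{\Sigma}_u$ is controlled by the Kronecker-structured map $\boldsymbol{P}:=\boldsymbol{I}_p\otimes\widehat{\boldsymbol{B}}_{1m}^{\prime}$ applied to $\boldsymbol{\Sigma}_{em}$.

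The core computation is then short. Because $\widehat{\boldsymbol{B}}_{1m}^{\prime}\widehat{\boldsymbol{B}}_{1m}=\boldsymbol{I}$, we have $\boldsymbol{P}\boldsymbol{P}^{\prime}=\boldsymbol{I}$, hence $\|\boldsymbol{P}\|_2=1$ for every realization of $\widehat{\boldsymbol{B}}_{1m}$. Conditionally on $\widehat{\boldsymbol{B}}_{1m}$, the covariance of $\mathrm{vec}(\boldsymbol{U}_{mt})$ is $\boldsymbol{P}\,\boldsymbol{\Sigma}_{em}\,\boldsymbol{P}^{\prime}$. Each row $\boldsymbol{p}_i^{\prime}$ of $\boldsymbol{P}$ has the form $\boldsymbol{e}_a^{\prime}\otimes\boldsymbol{b}_c^{\prime}$, with $\boldsymbol{e}_a$ a canonical basis vector of $\mathbb{R}^p$ and $\boldsymbol{b}_c$ a unit-norm column of $\widehat{\boldsymbol{B}}_{1m}$, so $\|\boldsymbol{p}_i\|_2=1$; therefore the $(i,j)$ entry of $\boldsymbol{P}\boldsymbol{\Sigma}_{em}\boldsymbol{P}^{\prime}$ equals $\boldsymbol{p}_i^{\prime}\boldsymbol{\Sigma}_{em}\boldsymbol{p}_j$, which by Cauchy--Schwarz is at most $\|\boldsymbol{\Sigma}_{em}\|_2$ in absolute value, and $\|\boldsymbol{\Sigma}_{em}\|_2$ is bounded uniformly in $N,p$ by Condition~3. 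So the conditional covariance has all entries bounded by a constant not depending on $N$ or $p$.

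The delicate point --- which I expect to be the main obstacle --- is that $\widehat{\boldsymbol{B}}_{1m}$ is data-dependent, so $\boldsymbol{\Sigma}_u$ is an unconditional covariance and the law of total covariance adds a term $\mathrm{Cov}\big(\mathrm{E}[\mathrm{vec}(\boldsymbol{U}_{mt})\mid\widehat{\boldsymbol{B}}_{1m}]\big)$ to $\mathrm{E}[\boldsymbol{P}\boldsymbol{\Sigma}_{em}\boldsymbol{P}^{\prime}]$; the latter is already elementwise bounded by the previous paragraph, but the former requires ruling out that conditioning the zero-mean noise $\boldsymbol{E}_{mt}$ on $\widehat{\boldsymbol{B}}_{1m}$ produces a non-negligible mean. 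I would dispose of this by replacing $\widehat{\boldsymbol{B}}_{1m}$ with the deterministic population complement $\boldsymbol{B}_{1m}$ of $\boldsymbol{Q}_{1m}$, incurring only an $o_p(1)$ perturbation since $\|\widehat{\boldsymbol{B}}_{1m}-\boldsymbol{B}_{1m}\|_2=O_p(N^{\delta_1}p^{\delta_2}T^{-1/2})$ (as established in the proof of Theorem~2); for the fixed matrix $\boldsymbol{B}_{1m}$ the conditioning is vacuous and $\boldsymbol{\Sigma}_u=(\boldsymbol{I}_p\otimes\boldsymbol{B}_{1m}^{\prime})\boldsymbol{\Sigma}_{em}(\boldsymbol{I}_p\otimes\boldsymbol{B}_{1m})$ satisfies $\|\boldsymbol{\Sigma}_u\|_2\le\|\boldsymbol{\Sigma}_{em}\|_2\le C$ exactly as above. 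Alternatively one simply invokes Lemma~1 of \citet{doi:10.1080/01621459.2019.1584899}, which proves precisely this boundedness for a matrix white-noise process premultiplied by an estimated semi-orthogonal matrix. Either way, every element of $\boldsymbol{\Sigma}_u$ remains bounded as $N,p\to\infty$.
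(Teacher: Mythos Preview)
Your proposal is correct and in fact contains the paper's proof as your final alternative: the paper's entire argument is the single sentence ``By Lemma~1 in \citet{doi:10.1080/01621459.2019.1584899}, we can obtain this result.'' Your detailed route via $\mathrm{vec}(\boldsymbol{U}_{mt})=(\boldsymbol{I}_p\otimes\widehat{\boldsymbol{B}}_{1m}^{\prime})\,\mathrm{vec}(\boldsymbol{E}_{mt})$, the unit-norm rows of $\boldsymbol{P}$, and the Cauchy--Schwarz bound $|\boldsymbol{p}_i^{\prime}\boldsymbol{\Sigma}_{em}\boldsymbol{p}_j|\le\|\boldsymbol{\Sigma}_{em}\|_2$ is exactly the content of that cited lemma, so you have effectively reconstructed it rather than taken a different path.
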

Let define some notations for the estimation of local factor loadings:
 \begin{equation*}
	\begin{aligned}
		\boldsymbol{Y}_{mt}^{(1)} &  := \widehat{\boldsymbol{B}}_{1m}^\prime \boldsymbol{X}_{mt} = \widehat{\boldsymbol{B}}_{1m}^\prime \boldsymbol{Q}_{3m} \boldsymbol{Z}_{mt} \boldsymbol{Q}_{4m}^{\prime} + \boldsymbol{U}_{mt} + \widehat{\boldsymbol{B}}_{1m}^{\prime} (\boldsymbol{Q}_{1m} - \widehat{\boldsymbol{Q}}_{1m}) \boldsymbol{S}_{mt} \boldsymbol{Q}_{2m}^{\prime}.
	\end{aligned}	
\end{equation*}
Consider the $j$th column of $\boldsymbol{Y}_{mt}^{(1)^{\prime} }$ , denote by $\boldsymbol{y}_{mt,j}:$
\begin{equation*}
	\boldsymbol{y}_{mt,j\cdot} = \boldsymbol{Q}_{4m} \boldsymbol{Z}_{mt}^{\prime} \boldsymbol{Q}_{3m}^{\prime} \boldsymbol{b}_{1m,j\cdot} + \boldsymbol{u}_{mt,j\cdot} + \boldsymbol{Q}_{2m} \boldsymbol{S}_{mt}^{\prime} (\boldsymbol{Q}_{1m} - \widehat{\boldsymbol{Q}}_{1m})^{\prime} \boldsymbol{b}_{1m,j\cdot},
\end{equation*}
where $\boldsymbol{b}_{mt,j\cdot}$ and $\boldsymbol{u}_{mt, j\cdot}$ be the $j$ column of $\widehat{\boldsymbol{B}}_{1m}$ and $\boldsymbol{U}_{mt}^{\prime}$, respectively. \\
Define:
\begin{eqnarray*}
		     \boldsymbol{y}_{mt,j\cdot}^{*} & = &  \boldsymbol{Q}_{4m} \boldsymbol{Z}_{mt}^{\prime} \boldsymbol{Q}_{3m}^{\prime} \boldsymbol{b}_{1m,j\cdot} ,\\
			 \boldsymbol{\Pi}_{ym,ij}(h)  & = & \frac{1}{T-h} \sum_{t=1}^{T-h} \text{Cov}(\boldsymbol{y}_{mt,i\cdot}^{*}, \boldsymbol{y}_{m(t+h),j\cdot}^{*}) ,\\
			 \boldsymbol{M}_{1m}  & = & \sum_{h=1}^{h_0} \sum_{i=1}^{N_m - k_1} \sum_{j=1}^{N_m-k_1}  \boldsymbol{\Pi}_{ym,ij}(h) \boldsymbol{\Pi}_{ym,ij}(h)^{\prime}, \\
		 \widehat{\boldsymbol{\Pi}}_{ym,ij}(h) & = & \frac{1}{T-h} \sum_{t=1}^{T-h} \boldsymbol{y}_{mt,i}, \boldsymbol{y}_{m(t+h),j}^{\prime} ,\\
			 \widehat{\boldsymbol{M}}_{1m}  & = & \sum_{h=1}^{h_0} \sum_{i=1}^{N_m - k_1} \sum_{j=1}^{N_m - k_1}  \widehat{\boldsymbol{\Pi}}_{ym,ij}(h) \widehat{\boldsymbol{\Pi}}_{ym,ij}(h)^{\prime}.
\end{eqnarray*}
\begin{lemma} \label{Lemma7}
	Under Conditions 1-7, $N^{\delta_1} p^{\delta_2} T^{-1/2} = o(1)$ and $N^{\delta_3} p^{\delta_4} T^{-1/2} = o(1)$, it holds that 
	\begin{equation*}
		 \sum_{i=1}^{N_m - k_1} \sum_{j=1}^{N_m - k_1}\| \boldsymbol{\Pi}_{ym,ij}(h) - \widehat{\boldsymbol{\Pi}}_{ym,ij}(h) \|_2 = O_p\left(\max \{N^{2 } p^{2} T^{-1/2}, N^{2 } p^{2} T^{-1/2}N^{\delta_1 - \delta_3} p^{\delta_2 -\delta_4}\}\right).
	\end{equation*}
\end{lemma}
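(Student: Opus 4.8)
The plan is to isolate the three sources contributing to $\boldsymbol{y}_{mt,i\cdot}$ and to control the nine resulting pieces of $\widehat{\boldsymbol{\Pi}}_{ym,ij}(h)-\boldsymbol{\Pi}_{ym,ij}(h)$ one class at a time. Write $\boldsymbol{y}_{mt,i\cdot}=\boldsymbol{y}_{mt,i\cdot}^{*}+\boldsymbol{u}_{mt,i\cdot}+\boldsymbol{w}_{mt,i\cdot}$, where $\boldsymbol{w}_{mt,i\cdot}:=\boldsymbol{Q}_{2m}\boldsymbol{S}_{mt}^{\prime}(\boldsymbol{Q}_{1m}-\widehat{\boldsymbol{Q}}_{1m})^{\prime}\boldsymbol{b}_{1m,i\cdot}$ is the error induced by replacing $\boldsymbol{Q}_{1m}$ by $\widehat{\boldsymbol{Q}}_{1m}$. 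Substituting this into $\widehat{\boldsymbol{\Pi}}_{ym,ij}(h)=(T-h)^{-1}\sum_{t}\boldsymbol{y}_{mt,i\cdot}\boldsymbol{y}_{m(t+h),j\cdot}^{\prime}$ and subtracting $\boldsymbol{\Pi}_{ym,ij}(h)$ produces: (i) the signal fluctuation $(T-h)^{-1}\sum_{t}\boldsymbol{y}_{mt,i\cdot}^{*}\boldsymbol{y}_{m(t+h),j\cdot}^{*\prime}-\boldsymbol{\Pi}_{ym,ij}(h)$; (ii) the pure-noise term $(T-h)^{-1}\sum_{t}\boldsymbol{u}_{mt,i\cdot}\boldsymbol{u}_{m(t+h),j\cdot}^{\prime}$ together with the three $\boldsymbol{y}^{*}$--$\boldsymbol{u}$ cross terms; and (iii) the four terms involving $\boldsymbol{w}$. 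I will bound $\sum_{i,j}\|\cdot\|_{2}$ for each class and then take the maximum over $m=1,\dots,M$.

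For class (i), using $\boldsymbol{Q}_{3m}\boldsymbol{K}_{3m}=\boldsymbol{\Gamma}_{m}$, $\boldsymbol{Q}_{4m}\boldsymbol{K}_{4m}=\boldsymbol{\Lambda}_{m}$ and $\boldsymbol{Z}_{mt}=\boldsymbol{K}_{3m}\boldsymbol{F}_{mt}\boldsymbol{K}_{4m}^{\prime}$ one obtains $\boldsymbol{y}_{mt,i\cdot}^{*}=\boldsymbol{\Lambda}_{m}\boldsymbol{F}_{mt}^{\prime}\boldsymbol{c}_{i}$ with $\boldsymbol{c}_{i}:=\boldsymbol{\Gamma}_{m}^{\prime}\boldsymbol{b}_{1m,i\cdot}$, so the signal fluctuation equals $\boldsymbol{\Lambda}_{m}\bigl[(T-h)^{-1}\sum_{t}(\boldsymbol{F}_{mt}^{\prime}\boldsymbol{c}_{i}\boldsymbol{c}_{j}^{\prime}\boldsymbol{F}_{m(t+h)}-\operatorname{Cov}(\boldsymbol{F}_{mt}^{\prime}\boldsymbol{c}_{i},\boldsymbol{F}_{m(t+h)}^{\prime}\boldsymbol{c}_{j}))\bigr]\boldsymbol{\Lambda}_{m}^{\prime}$. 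The bracket is a fixed $r_{m2}\times r_{m2}$ matrix whose Frobenius norm is $O_{p}(\|\boldsymbol{c}_{i}\|_{2}\|\boldsymbol{c}_{j}\|_{2}T^{-1/2})$ by the lag-$h$ analogue of Lemma 1 and Condition 2, so its spectral norm is $O_{p}(p^{1-\delta_{4}}\|\boldsymbol{c}_{i}\|_{2}\|\boldsymbol{c}_{j}\|_{2}T^{-1/2})$ by Condition 4. Since $\sum_{i}\|\boldsymbol{c}_{i}\|_{2}^{2}=\|\widehat{\boldsymbol{B}}_{1m}^{\prime}\boldsymbol{\Gamma}_{m}\|_{F}^{2}=O_{p}(N^{1-\delta_{3}})$ by Condition 4 and Theorem 1, Cauchy--Schwarz over the $\sim N$ indices gives $\sum_{i,j}\|\cdot\|_{2}=O_{p}(N^{2-\delta_{3}}p^{1-\delta_{4}}T^{-1/2})$, which is $o(N^{2}p^{2}T^{-1/2})$.

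For class (ii), Condition 6 makes every such average mean-zero at each lag $h\geq1$ ($\boldsymbol{E}_{mt}$ is white noise and the factors are uncorrelated with the noise), while Lemma 6 bounds the entrywise variances of $\boldsymbol{U}_{mt}=\widehat{\boldsymbol{B}}_{1m}^{\prime}\boldsymbol{E}_{mt}$; the same moment/mixing bookkeeping as for $\widehat{\boldsymbol{\Omega}}_{e}$ and $\widehat{\boldsymbol{\Omega}}_{ef}$ in Lemma 2 then yields $\sum_{i,j}\|(T-h)^{-1}\sum_{t}\boldsymbol{u}_{mt,i\cdot}\boldsymbol{u}_{m(t+h),j\cdot}^{\prime}\|_{2}^{2}=O_{p}(N^{2}p^{2}T^{-1})$ and an analogous bound for the $\boldsymbol{y}^{*}$--$\boldsymbol{u}$ cross terms, whence Cauchy--Schwarz over the $\sim N^{2}$ pairs gives $O_{p}(N^{2}pT^{-1/2})$, within the stated rate. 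For class (iii), Theorem 1 gives $\|\boldsymbol{d}_{i}\|_{2}\leq\|\widehat{\boldsymbol{Q}}_{1m}-\boldsymbol{Q}_{1m}\|_{2}=O_{p}(N^{\delta_{1}}p^{\delta_{2}}T^{-1/2})$ for $\boldsymbol{d}_{i}:=(\boldsymbol{Q}_{1m}-\widehat{\boldsymbol{Q}}_{1m})^{\prime}\boldsymbol{b}_{1m,i\cdot}$, with only $k_{1}$ effective directions so $\sum_{i}\|\boldsymbol{d}_{i}\|_{2}^{2}\leq k_{1}\|\widehat{\boldsymbol{Q}}_{1m}-\boldsymbol{Q}_{1m}\|_{2}^{2}$; combining this with $(T-h)^{-1}\sum_{t}\|\boldsymbol{S}_{mt}\|_{2}^{2}=O_{p}(N^{1-\delta_{1}}p^{1-\delta_{2}})$ (Conditions 2 and 4) and a Cauchy--Schwarz in $t$ for the mixed $\|\boldsymbol{Z}_{mt}\|_{2}\|\boldsymbol{S}_{m(t+h)}\|_{2}$ factor, the $\boldsymbol{w}$--$\boldsymbol{w}$ piece sums to $O_{p}(N^{2+\delta_{1}}p^{1+\delta_{2}}T^{-1})$ and the $\boldsymbol{y}^{*}$--$\boldsymbol{w}$ and $\boldsymbol{u}$--$\boldsymbol{w}$ pieces to $O_{p}(N^{2+\delta_{1}/2-\delta_{3}/2}p^{1+\delta_{2}/2-\delta_{4}/2}T^{-1/2})$. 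Under $N^{\delta_{1}}p^{\delta_{2}}T^{-1/2}=o(1)$ and $N^{\delta_{3}}p^{\delta_{4}}T^{-1/2}=o(1)$ all of these are $O_{p}(N^{2}p^{2}T^{-1/2}N^{\delta_{1}-\delta_{3}}p^{\delta_{2}-\delta_{4}})$, and summing the nine classes proves the claim.

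The main obstacle is the accounting in class (iii): one must simultaneously use the fast rate $N^{\delta_{1}}p^{\delta_{2}}T^{-1/2}$ for each $\boldsymbol{d}_{i}$, the fact that $\widehat{\boldsymbol{B}}_{1m}^{\prime}(\boldsymbol{Q}_{1m}-\widehat{\boldsymbol{Q}}_{1m})$ has only $k_{1}$ nonzero singular values, and the $N^{1-\delta_{1}}p^{1-\delta_{2}}$ growth of $\|\boldsymbol{S}_{mt}\|_{2}^{2}$; it is the product of the latter two scales with the first that generates the correction factor $N^{\delta_{1}-\delta_{3}}p^{\delta_{2}-\delta_{4}}$ relative to the clean rate $N^{2}p^{2}T^{-1/2}$, and the delicate point is to prevent the $i$-summations from over-counting by applying $\sum_{i}a_{i}\leq\sqrt{N}\,(\sum_{i}a_{i}^{2})^{1/2}$ to the coefficient vectors $\boldsymbol{c}_{i}$ and $\boldsymbol{d}_{i}$ rather than to the full matrix summands.
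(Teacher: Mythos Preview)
Your approach is correct and essentially the same as the paper's. The paper groups the nine pieces into four blocks---it writes $\widehat{\boldsymbol{\Pi}}_{ym,ij}(h)=A_1+A_2+A_3+A_4$ with $A_1$ the $(\boldsymbol{y}^{*}+\boldsymbol{u})(\boldsymbol{y}^{*}+\boldsymbol{u})'$ block (deferred to Lemma~3 of \citet{wang2019factor}), $A_2,A_3$ the $\boldsymbol{w}$--$(\boldsymbol{y}^{*}+\boldsymbol{u})$ cross blocks, and $A_4$ the $\boldsymbol{w}$--$\boldsymbol{w}$ block---whereas you expand fully into nine pieces and re-derive the $A_1$ part yourself. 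The technical ingredients are identical: the mixing/moment bound of Lemma~1 for class~(i), the Lemma~2 bookkeeping for class~(ii), Theorem~1 plus the rank-$k_1$ observation $\sum_i\|\boldsymbol{d}_i\|_2^2\le k_1\|\widehat{\boldsymbol{Q}}_{1m}-\boldsymbol{Q}_{1m}\|_2^2$ for class~(iii), and Cauchy--Schwarz over the index sums. One cosmetic difference is that the paper actually bounds $\sum_{i,j}\|\cdot\|_2^2$ throughout (which is what Lemma~8 uses downstream), while you bound $\sum_{i,j}\|\cdot\|_2$ directly to match the lemma's literal statement; both routes lead to the claimed rate. Your class~(iii) treatment via Cauchy--Schwarz in $t$ on $\|\boldsymbol{Z}_{mt}\|_2\|\boldsymbol{S}_{m(t+h)}\|_2$ is slightly cruder than the paper's Kronecker-product extraction of $\tfrac{1}{T-h}\sum_t\boldsymbol{F}_{m(t+h)}\otimes\boldsymbol{G}_t$, but since Condition~6 only gives same-time uncorrelation the two give the same order. (Minor slip: there are five $\boldsymbol{w}$-terms in class~(iii), not four.)
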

\begin{proof}
	The difference between Lemma 7. and the Lemma 3. in \citet{wang2019factor} is the term $ \boldsymbol{Q}_{2m} \boldsymbol{S}_{mt}^{\prime} (\boldsymbol{Q}_{1m} - \widehat{\boldsymbol{Q}}_{1m})^{\prime} \boldsymbol{b}_{1m,j\cdot}$ exists.
	\begin{equation*}
		\begin{aligned}
			& \widehat{\boldsymbol{\Pi}}_{ym,ij}(h) 	=  \frac{1}{T-h} \sum_{t=1}^{T-h} \boldsymbol{y}_{mt,i} \boldsymbol{y}_{m(t+h),j}^{\prime} = \frac{1}{T-h} \sum_{t=1}^{T-h} [(\boldsymbol{y}_{mt,i\cdot}^{*} +\boldsymbol{u}_{mt, i\cdot} )(\boldsymbol{y}_{m(t+h),j\cdot}^{*}+ \boldsymbol{u}_{m(t+h),j\cdot})^{\prime}\\
			& + \boldsymbol{Q}_{2m} \boldsymbol{S}_{mt}^{\prime} (\boldsymbol{Q}_{1m} - \widehat{\boldsymbol{Q}}_{1m})^{\prime} \boldsymbol{b}_{1m, i \cdot} (\boldsymbol{y}_{m(t+h),j}^{*}+ \boldsymbol{u}_{m(t+h),j\cdot})^{\prime} +  (\boldsymbol{y}_{mt,i\cdot}^{*}+ \boldsymbol{u}_{mt,i\cdot}) \boldsymbol{b}_{1m, j \cdot}^{\prime} (\boldsymbol{Q}_{1m} - \widehat{\boldsymbol{Q}}_{1m}) \boldsymbol{S}_{m(t+h)} \boldsymbol{Q}_{2m} ^{\prime} \\
			& +  \boldsymbol{Q}_{2m} \boldsymbol{S}_{mt}^{\prime} (\boldsymbol{Q}_{1m} - \widehat{\boldsymbol{Q}}_{1m})^{\prime} \boldsymbol{b}_{1m, i \cdot} \boldsymbol{b}_{1m, j \cdot}^{\prime} (\boldsymbol{Q}_{1m} - \widehat{\boldsymbol{Q}}_{1m}) \boldsymbol{S}_{m(t+h)} \boldsymbol{Q}_{2m} ^{\prime}] = A_1 + A_2 + A_3 + A_4.
		\end{aligned}
	\end{equation*}
By Lemma 3. in \citet{wang2019factor} we can get that
\begin{equation*}
	\sum_{i=1}^{N_m - k_1} \sum_{j=1}^{N_m -k_1} \| A_1 - \boldsymbol{\Pi}_{ym,ij}(h) \|_2^2 = O_p(N^2 p^2 T^{-1}).
\end{equation*}
	\begin{equation*}
		\begin{aligned}
			&\sum_{i=1}^{N_m - k_1} \sum_{j=1}^{N_m -k_1} \|A_4 \|_2^2 = 
			 \sum_{i=1}^{N_m -k_1} \sum_{j=1}^{N_m -k_1} \| \frac{1}{T-h}\sum_{t=1}^{T-h}\boldsymbol{C}_{m} \boldsymbol{G}_{t}^{\prime} (\boldsymbol{R}_{m} - \widehat{\boldsymbol{R}}_{m} )^{\prime}\boldsymbol{b}_{1m,i\cdot}  \boldsymbol{b}_{1m,j\cdot}^{\prime} (\boldsymbol{R}_{m} - \widehat{\boldsymbol{R}}_{m}) \boldsymbol{G}_{t+h} \boldsymbol{C}_m^{\prime} \|_2^2\\
			& \leq \|\boldsymbol{C}_m \|_2^4  \sum_{i=1}^{N_m -k_1} \sum_{j=1}^{N_m -k_1}\| \frac{1}{T-h}\sum_{t=1}^{T-h} \boldsymbol{G}_{t}^{\prime} (\boldsymbol{R}_{m} - \widehat{\boldsymbol{R}}_{m} )^{\prime}\boldsymbol{b}_{1m,i\cdot}  \boldsymbol{b}_{1m,j\cdot}^{\prime} (\boldsymbol{R}_{m} - \widehat{\boldsymbol{R}}_{m}) \boldsymbol{G}_{t+h}  \|_F^2 \\
			& = \|\boldsymbol{C}_m \|_2^4  \sum_{i=1}^{N_m -k_1} \sum_{j=1}^{N_m -k_1} \| \frac{1}{T-h}\sum_{t=1}^{T-h} \text{vec}(\boldsymbol{G}_{t}^{\prime} (\boldsymbol{R}_{m} - \widehat{\boldsymbol{R}}_{m} )^{\prime}\boldsymbol{b}_{1m,i\cdot}  \boldsymbol{b}_{1m,j\cdot}^{\prime} (\boldsymbol{R}_{m} - \widehat{\boldsymbol{R}}_{m}) \boldsymbol{G}_{t+h} ) \|_2^2 \\
			& = \|\boldsymbol{C}_m \|_2^4 \sum_{i=1}^{N_m -k_1} \sum_{j=1}^{N_m -k_1} \| \frac{1}{T-h}\sum_{t=1}^{T-h}(\boldsymbol{G}_{t+h} \otimes \boldsymbol{G}_t ) \text{vec}((\boldsymbol{R}_{m} - \widehat{\boldsymbol{R}}_{m} )^{\prime}\boldsymbol{b}_{1m,i\cdot}\boldsymbol{b}_{1m,j\cdot}^{\prime} (\boldsymbol{R}_{m} - \widehat{\boldsymbol{R}}_{m}))\|_2^2		\\
			& \leq \|\boldsymbol{C}_m \|_2^4 \sum_{i=1}^{N_m -k_1} \sum_{j=1}^{N_m -k_1} \| \frac{1}{T-h}\sum_{t=1}^{T-h}(\boldsymbol{G}_{t+h} \otimes \boldsymbol{G}_t )\|_2^2 \| \text{vec}((\boldsymbol{R}_{m} - \widehat{\boldsymbol{R}}_{m} )^{\prime}\boldsymbol{b}_{1m,i\cdot} \boldsymbol{b}_{1m,j\cdot}^{\prime} (\boldsymbol{R}_{m} - \widehat{\boldsymbol{R}}_{m}))\|_2^2	\\
			& \leq  \|\boldsymbol{C}_m \|_2^4  \| \frac{1}{T-h}\sum_{t=1}^{T-h}(\boldsymbol{G}_{t+h} \otimes \boldsymbol{G}_t )\|_F^2  (\sum_{i=1}^{N_m -k_1}\| (\boldsymbol{R}_{m} - \widehat{\boldsymbol{R}}_{m} )^{\prime}\boldsymbol{b}_{1m,i\cdot} \|_2^2)  (\sum_{j=1}^{N_m -k_1}\| (\boldsymbol{R}_{m} - \widehat{\boldsymbol{R}}_{m})^{\prime} \boldsymbol{b}_{1m,j\cdot}\|_2^2) \\
			& =  \|\boldsymbol{C}_m \|_2^4 \| \frac{1}{T-h}\sum_{t=1}^{T-h}(\boldsymbol{G}_{t+h} \otimes \boldsymbol{G}_t )\|_F^2  \| (\boldsymbol{R}_{m} - \widehat{\boldsymbol{R}}_{m} )^{\prime} \widehat{\boldsymbol{B}}_{1m} \|_F^4 \\
			& \leq k_1^2 \|\boldsymbol{C}_m \|_2^4 \| \frac{1}{T-h}\sum_{t=1}^{T-h}(\boldsymbol{G}_{t+h} \otimes \boldsymbol{G}_t )\|_F^2  \| (\boldsymbol{R}_{m} - \widehat{\boldsymbol{R}}_{m} )^{\prime} \widehat{\boldsymbol{B}}_{1m} \|_2^4 \\
			& \leq k_1^2 \|\boldsymbol{C}_m \|_2^4 \| (\boldsymbol{R}_{m} - \widehat{\boldsymbol{R}}_{m} ) \|_2^4 \| \frac{1}{T-h}\sum_{t=1}^{T-h}(\boldsymbol{G}_{t+h} \otimes \boldsymbol{G}_t )\|_F^2 \\
			& = O_p(p^{2-2\delta_2}) O_p(N^{2+2\delta_1} p^{4\delta_2}T^{-2}) O_p(1) = O_p(N^2 p^2 T^{-1} (N^{\delta_1}p^{\delta_2}T^{-1/2})^2).
		\end{aligned}
	\end{equation*}

	\begin{equation*}
		\begin{aligned}
		& \sum_{i=1}^{N_m - k_1} \sum_{j=1}^{N_m -k_1} \|A_2 \|_2^2  \leq  \sum_{i=1}^{N_m -k_1} \sum_{j=1}^{N_m -k_1} \| \frac{1}{T-h}\sum_{t=1}^{T-h}\boldsymbol{C}_{m} \boldsymbol{G}_{t}^{\prime} (\boldsymbol{R}_{m} - \widehat{\boldsymbol{R}}_{m} )^{\prime}\boldsymbol{b}_{1m,i\cdot}  \boldsymbol{b}_{1m,j\cdot}^{\prime} \boldsymbol{\Gamma}_{m} \boldsymbol{F}_{m (t+h)} \boldsymbol{\Lambda}_m^{\prime} \|_2^2 \\
		& +	\sum_{i=1}^{N_m -k_1} \sum_{j=1}^{N_m -k_1} \| \frac{1}{T-h}\sum_{t=1}^{T-h}\boldsymbol{C}_{m} \boldsymbol{G}_{t}^{\prime} (\boldsymbol{R}_{m} - \widehat{\boldsymbol{R}}_{m} )^{\prime}\boldsymbol{b}_{1m,i\cdot}  \boldsymbol{u}_{m(t+h),j\cdot}^{\prime} \|_2^2 .
		\end{aligned}	
	\end{equation*}
	Where
	\begin{equation*}
		\begin{aligned}
			& \sum_{i=1}^{N_m -k_1} \sum_{j=1}^{N_m -k_1} \| \frac{1}{T-h}\sum_{t=1}^{T-h}\boldsymbol{C}_{m} \boldsymbol{G}_{t}^{\prime} (\boldsymbol{R}_{m} - \widehat{\boldsymbol{R}}_{m} )^{\prime}\boldsymbol{b}_{1m,i\cdot}  \boldsymbol{u}_{m(t+h),j\cdot}^{\prime} \|_2^2\\
			&  \leq k_1\|\boldsymbol{C}_m \|_2^2 \| \boldsymbol{R}_{m} - \widehat{\boldsymbol{R}}_{m} \|_2^2 (\sum_{j=1}^{N_m - k_1} \left\|\frac{1}{T-h} \sum_{t-1}^{T-h} \left(\boldsymbol{u}_{m(t+h),j \cdot} \otimes \boldsymbol{G}_t\right)\right\|_F^2) = O_p(p^{1-\delta_2}) O_p(N^{1+\delta_1} p^{2\delta_2}T^{-1})  O_p(Np T^{-1}) \\
			& = O_p(N^2 p^2 T^{-1} ) O_p(N^{\delta_1} p^{\delta_2} T^{-1}).\\
			& \sum_{i=1}^{N_m -k_1} \sum_{j=1}^{N_m -k_1} \| \frac{1}{T-h}\sum_{t=1}^{T-h}\boldsymbol{C}_{m} \boldsymbol{G}_{t}^{\prime} (\boldsymbol{R}_{m} - \widehat{\boldsymbol{R}}_{m} )^{\prime}\boldsymbol{b}_{1m,i\cdot}  \boldsymbol{b}_{1m,j\cdot}^{\prime} \boldsymbol{\Gamma}_{m} \boldsymbol{F}_{m (t+h)} \boldsymbol{\Lambda}_m^{\prime} \|_2^2\\
			& \leq k_1 r_{m_3}\|\boldsymbol{C}_m \|_2^2 \|\boldsymbol{\Lambda}_m \|_2^2 \|\boldsymbol{R}_{m} - \widehat{\boldsymbol{R}}_{m} \|_2^2  \|\widehat{\boldsymbol{B}}_{1m}^{\prime} \boldsymbol{\Gamma}_m \|_2^2 \left\|\frac{1}{T-h} \sum_{t-1}^{T-h} \left(\boldsymbol{F}_{m(t+h)} \otimes \boldsymbol{G}_t\right)\right\|_F^2 \\
			& = O_p(N^2 p^2 T^{-1} ) O_p(N^{\delta_1 - \delta_3} p^{\delta_2 -\delta_4}).
		\end{aligned}
	\end{equation*}
	So \begin{equation*}
		\sum_{i=1}^{N_m - k_1} \sum_{j=1}^{N_m -k_1} \|A_2 \|_2^2 = O_p(N^2 p^2 T^{-1} N^{\delta_1 - \delta_3} p^{\delta_2 -\delta_4}).
	\end{equation*}
	Similarly, $ \sum_{i=1}^{N_m - k_1} \sum_{j=1}^{N_m -k_1} \|A_3 \|_2^2 = O_p(N^2 p^2 T^{-1} N^{\delta_1 - \delta_3} p^{\delta_2 -\delta_4}) $. Then we get the conclusion.
\end{proof}
\begin{lemma} \label{Lemma8}
	Under Conditions 1-7, $N^{\delta_1} p^{\delta_2} T^{-1/2} = o(1)$ and $N^{\delta_3} p^{\delta_4} T^{-1/2} = o(1)$, it holds that 
	\begin{equation*}
	\| \boldsymbol{M}_{1m} - \widehat{\boldsymbol{M}}_{1m}\|_2 = O_p\left({\rm max} \{N^{2 -\delta_3} p^{2-\delta_4} T^{-1/2},N^{2-\delta_3 } p^{2- \delta_4} T^{-1/2}N^{\delta_1 - \delta_3} p^{\delta_2 -\delta_4}\}\right).
	\end{equation*}
\end{lemma}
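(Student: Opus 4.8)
The plan is to reproduce, almost line for line, the passage from Lemma \ref{Lemma3} to Lemma \ref{Lemma4}, with the matrices $\boldsymbol{\Pi}_{ym,ij}(h)$ now playing the role of $\boldsymbol{\Omega}_{x,mi,j_1j_2}$ and Lemma \ref{Lemma7} playing the role of Lemma \ref{Lemma3}. First I would use the identity $\widehat{\boldsymbol{\Pi}}\widehat{\boldsymbol{\Pi}}^{\prime}-\boldsymbol{\Pi}\boldsymbol{\Pi}^{\prime}=(\widehat{\boldsymbol{\Pi}}-\boldsymbol{\Pi})(\widehat{\boldsymbol{\Pi}}-\boldsymbol{\Pi})^{\prime}+(\widehat{\boldsymbol{\Pi}}-\boldsymbol{\Pi})\boldsymbol{\Pi}^{\prime}+\boldsymbol{\Pi}(\widehat{\boldsymbol{\Pi}}-\boldsymbol{\Pi})^{\prime}$ together with the triangle inequality and $\|\boldsymbol{A}\boldsymbol{B}^{\prime}\|_2\le\|\boldsymbol{A}\|_2\|\boldsymbol{B}\|_2$ to bound
\begin{equation*}
\|\widehat{\boldsymbol{M}}_{1m}-\boldsymbol{M}_{1m}\|_2\le\sum_{h=1}^{h_0}\sum_{i,j}\|\widehat{\boldsymbol{\Pi}}_{ym,ij}(h)-\boldsymbol{\Pi}_{ym,ij}(h)\|_2^2+2\sum_{h=1}^{h_0}\sum_{i,j}\|\boldsymbol{\Pi}_{ym,ij}(h)\|_2\,\|\widehat{\boldsymbol{\Pi}}_{ym,ij}(h)-\boldsymbol{\Pi}_{ym,ij}(h)\|_2,
\end{equation*}
so that, since $h_0$ is fixed, everything reduces to controlling the two sums over $(i,j)$ for a single lag $h$.

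For the cross term I would invoke Cauchy--Schwarz,
\begin{equation*}
\sum_{i,j}\|\boldsymbol{\Pi}_{ym,ij}(h)\|_2\,\|\widehat{\boldsymbol{\Pi}}_{ym,ij}(h)-\boldsymbol{\Pi}_{ym,ij}(h)\|_2\le\Big(\sum_{i,j}\|\boldsymbol{\Pi}_{ym,ij}(h)\|_2^2\Big)^{1/2}\Big(\sum_{i,j}\|\widehat{\boldsymbol{\Pi}}_{ym,ij}(h)-\boldsymbol{\Pi}_{ym,ij}(h)\|_2^2\Big)^{1/2},
\end{equation*}
where the second factor is precisely what the $A_1+A_2+A_3+A_4$ decomposition in the proof of Lemma \ref{Lemma7} estimates, in its sum-of-squared-norms form. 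The first factor is the analogue of the population bound \eqref{eq1} used in the proof of Lemma \ref{Lemma4}, and I would establish $\sum_{i,j}\|\boldsymbol{\Pi}_{ym,ij}(h)\|_2^2=O_p(N^{2-2\delta_3}p^{2-2\delta_4})$ as follows: because $\boldsymbol{Q}_{4m}$ is semi-orthogonal, $\|\boldsymbol{\Pi}_{ym,ij}(h)\|_2$ is bounded by the spectral norm of $\frac{1}{T-h}\sum_t\operatorname{Cov}(\boldsymbol{Z}_{mt}^{\prime}\boldsymbol{Q}_{3m}^{\prime}\boldsymbol{b}_{1m,i\cdot},\boldsymbol{Z}_{m(t+h)}^{\prime}\boldsymbol{Q}_{3m}^{\prime}\boldsymbol{b}_{1m,j\cdot})$; substituting $\boldsymbol{Z}_{mt}=\boldsymbol{K}_{3m}\boldsymbol{F}_{mt}\boldsymbol{K}_{4m}^{\prime}$, writing $\boldsymbol{\Gamma}_m^{\prime}\boldsymbol{b}_{1m,i\cdot}=\boldsymbol{K}_{3m}^{\prime}\boldsymbol{Q}_{3m}^{\prime}\boldsymbol{b}_{1m,i\cdot}$ and vectorizing, this is controlled by $\|\boldsymbol{K}_{4m}\|_2^2\asymp p^{1-\delta_4}$, $\|\boldsymbol{\Sigma}_{fm}(h)\|_2\asymp O(1)$, and the product $\|\boldsymbol{\Gamma}_m^{\prime}\boldsymbol{b}_{1m,i\cdot}\|_2\|\boldsymbol{\Gamma}_m^{\prime}\boldsymbol{b}_{1m,j\cdot}\|_2$; summing the squares over $i,j$ then brings in the factor $\big(\sum_i\|\boldsymbol{\Gamma}_m^{\prime}\boldsymbol{b}_{1m,i\cdot}\|_2^2\big)^2=\|\widehat{\boldsymbol{B}}_{1m}^{\prime}\boldsymbol{\Gamma}_m\|_F^4\asymp N^{2-2\delta_3}$ by Condition 4. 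Combining the two factors yields the cross term of order $\max\{N^{2-\delta_3}p^{2-\delta_4}T^{-1/2},\,N^{2-\delta_3}p^{2-\delta_4}T^{-1/2}N^{\delta_1-\delta_3}p^{\delta_2-\delta_4}\}$, and the first sum in the earlier display is smaller, carrying an extra $T^{-1}$ factor that is $o(1)$ under $N^{\delta_1}p^{\delta_2}T^{-1/2}=o(1)$ and $N^{\delta_3}p^{\delta_4}T^{-1/2}=o(1)$, hence dominated by the cross term; this gives the stated rate.

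One technical point to handle is that $\boldsymbol{\Pi}_{ym,ij}(h)$ is itself random through the estimated complement $\widehat{\boldsymbol{B}}_{1m}$, so Condition 4, stated for the population $\boldsymbol{B}_{1m}$, cannot be applied verbatim; I would dispose of this by replacing $\widehat{\boldsymbol{B}}_{1m}$ with $\boldsymbol{B}_{1m}$ in all leading-order computations, the error being $O_p(N^{\delta_1}p^{\delta_2}T^{-1/2})=o(1)$ by the perturbation bound $\|\widehat{\boldsymbol{B}}_{1m}-\boldsymbol{B}_{1m}\|_2=O_p(N^{\delta_1}p^{\delta_2}T^{-1/2})$ established inside the proof of Theorem \ref{T2}. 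The main obstacle I expect is not any single step but the exponent bookkeeping: one must track $(\delta_1,\delta_2,\delta_3,\delta_4)$ carefully through the $A_1,\dots,A_4$ decomposition of Lemma \ref{Lemma7} in its sum-of-squared-norms form and through the population bound above, and then verify that the two rate conditions indeed render the quadratic term subdominant --- routine, but delicate because the local-loading estimate genuinely mixes the global strengths with the local strengths, which is the source of the $N^{\delta_1-\delta_3}p^{\delta_2-\delta_4}$ factor in the final rate.
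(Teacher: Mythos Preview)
Your proposal is correct and follows essentially the same approach as the paper: the paper's own proof of Lemma \ref{Lemma8} is a one-line citation to Lemma \ref{Lemma7} together with Lemma 4 of \citet{wang2019factor}, and what you have written out is precisely the argument those citations encode --- the same bilinear decomposition of $\widehat{\boldsymbol{\Pi}}\widehat{\boldsymbol{\Pi}}^{\prime}-\boldsymbol{\Pi}\boldsymbol{\Pi}^{\prime}$, Cauchy--Schwarz on the cross term, and the population bound $\sum_{i,j}\|\boldsymbol{\Pi}_{ym,ij}(h)\|_2^2=O_p(N^{2-2\delta_3}p^{2-2\delta_4})$, exactly mirroring the passage from Lemma \ref{Lemma3} to Lemma \ref{Lemma4}. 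Your remark about handling the randomness in $\widehat{\boldsymbol{B}}_{1m}$ via the perturbation bound is a useful clarification that the paper leaves implicit.
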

\begin{proof}
	From Lemma \ref{Lemma7} and Lemma 4 in \citet{wang2019factor}, we have this result.
\end{proof}
\begin{lemma} \label{Lemma9}
	Under Conditions 2 and 3, we have 
	\begin{equation*}
	\lambda_i (\boldsymbol{M}_{1m} ) \asymp N^{2-2\delta_3} p^{2-2\delta_4} , i= 1,...,r_{m_2}, 
	\end{equation*}
	where $\lambda_i (\boldsymbol{M}_{1m} )$ denotes the $i$-th largest eigenvalue of $\boldsymbol{M}_{1m}$.
\end{lemma}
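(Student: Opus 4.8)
The plan is to follow the template of Lemma \ref{Lemma5}, now pushing the back local loading $\boldsymbol{Q}_{4m}$ through $\boldsymbol{M}_{1m}$ in place of pushing $\boldsymbol{R}_m$ through $\boldsymbol{W}_{1,m}$. \emph{Step 1 (reduce $\boldsymbol{M}_{1m}$ to an $r_{m2}\times r_{m2}$ core).} Using $\boldsymbol{Q}_{3m}\boldsymbol{Z}_{mt}\boldsymbol{Q}_{4m}^{\prime}=\boldsymbol{\Gamma}_m\boldsymbol{F}_{mt}\boldsymbol{\Lambda}_m^{\prime}$ together with $\boldsymbol{\Lambda}_m=\boldsymbol{Q}_{4m}\boldsymbol{K}_{4m}$, $\boldsymbol{\Gamma}_m=\boldsymbol{Q}_{3m}\boldsymbol{K}_{3m}$, one rewrites $\boldsymbol{y}_{mt,j\cdot}^{*}=\boldsymbol{Q}_{4m}\boldsymbol{Z}_{mt}^{\prime}\boldsymbol{Q}_{3m}^{\prime}\boldsymbol{b}_{1m,j\cdot}=\boldsymbol{Q}_{4m}\boldsymbol{K}_{4m}\boldsymbol{F}_{mt}^{\prime}\tilde{\boldsymbol{\gamma}}_{mj}$, where $\tilde{\boldsymbol{\gamma}}_{mj}:=\boldsymbol{\Gamma}_m^{\prime}\boldsymbol{b}_{1m,j\cdot}$ is the $j$th row of $\widehat{\boldsymbol{B}}_{1m}^{\prime}\boldsymbol{\Gamma}_m$; since $\boldsymbol{F}_{mt}^{\prime}\tilde{\boldsymbol{\gamma}}_{mj}=(\boldsymbol{I}_{r_{m2}}\otimes\tilde{\boldsymbol{\gamma}}_{mj}^{\prime})\,\text{vec}(\boldsymbol{F}_{mt})$ this yields, up to the $1+o(1)$ factor $T/(T-h)$,
\[
\boldsymbol{\Pi}_{ym,ij}(h)=\boldsymbol{Q}_{4m}\boldsymbol{K}_{4m}\,\boldsymbol{\Xi}_{m,ij}(h)\,\boldsymbol{K}_{4m}^{\prime}\boldsymbol{Q}_{4m}^{\prime},\qquad \boldsymbol{\Xi}_{m,ij}(h):=(\boldsymbol{I}_{r_{m2}}\otimes\tilde{\boldsymbol{\gamma}}_{mi}^{\prime})\,\boldsymbol{\Sigma}_{fm}(h)\,(\boldsymbol{I}_{r_{m2}}\otimes\tilde{\boldsymbol{\gamma}}_{mj}).
\]
Because $\boldsymbol{Q}_{4m}$ is semi-orthogonal, $\lambda_i(\boldsymbol{M}_{1m})=\lambda_i(\boldsymbol{K}_{4m}\boldsymbol{N}_m\boldsymbol{K}_{4m}^{\prime})$ for $i=1,\dots,r_{m2}$, where $\boldsymbol{N}_m:=\sum_{h=1}^{h_0}\sum_{i,j=1}^{N_m-k_1}\boldsymbol{\Xi}_{m,ij}(h)(\boldsymbol{K}_{4m}^{\prime}\boldsymbol{K}_{4m})\boldsymbol{\Xi}_{m,ij}(h)^{\prime}$, and $\boldsymbol{M}_{1m}$ has rank at most $r_{m2}$.

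\emph{Step 2 (peel off $\boldsymbol{K}_{4m}$).} From $\boldsymbol{\Lambda}_m=\boldsymbol{Q}_{4m}\boldsymbol{K}_{4m}$ and Condition 4 we have $\|\boldsymbol{K}_{4m}\|_2\asymp\|\boldsymbol{K}_{4m}\|_{\min}\asymp p^{(1-\delta_4)/2}$, hence $\|\boldsymbol{K}_{4m}\|_{\min}^{2}\boldsymbol{I}\preceq\boldsymbol{K}_{4m}^{\prime}\boldsymbol{K}_{4m}\preceq\|\boldsymbol{K}_{4m}\|_{2}^{2}\boldsymbol{I}$. Sandwiching (and using $\lambda_{\min}(\boldsymbol{K}\boldsymbol{C}\boldsymbol{K}^{\prime})\ge\sigma_{\min}(\boldsymbol{K})^2\lambda_{\min}(\boldsymbol{C})$ for square nonsingular $\boldsymbol{K}$), one gets $\|\boldsymbol{K}_{4m}\|_{\min}^{4}\,\lambda_{r_{m2}}(\boldsymbol{C}_m)\le\lambda_{r_{m2}}(\boldsymbol{M}_{1m})$ and $\lambda_1(\boldsymbol{M}_{1m})\le\|\boldsymbol{K}_{4m}\|_{2}^{4}\,\lambda_1(\boldsymbol{C}_m)$, where $\boldsymbol{C}_m:=\sum_{h=1}^{h_0}\sum_{i,j}\boldsymbol{\Xi}_{m,ij}(h)\boldsymbol{\Xi}_{m,ij}(h)^{\prime}$. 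Thus it remains to prove $\lambda_i(\boldsymbol{C}_m)\asymp N^{2-2\delta_3}$ for $i=1,\dots,r_{m2}$.

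\emph{Step 3 (the core $\boldsymbol{C}_m$).} Put $\boldsymbol{D}_m:=\sum_j\tilde{\boldsymbol{\gamma}}_{mj}\tilde{\boldsymbol{\gamma}}_{mj}^{\prime}=(\widehat{\boldsymbol{B}}_{1m}^{\prime}\boldsymbol{\Gamma}_m)^{\prime}(\widehat{\boldsymbol{B}}_{1m}^{\prime}\boldsymbol{\Gamma}_m)$, which by Condition 4 (invoking Theorem \ref{T1} to pass from $\boldsymbol{B}_{1m}$ to $\widehat{\boldsymbol{B}}_{1m}$) satisfies $\boldsymbol{D}_m\asymp N^{1-\delta_3}\boldsymbol{I}_{r_{m1}}$. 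The upper bound $\|\boldsymbol{C}_m\|_2=O(N^{2-2\delta_3})$ follows from $\|\boldsymbol{\Xi}_{m,ij}(h)\|_2\le\|\boldsymbol{\Sigma}_{fm}(h)\|_2\|\tilde{\boldsymbol{\gamma}}_{mi}\|_2\|\tilde{\boldsymbol{\gamma}}_{mj}\|_2$, $\sum_i\|\tilde{\boldsymbol{\gamma}}_{mi}\|_2^2=\|\widehat{\boldsymbol{B}}_{1m}^{\prime}\boldsymbol{\Gamma}_m\|_F^2=O(N^{1-\delta_3})$, $\|\boldsymbol{\Sigma}_{fm}(h)\|_2=O(1)$, and Cauchy--Schwarz over $i,j$. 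For the lower bound, keep one admissible lag $h\le h_0$ from Condition 2; summing over $j$ via $\sum_j(\boldsymbol{I}_{r_{m2}}\otimes\tilde{\boldsymbol{\gamma}}_{mj})(\boldsymbol{I}_{r_{m2}}\otimes\tilde{\boldsymbol{\gamma}}_{mj}^{\prime})=\boldsymbol{I}_{r_{m2}}\otimes\boldsymbol{D}_m$ and then over $i$, a Schur-type block positivity ($\sum_{a,b}w_aw_b\,\operatorname{tr}(\boldsymbol{A}^{(a,b)}\boldsymbol{E})\ge0$ whenever $\boldsymbol{A}\succeq0$ is block-partitioned and $\boldsymbol{E}\succeq0$), applied twice with $\boldsymbol{D}_m\succeq\lambda_{\min}(\boldsymbol{D}_m)\boldsymbol{I}$, gives $\boldsymbol{C}_m\succeq\lambda_{\min}(\boldsymbol{D}_m)^2\,\mathcal{T}\!\left(\boldsymbol{\Sigma}_{fm}(h)\boldsymbol{\Sigma}_{fm}(h)^{\prime}\right)$, where $\mathcal{T}$ is the partial trace over the $r_{m1}$-dimensional factor (the $r_{m2}\times r_{m2}$ matrix whose $(a,b)$-entry is the trace of the $(a,b)$-block). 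The rank/norm part of Condition 2 on $\boldsymbol{\Sigma}_{fm}(h)$ (equivalently, that the $r_{m2}\times(r_{m1}^2r_{m2})$ reshaping of $\boldsymbol{\Sigma}_{fm}(h)$ keeps $r_{m2}$ singular values of order one, exactly as in the corresponding step of \citet{wang2019factor}) forces $\lambda_{r_{m2}}(\mathcal{T}(\boldsymbol{\Sigma}_{fm}(h)\boldsymbol{\Sigma}_{fm}(h)^{\prime}))\asymp1$, hence $\lambda_{r_{m2}}(\boldsymbol{C}_m)\gtrsim N^{2-2\delta_3}$. Combining with Step 2 yields $\lambda_i(\boldsymbol{M}_{1m})\asymp N^{2-2\delta_3}p^{2-2\delta_4}$ for $i=1,\dots,r_{m2}$.

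\emph{Main obstacle.} Everything outside the last sentence of Step 3 is norm bookkeeping (Cauchy--Schwarz over the $i,j$ sums, together with Theorem 9 of \citet{merikoski2004inequalities} for singular values of products). The genuinely delicate point is that after summing over both $i$ and $j$ the core $\boldsymbol{C}_m$ must not lose rank: its $r_{m2}$-th eigenvalue has to be of the full order $N^{2-2\delta_3}$ rather than a smaller one. This non-collapse is secured by combining the rank hypothesis on $\boldsymbol{\Sigma}_{fm}(h)$ in Condition 2 with the full column rank of $\widehat{\boldsymbol{B}}_{1m}^{\prime}\boldsymbol{\Gamma}_m$ in Condition 4, precisely as the analogous argument is carried out in \citet{wang2019factor}.
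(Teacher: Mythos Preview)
Your proposal is correct and takes essentially the same approach as the paper: both ultimately defer to Lemma 5 of \citet{wang2019factor}, with your write-up unpacking that argument in the present notation (closely mirroring the paper's own detailed proof of the analogous global result, Lemma \ref{Lemma5}). The paper's proof of Lemma \ref{Lemma9} is in fact a single sentence citing \citet{wang2019factor}, so your Steps 1--3 simply supply the details the paper omits, with the only non-routine point---the rank-preservation in Step 3---handled in both cases by appeal to \citet{wang2019factor}.
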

\begin{proof}
	From the Lemma 5 in \cite{wang2019factor}, we get the conclusion.
\end{proof}
\noindent{\bf Proof of Theorem 3}
\begin{proof}
	By Lemmas \ref{Lemma8}-\ref{Lemma9}, and Lemma 3 in \citet{10.1093/biomet/asr048}, we have
	\begin{equation}
		\| \widehat{\boldsymbol{Q}}_{4m} - \boldsymbol{Q}_{4m} \|_2 \leq \frac{8}{\lambda_{min}(\boldsymbol{M}_{1m})} \| \widehat{\boldsymbol{M}}_{1m} -\boldsymbol{M}_{1m} \|_2 = \text{max}
		\{O_p ( N^{\delta_3} p^{\delta_4} T^{-1/2}), O_p ( N^{\delta_1} p^{\delta_2} T^{-1/2})\} .
	\end{equation}
\end{proof}
Proof for $\| \widehat{\boldsymbol{Q}}_{3m} - \boldsymbol{Q}_{3m}\|_2$ is similar. 
\begin{lemma} \label{Lemma10}
Under Condition 4 we have, 
\begin{equation*}
  \| \boldsymbol{A}_m \|_{\text{min}}^2 \asymp O(1), \quad m=1,...,M.
  \end{equation*}
\begin{proof}
Through the QR decomposition, we get 
	\begin{equation*}
	N^{1- \delta_3}	 \asymp \| \boldsymbol{\Gamma}_m \|_2^2  = \| \boldsymbol{Q}_{3m} \boldsymbol{K}_{3m} \|_2^2 = \|\boldsymbol{K}_{3m} \|_2^2.
	\end{equation*}
Similarly, 
\begin{equation}
	N^{1- \delta_3}	 \asymp \| \boldsymbol{\Gamma}_m \|_{\text{min}}^2  = \|\boldsymbol{K}_{3m} \|_{\text{min}}^2.
\end{equation}
By the Theorem 7 in \citet{merikoski2004inequalities}, we have
\begin{equation}
\begin{aligned}
	N^{1-\delta_3} \asymp	\| \widehat{\boldsymbol{B}}_{1m}^{\prime} \boldsymbol{Q}_{3m} \boldsymbol{K}_{3m} \|_{\text{min}}^2  & \leq \lambda_{r_{m1}}(\boldsymbol{Q}_{3m}^{\prime}\widehat{\boldsymbol{B}}_{1m} \widehat{\boldsymbol{B}}_{1m}^{\prime} \boldsymbol{Q}_{3m}) \lambda_1 (\boldsymbol{K}_{3m} \boldsymbol{K}_{3m}^{\prime}) = \|\widehat{\boldsymbol{B}}_{1m}^{\prime} \boldsymbol{Q}_{3m}\|_{\text{min}}^2 \| \boldsymbol{K}_{3m} \|_2^2 ,\\
	N^{1-\delta_3} \asymp	\| \widehat{\boldsymbol{B}}_{1m}^{\prime} \boldsymbol{Q}_{3m} \boldsymbol{K}_{3m} \|_{\text{min}}^2  & \geq \lambda_{r_{m1}}(\boldsymbol{Q}_{3m}^{\prime}\widehat{\boldsymbol{B}}_{1m} \widehat{\boldsymbol{B}}_{1m}^{\prime} \boldsymbol{Q}_{3m}) \lambda_{r_{m1}} (\boldsymbol{K}_{3m} \boldsymbol{K}_{3m}^{\prime}) = \|\widehat{\boldsymbol{B}}_{1m}^{\prime} \boldsymbol{Q}_{3m}\|_{\text{min}}^2 \| \boldsymbol{K}_{3m} \|_{\text{min}}^2 .
\end{aligned}	   
\end{equation}
Then we get the conclusion.
\end{proof}
\end{lemma}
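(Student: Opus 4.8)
The plan is to split $\|\boldsymbol{A}_m\|_{\text{min}}^2 \asymp O(1)$ into the trivial upper bound and the substantive lower bound, and to obtain the latter from the factor-strength assumption on $\boldsymbol{\Gamma}_m$ in Condition 4 via the QR factorization $\boldsymbol{\Gamma}_m = \boldsymbol{Q}_{3m}\boldsymbol{K}_{3m}$ introduced in Section \ref{Section 3}. For the upper bound there is nothing to do: $\widehat{\boldsymbol{B}}_{1m}$ and $\widehat{\boldsymbol{Q}}_{3m}$ both have orthonormal columns, so the singular values of $\boldsymbol{A}_m = \widehat{\boldsymbol{B}}_{1m}^{\prime}\widehat{\boldsymbol{Q}}_{3m}$ are cosines of principal angles between two subspaces and hence lie in $[0,1]$, giving $\|\boldsymbol{A}_m\|_{\text{min}}^2 \le \|\boldsymbol{A}_m\|_2^2 \le 1$. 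So the whole content is a constant lower bound on $\|\boldsymbol{A}_m\|_{\text{min}}$.

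First I would extract from Condition 4 a two-sided control of $\boldsymbol{K}_{3m}$: since $\boldsymbol{Q}_{3m}$ is semi-orthogonal, $\|\boldsymbol{K}_{3m}\|_2 = \|\boldsymbol{\Gamma}_m\|_2$ and $\|\boldsymbol{K}_{3m}\|_{\text{min}} = \|\boldsymbol{\Gamma}_m\|_{\text{min}}$, so $\|\boldsymbol{K}_{3m}\|_2^2 \asymp \|\boldsymbol{K}_{3m}\|_{\text{min}}^2 \asymp N^{1-\delta_3}$; that is, the square matrix $\boldsymbol{K}_{3m}$ is nonsingular and uniformly well-conditioned, all of its singular values being of order $N^{(1-\delta_3)/2}$. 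Next I would record the analogue of Condition 4 for the estimated complement, namely $\|\widehat{\boldsymbol{B}}_{1m}^{\prime}\boldsymbol{\Gamma}_m\|_{\text{min}}^2 \asymp N^{1-\delta_3}$: Condition 4 gives this with the population complement $\boldsymbol{B}_{1m}$, and the passage to $\widehat{\boldsymbol{B}}_{1m}$ rests on $\|\widehat{\boldsymbol{B}}_{1m}\widehat{\boldsymbol{B}}_{1m}^{\prime} - \boldsymbol{B}_{1m}\boldsymbol{B}_{1m}^{\prime}\|_2 = \|\widehat{\boldsymbol{Q}}_{1m}\widehat{\boldsymbol{Q}}_{1m}^{\prime} - \boldsymbol{Q}_{1m}\boldsymbol{Q}_{1m}^{\prime}\|_2 = o_p(1)$ (Theorem \ref{T1}) together with Weyl's inequality applied to the singular values of $\widehat{\boldsymbol{B}}_{1m}\widehat{\boldsymbol{B}}_{1m}^{\prime}\boldsymbol{\Gamma}_m$ versus $\boldsymbol{B}_{1m}\boldsymbol{B}_{1m}^{\prime}\boldsymbol{\Gamma}_m$: these differ by at most $o_p(1)\cdot\|\boldsymbol{\Gamma}_m\|_2 = o_p(N^{(1-\delta_3)/2})$, which is negligible against the $N^{(1-\delta_3)/2}$-sized singular values, and $\sigma_i(\widehat{\boldsymbol{B}}_{1m}\widehat{\boldsymbol{B}}_{1m}^{\prime}\boldsymbol{\Gamma}_m) = \sigma_i(\widehat{\boldsymbol{B}}_{1m}^{\prime}\boldsymbol{\Gamma}_m)$ because $\widehat{\boldsymbol{B}}_{1m}$ is semi-orthogonal. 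In particular $\widehat{\boldsymbol{B}}_{1m}^{\prime}\boldsymbol{\Gamma}_m$, hence $\widehat{\boldsymbol{B}}_{1m}^{\prime}\boldsymbol{Q}_{3m}$, has full column rank $r_{m1}$.

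Now write $\widehat{\boldsymbol{B}}_{1m}^{\prime}\boldsymbol{\Gamma}_m = (\widehat{\boldsymbol{B}}_{1m}^{\prime}\boldsymbol{Q}_{3m})\boldsymbol{K}_{3m}$ and apply the singular-value product inequalities (Theorem 7 of \citet{merikoski2004inequalities}; they apply here precisely because $\boldsymbol{K}_{3m}$ is square and nonsingular, so rank and hence "smallest nonzero singular value" are preserved under right multiplication): $\|\widehat{\boldsymbol{B}}_{1m}^{\prime}\boldsymbol{Q}_{3m}\|_{\text{min}}\|\boldsymbol{K}_{3m}\|_{\text{min}} \le \|\widehat{\boldsymbol{B}}_{1m}^{\prime}\boldsymbol{\Gamma}_m\|_{\text{min}} \le \|\widehat{\boldsymbol{B}}_{1m}^{\prime}\boldsymbol{Q}_{3m}\|_{\text{min}}\|\boldsymbol{K}_{3m}\|_2$. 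Dividing through by $\|\boldsymbol{K}_{3m}\|_{\text{min}}^2$ and $\|\boldsymbol{K}_{3m}\|_2^2$, both $\asymp N^{1-\delta_3}$, and using $\|\widehat{\boldsymbol{B}}_{1m}^{\prime}\boldsymbol{\Gamma}_m\|_{\text{min}}^2 \asymp N^{1-\delta_3}$, this sandwiches $\|\widehat{\boldsymbol{B}}_{1m}^{\prime}\boldsymbol{Q}_{3m}\|_{\text{min}}^2 \asymp 1$. Finally I would transfer the bound from $\boldsymbol{Q}_{3m}$ to $\widehat{\boldsymbol{Q}}_{3m}$: Theorem \ref{T3} gives $\|\widehat{\boldsymbol{Q}}_{3m} - \boldsymbol{Q}_{3m}\|_2 = o_p(1)$ (after the usual sign/rotation normalization), so Weyl's inequality yields $\big|\,\|\widehat{\boldsymbol{B}}_{1m}^{\prime}\widehat{\boldsymbol{Q}}_{3m}\|_{\text{min}} - \|\widehat{\boldsymbol{B}}_{1m}^{\prime}\boldsymbol{Q}_{3m}\|_{\text{min}}\,\big| \le \|\widehat{\boldsymbol{Q}}_{3m} - \boldsymbol{Q}_{3m}\|_2 = o_p(1)$, whence $\|\boldsymbol{A}_m\|_{\text{min}}^2 \asymp 1$; the column side ($\boldsymbol{\Lambda}_m, \boldsymbol{B}_{2m}, \boldsymbol{Q}_{4m}$ in place of $\boldsymbol{\Gamma}_m, \boldsymbol{B}_{1m}, \boldsymbol{Q}_{3m}$) is handled identically. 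The only delicate part of the argument is this bookkeeping between the population objects $\boldsymbol{B}_{1m}, \boldsymbol{Q}_{3m}$ that Condition 4 speaks about and their estimated counterparts $\widehat{\boldsymbol{B}}_{1m}, \widehat{\boldsymbol{Q}}_{3m}$ appearing in $\boldsymbol{A}_m$; once that is in place the claim is a direct consequence of the singular-value product inequality, which in turn is exactly where the squareness and invertibility of $\boldsymbol{K}_{3m}$ (i.e., the full local factor strength $\|\boldsymbol{\Gamma}_m\|_{\text{min}}^2 \asymp N^{1-\delta_3}$) is essential.
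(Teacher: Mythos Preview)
Your approach is essentially the same as the paper's: extract the two-sided control of $\boldsymbol{K}_{3m}$ from the QR factorization, then sandwich $\|\widehat{\boldsymbol{B}}_{1m}^{\prime}\boldsymbol{Q}_{3m}\|_{\text{min}}$ via the singular-value product inequalities of \citet{merikoski2004inequalities}. Where you differ is in rigor rather than strategy: the paper's proof simply writes $N^{1-\delta_3}\asymp\|\widehat{\boldsymbol{B}}_{1m}^{\prime}\boldsymbol{\Gamma}_m\|_{\text{min}}^2$ as if Condition~4 spoke directly about $\widehat{\boldsymbol{B}}_{1m}$, and then stops at $\|\widehat{\boldsymbol{B}}_{1m}^{\prime}\boldsymbol{Q}_{3m}\|_{\text{min}}$ without ever passing to $\widehat{\boldsymbol{Q}}_{3m}$, whereas you make both transfers explicit via Theorems~\ref{T1} and~\ref{T3} together with Weyl's inequality. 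That extra bookkeeping is genuinely needed---Condition~4 is stated for the population $\boldsymbol{B}_{1m}$, and $\boldsymbol{A}_m=\widehat{\boldsymbol{B}}_{1m}^{\prime}\widehat{\boldsymbol{Q}}_{3m}$ involves the estimated $\widehat{\boldsymbol{Q}}_{3m}$---so your version actually fills two gaps the paper leaves implicit, at the (harmless) cost of requiring the full set of conditions behind Theorems~\ref{T1} and~\ref{T3} rather than Condition~4 alone.
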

\begin{lemma} \label{Lemma11}
Under Conditions 3-4, it follows that
\begin{equation*}
	\|(\widehat{\boldsymbol{Q}}_{4m}^{\prime} \otimes \boldsymbol{A}_m (\boldsymbol{A}_{m}^{\prime} \boldsymbol{A}_{m} )^{-1} ) {\rm vec}(\boldsymbol{U}_{mt}) \|_2 = O_p(1) , m=1,...,M.
\end{equation*}
\begin{proof}
	Let the $i$th row of $\widehat{\boldsymbol{B}}_{1m}^{\prime}$, $\widehat{\boldsymbol{Q}}_{4m}$ , $\boldsymbol{A}_m (\boldsymbol{A}_{m}^{\prime} \boldsymbol{A}_{m} )^{-1}$ and $\widehat{\boldsymbol{Q}}_{4m}^{\prime} \otimes \boldsymbol{A}_m (\boldsymbol{A}_{m}^{\prime} \boldsymbol{A}_{m} )^{-1} $ be $\boldsymbol{b}_{i\cdot}$, $\boldsymbol{q}_{i\cdot}$, $\boldsymbol{a}_{i \cdot}$ and $\boldsymbol{c}_{i \cdot}$, respectively. Consider the length of $\boldsymbol{a}_{i\cdot}$ and by Lemma \ref{Lemma10}:
	\begin{equation}
		\begin{aligned}
			\boldsymbol{a}_{i\cdot} \boldsymbol{a}_{i\cdot}^{\prime} & = \boldsymbol{b}_{i\cdot} \widehat{\boldsymbol{Q}}_{3m} (\boldsymbol{A}_{m}^{\prime} \boldsymbol{A}_{m} )^{-1} (\boldsymbol{A}_{m}^{\prime} \boldsymbol{A}_{m} )^{-1} \widehat{\boldsymbol{Q}}_{3m}^{\prime} \boldsymbol{b}_{i\cdot}^{\prime} \leq \lambda_{\text{max}} (\widehat{\boldsymbol{Q}}_{3m} (\boldsymbol{A}_{m}^{\prime} \boldsymbol{A}_{m} )^{-1} (\boldsymbol{A}_{m}^{\prime} \boldsymbol{A}_{m} )^{-1} \widehat{\boldsymbol{Q}}_{3m}^{\prime})\\
			& = \|\widehat{\boldsymbol{Q}}_{3m} (\boldsymbol{A}_{m}^{\prime} \boldsymbol{A}_{m} )^{-1}\|_2^2 \leq \|(\boldsymbol{A}_{m}^{\prime} \boldsymbol{A}_{m})^{-1} \|_2^2 = (\frac{1}{\|\boldsymbol{A}_m \|_{\text{min}}})^2 = O(1), i = 1,...,r_{m1}.
		\end{aligned}			
	\end{equation}
Because each row of $\widehat{\boldsymbol{Q}}_{4m}$ has length 1, the length of $\boldsymbol{c}_{i\cdot}(i=1,..., r_{m1}r_{m2})$ is also bounded. Under Lemma \ref{Lemma6} we get
\begin{equation}
	\|(\widehat{\boldsymbol{Q}}_{4m}^{\prime} \otimes \boldsymbol{A}_m (\boldsymbol{A}_{m}^{\prime} \boldsymbol{A}_{m} )^{-1}  ) \text{vec}(\boldsymbol{U}_{mt}) \|_2 \leq \sum_{i=1}^{r_{m_1} r_{m_2}}  (\boldsymbol{c}_{i\cdot} \text{vec}(\boldsymbol{U}_{mt}) )^2 =O_p(1).
\end{equation}
\end{proof}
\end{lemma}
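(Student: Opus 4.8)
The plan is to read the left-hand side as the Euclidean norm of a vector of \emph{fixed} length. Indeed, with $\boldsymbol{P}_m:=\widehat{\boldsymbol{Q}}_{4m}^{\prime}\otimes\boldsymbol{A}_m(\boldsymbol{A}_m^{\prime}\boldsymbol{A}_m)^{-1}$, the quantity $\boldsymbol{P}_m\,{\rm vec}(\boldsymbol{U}_{mt})$ is the vectorization of the $r_{m1}\times r_{m2}$ matrix that is the noise contribution to the least-squares estimator $\widehat{\boldsymbol{Z}}_{mt}$, so $\|\boldsymbol{P}_m\,{\rm vec}(\boldsymbol{U}_{mt})\|_2^2=\sum_{i=1}^{r_{m1}r_{m2}}(\boldsymbol{c}_{i\cdot}\,{\rm vec}(\boldsymbol{U}_{mt}))^2$, where $\boldsymbol{c}_{1\cdot},\dots,\boldsymbol{c}_{r_{m1}r_{m2},\cdot}$ are the $r_{m1}r_{m2}=O(1)$ rows of $\boldsymbol{P}_m$. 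It therefore suffices to show $\mathrm{E}\big[(\boldsymbol{c}_{i\cdot}\,{\rm vec}(\boldsymbol{U}_{mt}))^2\big]=O(1)$ for each $i$ and then invoke Markov's inequality, since there are only finitely many terms.

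The first step is to bound the row norms $\|\boldsymbol{c}_{i\cdot}\|_2$. Each row of the Kronecker product is the Kronecker product of a row of $\widehat{\boldsymbol{Q}}_{4m}^{\prime}$ with a row of $\boldsymbol{A}_m(\boldsymbol{A}_m^{\prime}\boldsymbol{A}_m)^{-1}$, so $\|\boldsymbol{c}_{i\cdot}\|_2$ factors as a product of two row norms. A row of $\widehat{\boldsymbol{Q}}_{4m}^{\prime}$ is a column of the semi-orthogonal matrix $\widehat{\boldsymbol{Q}}_{4m}$ and so has unit norm. A row of $\boldsymbol{A}_m(\boldsymbol{A}_m^{\prime}\boldsymbol{A}_m)^{-1}$ has the form $\boldsymbol{b}_{j\cdot}\widehat{\boldsymbol{Q}}_{3m}(\boldsymbol{A}_m^{\prime}\boldsymbol{A}_m)^{-1}$, where $\boldsymbol{b}_{j\cdot}$ is a row of $\widehat{\boldsymbol{B}}_{1m}^{\prime}$ (a column of the semi-orthogonal $\widehat{\boldsymbol{B}}_{1m}$, hence of unit norm) and $\widehat{\boldsymbol{Q}}_{3m}$ is semi-orthogonal; its squared length is at most $\lambda_{\max}\big(\widehat{\boldsymbol{Q}}_{3m}(\boldsymbol{A}_m^{\prime}\boldsymbol{A}_m)^{-2}\widehat{\boldsymbol{Q}}_{3m}^{\prime}\big)\le\|(\boldsymbol{A}_m^{\prime}\boldsymbol{A}_m)^{-1}\|_2^2=\|\boldsymbol{A}_m\|_{\min}^{-4}$, which is $O(1)$ by Lemma~\ref{Lemma10}. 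Hence $\|\boldsymbol{c}_{i\cdot}\|_2=O(1)$ uniformly in $i$ and $m$.

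In the second step I would convert this deterministic bound into the moment bound. Since $\mathrm{E}\big[(\boldsymbol{c}_{i\cdot}\,{\rm vec}(\boldsymbol{U}_{mt}))^2\big]=\boldsymbol{c}_{i\cdot}\boldsymbol{\Sigma}_u\boldsymbol{c}_{i\cdot}^{\prime}\le\|\boldsymbol{\Sigma}_u\|_2\,\|\boldsymbol{c}_{i\cdot}\|_2^2$, and since $\boldsymbol{U}_{mt}=\widehat{\boldsymbol{B}}_{1m}^{\prime}\boldsymbol{E}_{mt}$ with $\widehat{\boldsymbol{B}}_{1m}$ semi-orthogonal yields $\|\boldsymbol{\Sigma}_u\|_2\le\|\boldsymbol{\Sigma}_{em}\|_2=O(1)$ by Condition~3 (cf. Lemma~\ref{Lemma6}), we get $\mathrm{E}\big[(\boldsymbol{c}_{i\cdot}\,{\rm vec}(\boldsymbol{U}_{mt}))^2\big]=O(1)$. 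Summing the $r_{m1}r_{m2}$ terms and applying Markov's inequality gives $\|\boldsymbol{P}_m\,{\rm vec}(\boldsymbol{U}_{mt})\|_2=O_p(1)$ for each $m=1,\dots,M$.

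The main obstacle is the first step: a priori the least-squares operator $\boldsymbol{A}_m(\boldsymbol{A}_m^{\prime}\boldsymbol{A}_m)^{-1}$ could have rows whose norms diverge as $N,p\to\infty$, and the whole argument rests on the non-degeneracy bound $\|\boldsymbol{A}_m\|_{\min}^2\asymp1$ of Lemma~\ref{Lemma10}. That bound is precisely where Condition~4's requirement $\|\boldsymbol{B}_{1m}^{\prime}\boldsymbol{\Gamma}_m\|_{\min}^2\asymp N^{1-\delta_3}$ — which forbids $\mathcal{M}(\boldsymbol{\Gamma}_m)$ from lying (nearly) inside $\mathcal{M}(\boldsymbol{R}_m)$ — enters, together with $\|\boldsymbol{K}_{3m}\|_{\min}^2\asymp N^{1-\delta_3}$ from the QR factorization $\boldsymbol{\Gamma}_m=\boldsymbol{Q}_{3m}\boldsymbol{K}_{3m}$. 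Everything else is routine Kronecker-product and semi-orthogonality bookkeeping.
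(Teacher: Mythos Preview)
Your proposal is correct and follows essentially the same route as the paper: bound the row norms of $\boldsymbol{A}_m(\boldsymbol{A}_m^{\prime}\boldsymbol{A}_m)^{-1}$ via Lemma~\ref{Lemma10}, combine with the unit-norm rows of $\widehat{\boldsymbol{Q}}_{4m}^{\prime}$ to get $\|\boldsymbol{c}_{i\cdot}\|_2=O(1)$, then use the bounded noise covariance (Lemma~\ref{Lemma6} / Condition~3) to conclude each of the $r_{m1}r_{m2}$ components is $O_p(1)$. If anything, your version is slightly more careful: you correctly write the bound as $\|\boldsymbol{A}_m\|_{\min}^{-4}$ (the paper's displayed $(\|\boldsymbol{A}_m\|_{\min})^{-2}$ is a typo), and you use $\|\boldsymbol{\Sigma}_u\|_2\le\|\boldsymbol{\Sigma}_{em}\|_2$ directly rather than the weaker element-wise statement of Lemma~\ref{Lemma6}.
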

\noindent{\bf Proof of Theorem 4}
\begin{proof}
We decompose the difference of the local signal part and the estimated local signal part into five parts:
	\begin{equation}
		\begin{aligned}
		&	\widehat{\boldsymbol{Q}}_{3m}\widehat{\boldsymbol{Z}}_{mt} \widehat{\boldsymbol{Q}}_{4m}^{\prime} - \boldsymbol{Q}_{3m}\boldsymbol{Z}_{mt} \boldsymbol{Q}_{4m}^{\prime}  = 	\widehat{\boldsymbol{Q}}_{3m}(\boldsymbol{A}_{m}^{\prime} \boldsymbol{A}_{m} )^{-1}\boldsymbol{A}_{m}^{\prime}\boldsymbol{Y}_{mt}^{(1)} \widehat{\boldsymbol{Q}}_{4m} \widehat{\boldsymbol{Q}}_{4m}^{\prime} -  {\boldsymbol{Q}}_{3m}\boldsymbol{Z}_{mt} {\boldsymbol{Q}}_{4m}^{\prime}\\
			& =\widehat{\boldsymbol{Q}}_{3m} (\boldsymbol{A}_{m}^{\prime} \boldsymbol{A}_{m} )^{-1}\boldsymbol{A}_{m}^{\prime}(\widehat{\boldsymbol{B}}^{\prime}_{1m}\boldsymbol{Q}_{3m} \boldsymbol{Z}_{mt}  \boldsymbol{Q}_{4m}^{\prime}+\boldsymbol{U}_{mt} + \widehat{\boldsymbol{B}}_{1m}^{\prime} (\boldsymbol{Q}_{1m} - \widehat{\boldsymbol{Q}}_{1m}) \boldsymbol{S}_{mt} \boldsymbol{Q}_{2m}^{\prime})\widehat{\boldsymbol{Q}}_{4m} \widehat{\boldsymbol{Q}}_{4m}^{\prime} \\
			& -{\boldsymbol{Q}}_{3m} {\boldsymbol{Q}}_{3m}^{\prime} {\boldsymbol{Q}}_{3m} \boldsymbol{Z}_{mt} {\boldsymbol{Q}}_{4m}^{\prime} {\boldsymbol{Q}}_{4m} {\boldsymbol{Q}}_{4m}^{\prime}\\
			& = \widehat{\boldsymbol{Q}}_{3m} (\boldsymbol{A}_{m}^{\prime} \boldsymbol{A}_{m} )^{-1}\boldsymbol{A}_{m}^{\prime}\widehat{\boldsymbol{B}}^{\prime}_{1m}\boldsymbol{Q}_{3m} \boldsymbol{Z}_{mt} \boldsymbol{Q}_{4m}^{\prime}(\widehat{\boldsymbol{Q}}_{4m}\widehat{\boldsymbol{Q}}_{4m}^{\prime} - \boldsymbol{Q}_{4m} \boldsymbol{Q}_{4m}^{\prime}) + \widehat{\boldsymbol{Q}}_{3m}(\boldsymbol{A}_{m}^{\prime} \boldsymbol{A}_{m} )^{-1}\boldsymbol{A}_{m}^{\prime} \boldsymbol{U}_{mt} \widehat{\boldsymbol{Q}}_{4m} \widehat{\boldsymbol{Q}}_{4m}^{\prime} \\
			& + \widehat{\boldsymbol{Q}}_{3m} (\boldsymbol{A}_{m}^{\prime} \boldsymbol{A}_{m} )^{-1}\boldsymbol{A}_{m}^{\prime}\widehat{\boldsymbol{B}}_{1m}^{\prime} (\boldsymbol{Q}_{1m} - \widehat{\boldsymbol{Q}}_{1m}) \boldsymbol{S}_{mt} \boldsymbol{Q}_{2m}^{\prime}\widehat{\boldsymbol{Q}}_{4m} \widehat{\boldsymbol{Q}}_{4m}^{\prime} +(\widehat{\boldsymbol{Q}}_{3m}  -
			\boldsymbol{Q}_{3m}) \boldsymbol{Z}_{mt} \boldsymbol{Q}_{4m}^{\prime}\\
			& + \widehat{\boldsymbol{Q}}_{3m} (\boldsymbol{A}_{m}^{\prime} \boldsymbol{A}_{m} )^{-1}\boldsymbol{A}_{m}^{\prime}\widehat{\boldsymbol{B}}^{\prime}_{1m} 
			(\boldsymbol{Q}_{3m}-\widehat{\boldsymbol{Q}}_{3m}) \boldsymbol{Z}_{mt} \boldsymbol{Q}_{4m}^{\prime} \\
			& = I_1 + I_2 + I_3 + I_4 + I_5,
		\end{aligned}			
	\end{equation}
where 
\begin{equation}
	\|\boldsymbol{A}_m \|_2 \leq \|\widehat{\boldsymbol{B}}_{1m}^{\prime} \|_2 \|\widehat{\boldsymbol{Q}}_{3m} \|_2 = 1.
\end{equation}
By Theorem \ref{T1}, Theorem \ref{T3} and Lemma \ref{Lemma11}, we have
	\begin{equation}
		\begin{aligned}
			\| I_1 \|_2 & \leq 2 \|(\boldsymbol{A}_m^{\prime} \boldsymbol{A}_m )^{-1} \|_2 \|\boldsymbol{Z}_{mt} \|_2 \| \widehat{\boldsymbol{Q}}_{4m} - \boldsymbol{Q}_{4m}  \|_2   \\
			& = \frac{2}{\|\boldsymbol{A}_m \|_{\text{min}}}  O_p(N^{1/2 -\delta_3/2}p^{1/2 -\delta_4/2}) \text{max}\{O_p(N^{\delta_1 } p^{\delta_2} T^{-1/2}), O_p(N^{\delta_3 } p^{\delta_4} T^{-1/2}) \} \\
			& =O_p(N^{1/2 +\delta_1/2}p^{1/2 +\delta_2/2}T^{-1/2}) + O_p(N^{1/2 +\delta_3/2}p^{1/2 +\delta_4/2}T^{-1/2}) ,\\
			\| I_2 \|_2 & \leq \| (\boldsymbol{A}_{m}^{\prime} \boldsymbol{A}_{m} )^{-1}\boldsymbol{A}_{m}^{\prime} \boldsymbol{U}_{mt} \widehat{\boldsymbol{Q}}_{4m} \|_2 = \|(\widehat{\boldsymbol{Q}}_{4m}^{\prime} \otimes \boldsymbol{A}_m (\boldsymbol{A}_{m}^{\prime} \boldsymbol{A}_{m} )^{-1}  ) \text{vec}(\boldsymbol{U}_{mt}) \|_2 = O_p(1), \\
			\| I_3 \|_2 &  \leq \|(\boldsymbol{A}_m^{\prime} \boldsymbol{A}_m )^{-1} \|_2 \|\widehat{\boldsymbol{Q}}_{1m} - \boldsymbol{Q}_{1m}  \|_2  \| \boldsymbol{S}_{mt}\|_2 = O_p(N^{1/2 +\delta_1/2}p^{1/2 +\delta_2/2}T^{-1/2}), \\
			\| I_4 \|_2 & \leq 2\| \boldsymbol{Q}_{3m} - \widehat{\boldsymbol{Q}}_{3m} \|_2 \| \boldsymbol{Z}_{mt} \|_2  = O_p(N^{1/2 +\delta_1/2}p^{1/2 +\delta_2/2}T^{-1/2}) + O_p(N^{1/2 +\delta_3/2}p^{1/2 +\delta_4/2}T^{-1/2}),\\
			\| I_5 \|_2 & \leq \|(\boldsymbol{A}_m^{\prime} \boldsymbol{A}_m )^{-1} \|_2 \| \widehat{\boldsymbol{Q}}_{3m}
			- \boldsymbol{Q}_{3m}  \|_2 \|\boldsymbol{Z}_{mt} \|_2  \\
			& =(\frac{2}{\|\boldsymbol{A}_m \|_{\text{min}}})\| \widehat{\boldsymbol{Q}}_{3m}
			- \boldsymbol{Q}_{3m}  \|_2 \|\boldsymbol{Z}_{mt} \|_2 = O_p(N^{1/2 +\delta_1/2}p^{1/2 +\delta_2/2}T^{-1/2}) + O_p(N^{1/2 +\delta_3/2}p^{1/2 +\delta_4/2}T^{-1/2}) . 
		\end{aligned}
	\end{equation}
Then we get the conclusion.
 
\end{proof}

\noindent{\bf Proof of Theorem 5}
\begin{proof}
    \begin{equation}
		\begin{aligned}
			\widehat{\boldsymbol{\Phi}}_{mt} - \boldsymbol{\Phi}_{mt} & = \widehat{\boldsymbol{Q}}_{1m} \widehat{\boldsymbol{Q}}_{1m}^{\prime} (\boldsymbol{X}_{mt} - \widehat{\boldsymbol{Q}}_{3m}\widehat{\boldsymbol{Z}}_{mt} \widehat{\boldsymbol{Q}}_{4m}^{\prime}) \widehat{\boldsymbol{Q}}_{2m} \widehat{\boldsymbol{Q}}_{2m}^{\prime} - \boldsymbol{Q}_{1m} \boldsymbol{S}_{mt} \boldsymbol{Q}_{2m}^{\prime} \\
			& = [\widehat{\boldsymbol{Q}}_{1m} \widehat{\boldsymbol{Q}}_{1m}^{\prime} (\boldsymbol{Q}_{1m} \boldsymbol{S}_{mt} \boldsymbol{Q}_{2m}^{\prime} + \boldsymbol{E}_{mt})\widehat{\boldsymbol{Q}}_{2m}^{\prime} \widehat{\boldsymbol{Q}}_{2m}- \boldsymbol{Q}_{1m} \boldsymbol{S}_{mt} \boldsymbol{Q}_{2m}^{\prime}]+ \widehat{\boldsymbol{Q}}_{1m} \widehat{\boldsymbol{Q}}_{1m}^{\prime}(\widehat{\boldsymbol{\Psi}}_{mt} - \boldsymbol{\Psi}_{mt})\widehat{\boldsymbol{Q}}_{2m} \widehat{\boldsymbol{Q}}_{2m}^{\prime}\\
			& = I_1  + I_2.
		\end{aligned}
	\end{equation}
	By the proof of Theorem 3 in \cite{wang2019factor}, we have 
	\begin{equation}
		\begin{aligned}
			\|I_1\|_2 & \leq O_p(N^{1/2 +\delta_1/2}p^{1/2 +\delta_2/2}T^{-1/2} + 1). \\
			\|I_2\|_2 & \leq \|\widehat{\boldsymbol{\Psi}}_{mt} - \boldsymbol{\Psi}_{mt}\|_2 = O_p(N^{1/2 +\delta_1/2}p^{1/2 +\delta_2/2}T^{-1/2} +N^{1/2 +\delta_3/2}p^{1/2 +\delta_4/2}T^{-1/2} + 1).
		\end{aligned}   		 
	\end{equation}
Then
\begin{equation}
	N^{-1/2} p^{-1/2} \| \widehat{\boldsymbol{\Phi}}_{mt} - \boldsymbol{\Phi}_{mt} \|_2 = O_p(N^{\delta_1/2}p^{ \delta_2/2}T^{-1/2} + N^{\delta_3/2}p^{\delta_4/2}T^{-1/2} + N^{-1/2} p^{-1/2}).
\end{equation}
\end{proof}

\noindent
{\Large \bf Appendix 2: More results and images in the real data analysis}
Figure \ref{materials}-\ref{real estate} show the time series plot of other industries.
\begin{figure}[htbp]
	\centering
	\includegraphics[width=0.8\linewidth]{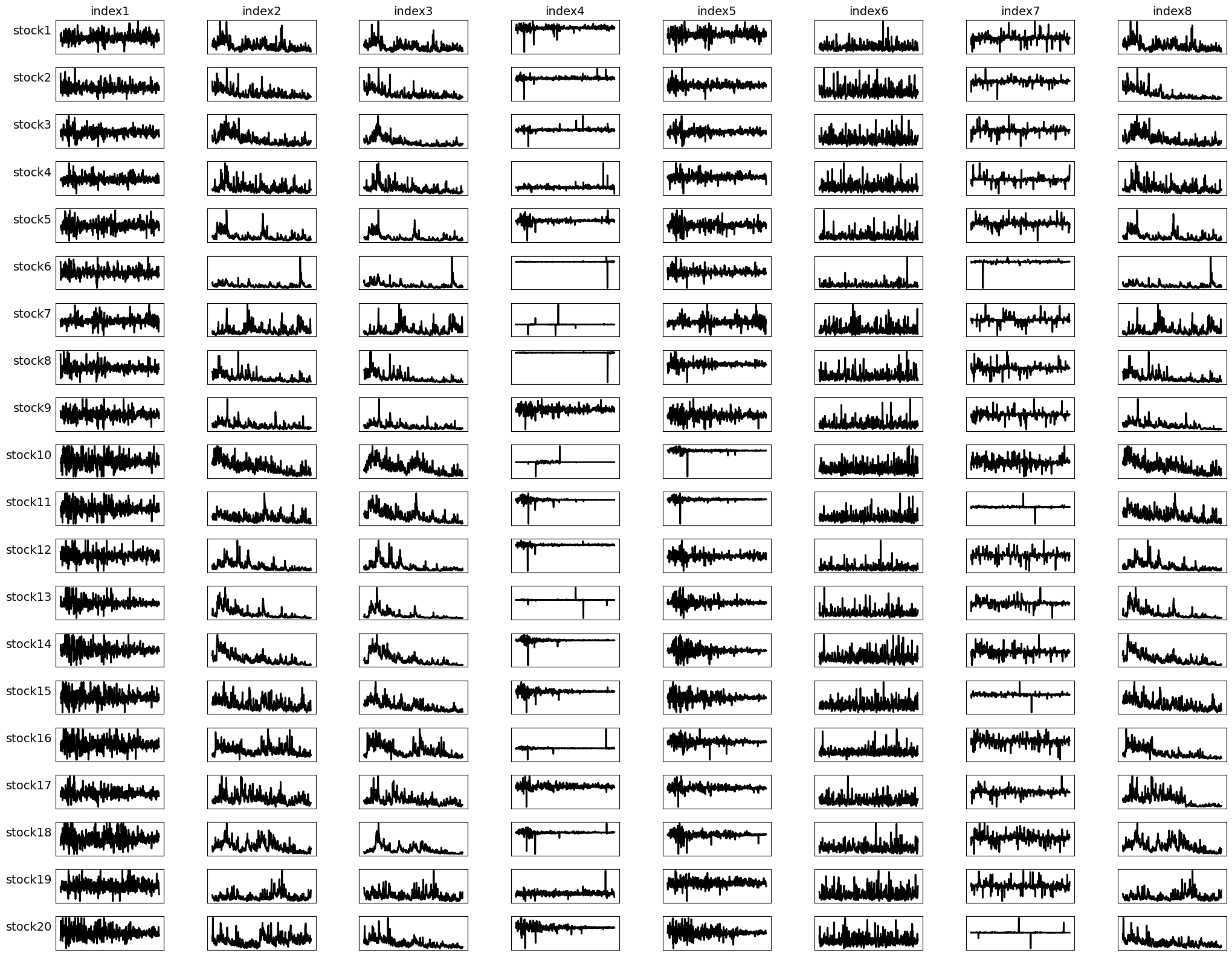}
	\caption{Time series plot of materials 20 by 8 series (after data transformation)}
	\label{materials}
\end{figure}

\begin{figure}[htbp]
	\centering
	\includegraphics[width=0.8\linewidth]{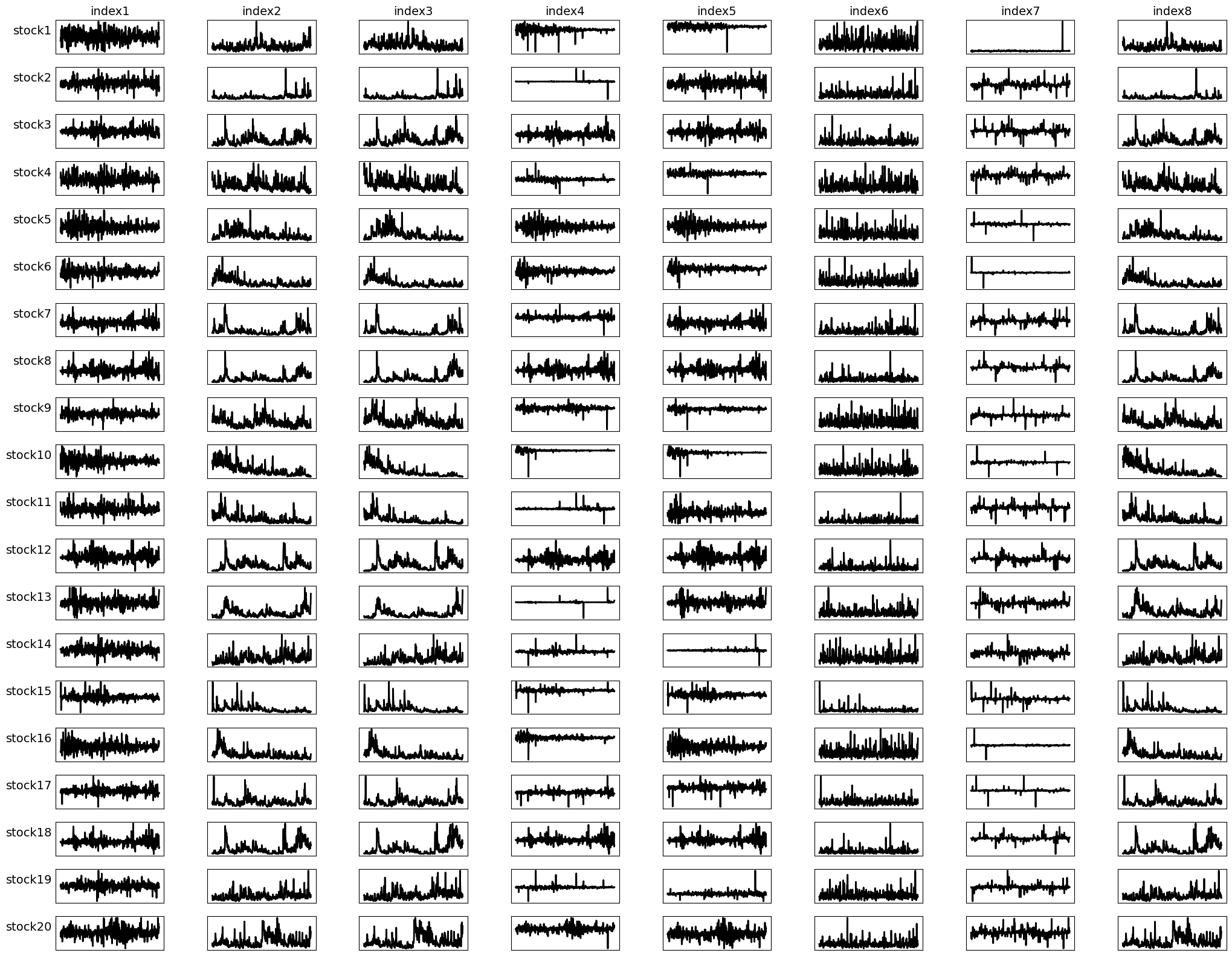}
	\caption{Time series plot of industrials 20 by 8 series (after data transformation)}
\end{figure}

\begin{figure}[htbp]
	\centering
	\includegraphics[width=0.8\linewidth]{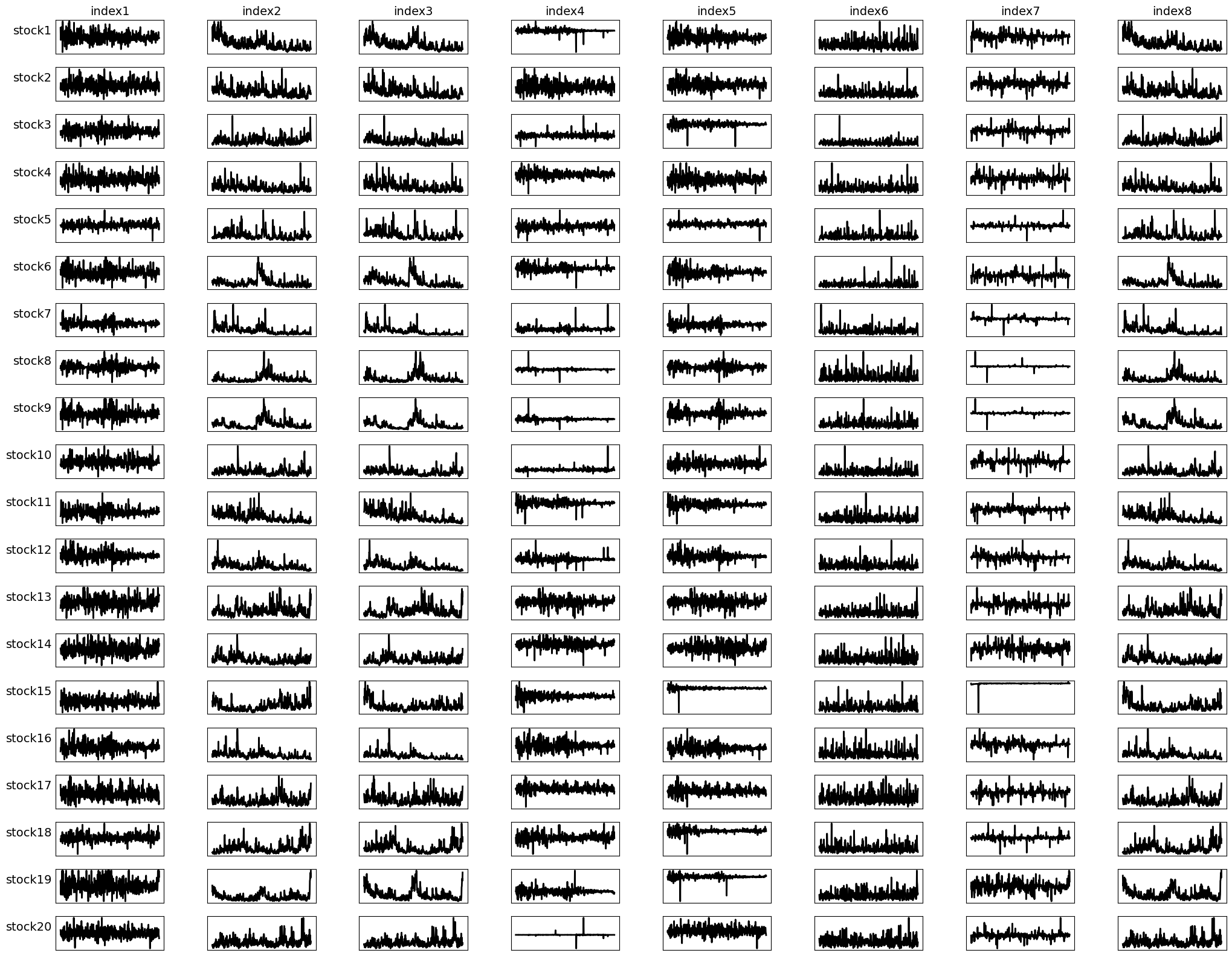}
	\caption{Time series plot of consumer discretionary 20 by 8 series (after data transformation)}
\end{figure}

\begin{figure}[htbp]
	\centering
	\includegraphics[width=0.8\linewidth]{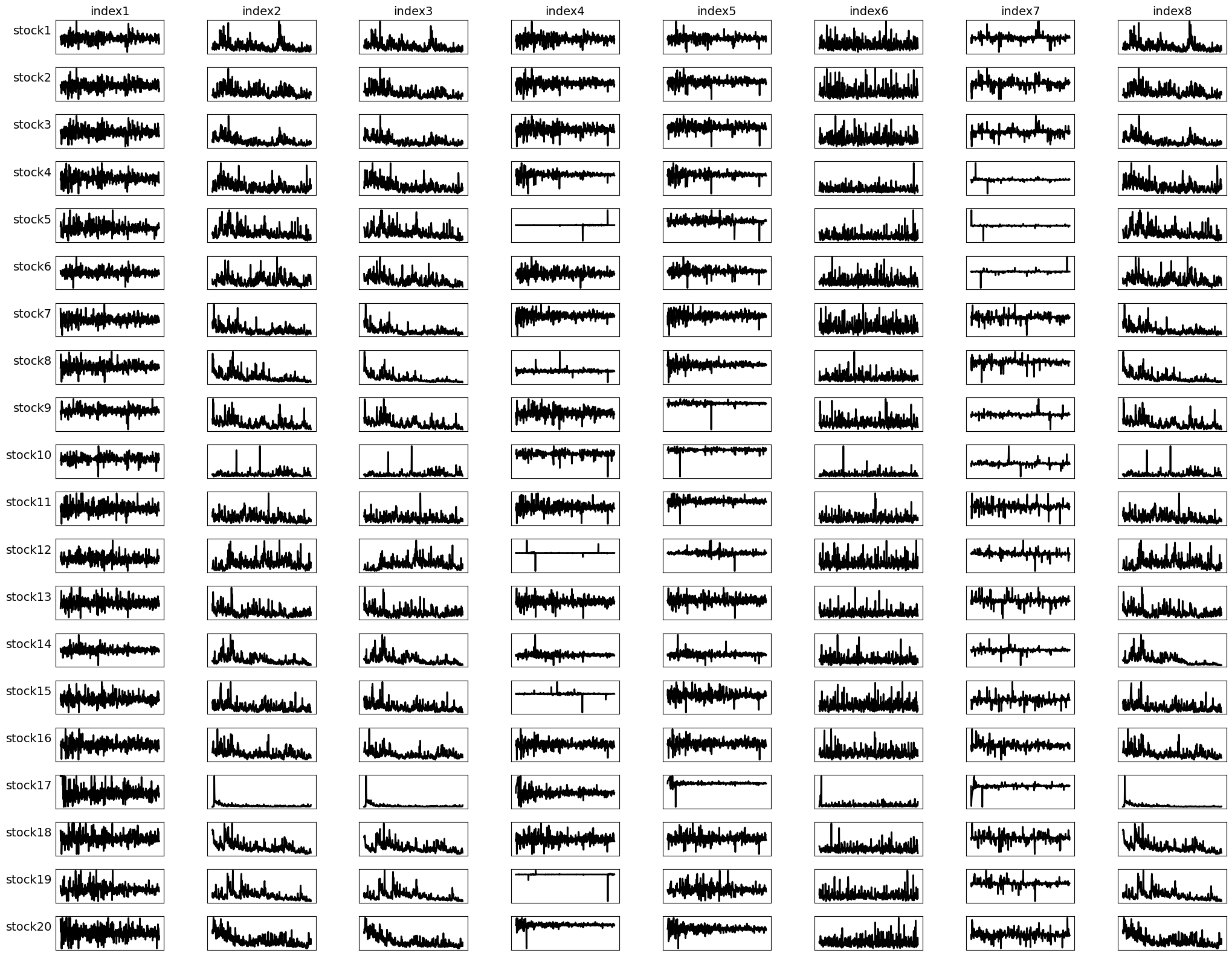}
	\caption{Time series plot of consumer staples 20 by 8 series (after data transformation)}
\end{figure}

\begin{figure}[htbp]
	\centering
	\includegraphics[width=0.8\linewidth]{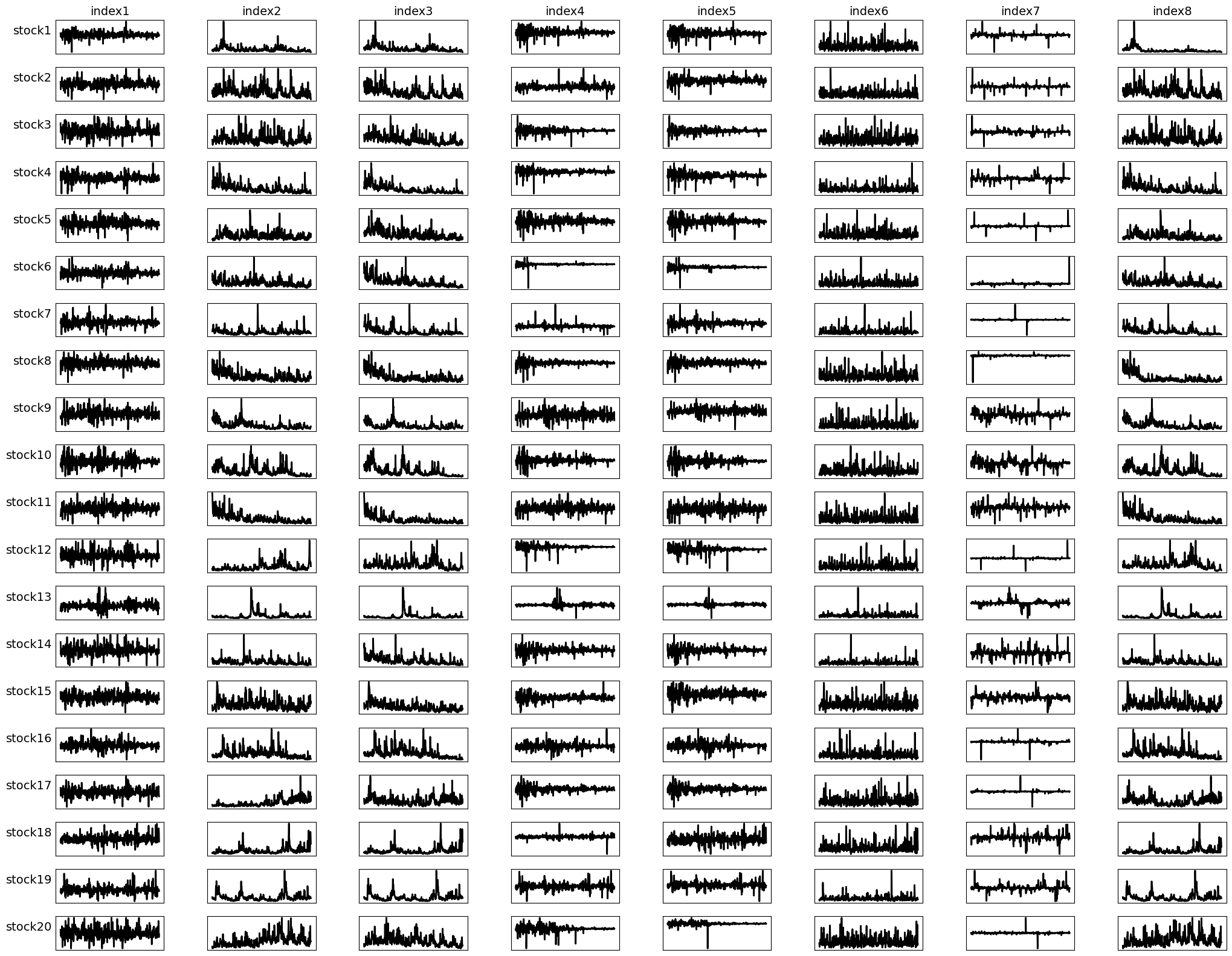}
	\caption{Time series plot of health care 20 by 8 series (after data transformation)}
\end{figure}

\begin{figure}[htbp]
	\centering
	\includegraphics[width=0.8\linewidth]{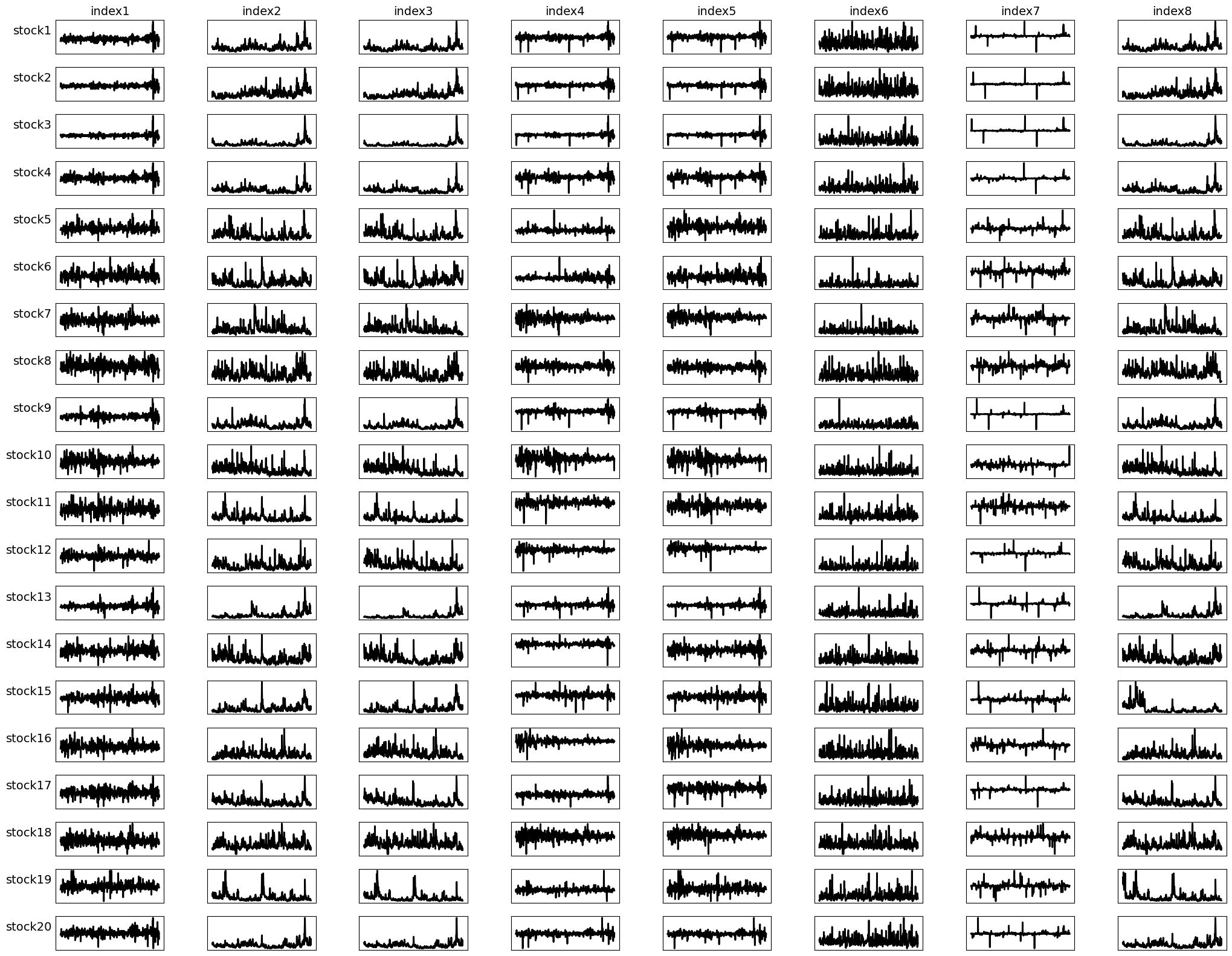}
	\caption{Time series plot of Financials 20 by 8 series (after data transformation)}
\end{figure}

\begin{figure}[htbp]
	\centering
	\includegraphics[width=0.8\linewidth]{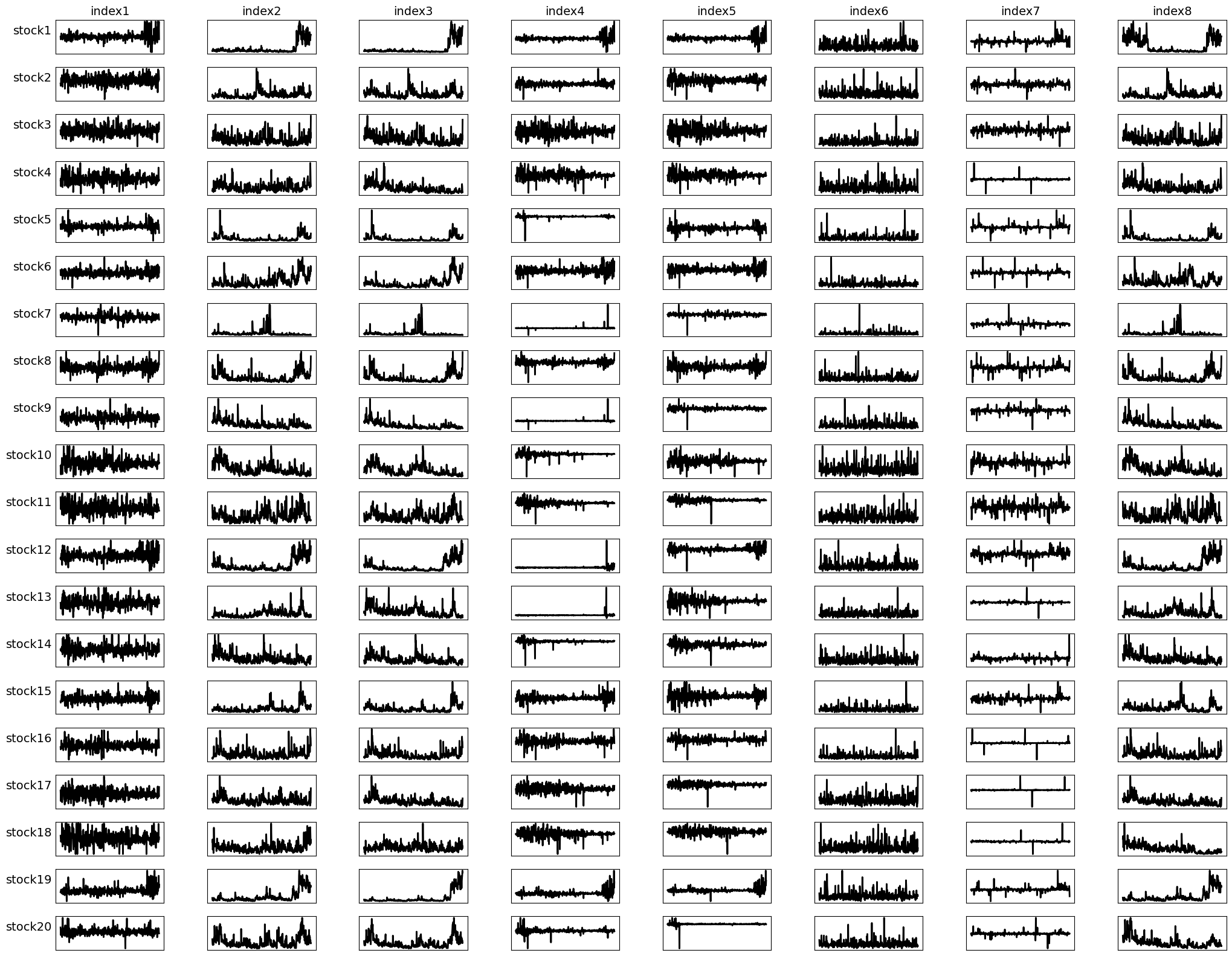}
	\caption{Time series plot of information technology 20 by 8 series (after data transformation)}
\end{figure}

\begin{figure}[htbp]
	\centering
	\includegraphics[width=0.8\linewidth]{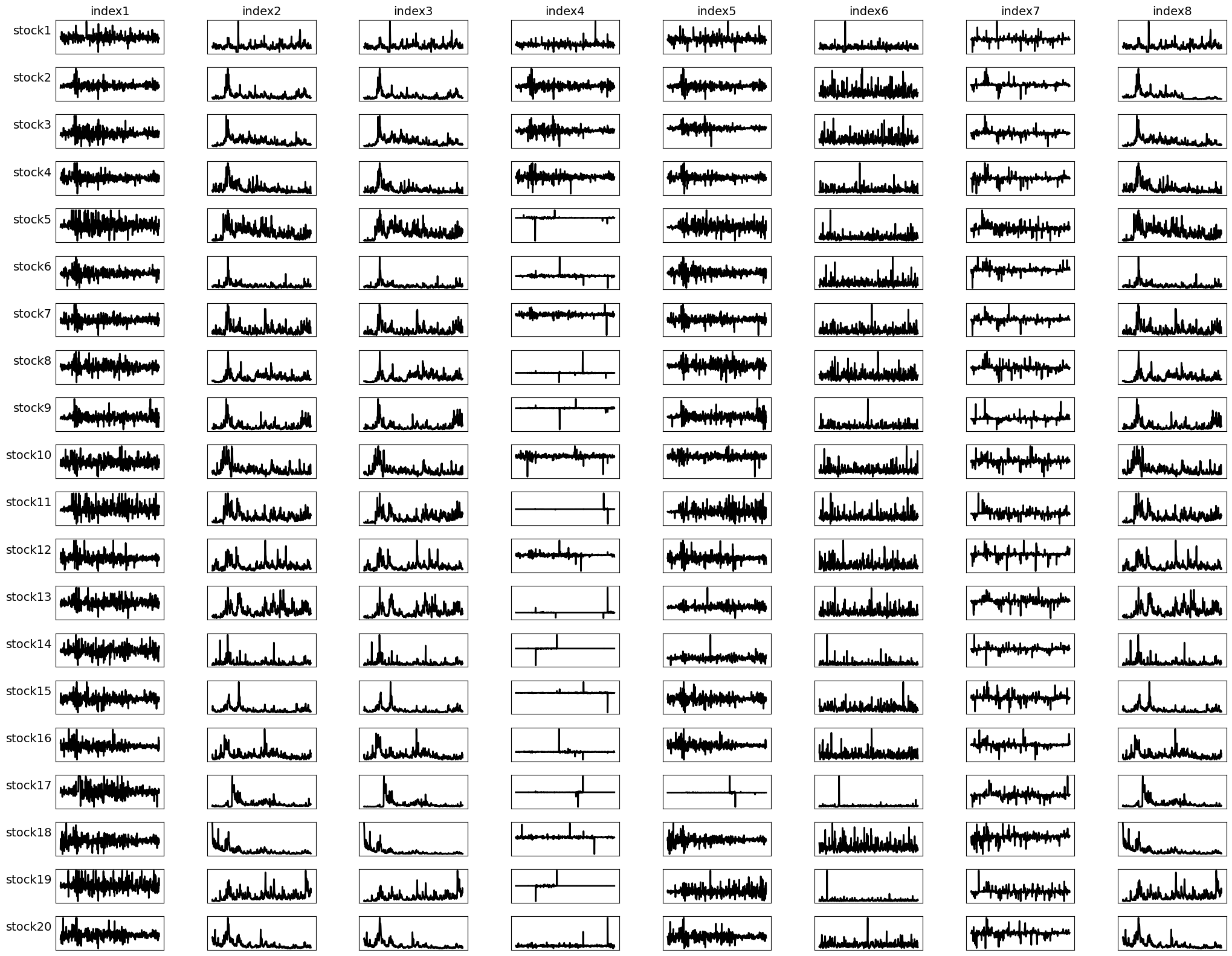}
	\caption{Time series plot of utilities 20 by 8 series (after data transformation)}
\end{figure}

\begin{figure}[htbp]
	\centering
	\includegraphics[width=0.8\linewidth]{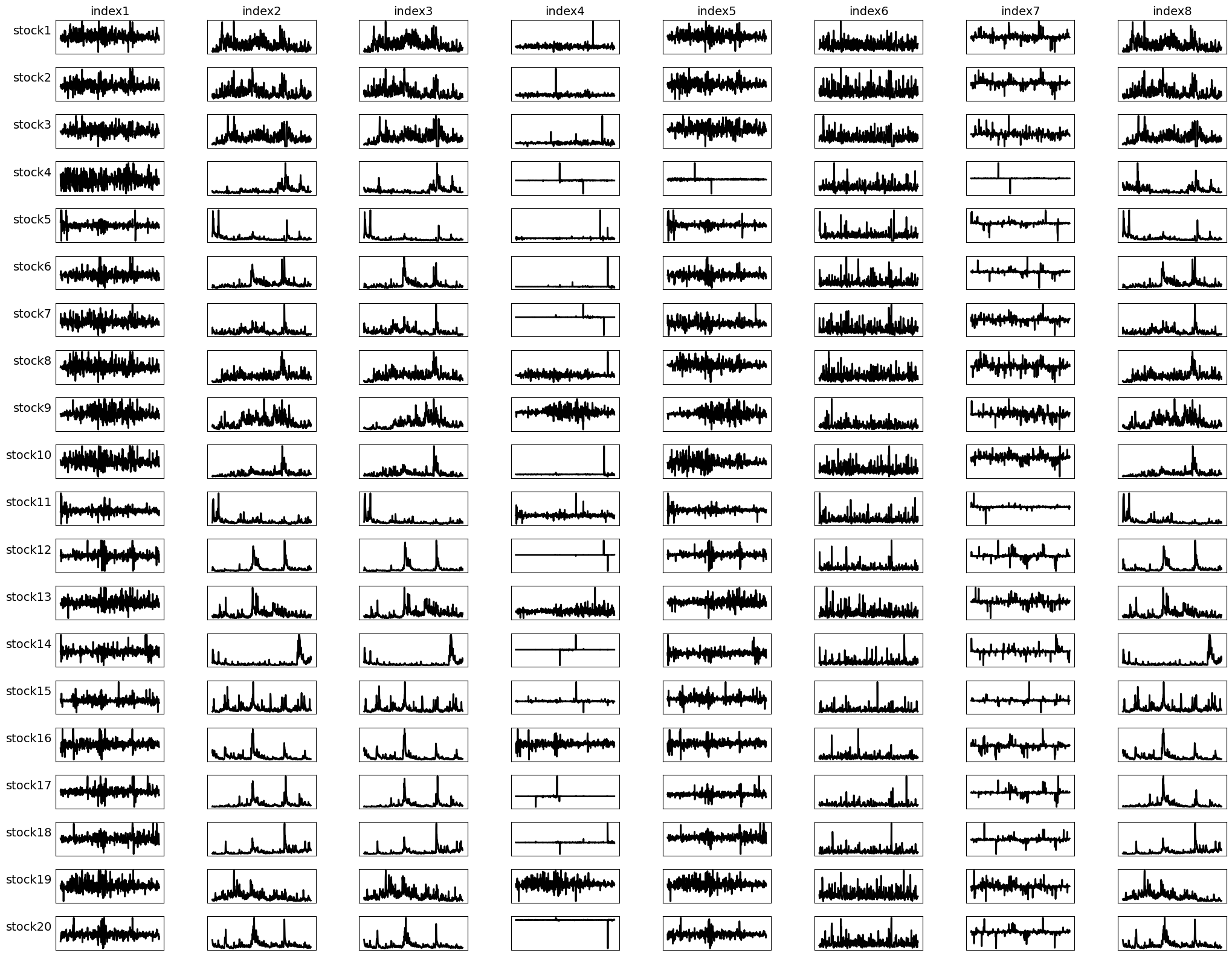}
	\caption{Time series plot of real estate 20 by 8 series (after data transformation)}
	\label{real estate}
\end{figure}

Figure \ref{materials1}-\ref{real estate1} show the logarithms and ratios
of eigenvalues of other industries.
\begin{figure}[htbp]
	\centering
        \includegraphics[width=0.95\linewidth]{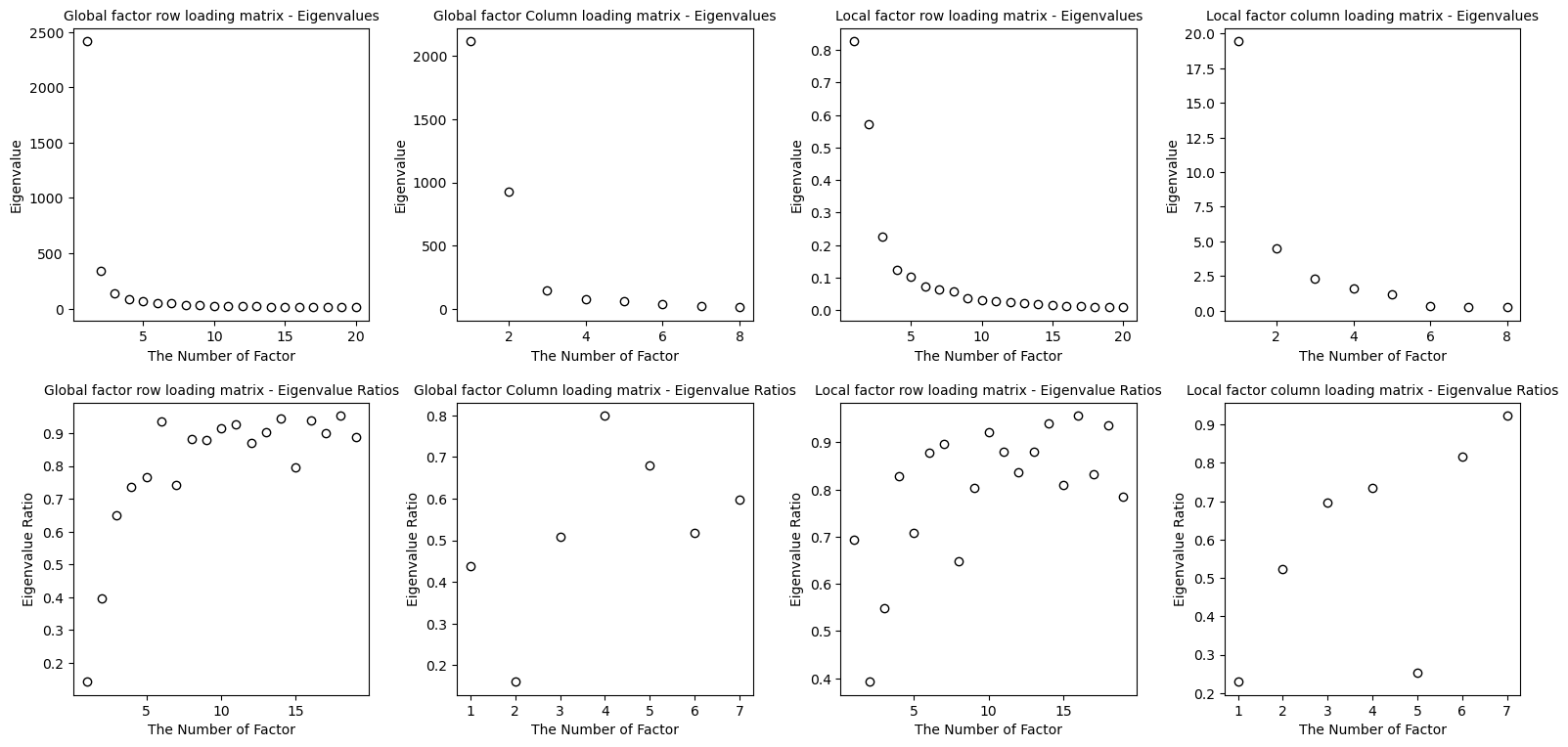}
	\caption{Materials series: eigenvalues and eigenvalues ratio for determining the number of global and local factors }
	\label{materials1}
\end{figure}

\begin{figure}[htbp]
	\centering
        \includegraphics[width=0.95\linewidth]{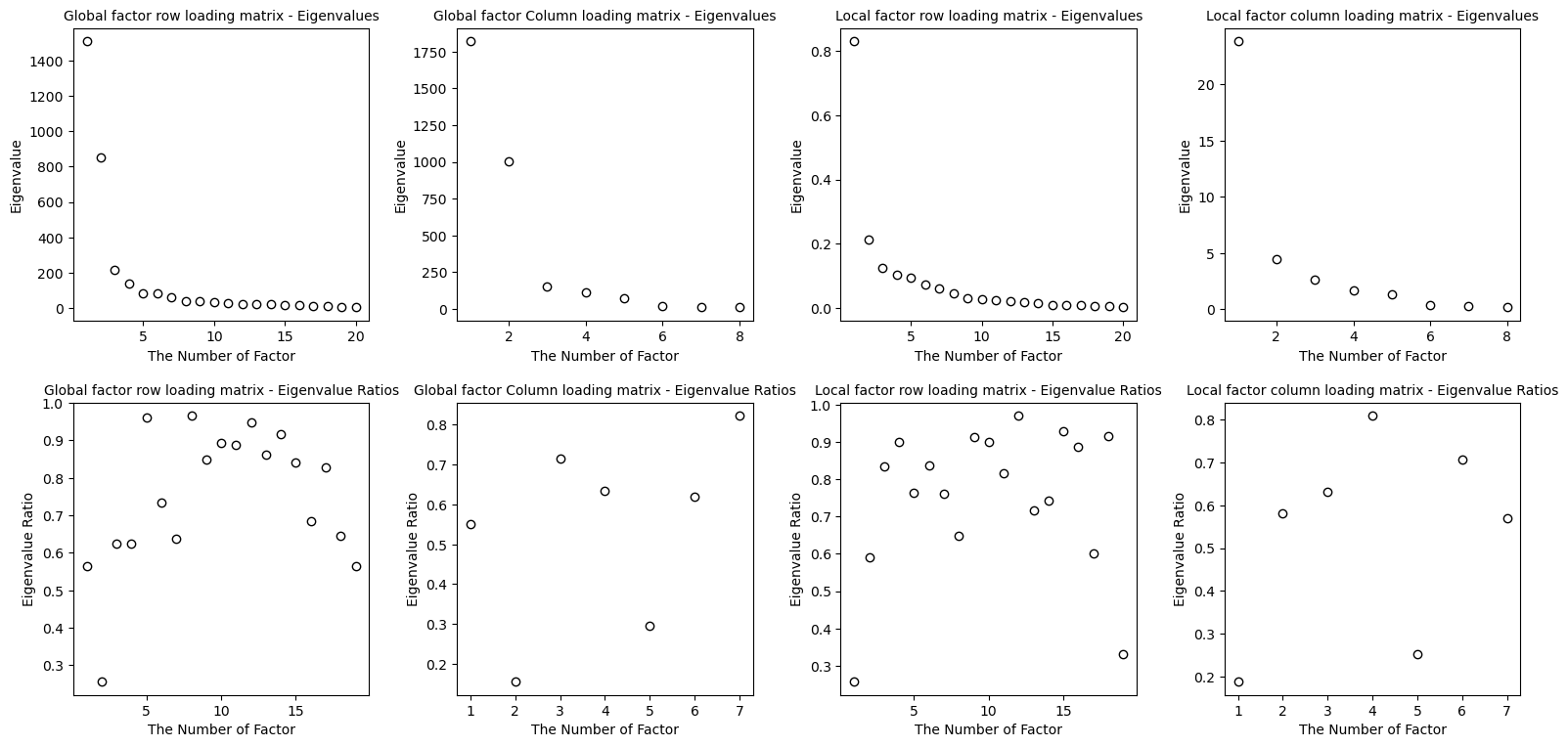}
	\caption{Industrials series: eigenvalues and eigenvalues ratio for determining the number of global and local factors }
\end{figure}

\begin{figure}[htbp]
	\centering
        \includegraphics[width=0.95\linewidth]{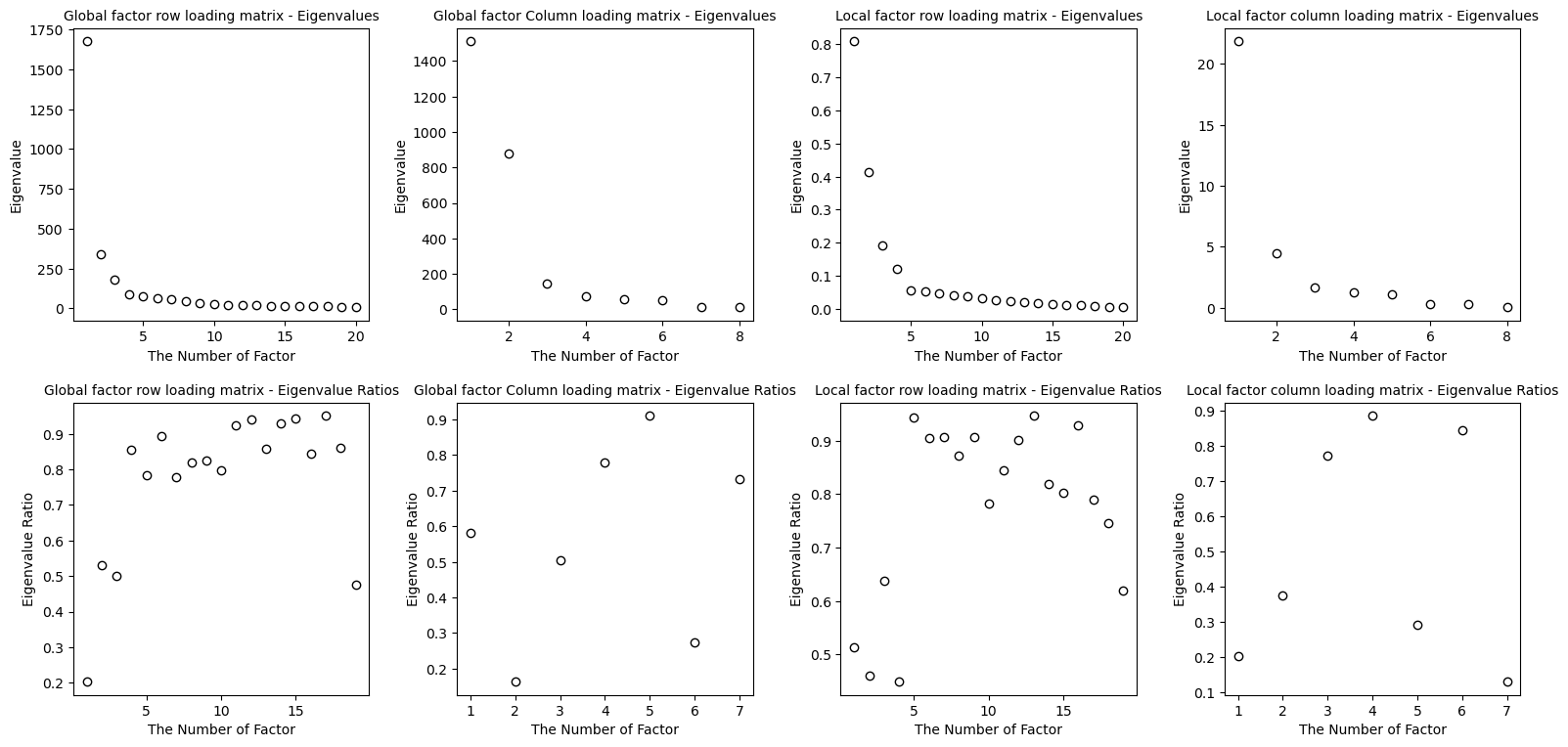}
	\caption{Consumer discretionary series: eigenvalues and eigenvalues ratio for determining the number of global and local factors }
\end{figure}

\begin{figure}[htbp]
	\centering
        \includegraphics[width=0.95\linewidth]{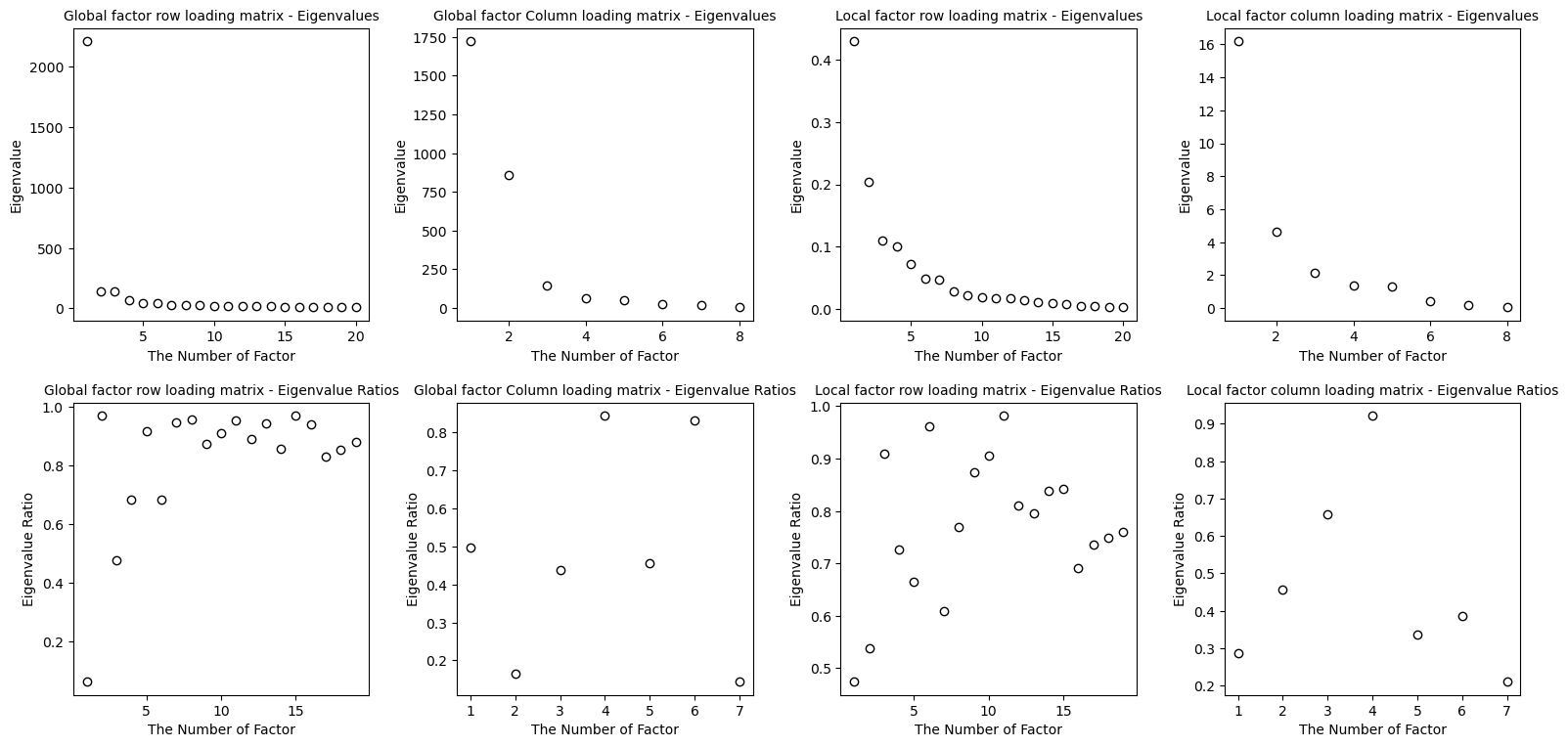}
	\caption{Consumer staple series: eigenvalues and eigenvalues ratio for determining the number of global and local factors }
\end{figure}
\begin{figure}[htbp]
	\centering
        \includegraphics[width=0.95\linewidth]{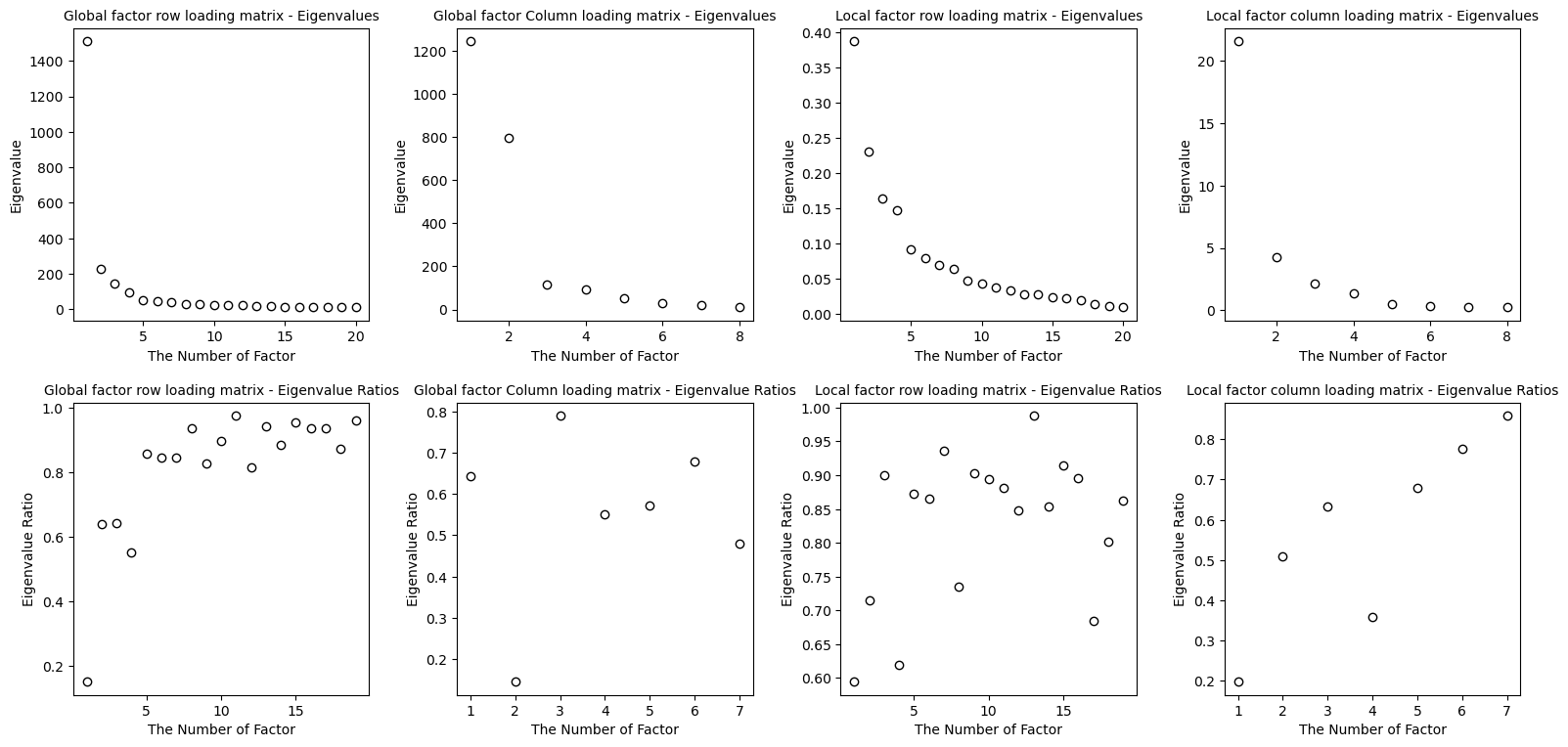}
	\caption{Health care series: eigenvalues and eigenvalues ratio for determining the number of global and local factors }
\end{figure}
\begin{figure}[htbp]
	\centering
        \includegraphics[width=0.95\linewidth]{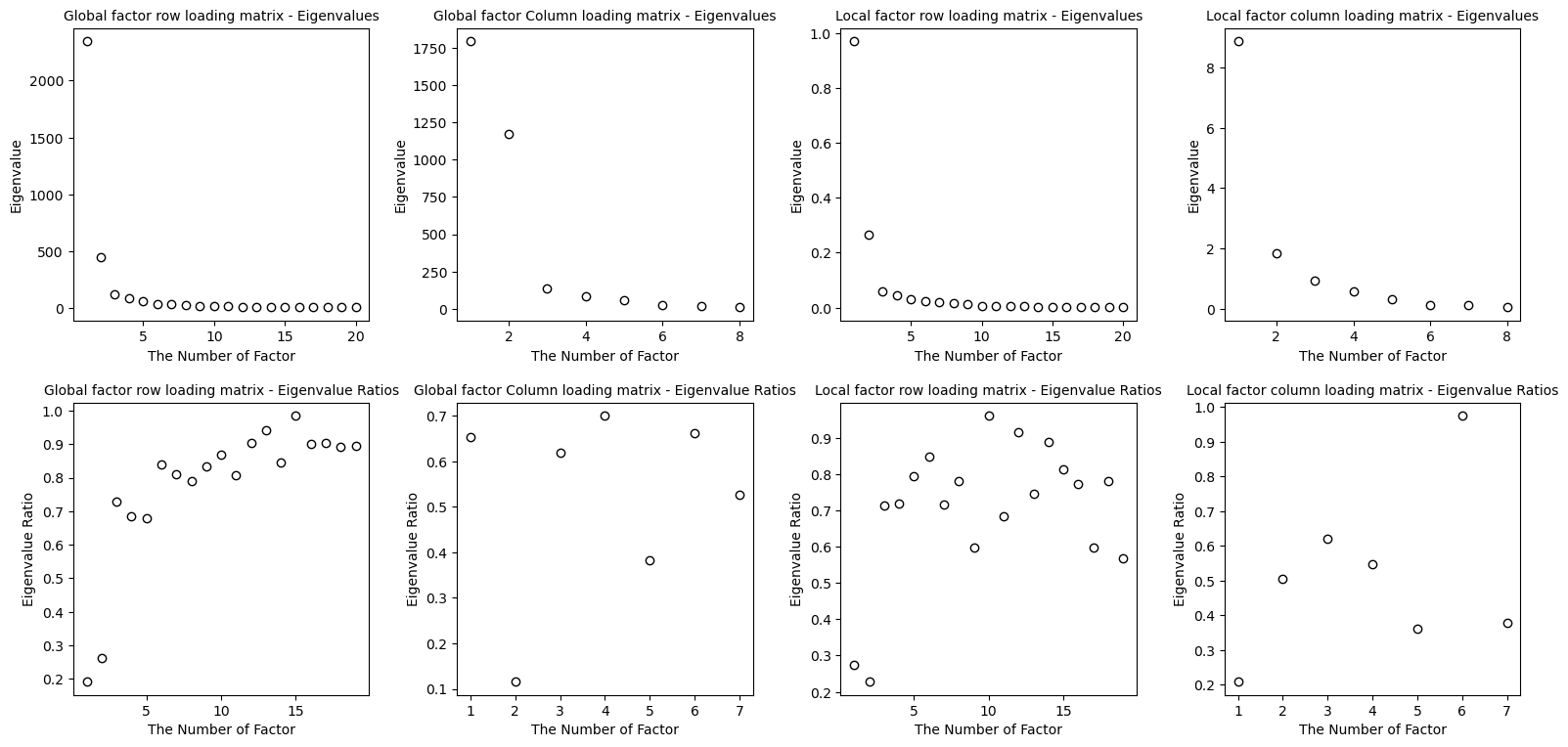}
	\caption{Financials series: eigenvalues and eigenvalues ratio for determining the number of global and local factors }
\end{figure}
\begin{figure}[htbp]
	\centering
        \includegraphics[width=0.95\linewidth]{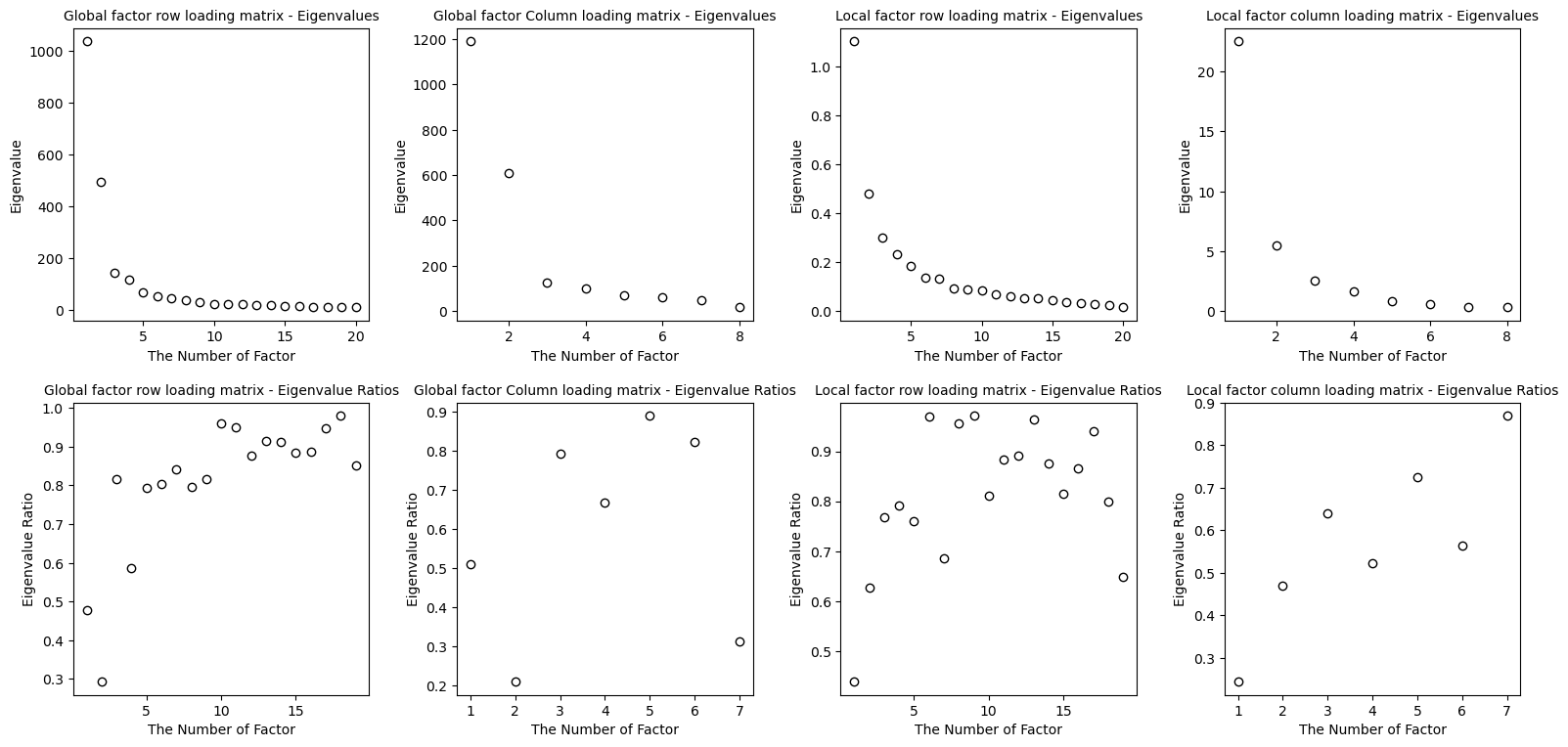}
	\caption{Information technology series: eigenvalues and eigenvalues ratio for determining the number of global and local factors }
\end{figure}
\begin{figure}[htbp]
	\centering
        \includegraphics[width=0.95\linewidth]{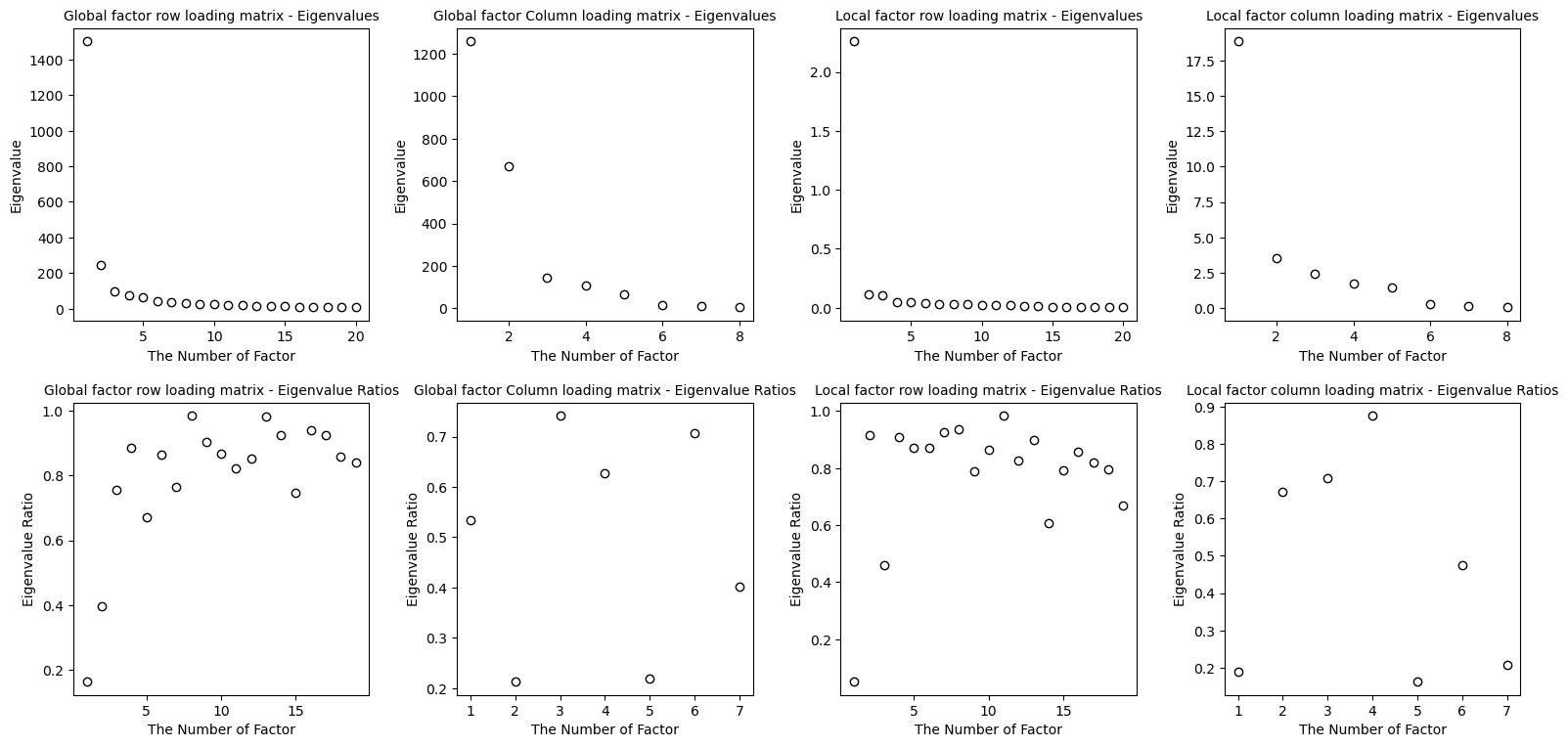}
	\caption{Utilities series: eigenvalues and eigenvalues ratio for determining the number of global and local factors }
\end{figure}
\begin{figure}[htbp]
	\centering
        \includegraphics[width=0.95\linewidth]{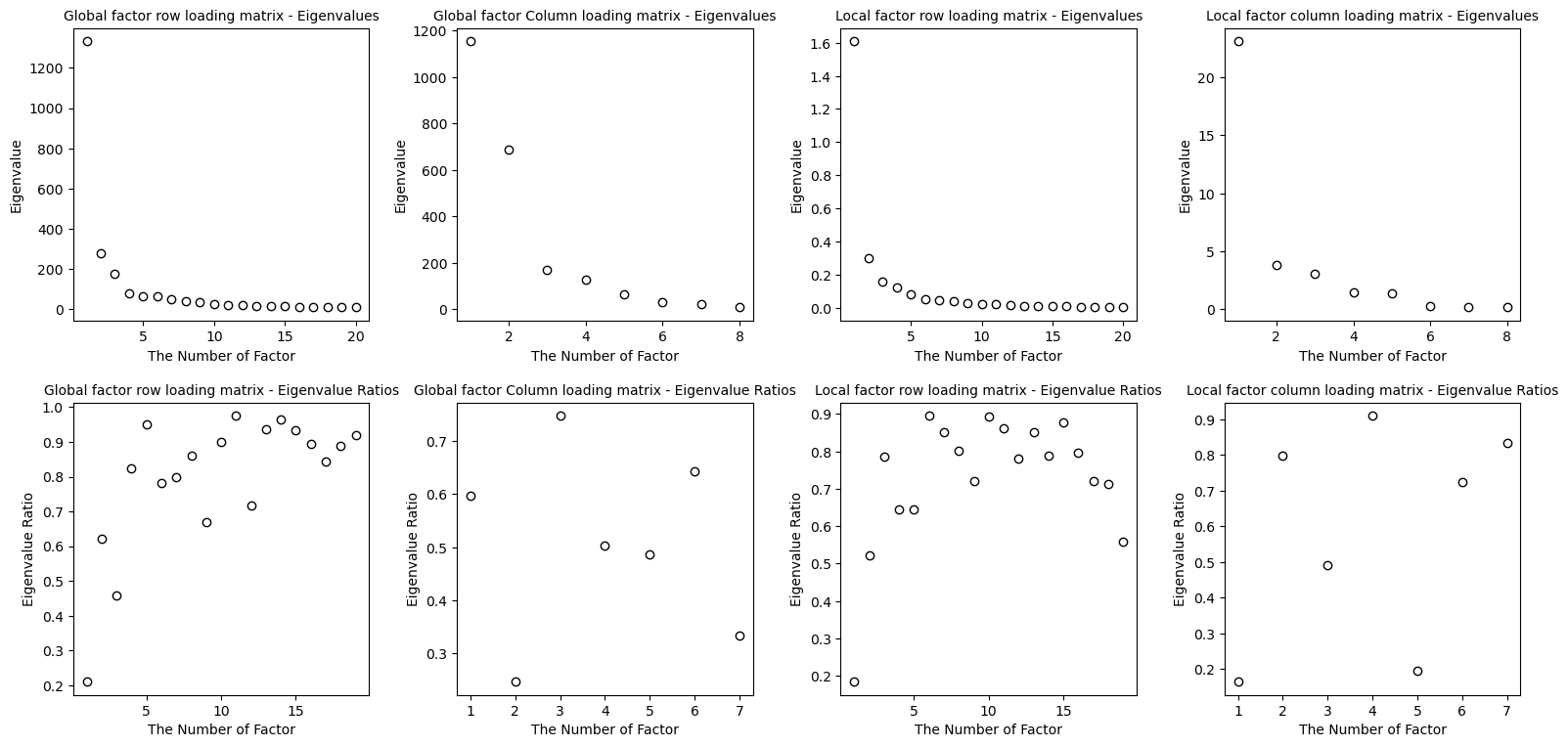}
	\caption{Real estate series: eigenvalues and eigenvalues ratio for determining the number of global and local factors }
	\label{real estate1}
\end{figure}

\end{document}